\definecolor{oldcolor}{rgb}{0.25,0,0}
\definecolor{fluxcolor}{rgb}{0,0.25,0}
\definecolor{todo}{rgb}{0.8,0,0}
\def\todo[#1]{\textcolor{todo}{\textbf{[#1]}}}
\renewcommand\vec\mathbf
\newcommand\unit[1][e]{\vec{\hat{#1}}}
\def\Kdelta[#1][#2]{\delta_{#1,#2}}
\newcommand\sfrac[2]{\frac{\text{\raisebox{-0.5ex}{$\textstyle \smaller{#1}$}}}{\text{\raisebox{0.3ex}{$\textstyle \smaller{#2}$}}}}
\newcommand\ssstyle{\scriptscriptstyle}
\newcommand\ordprod[1][<]{%
	\@ifnextchar_{\@ordprod[#1]}{\@ordprod[#1]_{}}}
\def\@ordprod[#1]_#2{%
	\@ifnextchar^{\@@ordprod[#1]_{#2}}{\@ordprod[#1]_{#2}^{}}}
\def\@@ordprod[#1]_#2^#3{%
	\Bigg.\smash{\prod\limits_{#2}^{#3}}\big.^{#1}\:\:}
\newlength\spacewidth
\newcommand\captionstyle{\small}
\renewcommand{\algocf@endline}{\string.}
\renewcommand{\algocf@caption@boxed}{\vskip 1.5em\box\algocf@capbox}%
\renewcommand{\algocf@captiontext}[2]{#1\algocf@typo. \AlCapFnt{}#2} 
\newif\if@alg@oneline@
\newcommand\oneline{\@alg@oneline@true}
\newif\if@alg@short@
\newcommand\short{\@alg@short@true}
\newcommand\algline{\SetLine}
\long\def\INDENTIF#1\THEN#2#3{%
	\ifx #3\ELSE
		\def\@tempa##1{%
			\bgroup
				\SetNoline\dontprintsemicolon
				\uShort@If{#1}{\lShort@Then{}{#2; \\ \lElse {##1.}}}%
			\egroup\@alg@short@false}\else
		\def\@tempa{%
				\bgroup
					\SetNoline\dontprintsemicolon
					\uShort@If{#1}{\lShort@Then{}{#2.}}%
			\egroup\@alg@short@false #3}\fi
	\@tempa}
\long\def\INDENTELSEIF#1\THEN#2#3{%
	\ifx #3\ELSE
		\def\@tempa##1{%
			\bgroup
				\SetNoline\dontprintsemicolon
				\uElseShort@If{#1}{\lShort@Then{}{#2; \\ \lElse {##1;}}}%
			\egroup\@alg@short@false}\else
		\def\@tempa{%
				\bgroup
					\SetNoline\dontprintsemicolon
					\uElseShort@If{#1}{\lShort@Then{}{#2;}}%
			\egroup\@alg@short@false #3}\fi
	\@tempa}
\long\def\IF#1\THEN#2#3{%
	\if@alg@oneline@
		\def\@tempa{\lIf{#1}{#2}#3}%
		\ifx #3\ELSE\else\@alg@oneline@false\fi
	\else
		\if@alg@short@\def\@tempa{\INDENTIF{#1}\THEN{#2}#3}\else		
			\ifx #3\ELSE
				\def\@tempa{\uIf{#1}{\SetNoline #2}#3}\else
				\ifx #3\ELSEIF
					\def\@tempa{\uIf{#1}{\SetNoline #2}#3}\else
					\def\@tempa{\If{#1}{\SetNoline #2}#3}%
				\fi
			\fi
		\fi
	\fi
	\@tempa}
\long\def\ELSEIF#1\THEN#2#3{%
	\if@alg@oneline@
		\def\@tempa{\lElseIf{#1}{#2}}%
		\ifx #3\ELSE\else\@alg@oneline@false\fi
	\else
		\if@alg@short@\def\@tempa{\INDENTELSEIF{#1}\THEN{#2}#3}\else		
			\ifx #3\ELSE
				\def\@tempa{\uElseIf{#1}{\SetNoline #2}#3}\else
				\ifx #3\ELSEIF
					\def\@tempa{\uElseIf{#1}{\SetNoline #2}#3}\else
					\def\@tempa{\ElseIf{#1}{\SetNoline #2}#3}%
				\fi
			\fi
		\fi
	\fi
	\@tempa}
\long\def\ELSE#1{%
	\if@alg@oneline@
		\def\@tempa{\lElse{#1}}\else
		\def\@tempa{\Else{\SetNoline #1}}\fi
	\@alg@oneline@false\@tempa}
\long\def\FOR#1\DO#2\ENDFOR{%
	\if@alg@oneline@
		\def\@tempa{\lFor{#1}{#2}}\else
		\if@alg@short@
			\def\@tempa{\SetNoline\u@For{#1}{#2}}\else
			\def\@tempa{\For{#1}{\SetNoline #2}}\fi\fi
	\@alg@oneline@false\@tempa}
\long\def\FORALL#1\DO#2\ENDFOR{%
	\if@alg@oneline@
		\def\@tempa{\lForAll{#1}{#2}}\else
		\if@alg@short@
			\def\@tempa{\SetNoline\u@ForAll{#1}{#2}}\else
			\def\@tempa{\ForAll{#1}{\SetNoline #2}}\fi\fi
	\@alg@oneline@false\@alg@short@false\@tempa}
\long\def\FOREACH#1\DO#2\ENDFOR{%
	\if@alg@oneline@
		\def\@tempa{\lalt@ForEach{#1}{#2}}\else
		\if@alg@short@
			\def\@tempa{\SetNoline\u@ForEach{#1}{#2}}\else
			\def\@tempa{\ForEach{#1}{\SetNoline #2}}\fi\fi
	\@alg@oneline@false\@alg@short@false\@tempa}
\long\def\FOREVERY#1\DO#2\ENDFOR{%
	\if@alg@oneline@
		\def\@tempa{%
		\bgroup
			\SetKwFor{ForEvery}{For\,every}{\!:}{endfor}
			\lForEvery{#1}{#2}%
		\egroup}
	\else
		\if@alg@short@
			\def\@tempa{\SetNoline\u@ForEvery{#1}{#2}}\else
			\def\@tempa{\ForEvery{#1}{\SetNoline #2}}\fi\fi
	\@alg@oneline@false\@alg@short@false\@tempa}
\long\def\WHILE#1\DO#2\ENDWHILE{%
	\if@alg@oneline@
		\def\@tempa{\lWhile{#1}{#2}}\else
		\if@alg@short@
			\def\@tempa{\SetNoline\u@While{#1}{#2}}\else
			\def\@tempa{\While{#1}{\SetNoline #2}}\fi\fi
	\@alg@oneline@false\@alg@short@false\@tempa}
\long\def\REPEAT#1\UNTIL#2{%
	\if@alg@oneline@
		\def\@tempa{\lRepeat{#2}{#1}}\else
		\def\@tempa{\Repeat{#2}{\SetNoline #1}}\fi
	\@alg@oneline@false\@tempa}
\def\BEGIN{\@ifstar{\@BEGIN[\SetNoline]}{\@BEGIN[\SetLine]}}
\long\def\@BEGIN[#1]#2\END{\par\bgroup #1\
Begin{\SetNoline #2 {}}\egroup}
\long\def\PROCEDURE#1:#2\BEGIN#3\END{%
	\ResetInOut{Output}%
	#2%
	\medskip\uProcedure{$#1$}{\smallskip #3}}
\long\def\SUBROUTINE#1:#2\BEGIN#3\END{%
	\ResetInOut{Effect}%
	#2
	\medskip\uSubroutine{$#1$}{\smallskip #3}}
	\def\texttilde{\raise.2ex\hbox{\m@th$\scriptstyle\sim$}}
	\def\@nbsp{~}
	\renewcommand\url[1]{[\href{#1}{\myurlstyle{#1}}]}
\let\P\relax
\let\H\relax
\newcommand\smaller{%
	\ifmmode\def\@tempa{\small@scalemath}\else\def\@tempa{\small@scaletext}\fi
	\@tempa}
\newcommand\larger[1][6/5]{%
	\ifmmode\def\@tempa{\@scalemath[#1]}%
	\else\def\@tempa{\@scaletext[#1]}\fi
	\@tempa}
\newdimen\scaled@size
\long\def\small@scalemath#1{\mathchoice{%
	\mbox{\small@scaletext{$\displaystyle #1$}}}{%
	\mbox{\small@scaletext{$\textstyle #1$}}}{%
	\mbox{\small@scaletext{$\scriptstyle #1$}}}{%
	\mbox{\small@scaletext{$\scriptscriptstyle #1$}}}}
\long\def\small@scaletext#1{%
	\scaled@size=\f@size pt\relax
	\ifdim\scaled@size<6pt\scaled@size=5pt\relax\else
		\ifdim\scaled@size<7pt\scaled@size=6pt\relax\else
			\ifdim\scaled@size<8pt\scaled@size=7pt\relax\else
				\ifdim\scaled@size<9pt\scaled@size=8pt\relax\else
					\ifdim\scaled@size<10pt\scaled@size=9pt\relax\else
						\multiply\scaled@size by 5\relax
						\divide\scaled@size by 6\relax
					\fi
				\fi
			\fi
		\fi
	\fi
	\edef\@tempa{\curr@fontshape}%
	\edef\@tempb{\strip@pt\scaled@size}%
	\bgroup
		\edef\font@name{%
			\csname\@tempa/\@tempb\endcsname}%
		\edef\f@size{\@tempb}%
	   \pickup@font
	   \font@name
  		\@@enc@update
		#1%
	\egroup}
\newcommand\ssfStyle[1]{\textmd{\textup{\textsf{#1}}}}
\newcommand\srmStyle[1]{\textup{\textrm{\smaller{#1}}}}
\newcommand\parrmStyle[1]{\textrm{\textmd{\textup{(#1\kern0.1ex)}}}}
\newcommand\paritStyle[1]{\textrm{\textmd{\textup{(\textit{#1}\kern0.1ex)}}}}
\newcommand\parbfitStyle[1]{\textrm{\textbf{\textup{(\textit{#1}\kern0.1ex)}}}}
\newcommand\complexityStyle[1]{\ssfStyle{#1}}
\newcommand\gateStyle[1]{\textsc{#1}}
\newcommand\problemStyle[1]{\srmStyle{\bf #1}}
\newcommand\pauliStyle[1]{\ssfStyle{#1}}
\newcommand\setStyle[1]{\textrm{\mdseries\upshape{#1}}}
\let\L\relax
\let\oldtr\tr
\renewcommand\tr[1][]{\oldtr_{\text{\raisebox{-0.3ex}{$#1$}}}}
\newcommand\Zz{\mathpalette\@Zz{}}
\def\@Zz#1{\mbox{$#1\text{\scalefont{1.7}$#1z$}\!z$}}
\newcommand\Zzz[1][]{\mathpalette\@Zzz{}{#1}}
\def\@Zzz#1#2{\mbox{$#1\text{\scalefont{1.7}$#1z$}\!z\!z^{#2}$}}
\newcommand\deffunction[1]{\SetKwFunction{#1}{#1}}
\newcommand\cH[1][2]{\H_{#1}}
\newcommand\symdiff{\text{\raisebox{0.2ex}{\,$\scriptstyle \triangle$\,}}}
\newcommand\splus{\smaller+}
\newcommand\sminus{\smaller-}
\newcommand\spm{\smaller\pm}
\def\tpm{\text{\tiny$\pm$}\!\!\:}
\newcommand\eg{\textit{e.\nolinebreak\hspace{-0.1ex}g.}\nolinebreak}
\newcommand\ie{\textit{i.\nolinebreak\hspace{-0.1ex}e.}\nolinebreak}
\newcommand\cf{\textit{c.\nolinebreak\hspace{-0.1ex}f.}\nolinebreak}
\newcommand\etal{\textit{et~al.\nolinebreak}}
\newcommand\oneway{\textup{1WQC}}
\newcommand\fn{\emph{\small f$^{\text{\raisebox{0.09ex}{\kern0.32ex\underline{\kern-0.2ex{n}}}}}$}}
\newcommand\fns{\emph{\small f$^{\text{\raisebox{0.09ex}{\kern0.32ex\underline{\kern-0.2ex{n\:\!\!s}}}}}$}}
	\newcommand\partialto{%
		\@ifstar{\rightharpoonup}{\DOTSB\relbar\joinrel\rightharpoonup}}
\newcommand\arc{\protect\to*}
\newcommand\thru{\,--\:\:\!}
\newcommand\minus{\text{--}}
\newcommand\sox[1]{^{\mspace{2mu} \ox #1}}
\newcommand\comp{^{\textsf{c}}}				
\newcommand\New[1]{\:\!\mathsf{N}_{#1}\:\!}
\newcommand\Ent[1]{\:\!\mathsf{E}_{#1}\:\!}
\newcommand\Xcorr[2]{\:\!\mathsf{X}_{#1}^{#2}\:\!}
\newcommand\Zcorr[2]{\:\!\mathsf{Z}_{#1}^{#2}\:\!}
\newcommand\Shift[2]{\:\!\mathsf{S}_{\s[#1]}^{#2}\:\!}
\newcommand\Op[2]{\:\!\mathsf{Op}_{#1}^{#2}\:\!}
\newcommand\Meas[3][]{\@Meas[#1]{#2}{#3}}
\def\@Meas[#1]#2#3{%
	\:\!%
	\def\@tempa{}
	\@ifempty{#1}{%
		\@tempa{\mathsf{M}_{#2}^{#3}}%
	}{%
		\@tempa{\mathsf{M}_{#2}^{(#1, #3)}}%
}\:\!}
\newcommand\s[1][]{%
	\mathsf{s}\@ifempty{#1}{}{\smaller{[#1]}}}
\let\ringaccent\r
\def\r{\@ifstar{\ringaccent}{\meas@result}}
\newcommand\meas@result[1][]{%
	\mathsf{r}\@ifempty{#1}{}{{\mathchoice
		{\text{\footnotesize $[#1]$}}%
		{\text{\small $[#1]$}}%
		{\scriptscriptstyle [#1]}%
		{\text{\tiny $[#1]$}}%
}}}
\newcommand\ctrl{\text{\raisebox{0.3ex}{\small $\wedge$}}\mspace{-1.5mu}}	
\newcommand\cZ{\ctrl Z}																		
\newcommand\pion[2][]{
	\mathchoice%
		{{\text{\small\:\!\raisebox{0.3ex}{\raisebox{0.5ex}{\smaller{$#1\pi$}}$\!\!\!\;/\!_{#2}$}}}}
		{{\text{\small\:\!\raisebox{0.3ex}{\raisebox{0.5ex}{\smaller{$#1\pi$}}$\!\!\!\;/\!_{#2}$}}}}
		{{\scriptscriptstyle \text{\raisebox{0.2ex}{\raisebox{0.5ex}{\smaller{$#1\pi$}}$\!\!\!\:/\!_{#2}$}}}}
		{\big.\text{\raisebox{0.2ex}{\raisebox{0.5ex}{\smaller{$#1\pi$}}$\!\!\!\:/\!_{#2}$}}}}
\newcommand\mpion{\pion[\text{--}\:\!]}
\newcommand\paren[1]		{\left( #1 \right)}		     						 	
\newcommand\sqparen[1]	{\left[ #1 \right]}		   					  		
\newcommand\ens[1]		{\left\{#1\right\}}										
\newcommand\ket[1]		{\left\lvert#1\right\rangle\mspace{-1.5mu}}		
\newcommand\bra[1]		{\mspace{-1.5mu}\left\langle#1\right\rvert}		
\newcommand\card[1]		{\left\lvert #1 \right\rvert}							
\newcommand\anglebr[1]	{\left\langle #1 \right\rangle}
\newcommand\gen\anglebr
\newcommand\ngbh[2][]{N_{\ssstyle #1}(#2)}
\let\abs\card
\newcommand\supp[1]		{^{(#1)}}
\def\pseu[#1]{%
	\mspace{-5mu}%
	\mathchoice
		{\@pseu\@iden\scriptstyle[#1:]}%
		{\@pseu{\big.\smash}\scriptstyle[#1:]}%
		{\@pseu{\big.\smash}\scriptscriptstyle[#1:]}%
		{\@pseu{\big.\smash}{\scriptscriptstyle\smaller}[#1:]}%
	\mspace{2mu}}
\def\@pseu#1#2[#3:#4]{%
	\def\@tempa{#4}
	\ifx\@tempa\@empty
		\left[\mspace{2mu}#1{\text{\raisebox{0.2ex}{$#2{#3}$}}}\mspace{2mu}\right]%
	\else
		\@@pseu{#1}{#2}[#3:#4]%
	\fi
}
\def\@@pseu#1#2[#3:#4/#5:]{%
	\def\@tempa{#3}%
	\ifx\@tempa\@empty\else
		\def\@tempa{\raisebox{0.2ex}{$#2{#3}$}\,}%
	\fi
	\left[#1{\@tempa\tfrac{#2{#4}}{#2{#5}}}\right]%
}
	\def\matrix@check{\@gobble}
\renewenvironment{cases}[1][c]{%
 	\left\{ \begin{array}{#1@{\quad}l}%
}{%
	\end{array} \right\}%
}
\theoremstyle{definition}
\newtheorem{definition}{Definition}[section]
\newtheorem*{notation*}{\adhoclabel}
\theoremstyle{plain}
\newtheorem{theorem}{Theorem}[section]
\newaliascnt{lemma}{theorem}
\newtheorem{lemma}[lemma]{Lemma}
\newaliascnt{proposition}{theorem}
\theoremstyle{plain}
\newtheorem{example}{Example}%
\newcommand\Algorithm[1]{\hyperref[#1]{Algorithm~\ref*{#1}}}
\providecommand\autopageref[1]{\hyperref[#1]{page~\pageref*{#1}}}
\newcommand\textbalance[2][]{%
	\@ifnextchar[{\@textbalance[#1]{#2}}{\@textbalance[#1]{#2}[]}}
\def\@textbalance[#1]#2[#3]{\phantom{#3}#1#2#3\phantom{#1}}
\newcounter{inlinum}
\renewcommand\theinlinum{\textup{(\alph{inlinum})}}
	\newenvironment{inlinum}{%
		\bgroup
			\def\item{\refstepcounter{inlinum}\theinlinum~\ignorespaces}%
 			\catcode`\^^M=10%
			\setcounter{inlinum}{0}%
			\ignorespaces
	}{%
		\egroup\aftergroup\@after@inlinum
	}
	\def\@after@inlinum{%
		\catcode`\^^M=10%
		\gdef\@tempa{\catcode`\^^M=5}%
		\expandafter\@tempa\ignorespaces}%
\newenvironment{romanum}{%
	\begin{enumerate}[label=\textup{(\textit{\roman*})}]%
}{%
	\end{enumerate}%
}
\newcounter{descenum}
\newenvironment{descenum}{%
	\let\olditem\item
	\def\item[##1]{\olditem[\refstepcounter{descenum}\thedescenum. ##1]}
	\setcounter{descenum}{0}%
	\begin{description}[font=\mdseries\itshape,topsep=1ex]%
}{%
	\end{description}%
}
\newlist{subfigenum}{enumerate}{1}
\setlist[subfigenum,1]{labelindent=0.5em,leftmargin=2em,label=\textup{(\textit{\alph*})}}
\newenvironment{subfiglist}{%
	\begin{minipage}{\textwidth}%
 	\begin{subfigenum}%
}{%
 	\end{subfigenum}%
	\end{minipage}%
}
\newcounter{alphanum}
	{\end{list}\vspace{-1.4ex}}
\title{Unitary-circuit semantics for measurement-based computations}
\author{Niel de Beaudrap\footnote{%
	niel.debeaudrap@gmail.com}\\[1em]
	\large	Quantum Information Theory Group\\
 	\large 	Institut f\"ur Physik und Astronomie\\
 	\large	Universit\"at Potsdam
}
\date{}
\begin{document}
\maketitle

\begin{abstract}
	One-way measurement based quantum computations (\oneway) may describe unitary transformations, via a composition of CPTP maps which are not all unitary themselves.
	This motivates the following decision problems: Is it possible to determine whether a ``quantum-to-quantum'' \oneway\ procedure (having non-trivial input and output subsystems) performs a unitary transformation?
	Is it possible to describe precisely \emph{how} such computations transform quantum states, by translation to a quantum circuit of comparable complexity?
	In this article, we present an efficient algorithm for transforming certain families of measurement-based computations into a reasonable unitary circuit model, in particular without employing the principle of deferred measurement.
\end{abstract}



\section{Introduction}

The one-way measurement model of quantum computation (\oneway), presented by~\cite{RB01} and subsequently elaborated in~\cite{RBB03,DKP06}, allows descriptions of unitary transformations in terms of operations, such as measurements, which are not themselves unitary.
This may be regarded as being due to an appropriate combination of the observables which are measured on the one hand, and the stabilizer group of the state-space (in the ``quantum-to-quantum'' setting where we admit an input subsystem which may support an arbitrary input state) on the other hand.
This intuition may be used in certain cases to prove that particular measurement-based computations perform unitary transformations~\cite{BKMP07}, using a slight modification of the stabilizer formalism~\cite{GotPhD}.

As we are often interested in describing computation in terms of unitary evolutions, measurement-based computing presents us with natural decision problems.
Is it possible to efficiently determine whether a \oneway\ procedure performs a unitary transformation?
Is it possible to describe precisely \emph{how} such computations transform quantum states, by translation to \eg\ a unitary circuit of bounded complexity?
If not, is it at least possible to efficiently verify whether it performs a \emph{given} transformation?

These questions arise because simple unitary circuit models --- such as uniform circuit families generated over the gate-set $\ens{H, T, \cZ }$ consisting of the Hadamard gate, the $\pion 8$ phase-gate, and controlled-$Z$ gate respectively (with no classically controlled operations and measurement deferred to the end of a computation) --- are the \emph{de facto} standard models of quantum computation.
Such circuits provide the (thus far) best-developed set of idioms for uniformly expressing algorithms \emph{independently of architecture}.
Other mathematical representations of quantum computation may be useful as abstractions of proposals for implementations of quantum computers, as in the case of the measurement-based models~\cite{GC99,RB01} and the adiabatic model~\cite{FGGS00}; and ``purely physical'' idioms seemingly divorced from uniform circuit families may suggest new quantum algorithms, as in the case of the $O(\sqrt{N})$-time algorithm for evaluating \textsc{nand}-trees by quantum walks~\cite{FGG07}.
However, simple unitary circuit models are currently the preferred means for expressing quantum computations.

Because of the present uncertainty as to \emph{which} implementation proposal will prove successful, the unitary circuit model represents a natural candidate for an intermediate language for representing algorithms independently of the target architecture, even if alternative models suggest new algorithms.
Therefore, one may argue that how effectively one can translate from a given model of computation to unitary circuit models, and back, is an important measure of how well-understood the model is.
Unless some one proposal for implementation comes to dominate over the others, or an unlikely breakthrough (\eg\ the discovery of an efficient classical algorithm for simulating quantum circuits) is made, translation to a ``simple'' model of quantum computation such as the unitary circuit model is the best that can be reasonably hoped for as a \emph{uniform} way of obtaining low-level semantics of a quantum computation.

We are generally not interested in \emph{arbitrary} translations to and from unitary circuit models: we should require a translation scheme which does not substantially increase the complexity of the operation being translated.
For instance, provided a \oneway\ procedure $\fP$ for performing a unitary transformation, we may decompose each classically controlled measurement into a classically controlled change-of-basis rotation followed by a standard basis measurement; applying the principle of delayed measurement, we may then replace the classically controlled rotations with coherently controlled ones.
We may then decompose the gates of this circuit with respect to a desired gate-set, such as $\ens{H, T, \cZ }$.
However, in constructions for \oneway\ procedures such as those defined in~\cite{RBB03,DKP06}, each single-qubit transformation requires at least one additional qubit, one two-qubit entangling operation, and a classically controlled measurement which may depend on the results from several measurements.
If one applies the principle of deferred measurement to such a \oneway\ procedure $\fP$, which ``represents'' a unitary circuit $\fC$ using one of these constructions, the result is a circuit $\fC'$ which requires many more qubits than $\fC$, and has at least one additional two-qubit operation for each single-qubit operation in $\fC$.
This represents a constant factor inflation in both the number of operations of the circuit; and depending on the ratio of single-qubit to either the number of qubits or the number of two-qubit operations in the original circuit $\fC$, the asymptotic increase in the either the memory required or the number of two-qubit gates in $\fC'$ over $\fC$ is in principle unbounded.

In contrast, we may consider translation procedures $\cS$ from the \oneway\ model, possibly with a restricted domain, which yield circuits of ``reasonable'' complexity.
In this case, we may establish a reference for the complexity of the resulting circuits by considering (as in the previous paragraph) \oneway\ procedures which arise from a translation routine $\cR$ from unitary circuits to the \oneway\ model.
We may then require that $\img(\cR) \subset \dom(\cS)$, and that $\cS \circ \cR$ maps each circuit in $\dom(\cR)$ to one which is no more complex (and which performs the same unitary transformation).
In particular, we will be interested in the case where $\dom(\cS)$ contains a wide range of operations.
In this article, we require that $\dom(\cR)$ consists of the set of unitary bijections generated over $\ens{H,T,\cZ}$ (without measurements or classical control).
Then $\dom(\cS)$ must be similarly general, as it must ``recognize'' different representations of arbitrary unitary bijections in the \oneway\ model, and produce unitary circuits which are equivalent both in function and efficiency to the ``original''.
We encapsulate these requirements as follows:\footnote{%
	The following definition is a refinement of the conditions described in pages~81--82 of~\cite{BeaudPhD} to include pairs of maps $\cS$ and $\cR$ for which, for some \oneway\ procedures $\fP$, the procedure $(\cR \circ \cS)(\fP)$ differs in complexity from $\fP$.} 
\begin{definition}
	\label{def:semanticMap}
	Let $\cC_\cB$ be a model of unitary circuits generated over a gate-set $\cB$.
	A \emph{semantic map for the \oneway\ model in terms of unitary circuits} is an efficiently computable partial function $\cS$ from \oneway\ procedures to $\cC_\cB$, for which membership in $\dom(\cS)$ is efficiently decideable, and for which there exists a corresponding efficient translation routine $\cR$ from $\cC_\cB$ back to the \oneway\ model such that
	\begin{itemize}
	\item
		$\dom(\cR) = \cC_\cB$ and $\img(\cR) \subset \dom(\cS)$; and

	\item
		for any circuit $C \in \cC_\cB$, $C' = (\cS \circ \cR)(C)$ performs the same transformation as $C$, and has complexity bounded above by that of $C$: in particular,
		\begin{itemize}
		\item	$C'$ acts on at most as many qubits as $C$;
		\item	$C'$ has at most the same depth as $C$; and
		\item	the number of one-qubit and two-qubit gates in $C'$ are each at most the number of such gates in $C$.
		\end{itemize}
	\end{itemize}
	We say that $\cR$ is a \emph{representation map} of $\cC_\cB$ in the \oneway\ model, for which $\cS$ provides the semantics.
\end{definition}

\paragraph{Summary of results.}

We present a single semantic map $\cS$ with respect to two related constructions for \oneway\ procedures: one based on the work of Raussendorf, Browne, and Briegel~\cite{RBB03} (which we will refer to as the \emph{simplified RBB construction}) for producing cluster-state based procedures; and the simplified procedure defined by Danos, Kashefi, and Panangaden~\cite{DKP06} (which we will refer to as the \emph{DKP construction}) for producing \oneway\ procedures whose entanglement graphs are not required to be subgraphs of rectangular grids.
We will restrict our attention to the case of unitary bijections, \ie\ unitary circuits generated over $\ens{H,T,\cZ}$ which do not introduce fresh qubits.
In doing so, we present the following:
\begin{description}
\item[A simple alternative notation for quantum circuits.]\hfill\\
	In order to facilitate descriptions of the construction of \oneway\ procedures, as well as the task of determining when two circuits are congruent, we define in \autoref{sec:stableIndexNotation} a variant notation for unitary circuits (and products of linear operators generally) having convenient combinatorial properties.


\item[An explicit account of two constructions of \oneway\ procedures.]\hfill\\
	In order to explicitly describe a semantic map, we present in \autoref{sec:constructions} the construction of~\cite{DKP07} (``the DKP construction'') and a simplified version of the construction of~\cite{RBB03} (``the simplified RBB construction) for \oneway\ procedures, including an explicit account of the time complexity of these constructions.

\item[A succinct characterization of measurement dependencies in these constructions.]\hfill\\
	Using adjacency matrices of directed graphs, we characterize the classical measurement dependencies of the \oneway\ procedures arising from the DKP construction and the simplified RBB construction in \autoref{sec:dependencies}.

\item[Circuit constructions from combinatorial decompositions.]\hfill\\
	In \autoref{sec:starDecomp}, we illustrate how we may efficiently decompose the structures underlying a \oneway\ procedure obtained from the DKP or simplified RBB constructions.
	Using these decompositions, we show in \autoref{sec:candidateConstr} how such a decomposition may be used to describe unitary circuits which may perform the same operation as the original \oneway\ procedure.
\end{description}
These results allow us to define an efficient algorithm for computing a semantic map $\cS$ which produces circuits for unitary bijections from an efficiently identifiable class of \oneway\ procedures.

\vspace{-1ex}
\subsubsection*{Related work.}

A study into interpreting the effect of \oneway\ procedures in terms of unitary circuits is performed by Childs, Leung, and Nielsen~\cite{CLM2005} via single-qubit ``teleportation-based quantum computation''.
Their results may be described as providing alternative proofs of validity of the constructions of Raussendorf, Browne, and Briegel~\cite{RBB03} of particular unitary operations in the \oneway\ model, but do not present conditions for when such a reduction to the circuit model may be performed.

The first algorithm to convert \oneway\ procedures to unitary circuits was presented by Kashefi and Danos~\cite{DK06}, who investigated conditions under which \oneway\ procedures produced by the DKP construction perform unitary transformations.
These conditions are formulated in terms of combinatorial structures, which were shown to be efficiently detectable in~\cite{Beaudrap08,MP08}.
Using these structures, \cite{DK06} outlines a procedure to generate a unitary circuit using those structures.
Building on the results of~\cite{Beaudrap08,BP08,MP08}, we extend those results to include procedures produced by the simplified RBB construction.
We also characterize the measurement dependencies which arise in such \oneway\ procedures when they are in a particular canonical form, in order to verify whether such a \oneway\ procedure performs the unitary described by the constructed circuit; and demonstrate that the complexity of the circuits which are constructed are no more than the complexity of the original circuits.

The results here represent an extension of results in Chapter 3 of~\cite{BeaudPhD} concerning the DKP construction, by proving remarks made briefly there concerning the simplified RBB construction.
In the few places where the terminology here differs from \cite{BeaudPhD} (or from prior results), the differences are 	explicitly noted.

\section{Preliminaries}

We begin by presenting conventions and notational devices used throughout the article.
Among these are a concise operational notation for procedures in the \oneway\ model, and a transparent notation for unitary circuits which has convenient combinatorial properties.
These will prove instrumental to simplify the description of the constructions for \oneway\ procedures in the following section.

We take for granted a number of basic concepts in graph theory; the interested reader may refer to Diestel's text~\cite{Diestel} for elementary definitions and proofs.

\subsection{Unitary circuits}
\label{sec:unitaryCircuits}

We will consider a particular model of quantum circuits: circuits without classical control, and with measurement (and trace-out operations) deferred to the end.
We refer to such circuits as \emph{unitary} circuits.
As is common practise, we will ignore the terminal measurements and trace-out operations, and consider only the component involving unitary transformations and unitary embeddings.

\subsubsection{Sets of approximately universal gates}
\label{sec:gateSets}

As we noted in the introduction, we must consider a particular unitary circuit model with a fixed set of unitary gates.
We consider a model closely related to most common gate sets: the set $\ens{H, T, \cZ}$ consisting of the Hadamard gate $H$, the $\pion 8$ gate $T$, and the controlled-$Z$ gate $\cZ$.
These are expressed in matrix form by
\begin{align}
		H
	=&\,
		\sfrac{1}{\sqrt 2}
		\begin{bmatrix}
		 		\;1\;	&	\;1\;	\\	\;1\;	&	-1
		\end{bmatrix},
	&
		T
	=&\,
		\begin{bmatrix}
		 		\e^{-i\pi/8}\!	&	0	\\	0	&	\!\e^{i\pi/8}
		\end{bmatrix},
	&
		\cZ
	=&\,
		\begin{bmatrix}
		 		\;1\;	&	\;0\;	&	\;0\;	&	\;0\;	\\
				\;0\;	&	\;1\;	&	\;0\;	&	\;0\;	\\
				\;0\;	&	\;0\;	&	\;1\;	&	\;0\;	\\
				\;0\;	&	\;0\;	&	\;0\;	&	-1
		\end{bmatrix},
\end{align}
respectively.
As $H$ and $T$ generate (up to scalar factors) a dense subgroup of the unitary group $\U(2)$, and as $\textsc{cnot} = (\idop \ox H) \,\cZ\, (\idop \ox H)$, the set of gates $\ens{H, T, \cZ}$ is approximately universal for quantum computation.

To describe computations in the \oneway\ model, we will also be interested in two related sets of unitary gates, involving operators such as $J(\theta)$ and $\Zz$ given by
\begin{subequations}
\begin{gather}
	\label{eqn:Jtheta}
	 	J(\theta)
	\,=\;
		H \e^{-i Z \theta/2}
	\,=\,
		\sfrac{1}{\sqrt 2}
		\begin{bmatrix}
			\;\e^{-i\theta/2}\;	&	\;\e^{i\theta/2}\;	\\
			\;\e^{-i\theta/2}\;	&	-\e^{i\theta/2} 
		\end{bmatrix},	\quad\text{and}
	\intertext{}
	\label{eqn:Zz}
		\Zz
	\,=\;
		\e^{-i\pi Z \otimes Z /4}
	\,=\,
		\begin{bmatrix}
			\e^{-i\pi/4}	&	0	&	0	&	0	\\
			0	&	\e^{i\pi/4}		&	0	&	0	\\
			0	&	0	&	\e^{i \pi/4}	&	0	\\
			0	&	0	&	0	&	\e^{-i\pi/4} 
		\end{bmatrix}.
\end{gather}
\end{subequations}
Note in particular that $J(\frac{m\pi}{4}) = HT^m$, and $\Zz = \e^{i \pi / 4} \, (T^2 \ox T^2) \, \cZ$.
Therefore, the gate-sets
\begin{align}
	\label{eqn:altGateSets}
		\cB_\DKP
	\,=&\;
		\ens{\big.J(\theta), \cZ}_{\theta \in \frac{\pi}{4}\Z}
	&
		\text{and}
	&&
		\cB_\RBB
	\,=&\;
		\ens{\big.J(\theta), \Zz}_{\theta \in \frac{\pi}{4}\Z}
\end{align}
are also approximately universal for quantum computation.\footnote{%
	The gates $J(\pi/4)$ and $J(-\pi/4)$ alone generate a dense subgroup of $\U(2)$ up to scalar factors: this is in fact shown by \cite{NC00} (Section~4.5.2) in the proof that $H$ and $T$ are approximately universal for single-qubit operations.
	We include gates $J(\theta)$ for each $\theta$ an integer multiple of $\pi/4$ for both $\cB_\DKP$ and $\cB_\RBB$ in order to obtain a closer correspondence with circuits generated over $\ens{H,T,\cZ}$ in the construction of \oneway\ procedures in later sections.
}

In unitary circuit models, one usually permits the introduction of fresh qubits into the circuit. 
In order to prevent the preparation of computationally expensive initial states, we will require that all fresh qubits be initially in some fixed pure state independent of the other qubits.
Without loss of generality, to obtain a simpler correspondence to the operations of the \oneway\ model, we will suppose that all fresh qubits are initially prepared in the $\ket{+} \propto \sfrac{1}{\sqrt 2}\big( \ket{0} + \ket{1} \big)$ state.

\subsubsection{Parsimonious sets of gates}

As well as being approximately universal for quantum computation, the gate sets described above have another useful property:
\begin{definition}
	\label{def:parsimonious}
	A set $S$ of unitary operations is \emph{parsimonious} if every multi-qubit gate in $S$ is diagonal, and if distinct gates in $S$ acting on a common qubit commute if and only if they are diagonal.\footnotemark
\end{definition}\footnotetext{%
	The title of~\cite{DKP06} predates this definition of ``parsimonious'' by four years, and does not refer to the concept we have defined here.
	However, the sets of unitary operations $\ens{J(\theta), \cZ}_{\theta \in \R}$ and $\ens{J(\theta), \cZ}_{\theta \in \frac{\pi}{4}\Z}$ described there are both parsimonious in this sense.}
Each of the gate-sets described above are parsimonious.
For instance, it is immediately clear that $\ens{H, T, \cZ}$ is parsimonious; that $\cB_\DKP$ and $\cB_\RBB$ are parsimonious follows from the fact that
\begin{align}
		J(\alpha)\herm J(\beta)\herm J(\alpha)J(\beta)
	\;\;=&\;\;
		\e^{i Z \alpha/2} \,H\, \e^{i Z \beta/2} \,H^2 \;\! \e^{-i Z \alpha/2} \,H \;\! \e^{-i Z \beta/2}
	\notag\\=&\;\;
		\e^{i Z \alpha/2} \, \e^{i X (\beta - \alpha)/2} \, \e^{-i Z \beta/2}	\;,
\end{align}
which is equal to $\idop_2$ if and only if $\alpha = \beta$.

The motivation for this terminology is that, for a ``parsimonious'' gate-set $S$ and for any orthogonal basis $\ens{\ket{\phi_0}, \ket{\phi_1}} \in \cH$ other than the standard basis, there is at most one gate in $S$ for which the vectors $\ket{\phi_j}$ are eigenvectors. 
Requiring that a get-set be parsimonious is a strong constraint: by definition, operations in a unitary circuit over such a set commute if and only if they are of the same ``type'' (they are either equal or inverse to each other), if they act on disjoint sets of qubits, or if they are both diagonal.
This is evidently useful in determining when two quantum circuits are equivalent up to transposition of gates.

\subsubsection{Stable-index tensor notation for unitary circuits}
\label{sec:stableIndexNotation}

For a representation map $\cR$ from unitary circuits to \oneway\ procedures and a semantic map $\cS$, to show that $\cS \circ \cR$ maps unitary circuits to equivalent circuits of the same (or lesser) complexity, we may attempt to show that $\cS \circ \cR$ maps each circuit in $\dom(\cR)$ to one which is congruent\footnote{%
	For two unitary circuits, or two \oneway\ computations, we will say that the computations are congruent if they differ only by transpositions of commuting operations (and by a possible relabelling of the qubits operated upon).}
to the original (possibly up to a set of ``trivial'' simplifications).
For this purpose, it will be useful to have a notation for unitary circuits which is easy to produce by an algorithm, in which irrelevant details such as the order of commuting gates are essentially absent.
In general, this is a problematic task: however, we will be satisfied with a notation which solves this problem for parsimonious gate-sets, such as the sets of gates which we use in this article.

One solution is to describe a notation whose meaning is invariant under permutation of any of the terms.
An example of such a notation is Einstein summation notation (\cite{Einstein1916}, Section~B5), in which (for unitary circuits on qubits) operations $U$ are described as tensors, using their coefficients with respect to the standard basis on individual qubits involved in the circuit.
We use superscripts for column-indices, and subscripts for row-indices: thus, we have
\begin{subequations}
\begin{gather}
		H^j_k
	\;=\,
		\sfrac{1}{\sqrt 2} (-1)^{jk}	,
	\\[1ex]
		J(\theta)^a_b
	\;=\,
		\sfrac{1}{\sqrt 2} (-1)^{ab} \,\e^{i (2b - 1) \theta/2 },
	\\[1ex]
		\cZ^{r,s}_{t,u}
	\;=\,
		\delta_{r,t} \delta_{s,u} (-1)^{rs}	,
\end{gather}
\end{subequations}
and so on (where $\delta_{j,k}$ is the Kronecker delta).
In an expression involving multiple gates, two such tensors are identified as acting on a common qubit when the column-index of one agrees with the row-index of another: the coefficients of the compound tensor are computed by summing over the range of the repeated index.
The order of the multiplication of the tensors is then determined by the location of the index: an operation $U$ strictly precedes another operation $V$ if one of the row-indices of $U$ matches a column-index of $V$.
To represent tensor products of two operators, it suffices to juxtapose them with disjoint sets of indices.
For example, given
\begin{subequations}
\begin{align}
	\label{eqn:arbitraryOperatorProduct}
		C
	\;\;=\;\;
		\big(\idop \ox W\big) \, \big(V \ox \idop \ox \idop\big) \,U\, \big(\idop \ox T \ox \idop\big) \, \big(\idop \ox \ket{\phi} \ox \ket{\psi}\big)
\end{align}
for a generic three-qubit operator $U$, two-qubit $W$, single-qubit operators $V$ and $T$, and state-vectors $\ket{\phi}, \ket{\psi} \in \cH$, we may equivalently write
\begin{align}
	\label{eqn:genericEinsteinExample}
		C_{a_0}^{a_2,b_3,c_2}
	\;\;=\;\;
		W_{b_2,c_1}^{b_3,c_2}
		V_{a_1}^{a_2}
		U_{a_0,b_1,c_0}^{a_1,b_2,c_1}
		T_{b_0}^{b_1}
		\ket{\phi}^{\!b_0}
		\ket{\psi}^{\!c_0}	.
\end{align}
\end{subequations}
Furthermore, the coefficients described on the right-hand side remain the same under an arbitrary permutation of the factors, and the equality remains true after an application of any (one-to-one) relabelling of the indices.

However, while the order of the factors is not significant in Einstein notation, circuits which are equivalent up to rearrangements of commuting operations still give rise to tensor expressions which are not congruent via permutations of factors and index relabelling.
This is because the tensor indices themselves fix an order of multiplication: for instance, although the operations $\cZ_{a,b}$, $\cZ_{a,c}$, and $\cZ_{b,c}$ commute (where the subscripts here indicate that the operations are performed on some pair chosen from three qubits named $a$, $b$, and $c$), the products 
\begin{subequations}
\label{eqn:tripleCZproduct}
\begin{align}
		\cZ_{b,c} \; \cZ_{a,c} \; \cZ_{a,b}
	\;\;\equiv&\,\;\;
		\cZ_{b_1,c_1}^{b_2,c_2} \; \cZ_{a_1,c_0}^{a_2,c_1} \; \cZ_{a_0,b_0}^{a_1,b_1}	\,,
	\quad \text{and}
	\\[1ex]
		\cZ_{a,c} \; \cZ_{b,c} \; \cZ_{a,b}
	\;\;\equiv&\,\;\;
		\cZ_{a_1,c_1}^{a_2,c_2} \; \cZ_{b_1,c_0}^{b_2,c_1} \; \cZ_{a_0,b_0}^{a_1,b_1}
\end{align}
\end{subequations}
give rise to expressions in Einstein notation which cannot be made the same by a relabeling of indices and re-ordering of terms.

In the case that two gates commute if and only if they are both diagonal, as in circuits generated from parsimonious sets of gates, we may resolve this problem as follows.
We use a notation similar to Einstein notation, but in which in which an index can be repeated multiple times without summation, when the index represents a qubit in a part of a unitary circuit in which the standard basis is preserved for that qubit.\footnote{%
 	\label{fn:remarkRandomVariable}%
 	If one interprets tensor indices as a random variables ranging over $\ens{0,1}$, whose value may be realized by a standard basis measurement, this corresponds to re-using indices when the value of that variable would be unchanged (or ``stable'') under some number of subsequent operations.}

\paragraph{Stable index notation.}
	Consider a unitary operator $U$ on $N$ qubits, whose eigenvectors are all of the form $\ket{s_1} \ox \cdots \ox \ket{s_n} \ox \ket{\Psi}$ for $s_1, \ldots, s_n \in \ens{0,1}$ and $\ket{\Psi} \in \cH\sox{m}$, where $N = n + m$.
	We may represent the non-zero matrix coefficients of $U$ in the standard basis by
	\begin{subequations}
	\label{eqn:stableIndex}
	\begin{align}
		U\text{\large$\big.\pseu[ s_1, \cdots, s_n : y_{n+1}, \cdots, y_N / x_{n+1}, \cdots, y_N ]$}
	\;=&\,\;
		\bra{s_1, \ldots, s_n, y_{n+1}, \ldots, y_N} U \ket{s_1, \ldots, s_n, x_{n+1}, \ldots, x_N}
	\notag\\=&\,\;
		\text{\large$\text{\normalsize $U$}_{\raisebox{-0.1ex}{$\scriptstyle\!\!\: s_1, \ldots, s_n, x_{n+1}, \ldots, x_N$}}^{\raisebox{0.2ex}{$\scriptstyle\, s_1, \ldots, s_n, y_{n+1}, \ldots, y_N$}}$}	\;.
	\end{align}
	(By hypothesis, the coefficients of $U$ are zero when the values of the first $n$ row- and column-indices differ.)
	That is, for tuples $\vec s \in \ens{0,1}^n$ and $\vec a, \vec d \in \ens{0,1}^m$, we define $U\pseu[\vec s: \vec a / \vec d]$ by the operator equality
	\begin{align}
		\label{eqn:operatorDefnStableIndex}
			U
		\;\;=&
			\sum_{\substack{\vec s \in \ens{0,1}^n \\ \vec{a},\vec{d} \in \ens{0,1}^m}}		\!\!
				U\pseu[s_1, \cdots, s_n : a_1, \cdots, a_m / d_1, \cdots, d_m ]
				\;\; \Big( \ket{\vec s}\bra{\vec s} \ox \ket{\vec a}\bra{\vec d} \Big)	\;.
	\end{align}
	\end{subequations}

\paragraph{Terminology and composition conventions for stable-index notation.}
	In an expression $U\pseu[\vec s: \vec a / \vec d]$ as defined by \eqref{eqn:operatorDefnStableIndex}, we call the indices $s_j$ \emph{stable}, the indices $d_j$ \emph{deprecated}, and the indices $a_j$ \emph{advanced}.
	A product of such expressions is well-formed if each index is advanced at most once and deprecated at most once; there is no limit to the number of times an index may occur as a stable index.
	An index which is advanced in one term and deprecated in another is a \emph{bound} index; otherwise it is \emph{free}.
	We sum implicitly over all bound indices in a stable-index tensor expression, as in Einstein notation: we then form composite tensors $V \circ U$ by matching advanced \emph{or stable} indices of $U$ to corresponding deprecated \emph{or stable} indices of $V$.
	(If the indices of $U$ and $V$ are disjoint, this is a tensor product of $U$ and $V$.)

\begin{example}
\begin{subequations}
 	The $\pion 8$ operation $T$ is a single-qubit diagonal operator, and so has one stable index and no deprecated or advanced indices; the Hadamard operation $H$ is a single-qubit non-diagonal operator, and so has no stable indices and one deprecated/advanced index each.
	We may then write their coefficients as 
	\begin{align}
			T \pseu[ x ]	\,=&\;	\e^{(2x - 1) i \pi/8}
		&
			\text{and}
		&&
		H \pseu[: v / u ]	\,=&\;	\sfrac{1}{\sqrt 2} (-1)^{uv}.
	\end{align}
	The $J(\pion 4)$ operation may be defined by composing these two operators, while the inverse of $J(\mpion 4)$ is the other composition of these operators; we may write these compositions as
	\begin{align}
		J(\pion 4) \pseu[: y / x ]	\,=&\;	H\pseu[: y / x ] T\pseu[ x ]
		&
			\text{and}
		&&
		J(\mpion 4)\herm \pseu[: y / x ]	\,=&\;	T \pseu[ y ] H \pseu[: y / x ] .
	\end{align}
	Finally, the tensor product $P = T \ox H$ has eigenvectors of the form $\ket{z} \ox \ket{\psi}$ for $z \in \ens{0,1}$ and $\ket{\psi} \in \cH$ an eigenvector of $H$: its coefficients may then be written as $P \pseu[ z: y / x ] \,=\, H\pseu[: y / x ] T\pseu[ z ]$.
\end{subequations}
\end{example}

\paragraph{Identified systems.}
	For operators $U: \cH\sox{N} \to \cH\sox{N}$ acting on distinguishable qubits, we describe how $U$ acts on these qubits by fixing an order of the qubits, where the advanced indices and deprecated indices of the qubits are presented in the same order.

\paragraph{Extension to general linear operators.}
	For operators $A: \cH\sox{N} \to \cH\sox{M}$ where $N$ and $M$ may differ, a stable-index tensor expression for $A$ will have advanced indices which do not have corresponding deprecated indices if $M > N$, or deprecated indices which do not have corresponding advanced indices if $N > M$.
	Then $A$ describes an operation which either \emph{adds} qubits to the input space, or \emph{discards} qubits from the input space.
	(These may occur when $A$ describes an isometric embedding, or a single-qubit measurement with a selected outcome, respectively.)
	We may represent a qubit which is added by an advanced index with no corresponding deprecated index (padding the sequence of deprecated indices by a space, dot, or similar placeholder), and similarly for deprecated indices representing qubits which are discarded.
	
\begin{example}
	We may transcribe the operator expression given in \eqref{eqn:genericEinsteinExample} as
	\begin{align}
		\label{eqn:genericStableExample}
 		C\pseu[: a_2, b_3, c_2/a_0, \;\cdot\;, \;\cdot\;]
	\;\;=\;\;
		W\pseu[:b_3,c_2/b_2,c_1]
		V\pseu[:a_2/a_1]
		U\pseu[:a_1,b_2,c_1/a_0,b_1,c_0]
		T\pseu[:b_1/b_0]
		\ket{\phi}\pseu[:b_0/]
		\ket{\psi}\pseu[:c_0/]	.
	\end{align}
	The order of the indices is purely conventional (in the same way that the order of the indices of $C$ in \eqref{eqn:genericEinsteinExample} is purely conventional); we could equally well define $C$ above as $C\pseu[: b_3, a_2, c_2/\;\cdot\;, a_0, \;\cdot\;]$; the change in the order of the qubits causes a corresponding change in the matrix representation of $C$.
	The adjoint of the tensor in~\eqref{eqn:genericStableExample} is given by $C\herm\pseu[:a_0,\;\cdot\;,\;\cdot\;/a_2,b_3,c_2]$.
\end{example}

A formal procedure for translating tensor notations such as in~\eqref{eqn:arbitraryOperatorProduct}, or involving operators $U_{a,b,\cdots,q}$ labelled with the qubits that they act on, is easy to define: such a procedure is presented in \autoref{apx:stableIndexConstructionUnconstrained}.

\begin{example}
	Considering the same tensor $C$ as in \eqref{eqn:genericStableExample}, in the special case that $U$ has eigenvectors of the form $\ket{a}\ket{b}\ket{\psi}$ for $a,b \in \ens{0,1}$ and for some states $\ket{\psi} \in \cH$, we may simplify the expression of $C$ to 
	\begin{align}
		\label{eqn:simplifiedGenericStableExample}
 		C\pseu[: a_2, b_3, c_2/a_0, \;\cdot\;, \;\cdot\;]
	\;\;=\;\;
		W\pseu[: b_3,c_2/b_1,c_1]
		V\pseu[:a_2/a_0]
		U\pseu[a_0,b_1\!:c_1/c_0]
		T\pseu[:b_1/b_0]
		\ket{\phi}\pseu[:b_0/]
		\ket{\psi}\pseu[:c_0/]	.
	\end{align}
	The operators described by the expressions in~\eqref{eqn:genericStableExample} and~\eqref{eqn:simplifiedGenericStableExample} are identical in this case.
\end{example}

The following example illustrates the intuitive purpose of stable-index notation: by allowing stable indices to be repeated multiple times while remaining ``free'' variables, we also discard some redundant combinatorial information about the order of the product of operators.

\begin{example}
	\label{eg:commutingControls}
	\begin{subequations}
	For generic unitary operators $U \in \U(2)\sox{n}$, we may define the operator 
	\begin{gather}
		\label{eqn:controlFunctor}
		\ctrl U \;=\; \ket{0}\bra{0} \ox \idop_2^{\,\ox n} \;+\; \ket{1}\bra{1} \ox U \,.
	\end{gather}
	Let $V \in \U(2)$ and $W \in \U(2)\sox{2}$.
	For qubits $a$, $b$, $c$, and $d$, consider the circuits $\ctrl V_{a,d} \, \ctrl W_{a,b,c}$ and $\ctrl W_{a,b,c} \, \ctrl V_{a,d}$ (where the subscripts here denote the qubits on which the operators act).
	We may represent these two circuits with stable-index expressions
	\begin{align}
			\ctrl V\pseu[a:d_1/d_0] \, \ctrl W\pseu[a:b_1,c_1/b_0,c_0]
		&&
		\text{and}
		&&
			\ctrl W\pseu[a:b_1,c_1/b_0,c_0] \, \ctrl V\pseu[a:d_1/d_0] \;,
	\end{align}
	respectively.
	Note that these are equivalent expressions, up to permutations of the factors: this may be regarded as being due to the fact that the operations $\ctrl V_{a,d}$ and $\ctrl W_{a,b,c}$ commute (\ie\ the circuits are congruent).
	\end{subequations}
\end{example}

In contrast to Einstein notation, while the numerical meaning of a stable-index tensor expression is invariant under permutations of the terms, information about the order of consecutive operations is not represented when these operations preserve the standard basis of qubits on which they both act.
Circuits which are congruent up to the re-ordering of commuting gates may then give rise to ``equivalent'' stable-index tensor expressions in the following sense:
\begin{definition}
	An \emph{isomorphism} of stable-index tensor expressions $S_1$ and $S_2$ is a bijective relabelling $\lambda$ of the \emph{index labels} of one stable-index expression $S_1$ to those of $S_2$, and a bijective mapping $\tau$ of \emph{terms} of $S_1$ to those of $S_2$, such that for any term  $U\pseu[ s_1, \ldots, s_n : a_1, \ldots, a_m / d_1, \ldots, d_m ]$ in $S_1$ (with $n \ge 0$ stable indices, and $m \ge 0$ each of deprecated and advanced indices including placeholders), there is a term in $S_2$ given by
	\begin{gather}
			\tau\paren{\bigg.U\pseu[s_1,\, \ldots, s_n : a_1,\, \ldots, a_m/d_1,\, \ldots, d_m]}
		\;\;=\;\;
			U\pseu[\lambda(s_1),\, \ldots, \lambda(s_n) : \lambda(a_1),\, \ldots, \lambda(a_m)/\lambda(d_1),\, \ldots, \lambda(d_m)]
	\end{gather}
	(where placeholders on the left are mapped by $\lambda$ to placeholders on the right).\footnote{%
		Note that isomorphisms of stable-index representations may change the order of the terms as well as the index labels.
		That is, they are combinatorial homomorphisms, not just a relabelling of the indices in the terms of $S_1$ kept in the same sequence.}
\end{definition}

Isomorphic stable-index expressions represent not only equivalent unitary operations (which follows from the implicit summation convention), but congruent circuit decompositions as well:
\begin{theorem}
	\label{thm:stableIndexCongruence}
	Let $C_1$ and $C_2$ be two sequences of unitary operators over a common gate set, and acting on the same number of qubits.
	If the stable-index tensor representations of $C_1$ and $C_2$ are isomorphic, then $C_1$ and $C_2$ are congruent via a re-ordering of commuting operators and relabelling of the qubits.
	Furthermore, if $C_1$ and $C_2$ are generated from a parsimonious set of gates, the converse also holds.
\end{theorem}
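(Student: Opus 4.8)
The plan is to exhibit the stable-index expression of a circuit $C$ as carrying exactly the information of the multiset of gates of $C$ together with a partial order on those gates that records the flow of standard-basis information, and then to invoke the classical fact that the linear extensions of a finite poset are connected by transpositions of consecutive incomparable elements. To any stable-index expression $S$ I associate a strict partial order $\prec_S$ on its terms: generate a relation by declaring $U\to V$ whenever some index label occurs in $U$ as an advanced or stable index and in $V$ as a stable or deprecated index, but not as a stable index of both; let $\prec_S$ be its transitive closure. Since every label is advanced at most once and deprecated at most once, the only way two distinct terms can carry the same label in the same role is that it is stable in both, so the generating relation is acyclic and $\prec_S$ is a genuine partial order. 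An isomorphism $(\lambda,\tau)$ of stable-index expressions preserves, for each term, both the operator appearing in it and the role (stable, advanced, deprecated, or placeholder) of each of its indices; hence it matches the multisets of operators and carries $\prec_{S_1}$ isomorphically onto $\prec_{S_2}$. Thus the pair consisting of the multiset of gates and of $\prec_S$ is an invariant of the isomorphism class of $S$.

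\textbf{The worldline lemma.} The technical core is the claim that, when $S$ is the canonical stable-index representation of a circuit $C$ --- the one in which every qubit on which a gate is diagonal is recorded by a stable index --- then (i) the order in which the gates occur in $C$ is a linear extension of $\prec_S$, and (ii) two terms are incomparable in $\prec_S$ if and only if the two gates are diagonal on every qubit that they both act on; moreover such a pair of gates commutes. I would prove this by the obvious threading of ``worldline'' labels through the gates: along the worldline of a fixed qubit the index labels form a chain $\ell_0,\ell_1,\ell_2,\dots$, where the passage from $\ell_k$ to $\ell_{k+1}$ is effected by the unique gate non-diagonal on that qubit which deprecates $\ell_k$ and advances $\ell_{k+1}$, and every gate diagonal on that qubit in the intervening stretch carries $\ell_k$ as a stable index. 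If two terms meet a common worldline at two \emph{distinct} labels, one checks directly that a chain of the generating relation runs along that stretch from the term with the earlier label to the term with the later label, so the two are comparable; hence incomparable terms meet each common worldline at a single label, which incomparability forces to be stable in both --- that is, both gates are diagonal on all common qubits. Finally, expanding both operators in the standard basis of the qubits they both touch shows at once that a gate diagonal on all of its common qubits with another commutes with it, and this last implication uses nothing about the gate-set.

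\textbf{From isomorphism to congruence.} Given an isomorphism $(\lambda,\tau)$ of the stable-index representations of $C_1$ and $C_2$: because $\lambda$ respects the partition of index labels into worldlines, and each worldline begins with a free deprecated index labelling an input qubit, $\lambda$ induces a relabelling $\pi$ of qubits, compatible with $\tau$ in the sense that $\tau$ sends ``gate $g$ on qubits $\bar q$'' to ``gate $g$ on qubits $\pi(\bar q)$''. After applying $\pi$ to $C_1$, the gate-sequences of $\pi(C_1)$ and of $C_2$ are two linear extensions of one and the same finite poset on one and the same multiset of gates (using part (i) of the worldline lemma together with the fact that $\tau$ is an order isomorphism). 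Since any two linear extensions of a finite poset are joined by a sequence of transpositions of consecutive incomparable elements, and by part (ii) each such transposition interchanges a pair of commuting gates, $C_1$ and $C_2$ are congruent.

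\textbf{The converse, and where parsimony is essential.} Conversely, suppose $C_1$ and $C_2$ are congruent. A relabelling of qubits plainly turns a stable-index expression into an isomorphic one, so I may assume $C_2$ is obtained from $C_1$ by finitely many transpositions of adjacent commuting gates, and --- the relation ``has an isomorphic stable-index representation'' being transitive --- it is enough to treat one such step, $\cdots UV\cdots$ becoming $\cdots VU\cdots$ with $UV=VU$. This is exactly where parsimony enters. If $U$ and $V$ commute then either they act on disjoint sets of qubits, or they are the same operator on the same qubits, or --- by the definition of a parsimonious gate-set (all multi-qubit gates diagonal; distinct gates sharing a qubit commute only if both are diagonal) --- both are diagonal. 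In the first two cases the swap alters no label assignment; in the third, $U$ and $V$ carry only stable indices and on every common qubit merely read off the current label without advancing it, so again nothing changes. In every case the two expressions differ only in the order in which two of their terms are listed, which is immaterial to the isomorphism class; iterating over the transpositions gives $S_1\cong S_2$. Parsimony cannot be dropped: if the gate-set contained two distinct gates that share a qubit, commute, yet are non-diagonal on that qubit (for instance two distinct rotations about a single axis other than the standard one), then transposing them genuinely changes the expression, since the shared worldline label is advanced by one of the two gates in the first ordering and by the other in the second, a difference no relabelling can repair. The step I expect to require the most care is the worldline lemma --- carrying out in full the case analysis of how an index label may be shared between two terms, and thereby establishing both that $\prec_S$ is faithfully recovered from $S$ and that, together with the gates, it faithfully recovers $S$; the two directions of the theorem then follow from it essentially formally.
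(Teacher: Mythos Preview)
Your proposal is correct and tracks the paper's proof (Appendix~B.3) closely: both directions hinge on the observation that two terms which can be adjacently swapped in a valid circuit ordering must share only stable indices on every common qubit, so the corresponding gates are diagonal there and commute. Your packaging via an explicit partial order $\prec_S$ and the standard fact that any two linear extensions of a finite poset are connected by transpositions of adjacent incomparable elements is somewhat cleaner than the paper's direct induction on a transposition decomposition of the permutation, which tacitly relies on each intermediate ordering being a valid circuit --- precisely what the linear-extension connectivity result supplies. The converse under parsimony is essentially identical in both.

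Two minor points to tighten. First, your justification that $\prec_S$ is acyclic from ``each label is advanced and deprecated at most once'' is incomplete as stated; acyclicity really follows because the gate order of $C$ is a linear extension of the generating relation, which is exactly part~(i) of your worldline lemma --- so just defer the acyclicity claim until after that. Second, part~(ii) of the worldline lemma should read only ``incomparable $\Rightarrow$ diagonal on all common qubits'': the converse fails, since two terms diagonal on every shared qubit can still be comparable via a chain through intermediate terms on other qubits. Only the forward implication is used, so the argument is unaffected.
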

Thus, for unitary circuits constructed from a parsimonious set of gates, isomorphism of stable-index tensor expressions is equivalent to congruence up to rearrangements of commuting gates.
This feature is the motivation both for this tensor notation, and for the interest in parsimonious sets of gates as in \autoref{def:parsimonious}.
We only explicitly use advantage of this feature of this tensor notation in \autoref{sec:semanticAlg}; however, other combinatorial properties of stable-index notation will frequently prove convenient in describing constructions of \oneway\ procedures from the circuit model.
The proof of \autoref*{thm:stableIndexCongruence} is provided in \autoref{apx:stableIndexCongruence}.

While stable-index expressions lack any information about the order of operations which preserve the standard basis of the qubits that they act on in common, it is possible to retrieve information about the order of non-commuting gates, in order (for example) to determine how to actually perform such an operation as a physical transformation.
As in Einstein notation, the order of non-commuting gates can still be determined from the indexing.
\begin{definition}
	\label{def:indexSuccessorFn}
	For a stable-index tensor expression $S$, the \emph{index successor function} $f$ for $S$ is the mapping defined on the indices $\delta$ of $S$ which are deprecated but which have a corresponding advanced index $\alpha$ (corresponding conventionally to the same qubit), such $f(\delta) = \alpha$ for each such $\delta$ and $\alpha$.
\end{definition}
Note that for a deprecated index $v \in \dom(f)$, any term $U$ which involves the index $f(v)$ must be performed after any operation in the circuit acting on $v$ (excepting $U$ itself, if it is the unique term in which $v$ and $f(v)$ both occur).
We may define:
\begin{definition}
	\label{def:interactionHypergraph}
	For a stable-index tensor expression $S$, the \emph{interaction hyper-graph} 
	for $S$ is the hypergraph $G$ on the set of indices in $S$, whose edges consist of those sets of indices which are acted on by some operation in $C$.
\end{definition}
For any $v \in \dom(f)$, the index $f(v)$ is adjacent to $v$ in this hypergraph.
We may then use $f$ and $G$ to establish a partial order $u \le v$ on the indices of $S$, which describes when an index $u$ is deprecated before another index $v$, given an ordering of the terms which is consistent with an order in which the operations may be performed.
We may define $\le$ as follows.
For two indices $f(v), f(w)$ which occur in a common term, each must be deprecated strictly after the indices $v$ and $w$ in order for the term to describe a well-defined (acyclic) product of operators.
If $v \sim w$ denotes the adjacency relation in $G$, we then have
\begin{align}
	\label{eqn:flowPremonition}
		v \le f(v)
	&&
		\text{and}
	&&
		w \sim f(v)	\;\implies\; v \le w
\end{align}
for any $v \in \dom(f)$.
We can then describe a pre-order\footnote{%
	\label{fn:preorder}%
		A \emph{pre-order} $\preceq$ is a binary relation which is reflexive ($x \preceq x$ for all $x$) and transitive (if $x \preceq y$ and $y \preceq z$ then $x \preceq z$), but with no further requirements.
 		Partial orders and equivalence relations are both examples of pre-orders.
		An example of a pre-order which is neither an equivalence relation nor a partial order is the relation $x \preceq y \;\iff\; \Re(x) \le \Re(y)$ for complex numbers $x$ and $y$.} 
$U \preceq V$ on the operations in $S$, where $U \preceq V$ if $U$ has an index $u$ and $V$ has an index $v$ such that $u \le v$.
We then have $U \preceq V \preceq U$ if and only if either $U$ and $V$ are the same term, or different terms which act on common qubits but commute; if $U \preceq V \not\preceq U$, then $U$ must occur strictly earlier than $V$.
We may then obtain an order for the operations in $S$ by finding a linear order which extends $\preceq$.

In the case of the gate-sets defined in \autoref{sec:gateSets}, the operations all act on single qubits or pairs of qubits.
The interaction hyper-graph $G$ of (a stable-index expression for) a circuit over such a gate-set will then be a conventional graph.
If we ignore single-qubit diagonal operations which add no information to the ordering of the indices, and note that the square of the two-qubit operations are in each case a product of local unitaries, we may even describe the circuit by a \emph{simple} graph (without repeated edges or loops).
This graph structure, and the ordering structures we have described above for circuits will prove convenient for analyzing constructions of \oneway\ procedures.

\subsection{The measurement calculus}
\label{sec:measurementCalculus}

The way in which computation is performed in the \oneway\ model is very different from unitary circuits as we describe them in \autoref{sec:unitaryCircuits}.
In the latter case, computation is performed with interleaving non-commuting single and two-qubit unitaries without any classical control or intermediate measurement.
In contrast, a \oneway\ computation consists of an initial stage in which all of the (mutually commuting) two-qubit operations are performed, followed by a sequence of classically controlled measurements and (for procedures which produce a residual quantum state as output) classically controlled single-qubit unitaries.
In this section, we briefly review the elementary operations in the \oneway\ model and describe standard forms for procedures in the \oneway\ model as a preliminary to describing \oneway\ constructions for unitary transformations, essentially following the terminology and notation introduced by Danos, Kashefi, and Panangaden~\cite{DKP07}.

\subsubsection{Conventions for writing CPTP maps}

In the \oneway\ model, classical information is obtained via measurements which may in practise be destructive (\eg\ as with photodetectors).
Thus, it is usually preferable to describe \oneway\ procedures by completely positive trace-preserving (CPTP) maps acting on density operators, where the input and output spaces of these operators may not be supported by the same set of qubits.

In order to better distinguish CPTP maps from linear operators on $\cH\sox{N}$, we will represent CPTP maps in either sans-serif (as in $\J\;\!$) or fraktur (as in $\fJ\;\!$) typeface.
We also adopt the following additional notational devices:
\begin{description}
\item[Measurement results.]\hfill\\
	When a qubit $v$ has been measured, we will represent the result by a bit $\s[v] \in \ens{0,1}$ whose value is only defined after the measurement.

\item[Operands of CPTP maps.]\hfill\\
	The qubits on which an operation $\Op{v,w,\ldots}{}$ acts, whether to introduce it to the system by preparation, discard it by measurements, or to transform it unitarily, are represented in the subscript.
	(The bit-registers $\s[v]$ storing measurement results are typically omitted from such lists of operands.)

\item[Parameterization of CPTP maps.]\hfill\\
	Operations $\Op{}{\alpha,\beta,\s[v], \s[w], \ldots}$ in a \oneway\ procedure will often be determined by fixed parameters $\alpha, \beta, \ldots$ and the results $\s[v], \s[w], \ldots$ of measurements on qubits $v, w, \ldots$.
	(Typically, fixed parameters $\alpha$ denote the values of angles in the range $(-\pi, \pi]$, and operations will depend on the parities $\s[v] + \s[w] + \cdots$ of measurement results.)
	Such parameters are always described in the superscript.
\end{description}%
\noindent
Finally: as there is little risk of confusion, we will typically abbreviate a composite CPTP map $\Op{}{(1)} \circ \Op{}{(2)} \circ \cdots \circ \Op{}{(n)}$ by $\Op{}{(1)} \Op{}{(2)} \cdots \Op{}{(n)}$.

\subsubsection{Elementary operations of the \oneway\ model}
\label{sec:elemOperationsOneWay}

As in unitary circuit models, we define \oneway\ computations as compositions of elementary operations.
We define these following~\cite{DKP07}.

The initial stage of a \oneway\ computation is the preparation of a large entangled state, which may be obtained by performing entangling operations on a collection of qubits prepared in the $\ket{+}$ state.
Furthermore, the entangling operation which is performed may be described as a product of $\cZ$ operations acting on distinct pairs of qubits.
It will then be useful to define the operations
\begin{align}
		\New{v}(\rho_S)	\;=&\;\,	\rho_S \ox \ket{+}\bra{+}_v
	&
		\text{and}
	&&
		\Ent{vw}(\rho_S)		\;=&\;\,	\cZ_{vw}\, \rho_S\; \cZ_{vw}
\end{align}
for a state $\rho_S$ on a set of qubits $S$, where the subscripts on the right-hand sides indicate the qubits on which the operators act.
For the map $\New{v}$\,, we require that $v \notin S$, and obtain the system $S \union \ens{v}$ as output; for $\Ent{vw}$\,, we require that $v,w \in S$, and obtain the same system $S$ as output.
We refer to $\New{v}$ as a \emph{preparation} map, and $\Ent{vw}$ as an \emph{entangling} map.

The single-qubit measurements we require will be with respect to orthonormal bases on ``the \XY\ plane'' (of the Bloch sphere): that is, with respect to uniform superpositions of $\ket{0}$ and $\ket{1}$ which we denote by
\begin{align}
		\ket{+_\theta}
	\;=&\;\,
		\sfrac{1}{\sqrt 2}\Big( \ket{0} + \e^{i \theta} \ket{1} \Big)
	&
		\text{and}
	&&
		\ket{-_\theta}
	\;=&\;\,
		\sfrac{1}{\sqrt 2}\Big( \ket{0} - \e^{i \theta} \ket{1} \Big)	
\end{align}
for angles $-\pi < \theta \le \pi$.
Using these states, we may define the measurement operators
\begin{gather}
	\label{eqn:xyPlaneMeas}
		\Meas{v}{\theta}(\rho_S)
	\;=\;
		\bra{+_\theta} \:\!\rho_S\:\! \ket{+_\theta}_{\!v}		\,\ox\, \ket{0}\bra{0}_{\s[v]}
		\;\,+\;\;
		\bra{-_\theta} \:\!\rho_S\:\! \ket{-_\theta}_{\!v}	\,\ox\, \ket{1}\bra{1}_{\s[v]} \;,
\end{gather}
acting on a density operator $\rho_S$ on a set of qubits $S$.
(Recall that $\s[v]$ is a classical bit storing the boolean measurement result.\footnote{%
	While we typically do not represent the classical bits used in operations in terms of density operators, we may do so when this is convenient for exposition.})
We require that $v \in S$; the output consists of the system $S \setminus \ens{v}$, ignoring the newly defined bit $\s[v]$.

We extend this notation in a simple way to describe how measurements are adapted according to previous measurements: we introduce the abbreviation
\begin{gather}
	\label{eqn:adaptedMeasurementNotation}
		\Meas{v}{\theta; \beta}
	\;\;=\;\;
		\Meas{v}{(\minus 1)^\beta \!\!\;\cdot \theta}
\end{gather}
for a classically controlled measurement performing either $\Meas{}{\alpha}$ or $\Meas{}{\minus \alpha}$, depending on a boolean expression $\beta = \s[a] + \s[b] + \cdots$ representing the parity of some set of measurement results.\footnote{%
	The calculus defined in~\cite{DKP07} also admits the possibility of a measurement angle being modified by a possible addition of $\pi$, depending on a boolean expression (as with sign dependencies above).
	As this elaboration is not necessary for measurement-based simulation of unitary circuits, we omit it in order to present a  streamlined description of how to construct a \oneway\ procedure in ``standard'' form.}
\label{discn:defaultMeasAngle}
We refer to $\theta$ as the \emph{default (measurement) angle} of the operation, and $\beta$ a \emph{sign-dependency} of $v$; we say that $v$ has a \emph{sign-dependency on $w$} when such an expression $\beta$ depends non-trivially on $\s[w]$.

For \oneway\ procedures which produce a residual quantum state as output (as we consider in this article), it is also necessary to adapt the final state depending on the outcomes of the measurements by using classically controlled single-qubit operations.
For boolean expressions $\beta$ describing the parities of some measurement outcomes, we may decompose these corrections into $X$ and $Z$ operations with the operators
\begin{align}
		\Xcorr{v}{\beta}(\rho_S)
	\;=&\;\,
		X^\beta_v\,\rho_S\,X^\beta_v
	&
		\text{and}
	&&
		\Zcorr{v}{\beta}(\rho_S)
	\;=&\;\,
		Z^\beta_v\,\rho_S\,Z^\beta_v	\;,
\end{align}
acting on a density operator $\rho_S$ on a set of qubits $S$ which contains $v$.

Finally, a purely classical operation which is useful in descriptions of \oneway\ procedures is a classical bit-flip operation on a measurement result $\s[v]$:
\begin{gather}
	\label{eqn:shiftOptor}
		\Shift{v}{\beta}(\rho)
	\;\;=\;\;
		\ket{\beta}\bra{0}_{\s[v]} \rho \;\ket{0}\bra{\beta}_{\s[v]} \,\;+\;\; \ket{1 - \beta}\bra{1}_{\s[v]} \rho \;\ket{1}\bra{1 - \beta}_{\s[v]}	\;,
\end{gather}
which acts on the bit $\s[v]$ which stores the result of a previous measurement.\footnote{%
	Such ``shift'' operators are usually not noted in accounts of measurement-based computation (with~\cite{DKP07} being an exception), probably because they are simple operations on classical bits.
	However, in any implementation of quantum computers based on single-qubit measurements, shift operators may in practise be used to describe how parity expressions $\beta$ are computed in the midst of computations.
	Thus, while much of the analysis of this article follows the convention of eliminating shift operations where possible, it seems productive to extend the sense of ``standard forms'' found in the literature to include shift operations between measurements.}
This operation is also known as a \emph{signal shift} operation, as it can be used to remove the dependency on the result (or \emph{signal}) of the measurement of some qubit $a$ from the measurement of a later qubit $v$; the influence of the measurement on $a$ is then accumulated into the classical bit $\s[v]$.
As a result, shift operators $\Shift{v}{\beta}$ only occur after the measurement of the corresponding qubit $v$.

\subsubsection{Elementary transformations of \oneway\ procedures}

A typical \oneway\ procedure is a composition of many operations as described above.
In composing them, we may simplify them through simple transformations of the operations involved.
At the same time, we wish to bring these expressions into a canonical form in which the entangling and measuring operations are separated into different phases of the computation (one of the motivations of the \oneway\ model being to separate the tasks of creating entanglement, and maintaining control of the states of individual qubits).
We now describe these transformations.

A well-formed \oneway\ procedure prepares each qubit at most once, in which case that preparation is the first operation performed.
We may then accumulate all preparation operations $\New{v}$ to the beginning of a measurement-based procedure.\footnote{%
 	Composition is performed from right to left, as with standard notations for unitary circuits; thus, the beginning of a \oneway\ procedure is on the right-hand side of a sequence of operations.}
Thus, for an operation $\Op{S}{}$ acting on a set of qubits $S$, we have
\begin{align}
	\label{eqn:commutePrep}
		\New{v} \Op{S}{}	\,=&\;	\Op{S}{} \New{v}	\;,
	&&
		\text{when $v \notin S$}.
\end{align}
(When $v \in S$, the expression on the left-hand side is not well-defined.)
We may collect these preparation maps together and describe them as a single operation on many qubits: to this effect, we define the shorthand
\begin{gather}
 		\New{v_N} \cdots \New{v_2} \New{v_1}
	\;\;=\;\;
		\New{\ens{v_1, \ldots, v_N}}	\;.
\end{gather}
Similarly, because every operation aside from the preparation maps and entangling maps are either measurements (which are the last operation performed on any qubit), Pauli corrections (which we may easily transform by commutation with $\cZ$, a Clifford operation), or shift operations (which do not act on qubits), we may apply the following commutation operations to accumulate all of the entangling maps to the beginning of a \oneway\ procedure (just after the preparation maps):
\begin{subequations}
\label{eqn:commuteEnt}
\begin{align}
		\label{eqn:commuteEntMeas}
		\Ent{vw} \Meas{p}{\theta;\beta}
	\,=&\;
		\Meas{p}{\theta;\beta} \Ent{vw}	\;,
	&&
		\text{when $p \notin \ens{v,w}$};
	\\
		\label{eqn:commuteEntShift}
		\Ent{vw} \Shift{p}{\beta}
	\,=&\;
		\Shift{p}{\beta} \Ent{vw}	\;,
	&&
		\text{when $p \notin \ens{v,w}$};
	\\
		\label{eqn:induceZcorr}
		\Ent{vw} \Xcorr{v}{\beta}
	\,=&\;
		\Zcorr{w}{\beta} \Xcorr{v}{\beta} \Ent{vw}	\;,
	&&
		\text{and similarly for $\Ent{vw} \Xcorr{w}{\beta}$};
	\\
		\Ent{vw} \Zcorr{v}{\beta}
	\,=&\;
		\Zcorr{v}{\beta} \Ent{vw}	\;,
	&&
		\text{and similarly for $\Ent{vw} \Zcorr{w}{\beta}$}.
\end{align}
(When $p \in \ens{v,w}$, the expression on the left-hand side of~\eqref{eqn:commuteEntMeas} is not well-defined, and the bit $\s[p]$ on the left-hand side of~\eqref{eqn:commuteEntShift} does not have a defined value.)
As $\cZ$ is a diagonal operation, we also have
\begin{align}
	\Ent{uv} \Ent{vw} \,=&\;	\Ent{vw} \Ent{uv}	\;.
\end{align}
\end{subequations}
As with preparation maps, we may collect these entangling maps together and describe them as a single operation on many qubits.
Controlled-$Z$ operators act symmetrically on pairs of qubits: in stable-index notation (\autoref{sec:stableIndexNotation}), we have $\cZ\pseu[v,w]\! = \cZ\pseu[w,v]\!$.
Then the order of the entangling operations and the order of the qubits acted on for any single entangling map is unimportant.
As $\cZ$ is self-inverse, we have $\Ent{uv} \Ent{uv} = \idop$; then we may represent a product of entangling maps simply by an \emph{entanglement graph}
\begin{gather}
		\Ent{G}
	\;\;=
		\prod_{uv \in E(G)}	\! \Ent{uv}	\;,
\end{gather}
where the edge-set $E(G)$ consists of all of the pairs of qubits on which entangling maps act.

We may also consider how to commute correction operations past other operations.
First, note that commuting $X^\beta$ and $Z^\gamma$ operations on a state vector accrues at most a sign factor.
Then, the corresponding CPTP maps commute:
\begin{gather}
 		\Xcorr{u}{\beta} \Zcorr{v}{\gamma}
	\;=\;
 		\Zcorr{v}{\gamma} \Xcorr{u}{\beta}	\,.
\end{gather}
Similarly, $\Xcorr{}{}$ operations on any two qubits commute, as do $\Zcorr{}{}$ operations.
We may accumulate corrections of the same type on a common qubit:
\begin{align}
			\Xcorr{u}{\beta} \Xcorr{u}{\beta'}
		=&\,
			\Xcorr{u}{\beta + \beta'}	\,,
		&
			\Zcorr{u}{\gamma} \Zcorr{u}{\gamma'}
		=&\,
			\Zcorr{u}{\gamma + \gamma'}	\,.
\end{align}
With respect to measurement operations, corrections on a qubit $v$ may be interpreted as (classically controlled) changes of basis for the measurement on $v$; then, we can consider ``absorbing'' correction operations on qubits to be measured into the measurement operations.
Consider a measurement in the $\ket{+_\theta}, \ket{-_\theta}$ basis for some $-\pi < \theta \le \pi$.
We have
\begin{gather}
		X \ket{+_\theta}
	\;=\;\,
		\sfrac{1}{\sqrt 2}
			\Big( \e^{i\theta} \ket{0} + \ket{1} \Big)\!
	\,\;\propto\;\,
		\sfrac{1}{\sqrt 2}\Big( \ket{0} + \e^{-i\theta} \ket{1} \Big)
	\;=\;\,
		\ket{+_{(\minus\:\!\theta)}}	,
\end{gather}
and similarly $X \ket{-_\theta} \propto \ket{-_{(\minus\:\!\theta)}}$.
We may then incorporate $\Xcorr{v}{\beta}$ corrections into the basis of a subsequent measurement on $v$ via the equation
\begin{gather}
		\Meas{v}{\theta} \Xcorr{v}{\beta}
	\;\,=\;\,
		\Meas{v}{(\minus 1)^\beta \cdot\:\! \theta}
	\;\,=\;\,
		\Meas{v}{\theta;\beta}
\end{gather}
for a boolean expression $\beta$, using the notation defined in \eqref{eqn:adaptedMeasurementNotation}: more generally, we have $\Meas{v}{\theta;\beta} \Xcorr{v}{\gamma} = \Meas{b}{\theta;\beta+\gamma}$.
We may describe the effect of $Z$ corrections on measurements in a similar fashion:
\begin{gather}
		Z \ket{+_\theta}
	\;=\;\,
		\sfrac{1}{\sqrt 2} \Big( \ket{0} - \e^{i\theta} \ket{1} \Big)
	\;\propto\;\,
		\ket{-_\theta}	\,,
\end{gather}
and similarly $Z \ket{-_\theta} = \ket{+_\theta}$.
For an arbitrary state $\rho_S$ on a set of qubits $S$ including $v$, we then have
\begin{subequations}
\begin{align}
		\bra{+_\theta} \Big[ Z_v \rho_S Z_v\herm \Big] \ket{+_\theta}_v
	\;\,=&\,\;
		\bra{-_\theta} \rho_S  \ket{-_\theta}_v	\;,
	\quad
		\text{and}
	\\
		\bra{-_\theta} \Big[ Z_v \rho_S Z_v\herm \Big] \ket{-_\theta}_v
	\;\,=&\,\;
		\bra{+_\theta} \rho_S  \ket{+_\theta}_v	\;.
\end{align}
\end{subequations}
Thus, $Z$ corrections effectively exchange the roles of the states $\ket{\pm_\theta}$ for measurements made in that basis.
As this exchange is independent of any classically-controlled change of sign in $\theta$, we then obtain
\begin{gather}
		\Meas{v}{\theta;\beta}	\Zcorr{v}{\gamma}
	\;\,=\;\,
		\Shift{v}{\gamma}	\Meas{v}{\theta;\beta}
\end{gather}
for arbitrary boolean expressions $\beta$ and $\gamma$.
(That is, we may toggle the measurement result $\s[v]$ instead of performing a $Z$ correction on $v$ prior to measurement.)
For such a shift operator $\Shift{v}{\gamma}$ which occurs after a measurement $\Meas{v}{\ast}$ and before any operations using $\s[v]$ for classical control, we call $\gamma$ a \emph{bit-dependency} of $v$.\footnote{%
	This corresponds essentially to what are called $\pi$-dependencies in~\cite{BeaudPhD}, and to ``dependencies caused by $Z$-actions'' in~\cite{DKP07}.}

There are some further simplifications to measurement operations which are possible when the measurement angle $\theta$ is a multiple of 
$\pion 2$.
We refer to these as \emph{Pauli simplifications}, as such measurements are with respect to the eigenbases of the $X$ or $Y$ Pauli operators.
For $\theta \in \ens{0,\pi}$, we have $\ket{+_{(\minus\:\!\theta)}} = \ket{+_\theta}$ and $Z \ket{-_\theta} \propto \ket{-_{(\minus\:\!\theta)}}$; thus, we have
\begin{gather}
		\Meas{v}{m\pi;\beta}
	\;\,=\;\,
		\Meas{v}{m\pi}	\;,
\end{gather}
so that we may ignore sign-dependencies in this case.
For $\theta \in \ens{\mpion2,\pion2}$, we have $X \ket{+_\theta} \propto Z \ket{+_\theta}$ and similarly for $\ket{-_\theta}$, so that
\begin{gather}
		\Meas{v}{\pion[\tpm]2;\beta}
	\;\,=\;\,
		\Meas{v}{\pion[\tpm]2}	\Xcorr{v}{\beta}
	\;\,=\;\,
		\Meas{v}{\pion[\tpm]2}	\Zcorr{v}{\beta}
	\;\,=\;\,
		\Shift{v}{\beta}	\Meas{v}{\pion[\spm]2}	\;.
\end{gather}
Thus, measurements in so-called Pauli bases may be performed without dependencies on previous measurements.
In summary, we have the following relations which allow correction operations to be commuted past or absorbed into measurements:
\begin{subequations}
\label{eqn:absorbMeas}%
\begin{align}
	\label{eqn:commuteMeasA}%
		\Meas{v}{\theta;\beta} \Xcorr{u}{\gamma}
	\;=&\;\,
		\Xcorr{u}{\gamma} \Meas{v}{\theta;\beta}	\;;
	\\
	\label{eqn:commuteMeasB}%
		\Meas{v}{\theta;\beta} \Xcorr{u}{\gamma}
	\;=&\;\,
		\Xcorr{u}{\gamma} \Meas{v}{\theta;\beta}	\;,
	&&
		\mspace{-100mu}\text{when $u \ne v$};
	\\
	\label{eqn:commuteMeasC}%
		\Meas{v}{\theta;\beta} \Zcorr{u}{\gamma}
	\;=&\;\,
		\Zcorr{u}{\gamma} \Meas{v}{\theta;\beta}	\;,
	&&
		\mspace{-100mu}\text{when $u \ne v$};
	\\
	\label{eqn:absorbMeasX}%
		\Meas{v}{\theta;\beta} \Xcorr{v}{\gamma}
	\;=&\;\,
		\Meas{v}{\theta;(\beta + \gamma)}	\;;
	\\
	\label{eqn:absorbMeasZ}%
		\Meas{v}{\theta;\beta} \Zcorr{v}{\gamma}
	\;=&\;\,
		\Shift{v}{\gamma}	\Meas{v}{\theta;\beta}	\;;
	\\
		\label{eqn:pauliSimplifX}
		\Meas{v}{m\pi;\beta}
	\;=&\;\,
		\Meas{v}{m\pi}	\;;
	\\
		\label{eqn:pauliSimplifY}
		\Meas{v}{\pion[\tpm]2;\beta}
	\;=&\;\,
		\Shift{v}{\beta}	\Meas{v}{\pion[\tpm]2}	\;.
\end{align}
\end{subequations}
We will occasionally be interested in describing transformations of \oneway\ procedures ``without Pauli simplifications'', in which case we do not use the relations given in~\eqref{eqn:pauliSimplifX} and~\eqref{eqn:pauliSimplifY}.

Finally, we consider the transformations which are possible by commuting shift operators.
We may accumulate multiple signal shifts acting on a single measurement result:
\begin{gather}
	\label{eqn:accumShifts}
		\Shift{v}{\beta} \Shift{v}{\gamma}
	\;=\;
		\Shift{v}{\beta + \gamma}	\;,
\end{gather}
where this sum is evaluated modulo $2$.
If desired, we may also eliminate shift operations entirely from a \oneway\ procedure, by incorporating the bit-flip which is performed on the measurement signal $\s[v]$ into every expression which depends on $\s[v]$.
This has the effect of increasing the complexity of the classical control expressions, but when performed for all measurement results $\s[v]$, also makes explicit the dependency of any operation on prior measurement results.
For an arbitrary operation $\Op{}{f(\ldots,\s[v],\ldots)}$ which is classically controlled by some function $f$ taking $\s[v]$ as a parameter, we have
\begin{gather}
	\label{eqn:signalShift}
		\Op{}{f(\ldots,\s[v],\ldots)} \Shift{v}{\beta}
	\;=\;
		\Shift{v}{\beta} \Op{}{f(\ldots,\s[v] + \beta,\ldots)} \;;
\end{gather}
that is, we substitute $\s[v] + \beta$ wherever $\s[v]$ occurs, again evaluated modulo $2$.
We may use this commutation relation to accumulate all shift operators at the right-hand side of a \oneway\ procedure, after every operation acting on qubits: we refer to this as \emph{signal shifting}.
Conventionally, we may discard any classical measurement results when they are no longer required; thus, at the right-hand end of a \oneway\ procedure, we may remove any signal shift operators which remain.

\subsubsection{Standard forms for \oneway\ procedures}

Using the simplifications above, we may transform \oneway\ procedures into ``standard'' orderings of the operations which are comparatively convenient for discussion, and are also of low operational depth: we now describe two different such forms.

As we have noted above, in any well-formed \oneway\ procedure, it is possible to commute all preparation and entangling maps to the beginning of the procedure (on the right), and to propagate all correction operations to the end (on the left), possibly absorbing corrections into measurement operations while doing so.
Some of the corrections will also induce shift operations just after the measurement into which they are absorbed; and we may do this in a uniform way without explicit knowledge (or despite uncertainties in) the values of the default measurement angles.
This motivates the following:
\begin{definition}
	\label{def:standardize}
	A \oneway\ procedure $\mathfrak P$ is said to be in \emph{standard form}\footnote{%
		This terminology is similar to, but differs slightly from, the usage in~\cite{DKP07} and~\cite{BeaudPhD}.}
	if it can be decomposed as
	\begin{gather}
	 		\mathfrak P
		\;=\;
			\sfC \circ \Meas{}{} \circ \Ent{G} \circ \New{P}	\;,
	\end{gather}
	where \,$\New{P}$ is a product of preparation maps acting on some set of qubits $P$, \,$\Ent{G}$ is a product of entangling operations acting on pairs $uv$ connected by an edge in the graph $G$, $\Meas{}{}$ consists entirely of measurement operations and shift operations (where any shift operation $\Shift{v}{\ast}$ immediately follows the measurement $\Meas{v}{\ast}$), and $\sfC$ consists entirely of correction operations.
	The procedure of transforming a \oneway\ procedure into such a form using equations \eqref{eqn:commutePrep}, \eqref{eqn:commuteEnt}, 
	and \mbox{\eqref{eqn:commuteMeasA}--\eqref{eqn:absorbMeasZ}} is called \emph{standardization}.
\end{definition}
Note in particular that the process of ``standardization'' does not apply Pauli simplifications or signal shifting.
For practical applications, it is probably preferable to perform Pauli simplifications, at least; but for the purposes of discussion it is useful to consider the case where such simplifications of expressions for \oneway\ procedures are not applied.

By applying Pauli simplifications on measurements and signal shifting, it is possible to have equivalent \oneway\ procedures which are each in standard form.
Rather than attempt to characterize when such expressions are equal, it is useful to consider a form in which any variation of measurement dependencies has been removed.
We may define such a form as follows:
\begin{definition}
	\label{def:normalForm}
	A \oneway\ procedure $\mathfrak P$ is said to be in \emph{normal form} if it is in standard form, does not contain any shift operations, and any measurement operator with an angle $\theta \in \frac{\pi}{2} \Z$ has no measurement dependencies.
	We call the a procedure $\fP$ in normal form the \emph{normalization} of another $\fP'$ if we may obtain $\fP$ from $\fP'$ by standardizing $\fP'$, applying Pauli simplifications \mbox{\eqref{eqn:pauliSimplifX}--\eqref{eqn:pauliSimplifY}}, and then performing signal shifting~\eqref{eqn:signalShift}, removing the shift operators from the left after doing so.
\end{definition}

While it may not be desirable to compute such expressions for the practical application of \oneway\ procedures, obtaining the normalizations of \oneway\ procedures will be useful for the analysis of this article.

\subsubsection{Composability of \oneway\ procedures}

To perform universal computation, it is necessary to \emph{compose} \oneway\ computations: that is, to apply a second \oneway\ procedure $\fP_2$ to the output system produced by another \oneway\ procedure $\fP_1$.
Therefore, when composing \oneway\ procedures $\mathfrak P = \mathfrak P_2 \circ \mathfrak P_1$, we will be interested in restricting the qubits which are allocated in $\mathfrak P_2$ to be distinct from any qubits involved in $\mathfrak P_1$, except those which $\fP_1$ produces as output.

It will be convenient at this point to introduce a combinatorial structure which partially describes a \oneway\ procedure in order to define composability conditions (as well as to visually convey the structures which will be important in later sections).
For a \oneway\ procedure $\fP$, we may consider the information contained in $\fP$ which is independent of ordering or of the particular measurement angles or outcomes:
\begin{definition}
	\label{def:geometry}
	The \emph{geometry underlying a \oneway\ procedure $\mathfrak P$} is a triple $(G,I,O)$, where $G$ is the entanglement graph on the qubits of $\fP$ specified by the entangling maps $\Ent{vw}$ in $\fP$, $I \subset V(G)$ is the set of qubits which are not prepared by $\fP$, and $O \subset V(G)$ is the set of qubits which are not measured by $\fP$.
	We refer to $I$ and $O$ as the \emph{input subsystem} and \emph{output subsystem} of $\mathfrak P$, respectively.
	(When a given graph $G$ may be inferred from context, we may write $I\comp = V(G) \setminus I$ and $O\comp = V(G) \setminus O$ for the complements of these subsystems.)
\end{definition}
We will often illustrate geometries in the style shown in \autoref{fig:genericGeom}.
\begin{figure}[t]
	\begin{center}
		\includegraphics{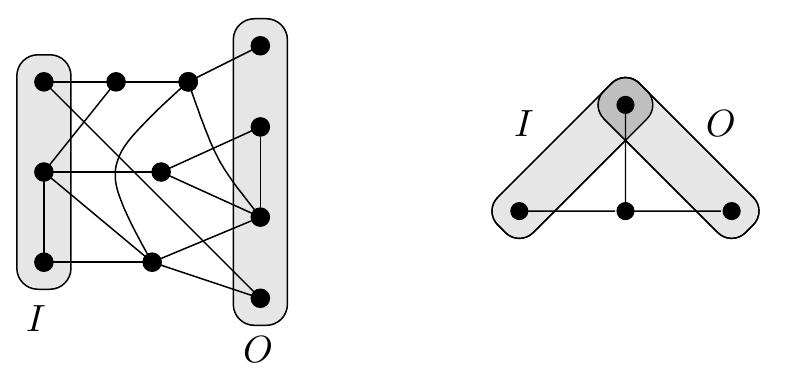}
 		\vspace{-1em}
	\end{center}
	\caption[Illustration of two geometries for \oneway\ procedures.]{\label{fig:genericGeom}%
		Illustration of two geometries for \oneway\ procedures.
		Labels for the vertices have been omitted.
		Note in particular that the sets $I$ and $O$ may overlap (indicated by the more darkly shaded region in the geometry on the right).}
\end{figure}
Such diagrams give a partial description of a \oneway\ procedure, in which each qubit $v \in I\comp$ is prepared, the entangling operations $\Ent{vw}$ implied by the edges $vw \in E(G)$ are performed, and the qubits $w \in O\comp$ are measured.
However, the measurement order, measurement angles and dependencies, and any correction/shift operations are left unspecified.

Using geometries, we can express conditions for the composability of two \oneway\ procedures in terms of a sense in which the geometries of these patterns may be ``composed'':
\begin{definition}
	For two geometries $\cG_1 = (G_1, I_1, O_1)$ and $\cG_2 = (G_2, I_2, O_2)$, we say that $\cG_1$ and $\cG_2$ are \emph{composable} if $\big[V(G_1) \inter V(G_2)\big] \subset \big[I_2 \inter O_1\big]$\,.
	The \emph{composition} $\bar\cG = \cG_2 \circ \cG_1$ is then defined by $\bar\cG = (\bar G, \bar I, \bar O)$, where
	\begin{subequations}
	\label{eqn:composibilityConditions}%
	\begin{align}
			V(\bar G)
		\,\;=&\;\;\;
			V(G_1) \union V(G_2)	\,,
		\\
			E(\bar G)
		\,\;=&\;\;\;
			E(G_1) \symdiff E(G_2)	\,,
		\\
			\bar I
		\;\;=&\;\;\;
			I_1 \,\,\union (I_2 \setminus O_1) \,,
		\\
			\bar O
		\,\;=&\;\;\;
			O_2 \union (O_1 \setminus I_2) \;,
	\end{align}
	\end{subequations}
	where $A \symdiff B = [A \setminus B] \union [A \setminus B]$ is the symmetric difference of $A$ and $B$.
	Two \oneway\ procedures $\mathfrak P_1$ and $\mathfrak P_2$ are \emph{composable} if their underlying geometries  $\cG_1$ and $\cG_2$ are composable; the underlying geometry of the procedure $\mathfrak P_2 \circ \mathfrak P_1$ is then $\cG_2 \circ \cG_1$.
\end{definition}
\autoref{fig:geomCompose} illustrates the composition of two geometries (and, by way of synecdoche, represents the composition of two \oneway\ procedures).
\begin{figure}[t]
	\begin{center}
		\includegraphics{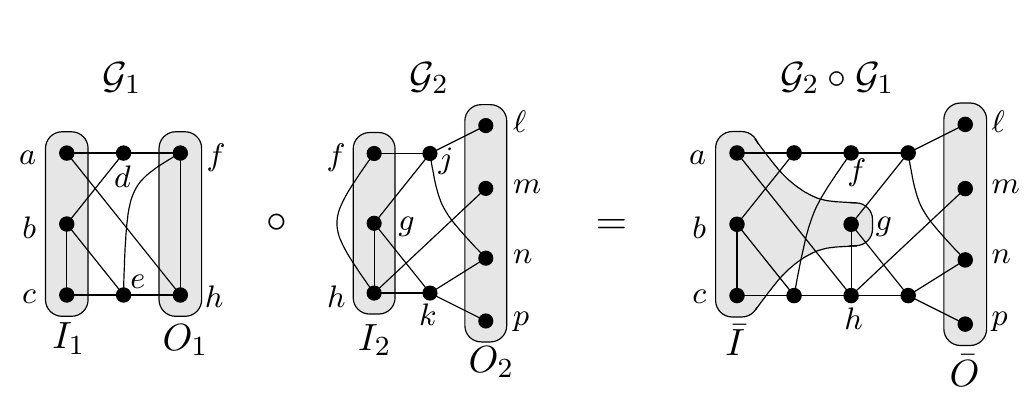}
		\vspace{-2ex}
	\end{center}
	\caption[Illustration of the composition of two generic geometries for \oneway\ procedures.]{\label{fig:geomCompose}%
		Illustration of the composition of two generic geometries for \oneway\ procedures.
		Notice that composition in the diagrams is left-to-right, while the composition in written notation is right-to-left.}
		\vspace{2ex}
\end{figure}

In order to obtain a composition of two \oneway\ procedures $\fP_2 \circ \fP_1$, we may relabel the qubits involved in $\fP_2$ in order for the conditions of \eqref{eqn:composibilityConditions} to hold.
From this point onwards, when we wish to compose two generic \oneway\ procedures $\mathfrak P_1$ and $\mathfrak P_2$, we will assume that they are composable in the sense above, eliding over the procedure of relabelling the qubits to be prepared which may be necessary.

\subsection{The DKP and simplified RBB constructions}
\label{sec:constructions}

Having established stable-index tensor notation (\autoref{sec:stableIndexNotation}) and a calculus of the operations used in a \oneway\ procedure (\autoref{sec:measurementCalculus}), we may describe two constructions for approximately universal quantum computation in the \oneway\ model.\footnote{%
	The presentation of such construction presented here differs from the approach of~\cite{DKP07}: while the latter proceeds by defining the composition of formal denotational semantics; we will adopt a somewhat more informal approach involving stable-index tensor notation.}
The first construction we will refer to as the \emph{DKP construction}, and is based on the work of Danos, Kashefi, and Panangaden~\cite{DKP07} for the construction of \oneway\ procedures in the absence of topological constraints on the interactions of qubits.
The second construction is derived from the elementary constructions among those presented by Raussendorf, Browne, and Briegel~\cite{RBB03} for producing \oneway\ procedures, under the topological constraint that the entanglement graph be an induced subgraph of a suitably large rectangular grid; we call this the \emph{simplified RBB construction}.\footnote{%
	The simplification implied in this title is twofold.
	Firstly, we use the $\Zz$ operation given in~\eqref{eqn:Zz} to describe the construction of arbitrary operations, rather than the $\textsc{cnot}$ operation as in \cite{RBB03}, page~5.
	While this will lead in practise to a constant-factor reduction in the complexity of some \oneway\ procedures, this construction is clearly implicit in~\cite{RBB03}, and so we may attribute this construction to that work.
	Secondly, the ``simplified'' RBB construction omits the special-purpose constructions exhibited in Sections~IV.A--F of~\cite{RBB03}, such as compact patterns for reversing an block of consecutive qubits, non-nearest neighbor $\textsc{cnot}$ operations, the quantum Fourier transform over $\Z_{2^k}$ for $k \ge 0$, and addition circuits for $\Z_{2^k}$.}

\subsubsection{Construction of the gate-sets $\cB_\DKP$ and $\cB_\RBB$}
\label{sec:constructGates}

The first step in both of the constructions we describe is to identify \oneway\ procedures for the gate-sets $\cB_\DKP = \ens{J(\theta), \cZ}_{\theta \in \frac{\pi}{4}\Z}$ and $\cB_\RBB = \ens{J(\theta), \Zz}_{\theta \in \frac{\pi}{4}\Z}$.
We shall do this in overview, leaving the details for \autoref{apx:constructGatesApx}.

By definition, the CPTP entangling map $\Ent{vw}$ performs controlled-$Z$ operations on pairs of qubits: for a system $S$ including qubits $v,w$, we have
\begin{gather}
	\label{eqn:cZpattFormula}
		\Ent{vw}(\rho_S)
	\;=\;
		\cZ_{vw} \;\rho_S\, \cZ_{vw}	\;.
\end{gather}
We may implement $\Zz$ and $J(\theta)$ operations in the \oneway\ model with somewhat more elaborate procedures:
\begin{gather}
	\label{eqn:ZzPattFormula}
		\fZz_{v,w}
	\;=\;
		\Zcorr{v}{\s[a]} \Zcorr{w}{\s[a]} \Meas{a}{\pion2} \Ent{av} \Ent{aw} \New{a}	\;,
	\\[1ex]
  	\label{eqn:JpattFormula}
		\fJ^\theta_{w/v}
	\;=\;
		\Xcorr{w}{\s[v]} \Meas{v}{-\theta} \Ent{vw} \New{w}	\;,
\end{gather}
where the subscript $w/v$ in the latter denotes that it discards $v$ from the input system and introduces a new qubit $w$.
It is possible to show that the effects of these two \oneway\ procedures are given by
\begin{align}
 	\label{eqn:ZzCompute}
		\fZz_{v,w}\big( \rho_S \big)
	\;=&\,\;
		\Zz_{v,w} \; \rho_S \; \Zz_{v,w}\herm	\;,
\end{align}
for arbitrary operators $\rho_S$ acting on a system including $v$ and $w$, and
\begin{align}
		\fJ^\theta_{w/v}\Big(\ket{\Psi}\bra{\Psi}_S\Big)
	\;=&\;\,
		J(\theta)\big._{w/v} \ket{\Psi}\bra{\Psi}_S J(\theta)\herm_{v/w}	\;,
\end{align}
for a system $S$ including $v$ but not including $w$, where by $J(\theta)_{w/v}$ we denote the linear operator $\cH[v] \to* \cH[w]$ which maps state-vectors $\ket{\psi}_v \,\mapsto\, \big[\:\! J(\theta) \ket{\psi} \:\!\big]_w$\,.
Thus, the procedures $\Ent{vw}$, $\fZz_{v,w}$, and $\fJ^\theta_{w/v}$ given in \eqref{eqn:cZpattFormula}, \eqref{eqn:ZzPattFormula}, and \eqref{eqn:JpattFormula} respectively may be used to represent each of the operations in $\cB_\DKP$ and $\cB_\RBB$.
The geometries underlying these three \oneway\ procedures are illustrated in \autoref{fig:primitiveGeom}.
\begin{figure}[t]
	\begin{center}
		\includegraphics{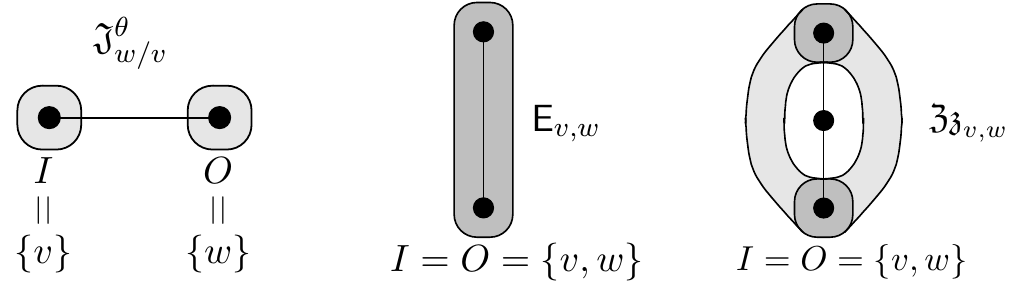}
		\vspace{-2ex}
	\end{center}
	\caption[Illustration of the geometries of two primitive \oneway\ patterns.]{\label{fig:primitiveGeom}%
		Illustration of the geometries of the three elementary \oneway\ procedures $\fJ_{w/v}^\theta$, $\Ent{vw}$, and $\fZz_{v,w}$ for generic qubits $v$ and $w$, given by \eqref{eqn:cZpattFormula}, \eqref{eqn:ZzPattFormula}, and \eqref{eqn:JpattFormula} respectively.
		Note that the input and output subsystems coincide for $\Ent{vw}$ and $\fZz_{v,w}$.}
		\vspace{2ex}
\end{figure}

\subsubsection{Corresponding stable-tensor indices to qubits in the \oneway\ model}
\label{sec:onewayStableIndexCorresp}

By the analysis of \autoref{sec:constructGates}, we have a simple correspondence between unitary circuit operations described in terms of stable-index tensor expressions, and simple \oneway\ procedures for effecting them: we have
\begin{subequations}
\label{eqn:onewayCorresp}
\begin{align}
		J(\theta) \pseu[:w/v]
	\:\!\to&\;\,
		\fJ^\theta_{w/v}	\;;
	&
		\cZ\pseu[v,w]
	\:\!\to&\;\,
		\Ent{vw}	\;;
	&
		\Zz\pseu[v,w]
	\:\!\to&\;\,
		\fZz_{v,w}	\;.
\end{align}
We may extend this correspondence also to account for the preparation of fresh qubits: as a preparation map $\New{v}$ simply prepares a qubit $v$ in the $\ket{+}$ state, we may also write
\begin{align}
 		\ket{+}\pseu[:v/]
	\,\to&\;\;
		\New{v}	\;.
\end{align}
\end{subequations}
These parallels between operations on indices in stable-index expressions, and CPTP maps operations on qubits in the \oneway\ model, also extend to syntactical considerations:
\begin{itemize}
\item 
	If we order terms in a stable-index expression from right to left according to an order in which the operations may actually be performed, by construction there cannot be any terms acting on an index $v$ to the right of an operation such as $J(\theta)\pseu[:w/v]$ which deprecates $v$; nor operations to the left of $v$ before an operation such as $J(\theta)\pseu[:v/u]$ in which $v$ is advanced.
	Similarly, in a \oneway\ procedure, operations acting on a \emph{qubit} $v$ are not permitted after it has been measured, nor before it has been prepared.

\item
	In a stable-index expression for unitary circuits, stable indices (which are neither advanced nor deprecated) are those indices representing the a qubit whose configuration over the standard basis 
	are unchanged by the operator acting on it.
	Similarly, in a \oneway\ procedure, a qubit which is neither added to the system nor discarded is in both the input and output subsystems of the procedure, and so is only acted on by diagonal operations (setting aside possible $\Xcorr{}{}$ operations
),
	which similarly do not affect the diagonal terms of their reduced density operators.
\end{itemize}
Thus, the role of tensor indices (in a stable-index expression) in describing quantum states as distributions over the standard basis is analogous to the role of qubits in a \oneway\ procedure which actually support quantum states.
We may then obtain \oneway\ procedures for unitary embeddings $U$ by simply applying the correspondences in \eqref{eqn:onewayCorresp} to a stable-index expression of a unitary circuit.
While the the constructions of~\cite{DKP07} or~\cite{RBB03} are not presented in precisely such terms, the DKP construction and simplified RBB constructions may be described as transliterating a sequence of gates in a unitary circuit into \oneway\ procedures of the form $\fJ^\theta$, $\Ent{}$, and $\fZz$, yielding a \oneway\ procedure for the circuit as a whole.
We present an explicit algorithm to do so in \autoref{sec:constructions}.

In the case of the DKP construction, in which the qubits of the resulting procedures are not subject to any topological constraints, application of the correspondence in \eqref{eqn:onewayCorresp} is an essentially complete summary of the construction.
The case of the simplified RBB construction is similar, but is subject to topological constraints which we describe below.

\subsubsection{Topological constraints in the simplified RBB construction}
\label{sec:topologicalConstraints}

The simplified RBB construction is subject to the constraint that the entangled states which are used must be obtained from a \emph{cluster state}: that is, a state in which qubits involved are arranged in a two-dimensional grid, and two qubits $v,w$ are acted on with an entangling map $\Ent{vw}$ if and only if $v$ and $w$ are nearest neighbors in the grid.
We will describe circuit constructions in which the indices of the stable-index representation satisfies the same constraints as required for the qubits in the cluster-state-based variant of the \oneway\ model, which will allow us to employ the correspondence of \eqref{eqn:onewayCorresp} directly.

In the cluster-state-based (and original) variant of the \oneway\ model, the only tool provided for obtaining graphs \emph{other} than a rectangular grid is the removal of qubits from such a grid, via measurement in the $\ket{0},\ket{1}$ basis.
Temporarily extending the set of operations defined in~\autoref{sec:elemOperationsOneWay}, we may define a $\Meas{v}{\sfz}$ operation which performs a destructive single-qubit measurement of a qubit $v$ in the standard basis, and produces a bit $\s[v]$ storing the result in the obvious manner.
We may then show
\begin{gather}
		\Meas{v}{\sfz} \Ent{vw}
	\;\;=\;\;
		\Zcorr{w}{\s[v]} \Meas{v}{\sfz}	\;,
\end{gather}
which follows from $\ket{x}_{\!\s[v]} \bra{x}_v \cZ_{vw} =\, Z^{x}_w \ket{x}_{\!\s[v]} \bra{x}_v$ for $x \in \ens{0,1}$.
We may use this to remove all of the entangling maps $\Ent{vw}$ incident to a qubit $v$, propagating the measurement to the right until it is performed immediately after the preparation of $v$.
If we remove the redundant preparation and measurement of $v$ (and the corrections induced by that measurement), this procedure has the effect of removing the vertex $v$ from the grid.
This is illustrated in \autoref{fig:removeVtx}.
\begin{figure}[t]
	\begin{center}
		\includegraphics{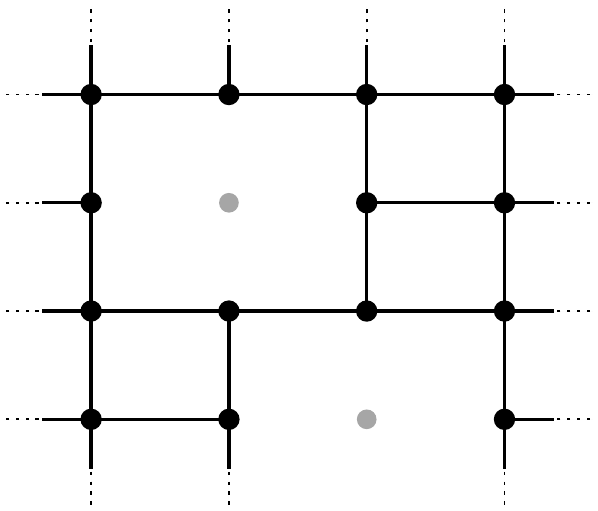}
		\vspace{-2ex}
	\end{center}
	\caption[Illustration of the effect of removing vertices from a two-dimensional grid.]{\label{fig:removeVtx}%
		Illustration of the effect of removing vertices from a two-dimensional grid.
		This corresponds to the effect of performing measurement of select qubits in a cluster state, in the standard basis.
		Vertices correspond to qubits, with vertices in grey corresponding to qubits which have been measured; edges (or absence of edges) correspond to the presence (resp. absence) of entangling maps on adjacent qubits.}
		\vspace{2ex}
\end{figure}

We may instead suppose (equivalently) that we may prepare any subset of the qubits in a two-dimensional lattice (omitting those qubits which we would subsequently remove as described above), and that two qubits are subject to an entangling map if and only if they are nearest neighbors in the grid.
That is: we will merely require that the entanglement graph $G$ is an induced subgraph of a rectangular lattice.
Given this restriction on \oneway\ procedures, we impose the corresponding restriction that the circuits which are accepted as input must have a linear nearest-neighbor topology. 

To represent products of single-qubit gates as a \oneway\ procedure, in the DKP construction as well as in the simplified RBB construction, we may compose patterns of the form $\mathfrak J_{w/v}^\alpha$ as in
\begin{gather}
	\mathfrak J_{v_N/v_{N \minus 1}}^{\alpha_{N \minus 1}} \cdots\;\, \mathfrak J_{v_2/v_1}^{\alpha_1} \, \mathfrak J_{v_1/v_0}^{\alpha_0}	\;.
\end{gather}
We may refer to such a procedure as a \emph{chain pattern}.
Note that in such a procedure, we require that the qubits $v_j$ for $0 < j < N$ to have degree $2$ in the corresponding entanglement graph, having no neighbors aside from $v_{j-1}$ and $v_{j+1}$\,.
In the simplified RBB construction, the degrees of these qubits must be ``enforced'' by a representation in the grid, where every neighbor of these qubits $v_j$ in the grid (for $0 < j < N$) are removed except for the neighbors $v_{j-1}$ and $v_{j+1}$. 
An illustration of such an embedding of a chain in the grid is illustrated in \autoref{fig:isolatedChain}.
\begin{figure}[t]
	\begin{center}
		\includegraphics{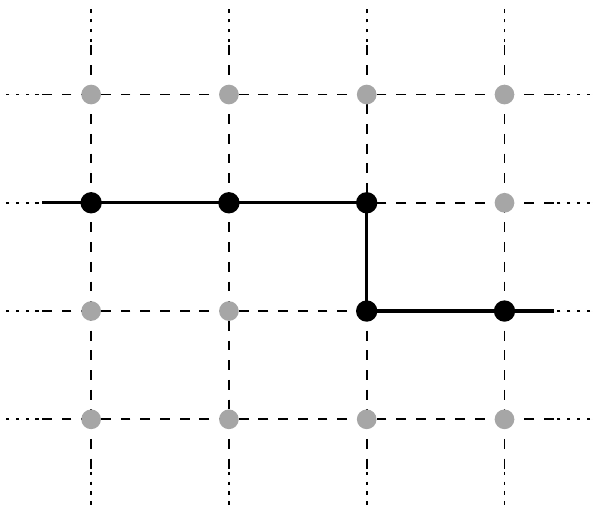}
	\end{center}
	\caption[Illustration of an embedding of a chain in the grid.]{\label{fig:isolatedChain}%
		Illustration of an embedding of a chain in the grid.
		Grey dots and broken lines represent vertices and edges which are removed from the grid by vertex deletions, to produce a graph containing the illustrated sequence of vertices with degree $2$.
	}
\end{figure}
(Typically, such chain patterns would be embedded as a horizontal path through the grid, but as \autoref{fig:isolatedChain} also shows, we may also employ more general paths in the grid.)

Two logical qubits in a circuit which do not interact must be represented by (chains of) qubits in the grid which are separated by at least two edges; otherwise there will be entangling relations between them.
Thus, we will represent products of single-qubit operations on independent qubits as non-adjacent horizontal rows of the grid.
We may then use a vertical column of three qubits (as illustrated in \autoref{fig:primitiveGeom}) to implement a $\Zz$ operation between these non-adjacent rows, using a single qubit between the two rows to mediate the interaction.
However, there are two technicalities which must be addressed:

\paragraph{Adjusting for uneven chain lengths.} 

Consider a decomposition of single-qubit operators $U$ and $V$ on distinct qubits, as products of $J(\theta)$ operators.
If the number of terms in the two products differ, the ends of the corresponding chain patterns may lie in different columns.
To operate on two chains with a $\Zz$ operation as described above, we require that the ends of the chains representing each qubit lie in the same column.
To achieve this, we may extend each chain by two or more vertices, as necessary, representing a decomposition of $\idop$ as a product of $J(\theta)$ gates.
We may define \oneway\ procedures
\begin{align}
	\label{eqn:identityPatterns}
		\mathfrak {Id}^2_{v_2/v_0}
	\;=&\;\,
		\mathfrak J_{v_2/v_1}^{\,0} \, \mathfrak J_{v_1/v_0}^{\, 0}		\;,
	&
		\mathfrak {Id}^3_{v_3/v_0}
	\;=&\;\,
		\mathfrak J_{v_3/v_2}^{\pion 2} \, \mathfrak J_{v_2/v_1}^{\pion 2} \, \mathfrak J_{v_1/v_0}^{\pion 2}	\;;
\end{align}
the geometries underlying these procedures are illustrated in \autoref{fig:idopChain}.
\begin{figure}[t]
	\begin{center}
	\vspace{1em}
		\includegraphics{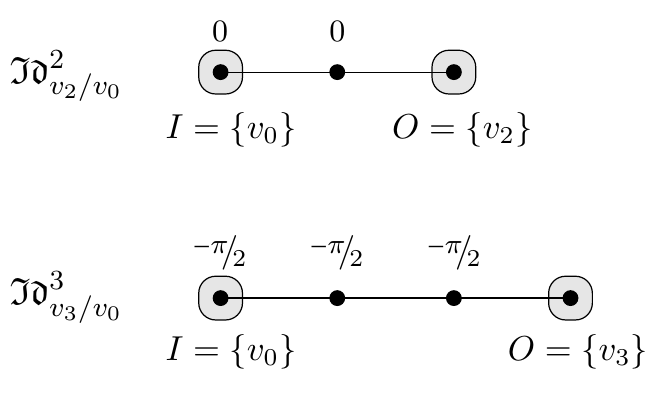}
	\vspace{-1em}
	\end{center}
	\caption[Illustration of an embedding of a chain in the grid.]{\label{fig:idopChain}%
		Illustration of the geometries underlying the procedures $\mathfrak{Id}^2$ and $\mathfrak{Id}^3$ as defined in	\eqref{eqn:identityPatterns}.
		The default measurement angles for qubits in $O\comp$ are indicated above each qubit.
	}
\end{figure}
In terms of the correspondence between stable-index expressions and \oneway\ procedures described in \autoref{sec:onewayStableIndexCorresp}, we have
\begin{subequations}
\begin{align}
		\mathfrak {Id}^2_{v_2/v_0}
	\;\to&\;\,
		J(0)\pseu[:v_2/v_1] J(0)\pseu[:v_1/v_0] \;,
	\\[1ex]
		\mathfrak {Id}^3_{v_3/v_0}
	\;\to&\;\,
		J(\pion 2)\pseu[:v_3/v_2] J(\pion 2)\pseu[:v_2/v_1] J(\pion 2)\pseu[:v_1/v_0] \;. 
\end{align} 
\end{subequations}
One can verify the stable-index products given on the right evaluate to $\idop\pseu[:v_2/v_0]$ and $\idop\pseu[:v_3/v_0]$ respectively; then both \oneway\ procedures above simply transmit the state presented as input to its respective output system, using a chain of length either $2$ or $3$.
By composing these procedures repeatedly, we can then implement a path of any length $\ell \ge 2$ in the grid, representing a sequence of operations which performs the identity $\idop_2$ on a given qubit.
Thus, whenever required, we may suppose that the paths in the grid corresponding to any two qubits are of the same length, regardless of the number of (non-trivial) unitary transformations performed on them.

\paragraph{``Forbidden'' qubits adjacent to the mediating qubit in $\Zz$.}
Just as we require a distance of $2$ between qubits in a \oneway\ procedure representing non-interacting qubits in a unitary circuit, we require that the qubit $a$ which mediates a $\Zz_{vw}$ operation is adjacent only to $v$ and $w$; that is, we require that the qubits immediately to the left and right in the grid are not in the entanglement graph.
Then we require a distance of two edges between any two instances of $\fZz$ performed on the same pair of rows. 
For instance, to represent a circuit $\Zz\pseu[v_f, w_f] J(\beta)\pseu[:w_f/w_i] J(\alpha)\pseu[:v_f/v_i] \Zz\pseu[v_i, w_i]$ on a pair of logical qubits $v$ and $w$, in which two $\Zz$ operations are separated only by a layer of single-qubit operations of depth $1$, we may perform the substitution
\begin{align}
		\Zz\pseu[v_f, w_f] J(\beta)\pseu[:w_f/w_i] & J(\alpha)\pseu[:v_f/v_i] \Zz\pseu[v_i, w_i]
	\notag\\=\;\;
		\Zz\pseu[v_f, w_f]	& J(0)\pseu[:w_f/w_2] J(0)\pseu[:w_2/w_1] J(\beta)\pseu[:w_1/w_i]
	\notag\\						& J(0)\pseu[:v_f/v_2] J(0)\pseu[:v_2/v_1] J(\alpha)\pseu[:v_1/v_i] \Zz\pseu[v_i, w_i]	\;.
\end{align}
Mapping each $J(\theta)\pseu[:a/d]$ term to a procedure $\fJ^\theta_{a/d}$ and each $\Zz\pseu[s,t]$ term to a procedure $\fZz_{s,t}$\,, the distance between the qubits $v_i$ and $v_f$ resulting in the right-hand side is then at least $2$ (and similarly for $w_i$ and $w_f$), as required.

\vspace{\baselineskip}
These observations describe ways in which a \oneway\ procedure for performing transformations must be ``padded'' by operations which perform the identity, in order to simulate a unitary circuit in the \oneway\ model.
We illustrate these observations in \autoref{fig:zzCompose} as sample constructions of \oneway\ procedures.
\begin{figure}[pt]
	\begin{center}
		\includegraphics[scale=0.8]{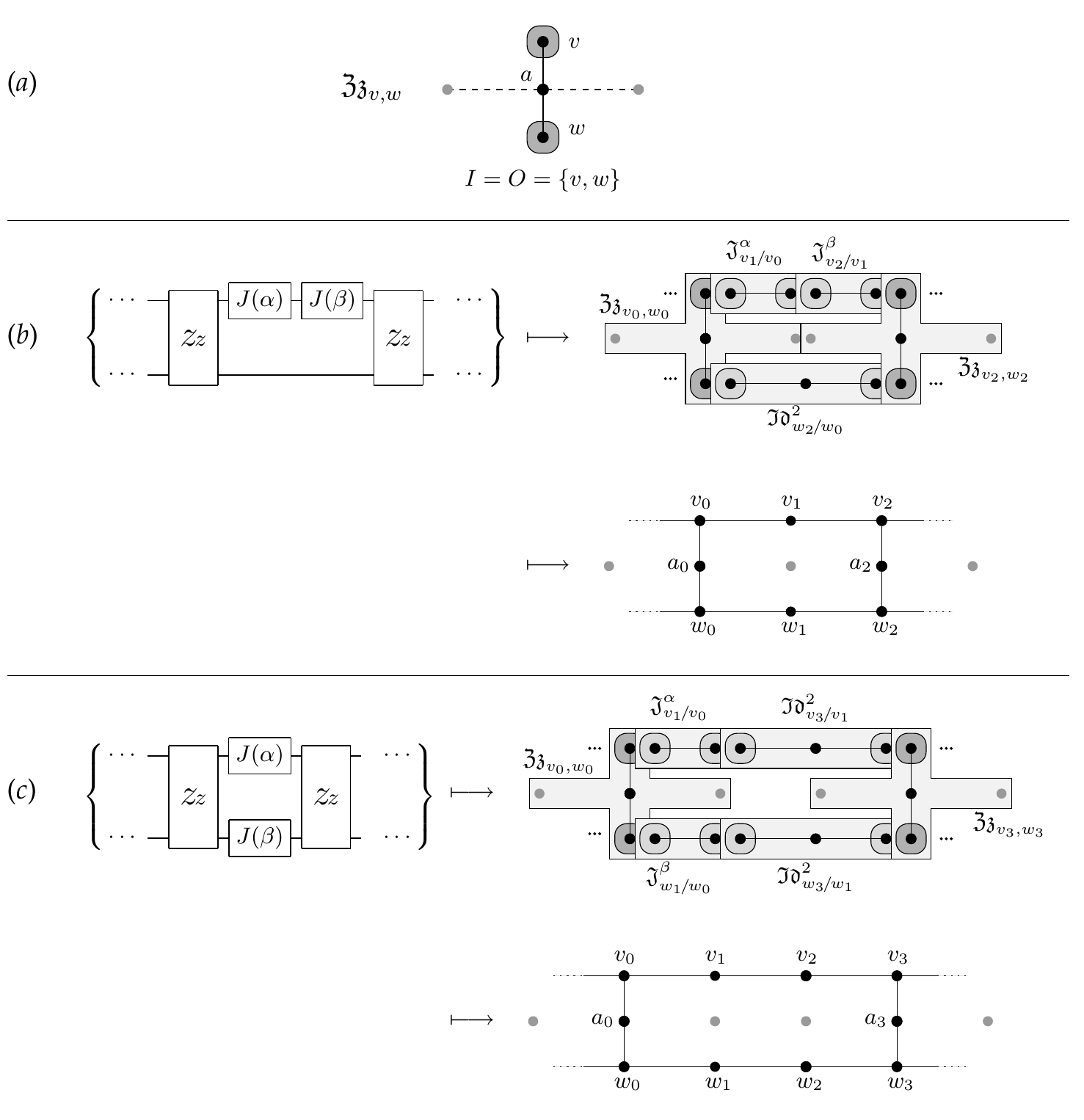}
 	\vspace{-0.5em}
	\end{center}
	\caption[\ldots]{%
		Illustration of the constraints on composition of the pattern $\fZz$ in the setting of an induced subgraph of the grid.
		Grey dots represent vertices which have been removed from the grid.
		Composition is illustrated with overlapping tiles, whose shapes represent spatial constraints imposed by the operations.
	
		\vspace{1ex}\bgroup
		\addtolength\textwidth{-0.5em}
		\begin{small}
		\begin{subfiglist}
		\item
			The geometry itself, together with the ``forbidden'' neighbors of the mediating qubit $a$.
			(Broken lines between the dots represent the edges removed as a result of removing these vertices.)
		\item
			A composition of $\fZz$ procedures with $\fJ^\theta$ procedures, corresponding to a circuit in which two $\Zz$ gates are separated by two $J(\theta)$ operations.
			The bottom row of the corresponding \oneway\ procedure is padded with a $\mathfrak {Id}^2$ procedure to synchronize the lengths of the rows.
			In this case, a single vertex is a common forbidden neighbor of each mediating qubit.
		\item
			\label{item:subfigPadBoth}
			A composition of $\fZz$ procedures with $\fJ^\theta$ procedures, corresponding to a circuit in which two $\Zz$ gates are separated by one $J(\theta)$ operation on either wire.
			The forbidden neighbors of the mediating qubits force both rows to be padded with identity operations.
		\end{subfiglist}
		\end{small}
		\egroup
	}
	\addtocounter{figure}{-1}\refstepcounter{figure}\label{fig:zzCompose}
	\vspace{2em}
\end{figure}
However, in order to obtain a uniform description of both the DKP and simplified RBB constructions, we may instead pad the stable-index representations for circuits with the corresponding products of single-qubit unitaries as a pre-processing stage.
By construction, the indices of the resulting stable-index expressions will then satisfy the same constraints which are required for the qubits in a cluster-based \oneway\ computation.
We may perform such padding as part of an alternative procedure for constructing stable-index expressions for circuits: such a procedure is described in \autoref{apx:stableIndexConstructionConstrained}.

\subsubsection{Unified description of the DKP and simplified RBB constructions}
\label{sec:constructionAlgorithms}

Using the above techniques for the simplified RBB construction, we may obtain a unified procedure describing both the DKP and simplified RBB constructions, as follows.

\paragraph{Procedure for both constructions.}

Suppose we are given a unitary circuit $C$ over the gate-set $\ens{H,T,\cZ}$\,: as described in the previous section, we further require that $C$ have a linear nearest-neighbor topology in the case of the simplified RBB construction.

\vspace{1em}\begin{descenum}
\item[Obtain a circuit in normal form.]\hfill\\
	Given a unitary circuit $C$ generated over the gate-set $\ens{H, T, \cZ}$, we obtain an equivalent circuit as follows.
	We cancel pairs of $\cZ$ gates which act on common pairs of wires, and which are separated only by gates with which they commute (\ie\ $T$ operations and operations acting on other wires).
	In the case of the simplified RBB construction, we convert the circuit to the (essentially equivalent) gate-set $\ens{H,T,\Zz}$ by applying the substitution
	\begin{align}
		\label{eqn:substZzForCz}
			\cZ
		\,=&\,\,
			\Zz (T\herm \ox T\herm)^2
	\end{align}
	to all $\cZ$ gates; no such translation is required for the DKP construction.
	We then commute all $T$ and $T\herm$ operations to the left (simplifying them using the identity $T^8 = \idop$), collecting them in each case either at the end of the circuit, or immediately preceding a Hadamard operation the same qubit.

	We call the resulting circuit $C'$ the \emph{normal form} of $C$; this circuit consists of a product of operations $\cZ$ and $H T^r$ (for various $r \in Z$) for the DKP construction, and $\Zz$ and $H T^r$ for the simplified RBB construction.

\item[Convert the circuit to a stable-index expression in the gate-set $\cB_\DKP$ or $\cB_\RBB$.]\hfill\\
	Having obtained $C'$, we apply the substitutions
	\begin{align}
	 		H T^m
		\;=&\;\,
			J(\tfrac{m\pi}{4})	\;,
		&
			T^m
		\;=&\;\,
			J(0) J(\tfrac{m\pi}{4})
	\end{align}
	for each consecutive block of $T$ operations, applying the right-hand equality for blocks of $T$ operations at the end of the circuit.
	These substitutions yield an equivalent 
	representation of the same operation using the gate-set $\cB_\DKP$ or $\cB_\RBB$, depending on whether the circuit makes use of $\cZ$ or $\Zz$ gates.
	As we do so, we convert the resulting circuit into stable-index notation: we use an algorithm such as that of \autoref{apx:stableIndexConstructionUnconstrained} to do so for the DKP construction; for the simplified RBB construction, we may use instead the procedure of \autoref{apx:stableIndexConstructionConstrained} to impose additional topological constraints on the resulting circuit.
	We call the resulting stable-index expression for the circuit $C''$.

\item[Translate individual operations into the \oneway\ model.]\hfill\\
	In $C''$, note that each single-qubit $J(\theta)$ operation deprecates one index and advances another, and that all of the two-qubit $\cZ$ or $\Zz$ operations act ``stably'' on their indices.
	We may then define a mapping $\Phi$ from stable-index expressions of operations in the gate-set $\cB_\DKP \union \cB_\RBB$ (and introduction of fresh qubits in the $\ket{+}$ state) to \oneway\ procedures involving $\fJ^\theta$, $\Ent{}$, and $\fZz$ operations as follows:
	\begin{subequations}
	\label{eqn:jointHomphm}%
	\begin{gather}
			\Phi\Big(\,	\ket{+}\pseu[:v/]	\Big)
		\;=\;\,
			\New{v} 	\;;
		\\[1ex]
			\Phi\Big(\,	J(\theta)\pseu[:w/v]	\Big)
		\;=\;\,
			\fJ^\theta_{w/v}	\;;
		\\[1ex]
			\Phi\Big(\,	\cZ\pseu[v,w]	\Big)
		\;=\;\,
			\Ent{vw}	\;;
		\\[1ex]
			\Phi\Big(\, \Zz\pseu[v,w]	\Big)
		\;=\;\,
			\fZz_{v,w}	\;,
	\end{gather}
	\end{subequations}
	where we identify the labels of indices of the stable-index tensor expression with labels for qubits in the \oneway\ procedure.
	We then extend this map $\Phi$ homomorphically to products of such terms.

\item[Obtain a normal form for the resulting \oneway\ procedure.]\hfill\\
	We standardize (\autoref{def:standardize}) the \oneway\ procedure $\Phi(C'')$, and then apply Pauli simplifications and signal shifting to obtain a procedure in normal form (\autoref{def:normalForm}).
\end{descenum}
By construction of $C''$ in the above procedure, each index is advanced at most once and deprecated at most once; thus, each qubit in $\Phi(C'')$ is produced in the output of an operation (or ``allocated'') at most once, and removed from the input of an operation (or ``discarded'') at most once.
Furthermore, by order of the terms in $C''$, the operation which allocates a qubit $v$ is the first which acts on $v$, and the operation which discards $v$ is the last acting on $v$.
The resulting \oneway\ procedure is therefore well-formed.

By the analysis of \autoref{sec:constructGates}, we may easily show in either case that $\Phi( C'' )$ is a CPTP map which performs the same unitary transformation as described by $C''$, modulo an identification of each input qubit with a corresponding output qubit.
Then the final procedure produced performs the same transformation as the circuit $C$.
We refer to this procedure as the \emph{simplified RBB construction} for \oneway\ procedures when we require that $C$ be linear nearest-neighbor and when we perform the substitution of \eqref{eqn:substZzForCz}, and \emph{the DKP construction} for \oneway\ procedures when we do not impose the added constraint or perform the substitution.
(For the sake of brevity, we will often refer to this construction simply as ``the RBB construction'', as we do not consider any constructions using the special-purpose \oneway\ procedures in Section~IV of~\cite{RBB03}.)

\paragraph{Run-time complexity.}

Both the DKP construction and the (simplified) RBB construction can be performed efficiently in the size of the input circuit.
In both cases, let $k$ be the number of qubits which $C$ acts upon, $N$ be the number of one-qubit gates, and $M$ be the number of two-qubit gates: if we consider only those circuits which act non-trivially on each qubit, we may assume that $k \le N + M$ for simplicity.
We may then show that the run-time complexity of the DKP construction is $O((N + k)M)$, and of the RBB construction is $O(NM(N+M))$, in each case bounded by the complexity of obtaining a \oneway\ procedure in normal form.
(The difference in complexity between the two constructions may be traced to the difference in the procedures to obtain the stable-index expression $C''$ for each, and hence to the single-qubit gates which are added in the RBB construction in order to impose the desired topological constraints on the indices of the stable-index expression.)
An explicit analysis of the complexity of these constructions is presented in \autoref{apx:constructionEfficiency}.

\subsubsection{Partial constructions}
\label{sec:partialConstructions}

As a final remark on constructions of \oneway\ procedures, it will be useful to describe partial versions of the DKP and (simplified) RBB constructions.
We define the \emph{DKP construction without normalization}\footnote{%
	This construction is referred to in~\cite{BeaudPhD} as the ``simplified'' DKP construction; we use different terminology here in order to be more uniform with our description of the RBB construction.}
and the \emph{RBB construction without normalization} to be, simply, the \oneway\ procedure constructions which result from omitting the Pauli simplifications and signal shifting stages from the phase of obtaining a \oneway\ procedure in normal form.
A procedure produced by such a partial procedure may then indicate sign dependencies for a measurements whose default angle is an integral multiple of $\pion 2$, and they may have signal-shift operators immediately following measurements.

In a practical setting, provided sufficiently fast control of the classical memory storing the measurement results $\s[v]$ for all qubits $v$ in a \oneway\ procedure, it is likely that a more useful procedure would be an intermediate construction to the constructions  with or without normalization, in which one performs Pauli simplifications but leaves the shift operations in place.
The purpose of identifying the partial constructions without normalization above is to obtain a simplified analysis of measurement dependencies, which may be used as a starting point for the analysis of the ``complete'' constructions.

\section{Semantic maps}
\label{sec:semanticMaps}

We may now consider the roles of the DKP and (simplified) RBB constructions as defining representations of unitary circuits in the \oneway\ model.
Both constructions make use of a simple process of translating unitary circuits from a gate set including $J(\theta)$ gates and a single two-qubit diagonal operation into corresponding \oneway\ procedures: this will induce a graph structure which is dominated by vertex-disjoint paths, with supplemental vertices and edges linking pairs of paths.
After standardizing the \oneway\ procedures in the course of these constructions, the additional structure which might arise from measurement dependencies or the grouping of operations is lost or obfuscated; preparation maps and entangling operations become dissociated from individual measurement operations, and corrections will in some cases be absent entirely after absorbing them into measurements and performing Pauli simplifications.

Given a \oneway\ procedure, in which no additional information is encoded \eg\ in the labels for the qubits, we may ask whether we can efficiently identify whether a \oneway\ procedure is one that may be produced by the DKP or the RBB constructions.
In the case where the input and output subsystems of the procedure are of the same size, we answer this question in the affirmative.
We do so by considering the combinatorial structures which are present in the geometries (\autoref{def:geometry}) of the \oneway\ procedures which result from these constructions.
We then define a map $\cS$ from \oneway\ procedures to circuits in the $\ens{H,T,\cZ}$ model which serves as a semantic map (in the sense of \autoref{def:semanticMap}), for the representation maps $\cR_\DKP$ and $\cR_\RBB$ corresponding to both the DKP and simplified RBB constructions restricted to input circuits which are unitary bijections.

\subsection{Measurement dependencies arising in the constructions}
\label{sec:dependencies}

The DKP and RBB constructions produce \oneway\ procedures from circuits by transforming gates such as $J(\theta)$, $\cZ$, and $\Zz$ into simple \oneway\ subroutines, via the mapping $\Phi$ defined in~\eqref{eqn:jointHomphm}.
It is instructive to consider the structures described by the sign- and bit-dependencies which arise from composing such \oneway\ subroutines, using the partial constructions without normalization described in \autoref{sec:partialConstructions}.

\subsubsection{Measurement dependencies and index successor functions}
\label{sec:dependenciesIndexSuccFns}


For a unitary circuit $C$ consisting of gates from the set $\cB_\DKP = \ens{J(\theta), \cZ}_{\theta \in \frac{\pi}{4}\Z}$, consider the last gate $J(\theta)$ performed in the circuit.
(If there is more than one such gate which may be performed in parallel, we may choose an arbitrary one.)
We may decompose $C$ into circuits $C_\star \tilde C$, where $C_\star$ consists of this final $J(\theta)$ gate acting on some qubit $u$, and any controlled-$Z$ operations acting on $u$ which follow it; $\tilde C$ consists of the remaining operations of $C$.
We may refer to $C_\star$ as a \emph{star circuit}\footnote{%
	These star circuits are similar to, but have a different orientation to, the circuits described in Section~III~B of\cite{DK06}.
 	This different choice of construction will facilitate the analysis of combinatorial structures later in the article: these are used in much the same way as the circuits arising from ``star patterns'' do in\cite{DK06}.}
Consider the translation of a star circuit $C_\star$ into the \oneway\ model via the map $\Phi$ defined in \eqref{eqn:jointHomphm}, and standardizing the resulting procedure (\autoref{def:standardize}).
Writing $C_\star$ in stable-index tensor notation (\autoref{sec:stableIndexNotation}), and letting $W$ be the set of qubits $w \ne u$ on which $C_\star$ acts, we have
\vspace{-1ex}
\begin{align}
	\label{eqn:starCircuitPremonition}
		\Phi\big(C_\star\big)
	\;\;=&\;\;
		\Phi\paren{\paren{\prod_w \cZ\pseu[w,v]} J(\theta)\pseu[:v/u]}
	\notag\\[1ex]=&\;\;
		\paren{\prod_{w \in W} \Ent{wv}{}} \fJ^\theta_{v/u}
	\notag\intertext{\iffalse\\[1ex]\fi}=&\;\;
		\paren{\prod_{w \in W} \Ent{wv}{}} \Xcorr{v}{\s[u]} \Meas{u}{-\theta} \Ent{uv} \New{v}
	\notag\\[1ex]\cong&\;\;
		\paren{\prod_{w \in W} \Zcorr{w}{\s[u]}} \Xcorr{v}{\s[u]} \Meas{u}{-\theta} \paren{\prod_{w \in W} \Ent{wv}{}} \Ent{uv} \New{v} \;,
\end{align}
where on the last line we commute the entangling operations to the right.
In the final \oneway\ procedure, the qubit $u$ is distinguished as the only qubit measured, and $v$ is the only qubit prepared; $v$ is also the only qubit subject to an $\Xcorr{v}{\s[u]}$ operation, with every other qubit (aside from $u$) being subject to a $\Zcorr{w}{\s[w]}$ operation.
The fact that $u$ and $v$ are measured and prepared qubits (respectively) in $\Phi(C_\star)$ above is a direct result of the fact that they are deprecated and advanced indices (respectively) in the stable-index expression given for $C_\star$.
In particular we have $v = f(u)$, where $f$ is the index successor function (\autoref{def:indexSuccessorFn}) of $C_\star$; thus, the corrections may be described in terms of the index successor function of the original circuit, and the entangling operations.

We may make similar observations for the (simplified) RBB construction as we have for the DKP construction above.
%
Motivated by the realization of the operator $\Zz_{vw}$ in the \oneway\ procedure $\fZz_{v,w}$, we may write
\begin{gather}
	\label{eqn:mediatorStarCircuitPremonition}
	 	\Zz\pseu[v,w]
	\;\;\propto\;\;
		P \pseu[:/a] \cZ\pseu[v,a] \cZ\pseu[w,a] \ket{+}\pseu[:a/] \,,
\end{gather}
where we define the operator $P = \bra{\smash{+_{\pi\!/\!2}}}$\,, and for a constant proportionality factor (as illustrated in \eqref{eqn:projectPlusZzz-a} in \autoref{apx:constructGatesApx-2qubit}).
In the stable-index expression on the right,
the index successor function $f$ remains the same (as the indices $a$ are advanced without corresponding to any deprecated index, and vice versa), but the interaction graph $G$ is transformed essentially by subdividing every edge arising from a $\Zz$ gate by introducing a new vertex in the middle (representing the mediating qubit with the index $a$).
Then, consider the translation of $\Zz_{v,w}$ to a \oneway\ procedure via $\Phi$,
\begin{align}
		\Phi\Big( \Zz(\theta)\pseu[u,v] \Big)
	\;=&\;\,
		\Zcorr{u}{\s[a]} \Zcorr{v}{\s[a]} \Meas{a}{\pion 2} \Ent{u a} \Ent{v a} \New{a}	\;:
\end{align}
if we extend the successor function $f$ to a function $\tilde f$ for which we define $\tilde f(a) = a$, while we have no $\Xcorr{w}{\s[a]}$ operations, we have a $\Zcorr{w}{\s[a]}$ for every qubit $w$ adjacent to $a = \tilde f(a)$ in the entanglement graph.

In each case, either the index successor function (or a modest extension of it) for a stable-index description of a simple unitary circuit can be used to characterize the corrections performed in a corresponding \oneway\ procedure.
We may then consider the result of translating a composition of such simple circuits,
\begin{gather}
		\Phi(C)
	\;=\;
		\Phi\Big( C\supp{\ell}_\star \cdots C\supp{1}_\star	\Big)
	\;=\;
		\Phi\Big( C\supp{\ell}_\star \Big) \circ \cdots \circ \Phi\Big( C\supp{1}_\star	\Big)	\;,
\end{gather}
where each $C\supp{j}_\star$ is a star circuit as above, extending the definition to also include individual $\Zz$ operations
as described in \eqref{eqn:mediatorStarCircuitPremonition}.
Commuting all of the preparation and entangling operations to the right, we obtain the same dependency of the corrections on the  successor function (extended for each $\Zz$ gate as described above): we obtain $\Zcorr{w}{\s[u]}$ corrections for each qubit in the interaction graph which is adjacent to $\tilde f(u)$, either due to the commutation of entangling maps past $\Xcorr{f(u)}{\s[u]}$ operations or from the $\Zcorr{w}{\s[u]}$ operations arising from $\fZz$ procedures.

\subsubsection{Efficiently verifying consistency of measurement dependencies}
\label{sec:verifyDependencies}

Based on the descriptions of the dependencies arising in the DKP and RBB constructions without normalization, we may characterize the dependencies of operations which arise in the two constructions \emph{with} normalization, and show that verifying that such dependencies hold in a \oneway\ procedure $\fP$ can be used as a subroutine to certify that $\fP$ performs a unitary transformation.

Following the remarks of the preceding sections about the constructions without normalization, we characterize the dependencies arising from the DKP and RBB constructions as follows.
\begin{definition}
	\label{def:extendedSuccFn}
	For a stable-index representation $C$ for a circuit the gate set $\cB_\DKP \union \cB_\RBB$, let $\tilde C$ be the stable-index expression obtained by performing the substitution of \eqref{eqn:mediatorStarCircuitPremonition} for each $\Zz$ operation in $C$.
	Let $f$ be the index successor function of $\tilde C$, as defined in \autoref{def:indexSuccessorFn}.
	The \emph{extended successor function} of $C$ is then the function 
	\begin{gather}
	 		\tilde f(v)
		\;=\;
			\begin{cases}
			 		f(v)	\;,	&	\text{if $v$ an index of $C$}	\\
					v		\;,	&	\text{if $v$ is an index deprecated in $\tilde C$ by a $P$ operator}
			\end{cases}	\;,
	\end{gather}
	whose domain is the set of indices of $\tilde C$ which are deprecated.
\end{definition}

\begin{theorem}
	\label{thm:verifyDependencies}
	Let $\fP_0 = \Phi(C)$, for a unitary circuit $C$ over either the gate set $\cB_\DKP \union \cB_\RBB$, and let $\bar\fP$ be the result of normalizing $\fP_0$.
	Let $(G,I,O)$ be the geometry underlying $\fP_0$, and let $f$ be the extended index successor function of $C$.
	For the sake of brevity, define $A_X \subset O\comp$ as the set of qubits to be measured with default angles $0$ or $\pi$ in $\fP_0$, and $A_Y \subset O\comp$ the set of qubits to be measured with default angles $\pion[{{\scriptscriptstyle \pm}}]2$ in $\fP_0$.
	We define square boolean matrices $F, T : V(G) \x V(G) \to \Z_2$ in terms of their coefficients, as follows:\footnote{%
		These operators are the transposes of the operators described in~\cite{BeaudPhD}, corresponding to changes in the representation in order to simplify the description of the effect of signal shifting; the statement of the result is also changed accordingly.}
	\begin{subequations}
	\begin{align}
			\label{eqn:signDependMatrix}
			F_{vw}
		\;=&\;\,
			\begin{cases}
			 	1	\,,	&	\text{if $v \in O\comp$ and $w = f(v) \notin A_X \union A_Y$}	\\
				0	\,,	&	\text{otherwise}
			\end{cases}	\;,
		\\[1ex]
			\label{eqn:shiftDigraphMatrix}
			T_{vw}
		\;=&\;\,
			\begin{cases}
			 	1	\,,	&	\text{if $w \ne v \in O\comp$ and $w \sim f(v)$}	\\
				1	\,,	&	\text{if $v,w \in O\comp$ and $w = f(v) \in A_Y$}		\\
				0	\,,	&	\text{otherwise}
			\end{cases}	.
	\end{align}
	\end{subequations}
	Then $(\idop - T)$ is invertible modulo $2$, and the dependencies in $\bar\fP$ between qubits $v, w \in V(G)$ may be characterized in terms of $F$ and $T$ as follows:
	\begin{itemize}
	\item 
		$w$ has a sign-dependency on $v$ if and only if $[(\idop - T)\inv F]_{vw} = 1$, for $w \in O\comp$; 

	\item
		there is an $\Xcorr{w}{}$ operation conditioned on $\s[v]$ if and only if $[(\idop - T)\inv F]_{vw} = 1$, and a $\Zcorr{w}{}$ operation conditioned on $\s[v]$ if and only if $[(\idop - T)\inv T]_{vw} = 1$, for $w \in O$.
	\end{itemize}
\end{theorem}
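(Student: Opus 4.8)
The plan is to follow $\fP_0 = \Phi(C)$ through the three stages of normalization in turn --- standardization (\autoref{def:standardize}), the Pauli simplifications \eqref{eqn:pauliSimplifX}--\eqref{eqn:pauliSimplifY}, and signal shifting \eqref{eqn:signalShift} --- maintaining a running description of the correction and measurement-dependency data, and to argue that $F$ and $T$ are precisely the ``direct'' dependency matrices present at the stage just before signal shifting, so that signal shifting produces exactly their transitive closures $(\idop - T)\inv F$ and $(\idop - T)\inv T$.

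First I would globalize the single-star computation of \eqref{eqn:starCircuitPremonition}, together with its $\fZz$-analogue \eqref{eqn:mediatorStarCircuitPremonition}, to the whole procedure. Writing the stable-index expression $\tilde C$ of \autoref{def:extendedSuccFn} as an ordered product of (extended) star circuits and applying $\Phi$, then commuting all entangling maps to the right --- the operative relation being \eqref{eqn:induceZcorr} --- produces, for each measured qubit $u \in O\comp$, the correction $\Xcorr{f(u)}{\s[u]}$ (which is absent exactly when $u$ is a $\fZz$-mediator, since the pattern $\fZz$ carries no $X$-correction and the extended successor function fixes such a $u$), together with a $\Zcorr{w}{\s[u]}$ for every neighbour $w \ne u$ of $f(u)$ in $G$ (each edge of $G$ incident on $f(u)$, save the $\fJ$-internal one joining $u$ to $f(u)$, yielding one such $Z$-correction when it is commuted past $\Xcorr{f(u)}{\s[u]}$). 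Placing the procedure in standard form absorbs these corrections into the measurements they precede, using \eqref{eqn:absorbMeasX} and \eqref{eqn:absorbMeasZ}: an $\Xcorr{w}{\s[v]}$ becomes a sign-dependency of $w$ on $v$ when $w \in O\comp$ and survives in $\sfC$ when $w \in O$; a $\Zcorr{w}{\s[v]}$ becomes a signal shift $\Shift{w}{\s[v]}$ when $w \in O\comp$ and survives in $\sfC$ when $w \in O$; corrections of the same type on a common qubit add modulo $2$. Applying the Pauli simplifications then deletes every sign-dependency carried by a qubit of $A_X$ and converts every sign-dependency carried by a qubit of $A_Y$ into a signal shift. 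Reading off what remains, I would verify that at this point --- immediately before signal shifting --- the ``$X$-type'' data (the sign-dependencies for $w \in O\comp$ and the surviving $\Xcorr{w}{}$ corrections for $w \in O$) is recorded exactly by $F$ of \eqref{eqn:signDependMatrix}, and the ``$Z$-type'' data (the signal shifts for $w \in O\comp$ and the surviving $\Zcorr{w}{}$ corrections for $w \in O$) is recorded exactly by $T$ of \eqref{eqn:shiftDigraphMatrix}, the two clauses of $T$ accounting respectively for the $Z$-corrections produced above and for the sign-dependencies that Pauli simplification pushed onto qubits of $A_Y$.

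For the invertibility claim I would order $O\comp$ by the moment at which each qubit is measured in $\fP_0$ and observe that $T_{vw} = 1$ forces $v$ to be measured strictly before $w$ (in the first clause of \eqref{eqn:shiftDigraphMatrix} the edge joining $w$ to $f(v)$ and the correction $\Xcorr{f(v)}{\s[v]}$ both lie between $\Meas{v}{\ast}$ and $\Meas{w}{\ast}$; in the second clause $f(v)$ is created by the gate that consumes $v$), so that $T$ is strictly triangular with respect to this order, hence nilpotent, hence $(\idop - T)$ is invertible modulo $2$ with $(\idop - T)\inv = \idop + T + T^2 + \cdots$ a finite sum; I would also note that $T$ has no nonzero row indexed by a qubit of $O$, so the interior vertices of any $T$-path lie in $O\comp$. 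Finally I would run signal shifting: pushing all the shift operators past every operation on qubits via \eqref{eqn:signalShift} replaces each measurement result $\s[w]$ by $\sum_v [(\idop - T)\inv]_{vw}\,\s[v]$ expressed in the raw measurement results --- this being precisely the transitive closure of the toggles encoded by $T$, which terminates by nilpotency --- after which the residual shift operators are discarded. Substituting this expression into the $F$- and $T$-data found above: a measurement $w \in O\comp$ ends up with a sign-dependency on $v$, and an output qubit $w \in O$ ends up with an $\Xcorr{w}{}$ conditioned on $\s[v]$, exactly when $[(\idop - T)\inv F]_{vw} = 1$; and an output qubit $w \in O$ ends up with a $\Zcorr{w}{}$ conditioned on $\s[v]$ exactly when $[(\idop - T)\inv T]_{vw} = 1$. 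This is the assertion of the theorem.

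I expect the main obstacle to be the bookkeeping in the second paragraph: verifying faithfully how the corrections propagate past the entangling maps, so that the full neighbourhood of $f(v)$ in $G$ --- rather than merely the qubits touched by one star circuit --- is picked up; checking that the modulo-$2$ accumulation of repeated $Z$-corrections never produces a spurious cancellation; and handling the $\fZz$-mediator qubits correctly through the extended successor function, in particular confirming that the second clause of \eqref{eqn:shiftDigraphMatrix} records no genuine self-dependency there, even though $f$ fixes those qubits (the measurement $\Meas{a}{\pion 2}$ of a mediator $a$ carries no dependency). Once $F$ and $T$ are pinned down, the triangularity of $T$ and the geometric-series identity for $(\idop - T)\inv$ are comparatively routine.
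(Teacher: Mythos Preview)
Your proposal is correct and follows essentially the same approach as the paper: trace the procedure through standardization, Pauli simplification, and signal shifting; identify $F$ and $T$ as the direct sign- and shift-dependency matrices present just before signal shifting; argue $T$ is nilpotent via the measurement order so that $(\idop - T)^{-1}$ is the finite geometric series computing the transitive closure of the shift digraph. The paper organizes the same argument around intermediate procedures $\fP_1, \fP_2, \fP_3$, and explicitly notes (as you flagged among your anticipated obstacles) that the shifts produced by Pauli simplification at $A_Y$-qubits cannot cancel the pre-existing ones since $w = f(v)$ precludes $w \sim f(v)$, and that the mediator self-loop issue is moot because no sign-dependency is ever carried by a mediator qubit.
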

\begin{proof}
	By construction, for qubits $u,v \in V(G)$, $\fP_0$ contains a $\Zcorr{v}{\s[u]}$ operation if and only if $v \sim f(u) = u$; as $\Zcorr{v}{\s[u]}$ operations commute with entangling maps, we obtain no further corrections depending on $u$ if we commute the entangling maps of $\fP_0$ to the right.
	Similarly, $\fP_0$ contains an $\Xcorr{v}{\s[u]}$ operation if and only if $v = f(u) \ne u$, which must arise from a $\fJ^\theta_{v/u}$ procedure.
	Any operation on $v$ in $\fP_0$ which is not a part of the $\smash{\fJ^\theta_{v/u}}$ procedure must come after it, every entangling map acting on $v$ must occur after the $\Xcorr{v}{\s[u]}$ operation except for the $\Ent{uv}$ map occurring as a part of $\smash{\fJ^\theta_{v/u}}$.

	Commuting all of the entangling maps in $\fP_0$ to the right, we then induce $\Zcorr{w}{\s[u]}$ operations for all $w \sim v$ such that $w \ne u$, as described in \eqref{eqn:induceZcorr}.
	Thus, commuting all preparation and entangling maps in $\fP_0$ to the right yields an equivalent procedure $\fP_1$, in which for there is an $\Xcorr{w}{\s[v]}$ operation for every $w = f(v) \ne v$, and a $\Zcorr{w}{\s[v]}$ operation for every $w \sim f(v) \ne v$, for every qubit $v \in O\comp$.
	That is, we have
  	\begin{gather}
		\label{eqn:extend-DK06-Premonition}
		\fP_1
		\;=\;
	 	\sqparen{\ordprod[\le]_{u \in O\comp}
			\paren{\prod_{\substack{w \sim f(u) \\ w \ne u}} \Zcorr{w}{\s[u]}}
			\paren{\prod_{\substack{v = f(u) \\ v \ne u}} \Xcorr{v}{\s[u]}}
			\Meas{u}{\theta_u}
		} \Ent{G} \New{I\comp}
	\end{gather}
	for some linear order $\le$ satisfying the conditions of \eqref{eqn:flowPremonition} for the extended successor function $f$.

	Commuting correction operations in $\fP_1$ to the left, we obtain another procedure $\fP_2$, in which the same corrections hold as above for $w \in O$.
	For $w \in O\comp$, there is instead a sign-dependency on $v$ if and only if $w = f(v) \ne v$, and a bit-dependency on $v$ if and only if $w \ne v$ and $w \sim f(v)$.
	Note that the procedure of producing $\fP_2$ from $\fP_0$ is precisely that of standardizing $\fP_0$: then $\bar\fP$ may be obtained by normalizing $\fP_2$.

	Consider how the process of normalization effects dependencies of qubits $w$ on a particular qubit $v$.
	\begin{romanum}
	\item
		The effect of Pauli simplifications are to remove sign-dependencies from qubits in $A_X$ according to \eqref{eqn:pauliSimplifX}, and to change them to bit-dependencies for qubits in $A_Y$ according to \eqref{eqn:pauliSimplifY}.
		As signal shifting does not introduce sign-dependencies for measurements where none previously exist, the measurement of a qubit in $A_X \union A_Y$ will not have any dependencies on previous measurements in a procedure in normal form.

		Applying Pauli simplifications to $\fP_2$ will yield a \oneway\ procedure $\fP_3$: then $\fP_3$ will have no sign-dependencies for qubits $v \in A_X \union A_Y$, and will have signal shift operators $\Shift{v}{\s[u]}$ for any qubits $v = f(u) \in A_Y$, in addition to the shift operators in $\fP_2$.
		We may describe the sign dependencies which remain in $\fP_3$ by a square matrix over $V(G)$, with a $1$ in the row $v$ and column $w$ for $v,w \in V(G)$ when $w$ has a sign-dependency on $v$.
		Accounting for the removals of sign dependencies of $\fP_2$ for qubits $w \in A_X \union A_Y$, the procedure $\fP_3$ then has a sign-dependency of $w$ on $v$ when $w = f(v) \notin A_X \union A_Y$.
		The matrix described in this manner is then the matrix $F$ in \eqref{eqn:signDependMatrix}.

		For the sake of simplicity of discussion, we will assume that the shift operators are not accumulated with any pre-existing shift operators using the relation of \eqref{eqn:accumShifts}.
		However, it is important to note that the shift operators arising from Pauli simplifications will not cancel any operators existing originally in $\fP_2$: if $w = f(v)$, it follows that $w$ is not adjacent to $f(v)$, as $G$ contains no loops by construction.

	\item
		The effect of signal shifting is to propagate a bit-dependency of each qubit $w$ on $v$ (represented by a $\Shift{w}{\s[v]}$ operation just after the measurement of $w$) to the operations which in turn depend on the value of $\s[w]$, according to \eqref{eqn:signalShift}.

		We may describe the effect of signal shifting in terms of walks in a directed graph $D$ of dependencies, where we have arcs $w \arc v$ when there is a shift operator $\Shift{w}{\s[v]}$.
		An operation depending on $w$ may be transformed into one depending on $w$ and all qubits $w$ for which $D$ has an arc $w \arc v$ by shifting of operators $\Shift{w}{\s[v]}$; in turn, the resulting operation may be transformed into one depending also on qubits $u$ for which $D$ has arcs $v \arc u$, ranging over the qubits $v$ of the previous step; and so on.
		The effect is then of transforming bit-dependencies corresponding to arcs in this digraph $D$ (arising from signal shifts) into measurement and correction dependencies corresponding to directed walks in $D$.

		We may represent this digraph as an adjacency matrix over $V(G)$, with a $1$ in the row $v$ and column $w$ for $v,w \in V(G)$ when $\fP_3$ contains a shift operator $\Shift{w}{\s[v]}$, and $0$ if not.
		Considering the shift operations in $\fP_3$ which are also present in $\fP_2$, there will be such a shift operator whenever $w \ne v$ and $w \sim f(v)$; and as a result of Pauli simplifications, there will be such a shift operator when $w = f(v)$ and $w \in A_Y$.
		As there are no other sources of shift operators, these are the only positions in which the adjacency matrix will be non-zero.
		The adjacency matrix described is then the matrix $T$ in \eqref{eqn:shiftDigraphMatrix}.

		As the ordering $\le$ (used in \eqref{eqn:extend-DK06-Premonition} to fix the order of operations) is a linear order, the digraph $D$ described by the shift operations is acyclic: 
		for any standard basis vector $\unit_w \in \Z_2^{V(G)}$, the column-vector $T \unit_w$ is supported only on indices $v \le w$.
		Then we have $T^r = 0$ for some $r \le \card{V(G)}$, so that $T$ is nilpotent (and in particular has no $+1$ eigenvectors).
		Consequently, the operator $\idop - T$ is invertible.
		We may then express the cumulation of dependencies for each operation in $\fP_3$ due signal shifting as follows.
		For two qubits $v$ and $w$, the number of walks in $D$ from $w$ to $v$ of a fixed length $\ell \ge 0$ is given by $\unit_v\trans T^\ell \unit_w $\,: the total number of walks of any length from $w$ to $v$ in $D$ is then
		\begin{gather}
		 		\unit_v\trans \Bigg( \sum_{\ell = 0}^\infty \, T^\ell \Bigg) \unit_w
			\;=\;
				\unit_v\trans \big( \idop - T \big)\inv \unit_w	\;.
		\end{gather}
		If we represent the dependencies of operations in terms of boolean column vectors, we may then represent the effect of signal shifting by multiplying a vector of dependencies by $(\idop - T)\inv$ computed modulo $2$.
	\end{romanum}
	To determine the classical dependencies of each operation in $\bar\fP$, it suffices then to identify the dependencies of the same operations in $\fP_3$, represent these as a vector $\vec d \in \Z_2^{V(G)}$, and compute $(\idop - T)\inv \vec d$.
	We may characterize the classically controllable operations and their dependencies in $\fP_3$ as follows:
	\begin{itemize}
	\item
		As we have noted above, the sign-dependencies for qubits $w \in O\comp$ are represented by the coefficients of the matrix $F$, where $F_{vw} = 1$ if and only if $w$ has a sign dependency on $v$ in $\fP_3$.
		Then $w$ has a sign dependency on $v$ in $\bar \fP$ if and only if $\big[ (\idop - T)\inv F \big]_{vw} = 1$.

	\item
		For $w \in O$, there is no measurement into which corrections on $w$ may be absorbed, and so the corrections in $\fP_3$ are the same as in $\fP_1$: these are $\Xcorr{w}{\s[v]}$ operations when $w = f(v)$, and and $\Zcorr{w}{\s[v]}$ operations when $w \sim f(v)$, where in either case we require $w \ne v$ as well.
		As $O$ is disjoint from both $A_X$ and $A_Y$ by definition, we may then note that for $w \in O$ there is an $\Xcorr{w}{\s[v]}$ operation in $\fP_3$ if and only if $F_{vw} = 1$, and a $\Zcorr{w}{\s[v]}$ operation if and only if $T_{vw} = 1$.
		It follows that after signal shifting, the resulting $\Xcorr{w}{\beta}$ operation depends on $v$ if and only if $[(\idop - T)\inv F]_{vw} = 1$, and the resulting $\Zcorr{w}{\gamma}$ operation depends on $v$ if and only if $[(\idop - T)\inv T]_{vw} = 1$.
 		\qedhere
	\end{itemize}
\end{proof}

\autoref{thm:verifyDependencies} thus characterizes the dependencies which arise in the DKP and RBB constructions, in terms of an (extended) index successor function for a circuit $C$ over the gate-set $\cB_\DKP \union \cB_\RBB$.

\Algorithm{alg:testDependencies} describes an algorithm which, provided a \oneway\ measurement pattern $\fP$ in normal form and a candidate for the extended successor function $f$ for an originating circuit, tests if the dependencies of $\fP$ are consistent with the dependency conditions described in \autoref{thm:verifyDependencies}.
\begin{algorithm}[t]

\PROCEDURE \TestDependencies(\fP, f) :

\Input{%
	$\fP$, a \oneway\ procedure with geometry $(G,I,O)$;
	\\
	$f : O\comp \to I\comp$, a candidate successor function for $\fP$.
}

\Output{%
	A boolean result indicating if the dependency relations in $\fP$ are consistent with \autoref{thm:verifyDependencies}, relative to the function $f$.
}

\BEGIN
	\algline
 	\LET $T$ be a matrix depending on $f$ as in \eqref{eqn:signDependMatrix}\;
 	\LET $F$ be a matrix depending on $f$ as in \eqref{eqn:shiftDigraphMatrix}\;
	\LET $\tilde T \gets (\idop_n - T)$\;
	\FOREACH measurement or correction operation $\Op{}{}$ in $\fP$ \DO
		\algline
		\IF $\Op{}{} = \Meas{w}{\theta;\beta}$ for some $w, \theta, \beta$ \THEN {
			\smallskip
			\LET $\vec d \in \Z_2^{V(G)}$ be the indicator vector for $\beta$\;
			\oneline
			\IF $\tilde T \vec d \ne F \unit_w$ \THEN \RETURN \false\;
			\smallskip
		}
		\ELSEIF $\Op{}{} = \Xcorr{w}{\beta}$ or $\Op{}{} = \Zcorr{w}{\beta}$ for some $w, \beta$ \THEN {
			\smallskip
			\LET $\vec d \in \Z_2^{V(G)}$ be the indicator vector for $\beta$\;
			\short
			\IF \upshape ($\Op{}{} = \Xcorr{w}{\beta}$ \ANDALSO $\tilde T \vec d \ne F \unit_w$) 
					\ORELSE ($\Op{}{} = \Zcorr{w}{\beta}$ \ANDALSO $\tilde T \vec d \ne T \unit_w$)
			\THEN {	\RETURN \false }\relax
			\smallskip
		}\relax
	\ENDFOR

	\smallskip
	\RETURN \true\;
\END

	\caption[An algorithm to test whether the dependencies of a \oneway\ procedure is consistent with those resulting from the DKP and simplified RBB constructions.]{\label{alg:testDependencies}%
		An algorithm to test whether the dependencies of a \oneway\ procedure is consistent with those resulting from the DKP and simplified RBB constructions, relative to a candidate for the (extended) index successor function for the original circuit.}

\end{algorithm}

If we are presented with a \oneway\ procedure $\fP$ in normal form which is \emph{not} known to be a result of the DKP or RBB constructions, we may only verify whether these conditions hold relative to some ``candidate'' successor function $f$.
How such a function $f$ may be constructed for a \oneway\ procedure $\fP$ is the subject of \autoref{sec:combinatorialStruct}.

\paragraph{Run-time analysis of Algorithm~\ref*{alg:testDependencies}.}
\label{sec:testDependenciesAnalysis}

We bound the run-time of \Algorithm{alg:testDependencies} as follows.
In the following, we let $N$ be the number of operations in $\fP$, and for the geometry $(G,I,O)$ underlying $\fP$, we let $n = \card{V(G)}$, $k = \card{O}$, and $m = \card{E(G)}$.
We assume that $f$ can be evaluated in constant time, using an array structure.
We assume that the dependencies of operations $\smash{\Op{}{\beta}}$ (for sign-dependencies or other classical control expressions) are represented as indicator vectors $\vec d \in \smash{\Z_2^{V(G)}}$ representing the presence or absence of the term $\s[v]$ in $\beta$ by $d_v = 1$ (with $d_v = 0$ otherwise).

By definition, $\fP$ will contain $m$ entangling operations, $n - k$ measurement operations, and at most $2k$ correction operations; then, the length of $\fP$ is $N \in O(m + n) \subset O(n^2)$.
Rather than compute matrix products involving $(\idop - T)\inv$, we may verify whether $(\idop - T) \vec d = \vec v$ for $\vec v = F \unit_w$ or $\vec v = T \unit_w$ as appropriate.
The coefficients of $(\idop - T) \vec d$ may be computed by iterating over the neighbors of $f(w)$ for $w \in \supp(\vec d)$: as as $T_{vw} = 1$ only for $w$ adjacent to $f(v)$ (if $v \ne w$) or for $w = f(v)$, there are at most two non-zero coefficients in $T$ for each edge in $G$, so that this iteration requires time $O(m)$ for each vector $\vec d$.
Similarly, the time required to compute column-vectors $F\unit_w$ and $T \unit_w$ for each $w$ is $O(m)$.
As there are $O(n)$ measurement and correction operations, the cumulative time to perform these matrix computations is $O(nm)$; as the remaining operations in $\fP$ require no verification, this is the total time complexity of the \textbf{for} loop.
The execution time of \Algorithm{alg:testDependencies} is then $O(nm)$.

\subsection{Conditions on candidate successor functions}
\label{sec:combinatorialStruct}

In the preceding section, we characterized the dependencies in the DKP and RBB constructions in terms of an extended index successor function, from a stable-index representation of the original circuits.
This leaves open the question of how such a successor function may be obtained.
We examine this problem using the observations made above about the dependencies arising from circuit decompositions without normalization.
We show that these dependencies can be captured in each case by a modification of the \emph{flow} conditions formulated by Danos and Kashefi~\cite{DK06}.
This modification will provide the ``candidate successor functions'' which we require for the verification procedure of \Algorithm{alg:testDependencies}, and which lead to important structural constraints on the geometries $(G,I,O)$ arising from the DKP and RBB constructions.

\subsubsection{Measurement ordering from index successor functions}

As we described in \autoref{sec:onewayStableIndexCorresp}, there is a correspondence between the ``deprecation order'' of indices as described in \eqref{eqn:flowPremonition}, for a stable-index expression for a circuit $C$ over $\cB_\DKP \union \cB_\RBB$, and the order of operations in a well-formed \oneway\ procedure $\Phi(C)$.
This correspondence may be attributed to a sense in which measuring qubits in a \oneway\ procedure simulates the summation over indices in a tensor expression, and thus to evaluating a unitary circuit applied to a state as a sum over computational paths.

When applied to a completely specified pure state, every deprecated index in a stable-index expression becomes bound, in which one evaluates the result by summing over all deprecated indices.
When these indices are represented by qubits in a \oneway\ procedure, one may simulate this summation by measurement of qubits $v$ with respect to the $\ket{+_\theta},\ket{-_\theta}$ basis.
\begin{itemize}
\item 
	Consider a measurement arising from a $\fJ^\theta$ procedure, corresponding to summing over the deprecated index of a $J(\theta)$ operation.
	If the $\ket{+_\theta}$ state is measured (\ie\ if $\s[u] = 0$), the resulting transformation corresponds to summing over the $\ket{0}$ and $\ket{1}$ components, weighted with the with the relative amplitudes phases as those associated with the $J(\theta)$ operation which deprecates the corresponding index.

\item
	For measurements arising from a $\fZz_{v,w}$ procedure, we measure a mediating qubit $a$ in a fixed basis $\ket{\smash{\pm_{\pi\!/\!2}}}$.
	Again, a measurement result of $\ket{\smash{+_{\pi\!/\!2}}}$ corresponds to summing over an auxiliary index $a$ in a stable-index expression: the relative phases of the $\ket{0}$ and $\ket{1}$ components correspond to those of the projection operator $P$.
\end{itemize}
In either case, if the measurement yields the $\ket{-_\theta}$ result instead, the corrections and measurement adaptations are chosen precisely to effect the same transformation which would have occurred for the $\ket{+_\theta}$ result, by influences on the measurement angles due to sign-dependencies or the measurement results due to bit-dependencies on $v$.\footnote{%
	Bit-dependencies, as represented by shift operators \smash{$\Shift{v}{\ast}$} immediately after measurements, may be informally interpreted as flipping the bit $\s[v]$ to produce ``the result which would have been measured if an appropriate $Z$ correction had been performed prior to measurement''.
	As the measurement results are often maximally random whether or not such corrections are performed, this is a somewhat counterfactual interpretation, but one which could perhaps be given an ontological foundation in terms of hidden variables.}

Thus, we may interpret the sign- and bit-dependencies in such \oneway\ computations as potential ``influences'' in evaluating the summation of deprecated indices.
By construction, the measurements are chosen in such a way that the tensor which is produced at the output is related to the tensor at the input by a unitary circuit: the amplitudes arising from each measurement are a function of at most two boolean indices, where the two-index dependencies are signs encoded by the entanglement graph, and whose single-index dependencies are given by the choice of measurement basis.
This may be regarded as indicating that a \oneway\ procedure where such an ordering is possible has a ``circuit-like structure''.

A measurement order which corresponds to a ``deprecation order'' for indices in a circuit $C$ as in \eqref{eqn:flowPremonition} is governed by the index successor function of $C$.
An extended index successor function as defined in \autoref{def:extendedSuccFn} has the effect of imposing constraints on the deprecation of the mediating indices $a$ arising from the substitution of \eqref{eqn:mediatorStarCircuitPremonition} for $\Zz$ gates, which may be interpreted as creating a well-defined notion of when the $\Zz$ operation is performed.
By construction, \oneway\ procedures arising from the DKP and RBB constructions do admit extended successor functions: this can be used to define the domain of a semantic map for these constructions.

\subsubsection{Flow conditions}


The ordering in \eqref{eqn:flowPremonition} was first formulated by Danos and Kashefi~\cite{DK06}, for qubits in the \oneway\ model rather than tensor indices, to describe dependencies arising in the DKP construction without normalization. 
We may formulate these properties in terms of the geometry of a \oneway\ procedure without reference to the DKP construction as follows:
\begin{definition}
	\label{def:flow}
	For a geometry $(G,I,O)$, let $u \sim v$ denote the adjacency relation in $G$.
	A \emph{flow} is a pair $(f, \preceq)$ consisting of an injective function $f: O\comp \to I\comp$,
	and a partial order $\preceq$ on $V(G)$, such that the conditions
	\begin{subequations}
	\label{eqn:flow}%
	\begin{gather}
			\label{flow:a}			v \sim f(v)	\;,
		\\
			\label{flow:b}		\textbalance{v \preceq f(v)\;,}[\;\;\text{and}]
		\\
			\label{flow:c}		w \sim f(v)	\;\implies\; v \preceq w
	\end{gather}
	\end{subequations}
	hold for all $v \in O\comp$ and $w \in V(G)$.
\end{definition}

Examples of geometries with and without flows are illustrated in \autoref{fig:examplesFlow} and \autoref{fig:exampleNoFlow}, respectively.
\begin{figure}[t]
	\begin{center}
			\includegraphics{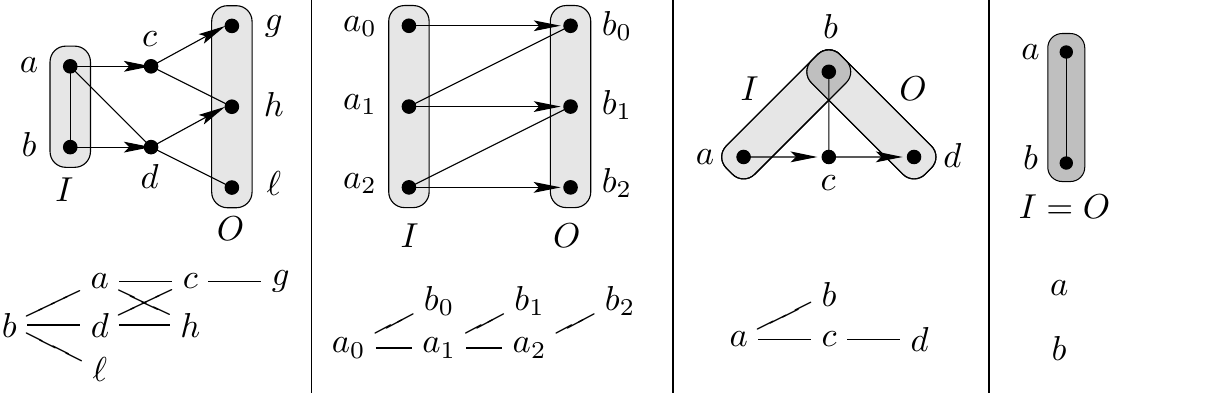}
			\vspace{-1ex}
	\end{center}
	\caption[Examples of geometries with flows.]{\label{fig:examplesFlow}
		Examples of geometries with flows.
		Arrows indicate the action of a function $f: O\comp \to I\comp$ along otherwise undirected edges.
		Corresponding partial orders $\preceq$ for each example are given by Hasse diagrams below the graphs (with minimal elements on the left and maximal elements on the right).
		In the right-most example, the two vertices $a$ and $b$ are incomparable, \ie\ there is no order relation between them.}
		\vspace{1em}
\end{figure}%
\begin{figure}[t]
	\begin{center}
			\includegraphics{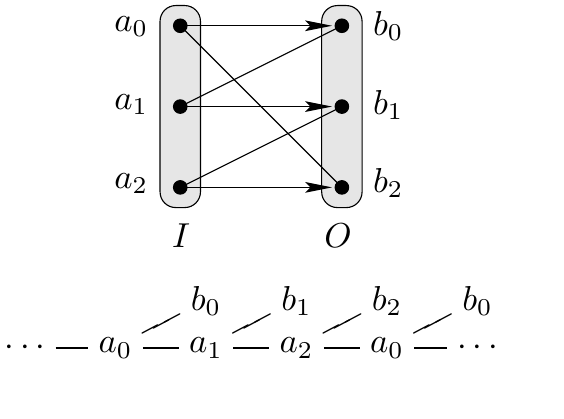}
			\vspace{-1ex}
	\end{center}
	\caption[Example of a geometry with no flow.]{\label{fig:exampleNoFlow}
		Example of a geometry with no flow (\cf\ the second geometry from the left in \autoref{fig:examplesFlow}).
		Arrows indicate the action of an injective function $f: O\comp \to I\comp$ along otherwise undirected edges.
		Also given is a reflexive and transitive binary relation $\preceq$ which satisfies conditions~\eqref{flow:b} and \eqref{flow:c} for this function $f$, but which is not antisymmetric.}
\end{figure}%
The function $f$ and the partial order $\preceq$ capture the essential structure of the dependencies in the DKP construction without normalization: $f$ represents the mapping of qubits to their successors (corresponding to an index successor function for a circuit), and the partial order $\preceq$ represents a suggested order in which the qubits may be measured (or the tensor indices deprecated).

Danos and Kashefi show that any geometry which has a flow underlies some \oneway\ measurement pattern which can be obtained by the DKP construction,\footnote{%
 	The converse of this statement does not necessarily hold: by omitting or manipulating the dependencies in a \oneway\ procedure, a variant procedure which does \emph{not} perform a unitary may be obtained without changing the underlying geometry.}
and which therefore performs a unitary transformation:
\begin{lemma}[{\cite{DK06}, Theorem~1}]
	\label{lemma:DK06}
	Suppose $(f,\preceq)$ is a flow for $(G,I,O)$.
	Let $v \sim w$ denote the adjacency relation of $G$: then for any linear order $\le$ extending $\preceq$, the measurement procedure
	\begin{gather}
	 	\sqparen{\ordprod[\le]_{u \in O\comp}
			\paren{\prod_{\substack{w \sim f(u) \\ w \ne u}} \Zcorr{w}{\s[u]}}
			\Xcorr{f(u)}{\s[u]}
			\Meas{u}{\theta_u}
		} \Ent{G} \New{I\comp}
	\end{gather}
	performs a unitary transformation.
\end{lemma}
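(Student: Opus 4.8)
The plan is to recognise the displayed measurement procedure as the standardisation (\autoref{def:standardize}) of $\Phi(C)$ for a circuit $C$ over $\cB_\DKP$ that I reconstruct from the flow, and then to invoke two facts already in hand: that the map $\Phi$ of~\eqref{eqn:jointHomphm} takes a circuit to a \oneway\ procedure performing the same transformation (\autoref{sec:constructGates}, \autoref{sec:constructionAlgorithms}), and that standardisation leaves that transformation unchanged. Granting these, it is enough to produce a circuit $C$ over $\cB_\DKP$ whose image $\Phi(C)$, once standardised, is congruent to the stated procedure; the procedure then performs the same unitary embedding as $C$.

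I would reconstruct $C$ from the flow $(f,\preceq)$ as follows. Because $f$ is injective and $\preceq$ is antisymmetric, condition~\eqref{flow:b} rules out any cycle $u \to f(u) \to f^2(u) \to \cdots \to u$, so the functional graph of $f$ decomposes $V(G)$ into vertex-disjoint directed paths together with isolated vertices; I use these as the ``logical wires'' of $C$. Along the wire through each $u \in O\comp$ I place a $J(\theta_u)$ gate deprecating the index $u$ and advancing the index $f(u)$; each isolated vertex is a trivial wire; and for every edge $\{x,y\}$ of $G$ not of the form $\{u,f(u)\}$ I place a $\cZ$ gate acting stably on the indices $x$ and $y$. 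The order of these operations is taken to be any linear order $\le$ extending $\preceq$, read as a deprecation order on indices in the sense of~\eqref{eqn:flowPremonition}; the inputs of $C$ are the wire-ends not in the image of $f$, and its outputs are the vertices of $O$.

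The step I expect to be the main obstacle is verifying that this assignment really describes a well-formed stable-index expression for a circuit over $\cB_\DKP$ --- acyclic, with each index advanced at most once and deprecated at most once --- and this is precisely where the three flow axioms are used. Condition~\eqref{flow:a} gives $u \sim f(u)$, so the edge between $u$ and $f(u)$ is already realised by the entangling map internal to the subroutine $\fJ^{\theta_u}_{f(u)/u}$ and is not added as a separate $\cZ$. Condition~\eqref{flow:b} forces the $J(\theta_u)$ gate to precede every operation on the wire of $f(u)$, so $f(u)$ is genuinely a fresh index at that moment and no cyclic dependency appears. Condition~\eqref{flow:c} forces, for each further neighbour $w$ of $f(u)$, the $\cZ$ gate realising the edge $\{w,f(u)\}$ to be scheduled after $u$ is deprecated while $w$ is still current, so the indices $w$ and $f(u)$ coexist and that gate genuinely acts stably on both. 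Once this bookkeeping is in place, $C$ is a legitimate circuit and applying $\Phi$ term by term gives a well-formed \oneway\ procedure.

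Finally I would standardise $\Phi(C)$ and compare it to the displayed procedure. Commuting all preparation and entangling maps to the right, each subroutine $\fJ^{\theta_u}_{f(u)/u}$ leaves a measurement $\Meas{u}{\theta_u}$ and a correction $\Xcorr{f(u)}{\s[u]}$, and pushing the entangling maps past this $X$-correction induces, via~\eqref{eqn:induceZcorr}, a correction $\Zcorr{w}{\s[u]}$ for every neighbour $w \neq u$ of $f(u)$ in $G$ --- exactly the corrections appearing in the stated procedure. (This is the same manipulation used in the proof of \autoref{thm:verifyDependencies} to pass from $\fP_0$ to $\fP_1$.) Because the order of commuting correction operations is immaterial, the standard form of $\Phi(C)$ is congruent to the stated procedure for every linear extension $\le$ of $\preceq$, which finishes the argument. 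A more self-contained alternative --- and the route taken in~\cite{DK06} --- would be to argue directly, by induction on $\card{O\comp}$, that the pattern is deterministic and norm-preserving, peeling off a $\preceq$-minimal measured qubit and using~\eqref{flow:c} to check that its corrections propagate consistently; but the reduction to the DKP construction is shorter given the machinery already developed here.
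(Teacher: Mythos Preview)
Your proposal is correct and follows essentially the same route the paper attributes to \cite{DK06} and then carries out in detail for the generalisation \autoref{lemma:extend-DK06}: reconstruct a circuit over $\cB_\DKP$ from the flow, apply $\Phi$, and check that standardising yields the displayed procedure. Your wire-based description of $C$ is equivalent to the paper's star-circuit packaging; the latter merely makes explicit where each $\cZ$ gate sits (immediately after the $J$ gate advancing the centre $f(u)$), which is precisely what your appeal to condition~\eqref{flow:c} accomplishes.
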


The proof presented in~\cite{DK06} is essentially by showing that applying the map $\Phi$ defined in \eqref{eqn:jointHomphm} to a suitably constructed circuit yields the \oneway\ procedure above.
If we can construct a flow for a geometry $(G,I,O)$, we may then use the same circuit construction as in~\cite{DK06} to construct a candidate circuit whose representation as a \oneway\ procedure (via the DKP construction) has the same geometry.
We may then verify whether the dependencies of a \oneway\ procedure are consistent with one arising from the DKP construction, using \Algorithm{alg:testDependencies}.

\subsubsection{Extending to modified flows}

By considering the dependencies described in the proof of \autoref{thm:verifyDependencies} following \eqref{eqn:extend-DK06-Premonition}, we may formulate a similar combinatorial condition which generalizes flows, and also captures the dependencies arising in the (simplified) RBB construction without normalization:\footnote{%
  	A similar generalization of the flow conditions was anticipated in\cite{DK06}, but does explicitly specify combinatorial conditions for this generalization.}
\begin{definition}
	\label{def:mflow}
	Let $(G,I,O,M)$ consist of a geometry $(G,I,O)$ and a subset $M \subset O\comp$, and let $u \sim v$ denote the adjacency relation in $G$.
	A \emph{modified flow} for $(G,I,O,M)$ is an ordered pair $(f, \preceq)$ consisting of an injective function $f: O\comp \to I\comp$,
	and a partial order $\preceq$ on $V(G)$, such that the conditions
	\begin{subequations}
	\label{eqn:mflow}
	\begin{gather}
			\label{mflow:a}		f(v) \sim v	\;, \;\;\text{for $v \in M\comp$}\,;
		\\
			\label{mflow:b}		\text{either~} f(v) = v	\text{~or~} f(v) \sim v \;,\text{~for $v \in M$}\,;
		\\
			\label{mflow:c}		v \preceq f(v)\;;\;\;\text{and}
		\\
			\label{mflow:d}		w \sim f(v)	\;\implies\; v \preceq w
	\end{gather}
	\end{subequations}
	hold for all $v \in O\comp$ and $w \in V(G)$.
\end{definition}
The function $f$ again corresponds essentially to an (extended) successor function for stable-index expression of a unitary circuit, and $\preceq$ again represents an order of measurement corresponding to an order in which the indices of stable-index expression are deprecated.
The set $M$ represents qubits whose measurement angles are $\pion 2$\,: these are qubits which might be mediating qubits $a$ in the $\fZz_{v,w}$ procedures.
We may recover the definition for flows in \autoref{def:flow} if we either set $M = \vide$, or simply require that $v \ne f(v)$ for all $v$.

The following generalization of Theorem~1 of \cite{DK06} to modified flows provides the basis for the approach we will take to define semantic maps for the DKP and RBB constructions:
\begin{lemma}
	\label{lemma:extend-DK06}
	Suppose $(f,\preceq)$ is a modified flow for $(G,I,O,M)$.
	Let $v \sim w$ denote the adjacency relation of $G$: then for any linear order $\le$ extending $\preceq$, the measurement procedure
	\begin{gather}
		\label{eqn:extend-DK06}
	 	\sqparen{\ordprod[\le]_{u \in O\comp}
			\paren{\prod_{\substack{w \sim f(u) \\ w \ne u}} \Zcorr{w}{\s[u]}}
			\paren{\prod_{\substack{v = f(u) \\ v \ne u}} \Xcorr{v}{\s[u]}}
			\Meas{u}{\theta_u}
		} \Ent{G} \New{I\comp}
	\end{gather}
	performs a unitary transformation, provided $\theta_u = \pion 2$ for every $u \in M$.
	Furthermore, for \oneway\ procedures $\fP$ arising from the DKP or (simplified) RBB constructions, the geometry $(G,I,O)$ underlying $\fP$ (together with the set $M$ of qubits with measurement angle $\pion 2$) has a modified flow.
\end{lemma}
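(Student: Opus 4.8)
The plan is to prove the two assertions separately, running the analysis of \autoref{sec:dependenciesIndexSuccFns} and the proof of \autoref{lemma:DK06} forwards for the first and backwards for the second.

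For the first assertion — that a modified flow $(f,\preceq)$ yields a procedure performing a unitary — I would reverse the construction behind \eqref{eqn:starCircuitPremonition}--\eqref{eqn:mediatorStarCircuitPremonition}: from $(f,\preceq)$ and a linear order $\le$ extending $\preceq$, build a stable-index tensor expression $C$ whose image under the homomorphism $\Phi$ of \eqref{eqn:jointHomphm}, after standardization, is exactly the procedure \eqref{eqn:extend-DK06}. Label the indices of $C$ by $V(G)$, with inputs $I$ and outputs $O$, and process the vertices $u\in O\comp$ in the order $\le$ (right to left, i.e.\ in order of deprecation): each $u$ with $f(u)\ne u$ contributes a factor $J(\theta_u)\pseu[:f(u)/u]$, and each fixed point $u=f(u)$ — which, by \eqref{mflow:b} and absence of loops in $G$, lies in $M$ and is joined to exactly the two qubits it couples — contributes the four factors $P\pseu[:/u]\,\cZ\pseu[a,u]\,\cZ\pseu[b,u]\,\ket{+}\pseu[:u/]$ of \eqref{eqn:mediatorStarCircuitPremonition} for its neighbours $a,b$; finally every edge of $G$ not yet accounted for (as the internal edge $\{v,f(v)\}$ of a $\fJ$-pattern or an edge incident to a mediator) contributes a free $\cZ$ factor. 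Injectivity of $f$ shows each index of $C$ is advanced at most once and deprecated at most once, and \eqref{mflow:c}--\eqref{mflow:d} are exactly the deprecation-order constraints \eqref{eqn:flowPremonition}, so $\le$ is a consistent order for the terms of $C$. Applying $\Phi$ term-by-term and commuting all preparation and entangling maps to the right reproduces \eqref{eqn:extend-DK06}; the constraint $\theta_u=\pion2$ for $u\in M$ is forced because each mediating qubit carries the operator $P=\bra{\smash{+_{\pi\!/\!2}}}$. Since $\Phi(C)$ realizes the same transformation as $C$ (\autoref{sec:constructGates}) and standardization preserves it, \eqref{eqn:extend-DK06} is unitary. (When $f$ has no fixed point this reduces verbatim to \autoref{lemma:DK06}; the only new content is handling the $\fZz$ subroutines.)

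For the second assertion — that a \oneway\ procedure $\fP$ from the DKP or simplified RBB construction admits a modified flow — I would use that, by \autoref{sec:constructionAlgorithms}, $\fP$ is the standardization (followed by normalization, which leaves the geometry $(G,I,O)$ unchanged) of $\Phi(C'')$ for a stable-index expression $C''$ over $\cB_\DKP$ or $\cB_\RBB$. Take $f$ to be the extended index successor function of $C''$ (\autoref{def:extendedSuccFn}) and $M$ the set of qubits measured at angle $\pion2$. Then $f$ is an injective map $O\comp\to I\comp$: distinct $J$-gates advance distinct indices, each mediating index is its own image and is advanced only by a $\ket{+}$ operator (so is never $f(v)$ for a $J$-gate input), and $f(v)$ is in every case an advanced, hence non-input, index. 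Conditions \eqref{mflow:a}--\eqref{mflow:b} follow from the shapes of the $\fJ$ and $\fZz$ patterns (each $J$-gate supplies the edge $v\sim f(v)$; a mediating qubit is adjacent to exactly the two qubits it couples). For $\preceq$ I would take the reflexive–transitive closure of the relation generated by $v\preceq f(v)$ and ``$w\sim f(v),\ w\ne v\ \Rightarrow\ v\preceq w$'', so \eqref{mflow:c}--\eqref{mflow:d} hold by construction, and establish antisymmetry via a strictly monotone potential: fix a term order of $C''$ consistent with execution in which the four-term expansion of each $\Zz$-gate appears as one contiguous block, and let $\mathrm{dep}(v)$ be the position of the term deprecating $v$ (set to $\infty$, ties broken by vertex label, for $v\in O$); a short case analysis over the origin of each generating arc (the $J$-gate of $f$, a free $\cZ$, or a $\cZ$ inside a $\fZz$-block) shows $\mathrm{dep}(v)<\mathrm{dep}(u)$ along every arc, using in the mediator cases that the $P$-term lies inside the contiguous $\Zz$-block and so precedes the later deprecations of the coupled qubits. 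Hence the generating digraph is acyclic, $\preceq$ is a partial order, and $(f,\preceq)$ is a modified flow; the analysis of \autoref{sec:dependenciesIndexSuccFns} moreover confirms that ordering vertices by $\mathrm{dep}$ and substituting into \eqref{eqn:extend-DK06} recovers the standardization of $\Phi(C'')$ up to congruence.

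The step I expect to be the main obstacle is the antisymmetry of $\preceq$ around the mediating qubits: the definition of $\preceq$ forces every mediator to precede the two qubits it couples, and one must rule out any forced relation in the opposite direction. The contiguous-block choice of term order for $C''$ is exactly what secures this — a mediator is deprecated by the $P$ operator sitting inside its own $\Zz$-block, hence strictly before the coupled qubits, which are deprecated (if at all) at the later $J$-gates lying outside the block. A secondary, routine piece of bookkeeping is verifying in the first assertion that the circuit $C$ accounts for every edge of $G$ exactly once, so that $\Phi(C)$ standardizes to precisely \eqref{eqn:extend-DK06} and not to a procedure carrying spurious extra entangling maps.
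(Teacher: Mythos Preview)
Your overall strategy matches the paper's, but there is a genuine gap in the first assertion. You assert that each fixed point $u=f(u)$ ``is joined to exactly the two qubits it couples'' and accordingly insert only the four factors of \eqref{eqn:mediatorStarCircuitPremonition}. Nothing in \autoref{def:mflow} constrains the degree of a fixed point: condition~\eqref{mflow:b} says only that $f(v)=v$ or $f(v)\sim v$ for $v\in M$, and conditions~\eqref{mflow:c}--\eqref{mflow:d} impose no bound on the number of neighbours of such a $v$. A tuple $(G,I,O,M)$ can perfectly well have a modified flow in which some fixed point has degree $d>2$, and then your circuit $C$ does not account for all the edges incident to it, so $\Phi(C)$ standardizes to a procedure with a strictly smaller entanglement graph than $G$. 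The paper deals with this by enlarging the gate-set to $\bar\cB=\{J(\theta),\cZ,\Zzz_d\}_{d\ge 2}$ with $\Zzz_d=\exp(-i\pi Z^{\otimes d}/4)$, proving a $d$-qubit analogue of \eqref{eqn:mediatorStarCircuitPremonition}, and assigning to each fixed point of degree $d$ the single gate $\Zzz_d$ on its neighbours. Your construction is the $d=2$ special case; the fix is exactly to replace your two $\cZ$ factors by $d$ of them (equivalently, to introduce $\Zzz_d$), but as written the argument does not cover the general statement of the lemma.

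For the second assertion your approach is correct but more work than needed. The paper simply takes $\le$ to be the linear order in which indices are deprecated in the stable-index expression $C''$ (which is already a partial order, being total) and observes that $(f,\le)$ satisfies \eqref{eqn:flowPremonition}, hence \eqref{mflow:c}--\eqref{mflow:d}, by construction. Your route via the natural pre-order and a monotone potential $\mathrm{dep}$ amounts to re-deriving this linear order and then checking it refines the pre-order; it works, and your ``contiguous-block'' observation for mediators is exactly the right point, but you can short-circuit the whole antisymmetry discussion by taking the deprecation order itself as $\preceq$.
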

\begin{proof}
	We may obtain a procedure as above using a map $\bar\Phi$ extending $\Phi$, as follows.
	Let $\bar\cB = \ens{J(\theta), \cZ, \Zzz_d}_{\theta \in \frac{\pi}{4} \Z,\, d \ge 2}$, where we define the $d$ qubit operator
	\begin{gather}
		\label{eqn:defZzz}
		 	\Zzz_d
		\;=\;
			\exp\big(\!-i\pi Z\sox{d} / 4\big)
		\;=\;
			\e^{-i \pi \smash{[\,\underbrace{\scriptstyle Z \ox \cdots \ox Z}_{\text{\small $d$ times}}\,]}/4}	\;.
		\\[-1em]\notag
	\end{gather}
	We may define the \oneway\ procedure $\fZzz^d$ analogously to $\fZz$, as follows:
	\begin{gather}
		\label{eqn:ZzzProcedure}
			\fZzz^d_{v_1, \ldots, v_d}
		\;=\;
			\paren{\prod_{j = 1}^d \Zcorr{v_j}{\s[a]}} \Meas{a}{\pion 2} \paren{\prod_{j = 1}^d \Ent{av_j}} \New{a}	\;:
	\end{gather}
	by an analysis deferred to \autoref{apx:constructGatesApx-2qubit} (see also Section~3.8 of \cite{BB06}), we may show
	\begin{gather}
		\label{eqn:ZzzCompute}
			\fZzz^d_{v_1, \ldots, v_d}(\rho_S)
		\;=\;
			\big[ \Zzz_d \big]_{v_1, \ldots, v_d} \:\! \rho_S \; \big[ \Zzz_d \big]\herm_{v_1, \ldots, v_d}	\;,
	\end{gather}
	for a system $S$ including the qubits $v_j$, but excluding $a$.
	For a stable-index expression for gates in $\cB$, we may then define:
 	\begin{subequations}
	\label{eqn:extendedCircuitHomphm}%
	\begin{align}
			\bar\Phi\Big(\,	\ket{+}\pseu[:v/]	\Big)
		\;=&\;\,
			\New{v} 	\;;
		\\[1ex]
			\bar\Phi\Big(\,	J(\theta)\pseu[:w/v]	\Big)
		\;=&\;\,
			\fJ^\theta_{w/v}
		\;=\;
			\Xcorr{w}{\s[v]} \Meas{v}{-\theta} \Ent{vw} \New{w}	\;;
		\\[1ex]
			\bar\Phi\Big(\,	\cZ\pseu[v,w]	\Big)
		\;=&\;\,
			\Ent{vw}	\;;
		\\[1ex]
			\bar\Phi\Big(\,	\Zzz_d\pseu[v_1, \ldots, v_d]	\Big)
		\;=&\;\,
			\fZzz^d_{v_1,\ldots,v_d}	\;,
	\end{align}
 	\end{subequations}
	extending this to unitary circuits over $\bar\cB$ in the usual manner.
	For a star circuit $C_\star = \cZ\pseu[v, w_1] \cdots \cZ\pseu[v, w_k] J(\theta_u)\pseu[:v/u]$ as in \eqref{eqn:starCircuitPremonition}, we then have
	\begin{align}
			\bar\Phi(C_\star)
	 	\;\cong&\;\,
			\paren{\prod_{j = 1}^n \Zcorr{w_j}{\s[u]}} \Xcorr{v}{\s[u]} \Meas{u}{-\theta} \Ent{G_\star} \New{v}	
	\end{align}
	after commuting the entangling maps to the right, where $G_\star$ is a star graph with center $v$ and edges from $v$ to each $w_j$ and to $u$; we may define a similar star graph for the entangling maps of $\fZzz^d_{v_1, \ldots, v_d}$ so that we may write
	\begin{align}
			\bar\Phi(\Zzz_d)
	 	\;\cong&\;\,
			\paren{\prod_{j = 1}^d \Zcorr{w_j}{\s[a]}} \Meas{a}{-\theta} \Ent{G_\star} \New{v}	
	\end{align}
	for such a star-graph.

	Given a tuple $(G,I,O,M)$ with a modified flow $(f,\preceq)$, define the circuits $C_u$ for $u \in O\comp$ via the following stable-index expressions:
	\begin{subequations}
	\begin{align}
			C_u
		\;=&\;\,
			\Bigg(\!\prod_{\substack{w \sim f(u) \\ w \notin \ens{\smash{u, f^2(u)}}}} \!\! \cZ\pseu[f(u), w]\Bigg) \, J(\theta_u)\pseu[:f(u)/u]	\,,
		&
			\text{for $u \ne f(u)$};
		\intertext{}
			C_u
		\;=&\;\,
			\big(\Zzz_{d}\big)_{w_1, \ldots, w_d}	\,,
		&
			\mspace{-170mu}\text{for $u = f(u)$ with neighbors $w_1, \ldots, w_n$}.
	\end{align}
	\end{subequations}
	We also define a circuit $C_{\tilde I}$ by
	\begin{gather}
		\label{eqn:inputStarPremonition}
			C_{\tilde I}
		\;\;=\;
			\prod_{\substack{uv \in E(G) \\ u,v \notin \img(f)}}	\cZ\pseu[u,v]	\;:
	\end{gather}
	we may describe the $\cZ$ operations in this circuit by the induced subgraph $G[\tilde I]$, where $\tilde I = V(G) \setminus \img(f)$.
	Consider the circuit $\cC$ described by the stable-index expression consisting of the concatenation of these circuits. 
	Placing the circuit $C_{\tilde I}$ at the beginning, and ordering the the remaining circuits $C_u$ in a linear order $\le$ extending $\preceq$, we obtain an expression in which operators with index $u \ne f(u)$ occur no later than any operator involving $f(u)$ or $w \sim f(u)$, and any operator with index $u = f(u)$ occurs before the indices $w \sim u$ are deprecated.
	Thus, applying the homomorphism $\bar\Phi$ to each sub-circuit $C_u$ in $\cC$ yields a well-formed \oneway\ procedure
	\begin{gather}
		\mspace{-10mu}
		\fP
		\;\cong\;
	 	\sqparen{\ordprod[\le]_{u \in O\comp} \!
			\paren{\prod_{\substack{w \sim f(u) \\ w \notin \ens{\smash{u, f^2(u)}}}} \!\!\!	\Zcorr{w}{\s[u]}}
			\paren{\prod_{\substack{v = f(u) \\ v \ne u}} \!	\Xcorr{v}{\s[u]}}
			\Meas{u}{\theta_u}
		\Ent{G_u} \New{f(u)}	} \Ent{G[\tilde I]}\;,
	\end{gather}
	where $G_u$ is the star graph for each qubit $u$ as described above, and the congruence arises from commutation of preparation maps and entangling maps to the right of each term in the ordered product.

	Note that for every edge $vw \in E(G)$, we have either $v,w \notin \img(f)$, or one of the vertices (without loss of generality, $v$) is of the form $f(u)$ for some $u \in O\comp$.
	In the former case, $vw$ corresponds to a $\cZ$ operation in $C_{\tilde I}$, and thus to an entangling operation in $\bar\Phi(C_{\tilde I})$; in the latter case, we have $w \sim f(u)$, in which case $vw$ corresponds to an entangling operation in the star-graph $G_u$ and thus to an entangling operation in $\bar\Phi(C_u)$.
	Conversely, every entangling operation in $\fP$ above corresponds to some edge in $G$, by construction; the geometry underlying $\fP$ is then $(G,I,O)$.
	Note also that the individual operations in $\fP$ and the procedure described in \eqref{eqn:extend-DK06} are identical, except that corrections of the form $\Zcorr{f^2(u)}{\s[u]}$ do not occur in $\fP$; commuting all the preparation and entangling maps to the right of the expression, these are induced by commuting operations $\Ent{f(u), f^2(u)}$ past the corrections $\Xcorr{f(u)}{\s[u]}$.
	Thus, \eqref{eqn:extend-DK06} is the result of performing this commutation to $\fP$; the \oneway\ procedure in \eqref{eqn:extend-DK06} thus performs the same unitary transformation as $\cC$.

	Finally, we may show that every geometry underlying a \oneway\ procedure obtained from the DKP or (simplified) RBB constructions has a modified flow.
	Consider a stable-index expression $C$ for a circuit over the gate-set $\cB_\DKP \union \cB_\RBB$, whose terms of $C$ are ordered according to the order of performance in the circuit.
	Let $f$ be the extended successor function of $C$, and $\le$ be the order in which the indices are deprecated in $C$.
	Then $f$ will be a mapping from the deprecated indices to the advanced indices, and $(f,\le)$ satisfies \eqref{eqn:flowPremonition} by construction.
	If $(G,I,O)$ is the geometry underlying the procedure $\Phi(C)$, and $M$ is the set of qubits with measurement angle $\pion 2$, by construction $(f,\le)$ will be a modified flow for $(G,I,O,M)$; as the DKP and RBB constructions both involve producing such a stable-index expression $C$, and the process of normalizing a \oneway\ procedure leaves the geometry invariant, the result holds.
\end{proof}
The foregoing proof not only demonstrates that a \oneway\ procedure as in \eqref{eqn:extend-DK06} performs a unitary transformation, but gives a description of a unitary circuit performing the same transformation.
This will form the basis of a similar construction in \autoref{sec:candidateConstr} for constructing circuits without reference to corrections or measurement dependencies.

The proof above also describes a construction of \oneway\ procedures based on a gate-set which subsumes $\cB_\DKP \union \cB_\RBB$, containing operations acting on arbitrarily many qubits.
The extension to this gate-set $\bar\cB$ from $\cB_\DKP$ and $\cB_\RBB$ corresponds to the way in which modified flows extend the pattern of dependencies found in either the DKP or RBB constructions without normalization; 
the inclusion of the operators $\Zzz_d$ for $d > 2$ is essentially a byproduct of the lack of further constraints on the definition of modified flows.

\subsubsection{The natural pre-order, and uniqueness in the case $\card I = \card O$}

Following~\cite{Beaudrap08}, for any geometry (with or without a modified flow) and for any function $f:O\comp \to I\comp$, we may consider an ordering satisfying the conditions of \eqref{eqn:flow} or \eqref{eqn:mflow} as follows:
\begin{definition}
	\label{def:naturalPreorder}
	Let $(G,I,O)$ be a geometry and $f: O\comp \to I\comp$.
	The \emph{natural pre-order} $\preceq$ for $f$ is the reflexive and transitive closure of the conditions
	\begin{subequations}
	\label{eqn:natural}
	\begin{gather}
 			\label{natural:a}
			v \preceq f(v)
		\\
 			\label{natural:b}
			w \sim f(v) \;\;\implies\;\; v \preceq w
	\end{gather}
	\end{subequations}
	for all $v \in O\comp$ and $w \in V(G)$.
\end{definition}
Note that the conditions of~\eqref{eqn:natural} are exactly the conditions \eqref{mflow:c} and \eqref{mflow:d} for modified flows.
The definition of the natural pre-order then relaxes the condition that $\preceq$ be a partial order, \ie\ that $\preceq$ is antisymmetric.

If the natural pre-order $\preceq$ happens also to be a partial order, then $(f,\preceq)$ is a modified flow.
By the definition of the transitive closure, any pre-order satisfying the conditions of \eqref{eqn:natural} must also contain the relations of the natural pre-order $\preceq$; then if $\preceq$ is not antisymmetric, there can be no antisymmetric relation $\le$ satisfying the modified flow conditions with $f$.

Because of the distinguished nature of the natural pre-order, we may prove a useful uniqueness result for modified flows: 
\begin{theorem}
	\label{thm:modFlowUniqueness}
	Let $(G,I,O)$ be a geometry and $M \subset O\comp$.
	If $\card{I} = \card{O}$, then there is at most one function $f$ such that $(f,\preceq)$ is a modified flow for $(G,I,O,M)$, for $\preceq$ the natural pre-order of $f$. 
\end{theorem}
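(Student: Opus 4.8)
The plan is to exploit that $\card I = \card O$ forces the modified-flow function to be a bijection. Since $f \colon O\comp \to I\comp$ is injective by hypothesis and $\card{O\comp} = \card{V(G)} - \card O = \card{V(G)} - \card I = \card{I\comp}$, any modified flow $(f,\preceq)$ for $(G,I,O,M)$ has $f$ bijective, so $f\inv \colon I\comp \to O\comp$ is defined. The crucial observation I would establish first is a local characterisation of $f\inv$: for every $w \in I\comp$, the vertex $f\inv(w)$ is the $\preceq$-least element of the closed neighbourhood $\ens{w} \cup \ngbh{w}$. To see this, set $v = f\inv(w)$; conditions~\eqref{mflow:a} and~\eqref{mflow:b} put $v$ into $\ens{w} \cup \ngbh{w}$, condition~\eqref{mflow:c} gives $v \preceq f(v) = w$, and condition~\eqref{mflow:d} (applied with $f(v) = w$) gives $v \preceq u$ whenever $u \sim w$. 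Thus $v$ lies below every vertex of $\ens{w} \cup \ngbh{w}$ and belongs to it, so by antisymmetry of $\preceq$ it is the unique such vertex.

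Next I would suppose $f_1$ and $f_2$ are two functions such that $(f_1, \preceq_1)$ and $(f_2, \preceq_2)$ are modified flows for $(G,I,O,M)$, where $\preceq_i$ is the natural pre-order of $f_i$ --- so in particular each $\preceq_i$ is a genuine partial order, this being exactly the hypothesis that $(f_i, \preceq_i)$ is a modified flow. Let $D = \ens{v \in O\comp : f_1(v) \ne f_2(v)}$ and assume for contradiction that $D \ne \vide$. Both $f_1$ and $f_2$ are bijections $O\comp \to I\comp$ agreeing off $D$, so they have a common image $E := f_1(D) = f_2(D)$ on $D$, and $g := f_2\inv \circ f_1$ restricts to a permutation of $D$ with no fixed point (a fixed point of $g$ would be a point where $f_1$ and $f_2$ agree).

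The key step is to apply the neighbourhood-minimality characterisation for $\preceq_2$ to the images of $f_1$: for $v \in D$ we have $f_1(v) \in E \subseteq I\comp$, and $v \in \ens{f_1(v)} \cup \ngbh{f_1(v)}$ by conditions~\eqref{mflow:a}--\eqref{mflow:b} applied to $f_1$, so the characterisation yields $g(v) = f_2\inv(f_1(v)) = \min_{\preceq_2}\!\big(\ens{f_1(v)} \cup \ngbh{f_1(v)}\big) \preceq_2 v$. Hence $g(v) \preceq_2 v$ for all $v \in D$. Choosing a cycle $v_0, v_1 = g(v_0), \ldots, v_{k-1}, v_k = g(v_{k-1}) = v_0$ of the fixed-point-free permutation $g|_D$ (so $k \ge 2$ and $v_0, \ldots, v_{k-1}$ are distinct) and chaining $v_{j+1} = g(v_j) \preceq_2 v_j$ around it gives $v_0 \preceq_2 v_{k-1} \preceq_2 \cdots \preceq_2 v_1 \preceq_2 v_0$; antisymmetry of $\preceq_2$ then forces $v_0 = v_1$, contradicting $k \ge 2$. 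Therefore $D = \vide$ and $f_1 = f_2$.

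I expect the main obstacle to be conceptual rather than technical: the two natural pre-orders $\preceq_1$ and $\preceq_2$ have no \emph{a priori} relationship, so one cannot compare them directly, and the insight that unlocks the proof is that the neighbourhood-minimality property lets one use \emph{one} flow's order to pin down the \emph{other} flow's function, after which bijectivity of $f$ collapses everything into a short permutation-cycle argument. The remaining verifications --- that the modified-flow axioms really do place $f\inv(w)$ inside $\ens{w} \cup \ngbh{w}$, and that a fixed-point-free permutation of a finite non-empty set contains a cycle of length at least two --- are routine.
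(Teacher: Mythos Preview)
Your proof is correct and takes a genuinely different route from the paper's.

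The paper argues as follows: given two distinct flow functions $f$ and $g$, it picks a point $u_0$ of disagreement and builds an explicit infinite sequence $(u_j)_{j \ge 0}$ by alternately applying $f$ and $g\inv$. It then carefully verifies that $u_{j+2} \ne u_j$ for all $j$, and extracts a subsequence $u_{\sigma(0)} \prec u_{\sigma(1)} \prec \cdots$ that is strictly increasing in the natural pre-order of $f$, contradicting finiteness of $V(G)$. The case analysis needed to show that the subsequence is genuinely increasing is somewhat delicate.

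Your argument is more structural. The neighbourhood-minimality observation --- that $f\inv(w)$ is the $\preceq$-least element of $\ens{w} \cup \ngbh{w}$ --- is the key lemma that the paper does not isolate, and it does all the work: it immediately gives $g(v) = f_2\inv(f_1(v)) \preceq_2 v$ for every $v$, so the composite $g$ is $\preceq_2$-decreasing on all of $O\comp$. A fixed-point-free permutation of a finite set cannot be weakly decreasing in a partial order, and the cycle argument finishes cleanly. This bypasses the paper's sequence construction and its case analysis entirely. What you lose is the explicit ``walk'' picture of how the two flows interleave; what you gain is a short, reusable characterisation of $f\inv$ in terms of the order structure alone.
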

\begin{proof}
 	Suppose that $\card{I} = \card{O}$.
	If $I = O$, then a tuple $(f,\preceq)$ can only be a flow if $f(v) = v$ for every $v \in O\comp$, as otherwise we obtain cycles $v \prec f(v) \prec \cdots \prec f^k(v) = v$ for some $v$.
%
	Then the only function $f$ for which $(f,\preceq)$ may be a modified flow is the identity function on $O\comp$.

	Otherwise, suppose that $I \ne O$, and consider two functions $f, g: O\comp \to I\comp$ such that $f(v) = v$ only for $v \in M$, and similarly for $g$.
	Note that the natural pre-order $\preceq$ of $f$ can only be anti-symmetric if there are no cycles of the form $v \prec f(v) \prec \cdots \prec f^k(v) = v$; thus, we require that $f$ have no such cycles for $k > 1$, and again similarly for $g$.

	Suppose $f$ and $g$ differ, and 
	let $u_0 \in O\comp$ be 
	some vertex for which $f(u_0) \ne g(u_0)$.
	Let $\tilde g: I\comp \to O\comp$ be the inverse function of $g$: we may recursively construct an unbounded sequence of vertices $u_j$ for $j \in \N$ by defining $u_1 = f(v_j)$, and
	\begin{gather}
		 	u_{j+1}
		\;=\;
			\begin{cases}
			 	f(u_j)	\;,			&	\text{if $u_j \ne f(u_{j-1})$}	\\[0.2ex]
				\tilde g(u_j)	\;,	&	\text{if $u_j = f(u_{j-1})$}
			\end{cases}
	\end{gather}
	for $j \ge 1$.
	Note that $\dom(\tilde g) = I\comp = \img(f)$ and $\dom(f) = O\comp = \img(\tilde g)$; then this sequence is well defined for all $j \ge 0$.
	We may also show that $u_{j+1} \ne u_{j-1}$ for each $j \ge 1$ by induction, as follows.
	Suppose that $u_{j+1} \ne u_{j-1}$ for some particular $j \ge 1$:
	\begin{itemize}
	\item 
		Suppose that $u_{j+1} = \tilde g(u_j) \ne f(u_j)$: it follows that $u_j = f(u_{j-1})$, so that $u_{j+2} = f(u_{j+1}) \ne f(u_{j-1}) = u_j$ by the injectivity of $f$.

	\item
		Suppose that $u_{j+1} = f(u_j) \ne \tilde g(u_j)$: it follows that $u_j = \tilde g(u_{j-1}) \ne f(u_{j-1})$, in which case we have $u_{j+2} = \tilde g(u_{j+1}) \ne \tilde g(u_{j-1}) = u_j$, by the injectivity of $g$.

	\item
		Suppose that $u_{j+1} = f(u_j) = \tilde g(u_j)$.
		Because $f$ is injective, we have $u_{j+1} = f(u_j) \ne u_j$ if $u_j \ne u_{j-1}$; and as $u_{j+1} \ne u_{j-1}$, at least one of the inequalities $u_{j+1} \ne u_j$ or $u_j \ne u_{j-1}$ must hold.
		Thus, we have $u_{j+1} = \tilde g(u_j) \ne u_j$ in any case; by the injectivity of $\tilde g$, we then have $u_{j+2} = \tilde g(u_{j+1}) = \tilde g(\tilde g(u_j)) \ne u_j$ by the acyclic property of $g$.
	\end{itemize}
	It then follows by induction that $u_{j+2} \ne u_j$ for all $j \ge 0$.

	We may construct a subsequence $u_{\sigma(0)} \prec u_{\sigma(1)} \prec \cdots$, with $\sigma: \N \to \N$ a monotonically increasing function, as follows.
	We let $\sigma(0) = 0$ and for each $j \ge 0$, and define 
	\begin{gather}
			\sigma(j+1)
		\;=\;
			\begin{cases}
				\sigma(j) + 1	\;,	&	\text{if $u_{\sigma(j) + 2} = f(u_{\sigma(j) + 1}) = f^2(u_{\sigma(j)})$}		\\
				\sigma(j) + 2	\;,	&	\text{otherwise}
			\end{cases}
	\end{gather}
	for all $j \ge 0$.
	Suppose for some $j \ge 1$ that $u_{\sigma(j) + 1} = f(u_{\sigma(j)})$.
	\begin{itemize}
	\item 
		If $u_{\sigma(j) + 2} = f(u_{\sigma(j) + 1})$ as well, we have $u_{\sigma(j)} \prec u_{\sigma(j) + 1} = u_{\sigma(j+1)}$, and we also have $u_{\sigma(j+1) + 1} = f(u_{\sigma(j+1)})$.

	\item
		Otherwise, we have $u_{\sigma(j+1)} = u_{\sigma(j) + 2} = \tilde g(u_{\sigma(j) + 1})$, which implies that either $u_{\sigma(j) + 2} = u_{\sigma(j) + 1} = f(u_{\sigma(j)})$ or $u_{\sigma(j) + 2} \sim u_{\sigma(j) + 1} = f(u_{\sigma(j)})$.
		In either case, we again have $u_{\sigma(j)} \prec u_{\sigma(j+1)}$, and once more we have $u_{\sigma(j+1) + 1} = f(u_{\sigma(j+1)})$.
	\end{itemize}
	As $u_{\sigma(0) + 1} = u_1 = f(u_0)$, by induction we then have $u_{\sigma(k)} \prec u_{\sigma(k+1)}$ for all $k \ge j$.
	As $V(G)$ is finite, there must be integers $N, N'$ such that $N < N'$ and $u_{\sigma(N)} = u_{\sigma(N')}$: then we have $u_{\sigma(N)} \preceq u_{\sigma(N+1)} \preceq \cdots \preceq u_{\sigma(N')} = u_{\sigma(N)}$, where $u_{\sigma(N)} \ne u_{\sigma(N+1)}$ as noted above.
	Then $\preceq$ is not antisymmetric.

	If on the contrary $(f,\preceq)$ is a modified flow, and $(g,\le)$ is also a modified flow for some partial order $\le$, it follows that $f = g$.
	Thus $(G,I,O)$ and $M$ have at most one modified flow $(f,\preceq)$, for $\preceq$ the natural pre-order of $f$.
\end{proof}
The uniqueness of a modified flow $(f,\preceq)$, when we require that $\preceq$ be the natural pre-order of $f$, will play an important role in describing a semantic map for the DKP and RBB constructions.
Different ``flow-functions'' $f$ correspond to different sets of dependencies in a \oneway\ procedure: provided a procedure $\fP$ in which some of the dependencies have been obfuscated or eliminated altogether by Pauli simplifications and signal shifting, it may be difficult to determine whether there exists a successor function which describes the actual dependencies which are present.
This problem is simplified if we are promised that there is at most one successor function $f$ yielding a modified flow $(f,\preceq)$: we may simply attempt to construct $f$, and determine whether the dependendices in $\fP$ are consistent with those described by $f$.

\subsection{Decompositions of geometries}
\label{sec:starDecomp}

The proof of \autoref{lemma:extend-DK06} suggests an approach to determining when a tuple $(G,I,O,M)$ has a modified flow, by means of a decomposition of the geometry $(G,I,O)$ into sub-geometries.
Examining such decompositions will lead to an efficient algorithm for obtaining the candidate successor function $f$ associated with a modified flow, which we may use to implement a semantic map for the DKP and RBB constructions.

\subsubsection{Star geometries and star decompositions}
\label{sec:starDecomposition}

For a geometry $(G,I,O)$, note that the natural pre-order of a injective function $f: O\comp \to I\comp$ is characterized entirely by the transitive closure of local relations of vertices in $G$.
Similar local relations are used in the proof of \autoref{lemma:extend-DK06} to define star graphs which are used to define both star circuits, and their representations as procedures in the \oneway\ model.
For an injective function $f: O\comp \to I\comp$, we may characterize these local relations in terms of the following structures:
\begin{definition}
	\label{def:starGeom}
	Let $(G,I,O)$ be a geometry, and $f: O\comp \to I\comp$ an injective function.
	For $v \in V(G)$ define the vertex-set
	\begin{align}
			V_{v \star}
		\;=&\;
			\ens{w \in V(G) \,\Big|\, w \in \ens{v, f(v)} \text{~or~} \big[w \sim f(v) \text{~and~} w \ne f(f(v)) \big]} \,.
	\end{align}
	We then define the \emph{star geometry 
	rooted at $v$} (or \emph{centered at $f(v)$}) is the geometry $(G_{v\star}, I_{v\star}, O_{v\star})$ given by
	\begin{subequations}
	\label{eqn:starGeometry}
	\begin{gather}
			G_{v \star}
		\;\;=\;\;
			G[V_{v \star}]	\,-\, \ens{xw \,\big|\, x,w \ne f(v)}\,,
		\\[1.5ex]
			I_{v \star}
		\;\;=\;\;
				V_{v \star} \setminus \ens{f(v)}	\,,
		\\[1ex]
			O_{v \star}
		\;\;=\;\;
				V_{v \star} \setminus \ens{v}	\,.
	\end{gather}
	\end{subequations}
\end{definition}

Star geometries correspond to star circuits (as described in \autoref{sec:dependenciesIndexSuccFns}) and $\Zz$ operations, via their representations as \oneway\ procedures.
Consider the \oneway\ procedure $\fP = \Phi(C_\star\pseu[u_1,\ldots,u_n:w/v])$ obtained from a stable-index expression for star circuit $C_\star$ as in \eqref{eqn:starCircuitPremonition}.
Then $\fP$ has a geometry $(G, I, O)$ consisting of a star graph $G$ with center $w \in O \setminus I$, and where $v \in I \setminus O$.
If $f$ is the index successor function of the stable-index expression, we may then show that $(G,I,O)$ is the star geometry with root $v$ and center $w$ as in \autoref*{def:starGeom}.
We may similarly show that $\Phi(\Zz)$ also gives rise to a star geometry, where the mediating qubit $a = f(a)$ is both the center and the root of the star geometry.
(The same also holds for $\bar\Phi(\Zzz_d)$, for the mapping $\bar\Phi$ and unitary $\Zzz_d$ defined in the proof of \autoref{lemma:extend-DK06}.)

Star geometries $(G_{u\star},I_{u\star},O_{u\star})$ are thus precisely those geometries which underlie the \oneway\ procedures $\Phi(C_\star)$, where $C_\star$ are the featured circuits in decompositions described in \autoref{sec:dependenciesIndexSuccFns}.
Extending the results of\cite{DK06}, we may consider how such structures for a geometry $(G,I,O)$ may be used to describe ``candidate'' circuits $C$ for \oneway\ procedures $\fP$, where $C$ may correspond to $\fP$ via the DKP or RBB constructions: we consider how to construct such circuits in \autoref{sec:candidateConstr}.
Motivated by the connection with circuit constructions, we may consider a decomposition of geometries similar to the compositions of simple circuits in \autoref{sec:dependenciesIndexSuccFns}:
\begin{lemma}
	\label{lemma:modifiedStarDecomp}
	Let $(G,I,O)$ be a geometry and $M \subset O\comp$.
	Then $(G,I,O,M)$ has a modified flow $(f,\preceq)$ if and only if there is a decomposition of $(G,I,O)$ into star geometries rooted at each of the vertices $v \in O\comp$, together with the geometry $\big(G\big[\:\!\tilde{I}\:\!\big], I, \tilde{I}\,\big)$, where $\tilde{I} = V(G) \setminus \img(f)$.
\end{lemma}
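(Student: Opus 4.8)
We propose to prove the equivalence in two directions, invoking \autoref{lemma:extend-DK06} for one direction and the structure of the natural pre-order (\autoref{def:naturalPreorder}) for the other.

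\emph{From a modified flow to a decomposition.} Suppose $(f,\preceq)$ is a modified flow for $(G,I,O,M)$. The proof of \autoref{lemma:extend-DK06} already builds from $(f,\preceq)$ a circuit $\cC$: a circuit $C_{\tilde I}$ on the induced subgraph $G[\tilde I]$ with $\tilde I = V(G)\setminus\img(f)$, concatenated with a star circuit $C_u$ (or, when $u=f(u)$, a single $\Zzz_d$ operation on the neighbours of $u$) for each $u\in O\comp$, the $C_u$ ordered along a linear extension $\le$ of $\preceq$; and it shows that $\fP := \bar\Phi(\cC)$ is a \oneway\ procedure with underlying geometry $(G,I,O)$. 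First I would check, directly from \autoref{def:starGeom}, that the geometry underlying $\bar\Phi(C_u)$ is exactly the star geometry $(G_{u\star},I_{u\star},O_{u\star})$ rooted at $u$, and that underlying $\bar\Phi(C_{\tilde I})$ is $(G[\tilde I],I,\tilde I)$. Since composing \oneway\ procedures composes their geometries (\autoref{def:geometry}, \eqref{eqn:composibilityConditions}), the geometry $(G,I,O)$ of $\fP$ is then literally the composition of these pieces in the order $\le$, which is the claimed decomposition. The two points to verify along the way are: the composability condition of \eqref{eqn:composibilityConditions} at each stage, i.e.\ that a vertex shared by an earlier and a later piece is an output of the former and an input of the latter --- this holds because $\le$ extends the natural pre-order of $f$; and that vertices and edges assemble correctly --- every vertex of $G$ lies in $\tilde I$ or is some centre $f(u)$, while every edge of $G$ is a $J$-edge $u\,f(u)$, a $\cZ$-edge incident to a centre, or an edge of $G[\tilde I]$.

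\emph{From a decomposition to a modified flow.} Conversely, suppose $(G,I,O)$ is exhibited as a composition (in some order) of $(G[\tilde I],I,\tilde I)$ with star geometries rooted at the vertices $v\in O\comp$. Define $f(v)$ to be the centre of the star geometry rooted at $v$. By \autoref{def:starGeom}, $v\in V_{v\star}$ and $f(v)$ is adjacent to $v$ in $G$ (with $f(v)=v$ permitted, for a vertex playing the role of a mediating qubit), which is exactly \eqref{mflow:a}--\eqref{mflow:b}; injectivity of $f$ is immediate since distinct roots give distinct stars. Take $\preceq$ to be the natural pre-order of $f$; then \eqref{mflow:c}--\eqref{mflow:d} hold by \autoref{def:naturalPreorder}. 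The one nontrivial point is that $\preceq$ is antisymmetric, so that $(f,\preceq)$ is a genuine modified flow. The order of the star geometries in the composition induces a linear order on $O\comp$, and I would argue that this order is compatible with the generating relations \eqref{natural:a}--\eqref{natural:b} --- if $w\sim f(v)$ then $w\in V_{v\star}$, and the composability conditions force the star rooted at $v$ to occur no later than the piece that first makes $w$ available --- so the transitive closure of those relations is cycle-free. (In the case $\card{I}=\card{O}$ one can instead quote \autoref{thm:modFlowUniqueness} to conclude.) Matching the composition formulas of \eqref{eqn:composibilityConditions} against the fact that $\img(f)$ is precisely the set of star centres then forces the remaining piece to be $(G[\tilde I],I,\tilde I)$ with $\tilde I=V(G)\setminus\img(f)$, as asserted.

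The step I expect to be the main obstacle is the edge bookkeeping in the forward direction. By \autoref{def:starGeom} a $\cZ$-edge $f(u)\,x$ with $x\sim f(u)$ and $x\ne f^2(u)$ sits in $G_{u\star}$; but if $x$ is itself the centre $f(u')$ of another star --- for example if $x$ is a fixed point of $f$ representing a mediating qubit --- the same edge can also sit in $G_{u'\star}$, and under the symmetric-difference rule of \eqref{eqn:composibilityConditions} for composing geometries such coincidences must cancel in exactly the right pattern for the composite to be precisely $(G,I,O)$ (and not $(G,I,O)$ with some edges toggled). Rather than re-derive this count, the cleanest route is to appeal directly to the assertion inside the proof of \autoref{lemma:extend-DK06} that the geometry underlying the reconstructed procedure is $(G,I,O)$: its bookkeeping already records how the $J$-edges, the $\cZ$-edges incident to centres, and the edges of $G[\tilde I]$ of the individual pieces combine to give $E(G)$, and it is this that the decomposition repackages.
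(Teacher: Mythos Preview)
Your reverse direction is sound and close in spirit to the paper's converse step (and anticipates the argument of \autoref{lemma:flowStarDecompCharn}): using the linear order implicit in the decomposition to certify antisymmetry of the natural pre-order is exactly the right idea.

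The forward direction has a genuine gap, precisely at the point you flag as the main obstacle --- but the difficulty is worse than edge bookkeeping under $\symdiff$. The star geometries $\cG_{u\star}$ of \autoref{def:starGeom}, all taken with respect to the \emph{original} $(G,I,O)$ and $f$, need not form a composable sequence in any linear extension of $\preceq$. Take $V(G)=\{a,b,c,d\}$ with edges $ab,\,cd,\,bd$, $I=\{a,c\}$, $O=\{b,d\}$, $f(a)=b$, $f(c)=d$: this has a modified flow, yet $\cG_{a\star}$ contains the vertex $d$ (since $d\sim f(a)$) and $\cG_{c\star}$ contains $b$ (since $b\sim f(c)$). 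Whichever star you place last, its centre already occurs in the running composite but is not in that star's input subsystem, so the composability condition of \eqref{eqn:composibilityConditions} fails. Your claimed verification ``this holds because $\le$ extends the natural pre-order of $f$'' is exactly what breaks. Appealing to the proof of \autoref{lemma:extend-DK06} does not rescue this: the circuits $C_u$ there have precisely these $\cG_{u\star}$ as underlying geometries, and on this example the stable-index expression $C_a$ uses the index $d$ before $C_c$ advances it (and symmetrically), so the well-formedness claim you wish to import is itself delicate here.

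The paper sidesteps this by arguing inductively rather than globally: it selects $v$ penultimate in $\preceq$, peels off $\cG_{v\star}$ (which \emph{is} the \autoref{def:starGeom} star of the current geometry), verifies directly that $(G,I,O)=\cG_{v\star}\circ\cG'$ for the residual $\cG'=(G\setminus f(v),I,O')$, and recurses on $\cG'$ with the restricted flow $f'$. The later stars in the decomposition are then the \autoref{def:starGeom} stars of the successive residuals, not of the original geometry; in the example above, after peeling $\cG_{a\star}$ the star rooted at $c$ in the residual carries only the edge $cd$, the edge $bd$ having already been absorbed. Your reading of the lemma as a decomposition into the fixed family $\{\cG_{u\star}\}_{u\in O\comp}$ all defined relative to the original $(G,I,O)$ is the one that fails; the inductive construction is what makes the statement go through.
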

\begin{proof}
 	We proceed by induction on the size of $\img(f)$.
	If $\img(f) = \vide$, then $\dom(f) = \vide$ as well, in which case $I \subset V(G) = O$; we then have $(G,I,O) = \big(G\big[\,\tilde{I}\,\big], I, \tilde{I}\,\big)$.
	The natural pre-order in this case is the equality relation, which is antisymmetric; then $(f,=)$ is a modified flow.
	The Lemma then holds for $\img(f) = \vide$.

	For the induction step, suppose that $\card{\img(f)} = N+1$ for some $N \in \N$, and that the claim holds for geometries $(G,I,O)$ with flows $(f',\le)$ in which $\card{\img(f')} \le N$.
	\begin{itemize}
	\item
		Suppose that $(G,I,O,M)$ has a modified flow: in particular, the natural pre-order $\preceq$ is a partial order.
 		Because $V(G)$ is finite, there is a vertex $v \in V(G)$ such that $v \prec w$ only for $w$ maximal in the natural pre-order $\preceq$.
		Let $\cG_{v\star} = (G_{v\star}\,, I_{v\star}\,, O_{v\star})$, and $\cG' = (G', I, O')$, where
		\begin{subequations}
		\label{eqn:modifiedGeometryReduce}
		\begin{align}
				G'
			\;=&\;\,
				G \setminus f(v)	\,,\quad\text{and}
			\\
				O'
			\;=&\;\,
				\begin{cases}
					O	\,,	&	\text{if $v = f(v)$}	\\
					O \symdiff \ens{v, f(v)}	\,,	&	\text{otherwise}
				\end{cases}.
		\end{align}
		\end{subequations}
	 	(Again, $A \symdiff B$ denotes the symmetric difference of $A$ and $B$.)
		Note that $O_{v\star} \subset O$ by definition, and either $I_{v\star} = O_{v \star} \symdiff \ens{v, f(v)} \subset O \symdiff \ens{v, f(v)}$ (if $v$ and $f(v)$ differ) or $I_{v\star} = O_{v\star} \subset O$ (if $f(v) = v$).
		Because $V(G_{v\star}) = I_{v\star} \union \ens{f(v)}$ in both cases, we then have $V(G') \inter V(G_{v\star}) \subset I_{v\star} \inter O'$\,: then the composition $(\bar{G}, \bar{I}, \bar{O})  \,=\, \cG_{v\star} \circ \cG'$ is well-defined, and we have
		\begin{subequations}
		\begin{align}
				\bar{G}
			\;\;=&\;\;
				G_{v\star} \union G' 
			\notag\\=&\;\;
				\ens{xw \in G \,\big|\, w = f(v)} \union \Big[ G \setminus f(v) \Big] 
			\;\;=\;\;
				G	\,;
			\\[2ex]
				\bar{I}
			\;\;=&\;\;
				\big( I_{v\star} \setminus O' \big) \union I
			\;\;=\;\;
				\vide \union I
			\;\;=\;\;
				I	\,;
			\\[2ex]
				\bar{O}
			\;\;=&\;\;
				O_{v\star} \union \big( O' \setminus I_{v\star} \big)
			\;\;=\;\;
				O_{v\star} \union \big( O \setminus O_{v\star} \big)
			\;\;=\;\;
				O	\,.
		\end{align}
		\end{subequations}
		Then, we have $(G,I,O) = \cG_{v\star} \circ \cG'$\,.

		The restriction $f'$ of the flow-function $f$ to $V(G) \setminus O'$ does not include $f(v)$ in its image, and therefore has cardinality at most $N$\,; by recursively applying the decomposition above, $\cG'$ has a decomposition into star geometries and the geometry $(G' \setminus \img(f'), I, V(G') \setminus \img(f'))$, where in particular $V(G) \setminus \img(f') = V(G) \setminus \img(f) = \tilde{I}$, by the definition of $f'$\,.
		Thus, for any tuple $(G,I,O,M)$ with a modified flow, $(G,I,O)$ has a decomposition into star geometries as above.

	\item
		For the converse, suppose that $(G,I,O) = \cG_{v\star} \circ \cG'$ for some star geometry $\cG_{v\star} = (G_{v\star}, I_{v\star}, O_{v\star})$ and a residual geometry $\cG' = (G',I,O')$.
		If the latter has a star decomposition, and if $f'$ is the restriction of $f$ to $V(G')$, then $(f', \preceq)$ is a modified flow for $\cG'$ by hypothesis as $\img(f') = \img(f) \setminus \ens{f(v)}$.
		By assumption, $\cG_{v\star}$ and $\cG'$ are composable, in which case we have $V(G_{v\star}) \inter V(G_{v\star}) = I_{v\star} \inter O'$.
		If $v \prec w$ for some $w \in V(G)$, we then have either $w = f(v)$, $w \in O' \setminus{v}$, or $w \succeq w'$ for either $w' = f(v)$ or $w' \sim f(v)$.
		By construction, we have either $O = O' \setminus {v}$ in the case that $v = f(v)$, or $O = (O' \setminus \ens{v}) \union \ens{f(v)}$ otherwise; then for any $w' = f(v)$ or $w' \sim f(v)$, we have $w' \in O$.
		As such $w'$ lie outside of the domain of $f$, they are maximal in the natural pre-order; then $w \succ v$ only if $w \in O$.
		In particular, if the restriction of the natural pre-order to $\cG'$ is such that $(f',\preceq)$ is a modified flow, then $(f,\preceq)$ is also a modified flow.
		Thus, the existence of a star decomposition for $(G,I,O)$ implies that it has a modified flow, by induction.	\qedhere
	\end{itemize}
\end{proof}
Thus, we may decompose a circuit with a modified flow into star geometries.
\autoref{fig:starDecomp} illustrates a decomposition of a geometry with a flow into star geometries in the manner described above; \autoref{fig:starModDecomp} illustrates a more general case for a modified flow.
Also illustrated in these figures are ``candidate circuit'' constructions for unitary circuits, which we describe in \autoref{sec:candidateConstr}.

\begin{figure}[tp]
	\begin{center}
		\includegraphics{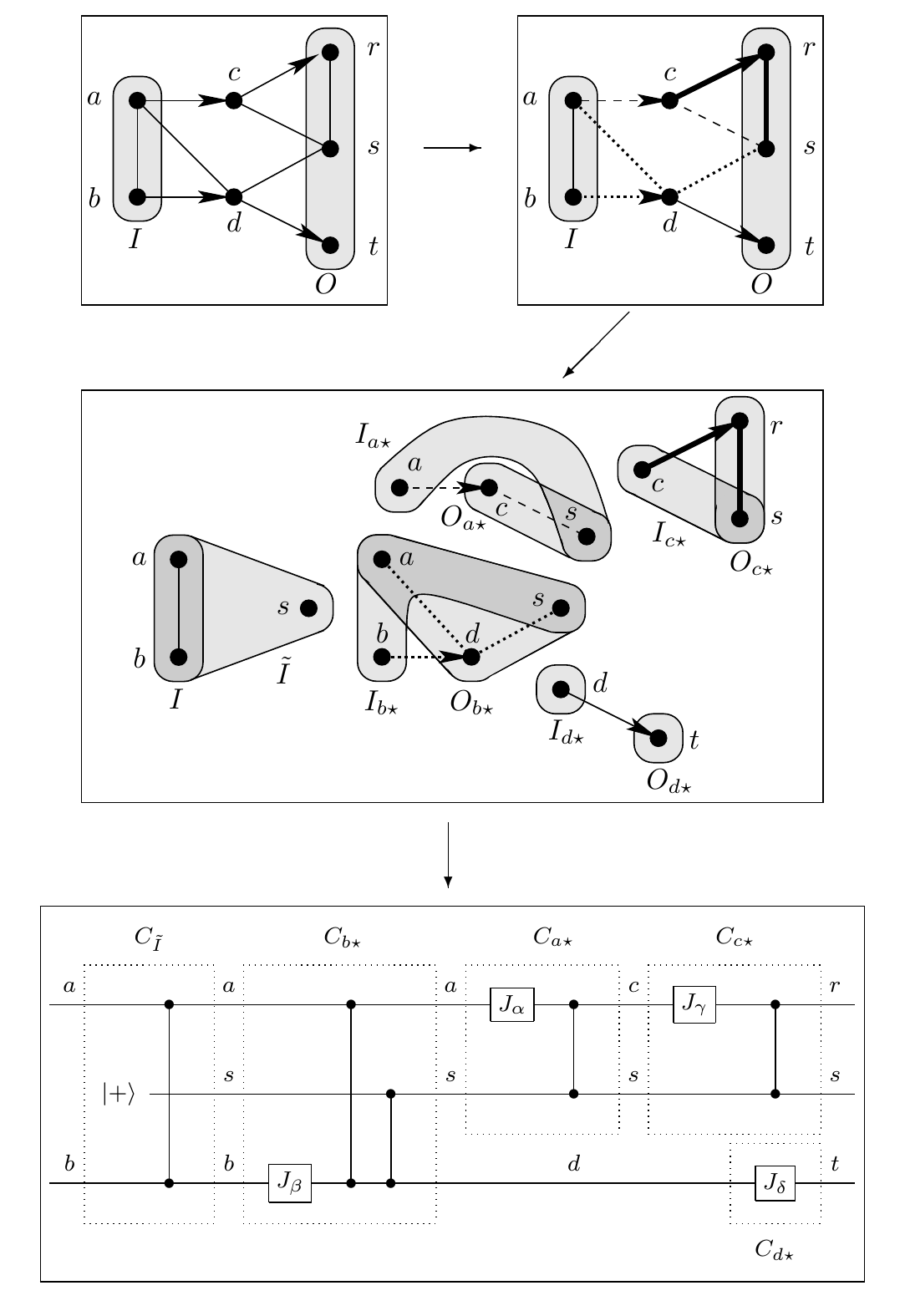}
 		\vspace{-2.2em}
	\end{center}
	\caption[Illustration of a decomposition of a geometry with a flow into star geometries.]{\label{fig:starDecomp}
		Illustration of a decomposition of a geometry $(G,I,O)$ with a flow $(f,\preceq)$ 
		into star geometries for each $v \in O\comp$, and a reduced geometry $\big(G\big[\:\!\tilde I\:\!\big], I, \tilde I\big)$ for $\tilde{I} = V(G) \setminus \img(f)$.
		In the upper three panels, the action of $f$ is represented by arrows; different styles of edges indicate edges belonging to different star geometries.
		The middle panel illustrates each star geometry in the decomposition separately (with vertices repeated between different star geometries), adjacent to the geometries with which they may be composed.
		The bottom panel illustrates the unitary circuits corresponding to these star geometries (including a labelling of wire-segments corresponding to stable tensor indices), and the complete circuit obtained from composing them.}
\end{figure}

\begin{figure}[tp]
	\begin{center}
		\includegraphics{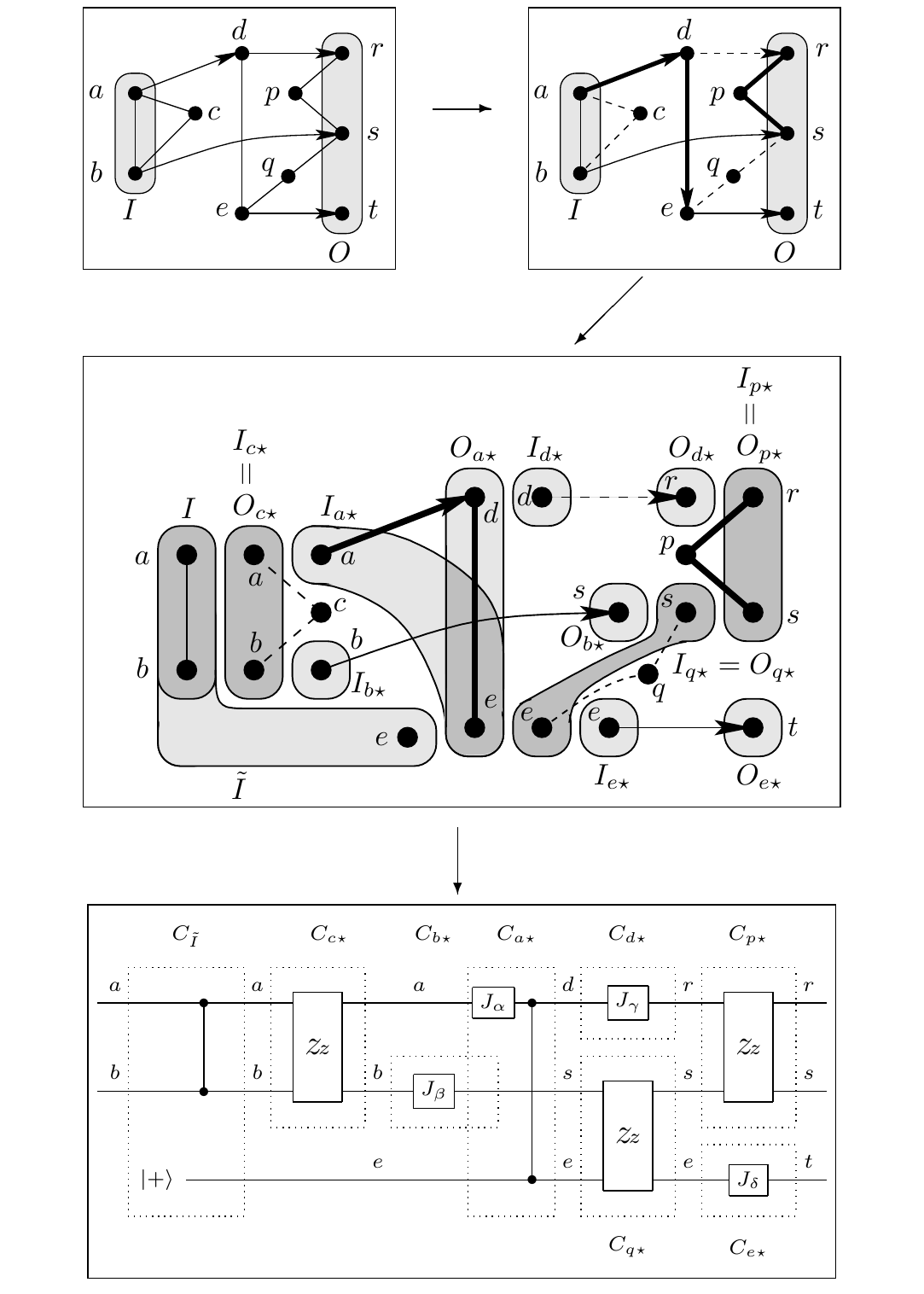}
 		\vspace{-2.2em}
	\end{center}
	\caption[Illustration of a decomposition of a geometry $(G,I,O)$ with a modified flow $(f,\preceq)$ into star geometries.]{\label{fig:starModDecomp}
		Illustration of a decomposition of a geometry $(G,I,O)$ with a modified flow $(f,\preceq)$ into star geometries for each $v \in O\comp$, and a reduced geometry $\big(G\big[\:\!\tilde I\:\!\big], I, \tilde I\big)$ for $\tilde{I} = V(G) \setminus \img(f)$.
		In the upper three panels, the action of $f$ is represented by arrows.
		Different styles of edges indicate edges belonging to different star geometries (as are edges of the same style, but which are disconnected from each other).
		Arrows are omitted for $f(v) = v$.
		The middle panel illustrates each star geometry in the decomposition separately (with vertices repeated between different star geometries), adjacent to the geometries with which they may be composed.
		The bottom panel illustrates the composition of the unitary circuits corresponding to the star geometries (including a labelling of wire-segments corresponding to stable tensor indices).}
\end{figure}

\subsubsection{Structural characterization of star decompositions}

In order to define a semantic map $\cS$ for the DKP and RBB constructions, it will be necessary to attempt to obtain a star decomposition of a geometry which is initially not known to have a modified flow.
We may facilitate this by showing that we may characterize star decompositions independently of modified flows, as follows:
\begin{lemma}
	\label{lemma:flowStarDecompCharn}
	Let $(G,I,O)$ be a geometry and $M \subset O\comp$.
	Suppose that $(G,I,O)$ can be expressed as $\cG_{\ell\star} \circ \cG_{(\ell-1)\star} \circ \cdots \circ \cG_{1\star} \circ \cG_{\tilde I}$, where $\cG_{\tilde I} = (G[\tilde I], I, \tilde I)$ for some subset $\tilde I \subset V(G)$ with $\card{\smash{\tilde I}} = \card{O} = \card{V(G)} - \ell$, and where for each $0 \le j < \ell$ the geometry $\cG_{j\star} = (G_j, I_j, O_j)$ satisfies either
	\begin{romanum}
	\item
		\label{item:sameRootCenterStar}
		$I_j = O_j = V(G_j) \setminus \ens{v_j}$, for some $v_j \in V(G) \inter M$ which is adjacent to every other element of $V(G_j)$;

	\item
		\label{item:diffRootCenterStar}
		$V(G) = I_j \union O_j$, $I_j = (I_j \inter O_j) \union \ens{v_j}$, and $O_j = (I_j \inter O_j) \union \ens{w_j}$, for some distinct $v_j, w_j \in V(G)$ such that $w_j$ is adjacent to every other element of $V(G_j)$.
	\end{romanum}
	Let $f: O\comp \to I\comp$ be defined so that $f(v_j) = v_j$ for any $v_j$ as in the first case above, and $f(v_j) = w_j$ for any $v_j,w_j$ as in the second case; and let $\preceq$ be the natural pre-order for $f$.
	Then $(f,\preceq)$ is a modified flow for $(G,I,O,M)$.
\end{lemma}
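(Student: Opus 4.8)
The plan is to verify directly that the pair $(f,\preceq)$ meets every requirement of Definition~\ref{def:mflow}. Two of the four conditions, \eqref{mflow:c} and \eqref{mflow:d}, hold tautologically, since $\preceq$ is declared to be the natural pre-order of $f$ (Definition~\ref{def:naturalPreorder}) and its defining relations \eqref{natural:a}--\eqref{natural:b} are precisely \eqref{mflow:c}--\eqref{mflow:d}. So the work splits into three tasks: \textbf{(a)} that $f$ is a well-defined \emph{injection} with domain exactly $O\comp$ and image in $I\comp$; \textbf{(b)} the adjacency conditions \eqref{mflow:a}--\eqref{mflow:b}; and \textbf{(c)} that $\preceq$ is antisymmetric, hence a genuine partial order.

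The structural fact underpinning all three tasks is that each $\cG_{j\star}$ has a \emph{single freshly prepared vertex}, namely $f(v_j)$: in case~\ref{item:sameRootCenterStar} we have $V(G_j)\setminus I_j = \ens{v_j} = \ens{f(v_j)}$, and in case~\ref{item:diffRootCenterStar} we have $V(G_j)\setminus I_j = \ens{w_j} = \ens{f(v_j)}$. Writing $\cG_{<j}$ for the partial composition $\cG_{(j-1)\star}\circ\cdots\circ\cG_{1\star}\circ\cG_{\tilde I}$, whose vertex set is $\tilde I \union V(G_1) \union \cdots \union V(G_{j-1})$, composability of $\cG_{j\star}$ with $\cG_{<j}$ forces $V(G_j)\inter V(\cG_{<j})\subset I_j$, so $f(v_j)$ occurs in no $G_{j'}$ with $j'<j$ and not in $\tilde I$; symmetrically, a vertex removed from the output at step $j$ (the root $v_j$ in case~\ref{item:diffRootCenterStar}, or $v_j$ in case~\ref{item:sameRootCenterStar}, where it is prepared and measured at the same step) cannot reappear in any $G_{j'}$ with $j'>j$. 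From these observations task~(a) follows: the $v_j$ are pairwise distinct and their set equals $O\comp$ (every measured qubit is consumed at some step, and conversely); $f(v_j)\in I\comp$, since a freshly prepared vertex is never a global input; and $f$ is injective, since freshly prepared vertices introduced at distinct steps are distinct. For task~(b): in both cases $f(v_j)$ is adjacent to $v_j$ inside $G_j$, and because $f(v_j)$ lies in no earlier geometry while $v_j$ lies in no later one, the edge $v_j f(v_j)$ occurs in exactly one term of the symmetric-difference decomposition of $E(G)$ and hence survives into $G$; recalling that case-\ref{item:sameRootCenterStar} roots lie in $M$, this gives \eqref{mflow:a} for $v\in M\comp$ and \eqref{mflow:b} for $v\in M$ (via $f(v)=v$ when $v$ is a case-\ref{item:sameRootCenterStar} root).

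For task~(c) I will introduce a rank function $\rho:V(G)\to\ens{1,\ldots,\ell+1}$ defined by $\rho(v_j)=j$ for $v_j\in O\comp$ and $\rho(u)=\ell+1$ for $u\in O$, and argue that every generating relation of the natural pre-order is $\rho$-nondecreasing, strictly increasing unless its endpoints coincide. The relation $v_j\preceq f(v_j)$ is an equality when $f(v_j)=v_j$; otherwise $f(v_j)$ is freshly prepared at step $j$, hence consumed (if at all) only at a later step, so $\rho(f(v_j))>j$. The relation $v_j\preceq w$ for $w\sim_G f(v_j)$ is handled by the same preparation analysis: since $f(v_j)$ is introduced at step $j$, any $G$-neighbour $w$ of $f(v_j)$ is either the root $v_j$ (an equality), a pass-through leaf of $\cG_{j\star}$ (an element of $I_j\inter O_j$, present through step $j$ and consumed only afterwards, so $\rho(w)>j$), or a vertex contributed by a later geometry $\cG_{j'\star}$ with $j'>j$ (again $\rho(w)>j$). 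Reflexive--transitive closure preserves this $\rho$-monotonicity, so $u\prec u'$ forces $\rho(u)<\rho(u')$, ruling out any cycle $u\prec u'\prec u$; thus $\preceq$ is antisymmetric. Combining (a)--(c) shows $(f,\preceq)$ is a modified flow for $(G,I,O,M)$.

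The main obstacle I expect is the careful bookkeeping behind the structural fact above: precisely tracking, step by step through the composition, which vertices are freshly prepared, which are passed through, and which are consumed, and then reading off from the composability conditions that each edge incident to a centre $f(v_j)$ or a root $v_j$ occurs in a unique term of the symmetric-difference decomposition of $E(G)$. Once that is pinned down, the adjacency checks and the rank argument are routine. (An alternative route is to observe that conditions \ref{item:sameRootCenterStar}--\ref{item:diffRootCenterStar} force $\cG_{j\star}$ to coincide with the star geometry of Definition~\ref{def:starGeom} rooted at $v_j$ and then invoke the reverse implication of Lemma~\ref{lemma:modifiedStarDecomp}; but identifying those star geometries exactly requires essentially the same bookkeeping.)
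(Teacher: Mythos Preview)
Your proof is correct, but it proceeds differently from the paper's. The paper argues by induction on $\ell$: peeling off the final star geometry $\cG_{\ell\star}$, it invokes the inductive hypothesis on the residual composition $\cG' = \cG_{(\ell-1)\star}\circ\cdots\circ\cG_{\tilde I}$ to conclude that the restricted pair $(f',\preceq')$ is a modified flow for $\cG'$, and then checks that adjoining $\cG_{\ell\star}$ introduces only order relations of the form $u \preceq w$ with $w$ maximal in the enlarged order, so antisymmetry is preserved. Your approach instead builds a rank function $\rho$ directly from the composition index and shows every generating relation of the natural pre-order is strictly $\rho$-increasing unless its endpoints coincide, yielding antisymmetry in one stroke without recursion. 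The rank argument is more self-contained and sidesteps the need to identify the restriction of $\preceq$ to $V(G')$ with the natural pre-order of $f'$ on $\cG'$ (a point the paper's inductive step treats implicitly); the inductive argument, on the other hand, integrates more naturally with the recursive viewpoint of \autoref{lemma:modifiedStarDecomp} and with the maximal-star-geometry machinery developed immediately afterward. Both routes rest on the same structural bookkeeping you flag as the main obstacle---that the centre $f(v_j)$ first appears at step~$j$ and the root $v_j$ last appears at step~$j$, so any edge of $G$ incident to $f(v_j)$ lives in some $G_{j'}$ with $j' \ge j$---and your sketch handles that bookkeeping adequately.
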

\begin{proof}
	Let $f$ be defined as above, and let $\preceq$ be its natural pre-order.
	By construction, each geometry $\cG_j$ is then a star geometry with respect to $f$ in the sense of \autoref{def:starGeom}, whether or not $(f,\preceq)$ is a modified flow.
	We may show by induction on $\ell = \card{V(G)} \setminus \card{O}$ that $\preceq$ is a partial order.
	\begin{itemize}
	\item
		If $\ell = 0$, then $O = V(G)$, in which case $\dom(f) = \vide$.
		Then the natural pre-order is the equality relation, which is antisymmetric and therefore a partial order.

	\item
		Suppose the Lemma holds for all geometries with fewer vertices than $(G,I,O)$, but with an output subsystem of the same size.
		By hypothesis, we have $(G,I,O) = (G_\ell, I_\ell, O_\ell) \circ \cG'$, where we may write
		\begin{gather}
			\cG' \;=\; (G', I, O') \;=\; \cG_{(\ell-1)\star} \circ \cdots \circ \cG_{1\star} \circ \cG_{\tilde I}	\;;
		\end{gather}
		in particular, we have $V(G') \inter V(G_\ell) \subset I_\ell \inter O'$.
		If $f'$ is the restriction of $f$ to $O'$, and $\preceq'$ is the restriction of $\preceq$ to $V(G)$, the pair $(f', \preceq')$ is then a modified flow for $\cG'$.
		\begin{romanum}
		\item
			Suppose $f(v_\ell) = v_\ell$: then $V(G) \setminus V(G') = \ens{v_\ell}$, and we have $w \in O' = O$ for every $w \sim v_\ell$.
			Then the only relations that $\preceq$ contains that $\preceq'$ does not are $u \preceq v_\ell$ for qubits such that $f(u) \in O$, and $v_\ell \preceq w$ for $w \in O$; that is, $u \preceq v_\ell \preceq w$ for some $w$ maximal and $u$ bounded above in $\preceq'$ only by maximal elements.

		\item
			Suppose $f(v_\ell) = w_\ell \ne v_\ell$: then $v_\ell \in O' \setminus O_\ell$, and $w_\ell \in O_\ell \setminus O'$.
			Then the only relations that $\preceq$ contains that $\preceq'$ does not are $u \preceq w_\ell$ for qubits such that $f(u) = w$ such that $w \sim w_\ell$, and $v_\ell \preceq w$ for qubits $w \sim w_\ell$.
			In either case, we have $w_\ell \in O'$; thus we only add additional (maximal) upper bounds for $u$ bounded only bounded by maximal elements, and new upper bounds for $v_\ell$ which are also maximal.
		\end{romanum}
		In either case, $\preceq$ is a partial order; thus $(f,\preceq)$ is a modified flow.	\qedhere
	\end{itemize}
\end{proof}
We will call a decomposition as in \autoref{lemma:flowStarDecompCharn} a \emph{star decomposition} of a geometry $(G,I,O)$, and the associated geometries $\cG_j$ in the decomposition \emph{star geometries}, whether or not they are defined with respect to a corresponding modified flow $(f,\preceq)$.
More generally, we will call a geometry $(G_\star,I_\star,O_\star)$ a star geometry when the input and output subsystems satisfies either property~\ref*{item:sameRootCenterStar} or property~\ref*{item:diffRootCenterStar} in the statement of \autoref{lemma:flowStarDecompCharn} above.
In the case that $I_\star = O_\star$, we say that unique vertex $v \in V(G_\star) \setminus O_\star$ is both the root and the center of $(G_\star,I_\star,O_\star)$; otherwise, we say that the unique vertex $v \in I_\star \setminus O_\star$ is the root of $(G_\star,I_\star,O_\star)$, and the unique vertex $w \in O_\star \setminus I_\star$ is the center.
  
The proof of \autoref{lemma:flowStarDecompCharn} suggests a recursive decomposition of a geometry $(G,I,O)$ into star geometries using ``maximal'' star geometries, in the following sense:
\begin{definition}
	For a geometry $(G,I,O)$, a \emph{maximal star geometry} is a star geometry $(G_\star, I_\star, O_\star)$ such that $(G,I,O) = (G_\star, I_\star, O_\star) \circ (G', I, O')$ for some sub-geometry $(G',I,O')$.
	We call the latter the \emph{residual geometry} of this decomposition.
\end{definition}
\noindent We may characterize a maximal star geometry as follows:
\begin{lemma}
	Let $(G_\star, I_\star, O_\star)$ be a maximal star geometry, with root $v$ and center $w$, for some geometry $(G,I,O)$.
	Then one of the following two properties hold:
	\begin{romanum}
	\item 
		\label{item:sameRootCenterMaxStar}
	 	$v = w \in I\comp \inter O\comp$, and is adjacent in $G$ only to elements of $O$;

	\item
		\label{item:diffRootCenterMaxStar}
		$v \ne w$, $w \in O$, and $v \in O\comp$ is the only neighbor of $w$ in $G$ which is not in $O$.
	\end{romanum}
	Conversely, for a geometry $(G,I,O)$ and vertices $v, w \in V(G)$ for which either of the above hold, the geometry $(G_{v\star}, I_{v\star}, O_{v\star})$ such that
		$G_{v\star} = G[W]$ for $W$ consisting of $w$ and its neighbors in $G$, with input and output subsystems 
%
		$I_{v\star} = V(G_{v\star}) \setminus \ens{w}$ and
%
		$O_{v\star} = V(G_{v\star}) \setminus \ens{v}$,
	is a maximal star geometry for $(G,I,O)$.
\end{lemma}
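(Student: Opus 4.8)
The statement is an ``if and only if'' which I would prove by unpacking the composition formulas~\eqref{eqn:composibilityConditions} together with the composability condition $V(G_1)\inter V(G_2)\subset I_2\inter O_1$, splitting throughout according to the two shapes of star geometry: $I_\star = O_\star$ with $v = w$ (case~\ref*{item:sameRootCenterStar}), and $v\ne w$ (case~\ref*{item:diffRootCenterStar}). For the forward direction, write $(G,I,O) = \cG_\star \circ \cG'$ with $\cG_\star = (G_\star,I_\star,O_\star)$ a maximal star geometry of root $v$ and centre $w$, and $\cG' = (G',I,O')$ a composable residual geometry, noting $I,O'\subset V(G')$ since $(G',I,O')$ is a geometry. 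The first step is that the centre $w$ lies outside $V(G')$: by definition $w\notin I_\star$ in either shape of star geometry, so $w\in V(G')$ would put $w\in V(G')\inter V(G_\star)\subset I_\star$, a contradiction. From $w\notin V(G')$, together with $I,O'\subset V(G')$ and the formulas $\bar I = I\union(I_\star\setminus O')$ and $\bar O = O_\star\union(O'\setminus I_\star)$, one reads off that in case~\ref*{item:sameRootCenterStar} the vertex $v = w$ lies in neither $I$ nor $O$, i.e.\ $v = w\in I\comp\inter O\comp$; and in case~\ref*{item:diffRootCenterStar}, $w\in O_\star\subset O$ while $v\in I_\star$ forces $v\notin O$, so $w\in O$ and $v\in O\comp$. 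For the second step, since $E(G) = E(G')\symdiff E(G_\star)$ and $w$ is incident to no edge of $E(G')$, the neighbours of $w$ in $G$ are precisely its neighbours in $G_\star$, which --- $w$ being the centre --- are all of $V(G_\star)\setminus\ens{w}$; hence $V(G_\star) = \ens{w}\union N_G(w)$. In case~\ref*{item:sameRootCenterStar} this gives $N_G(w) = O_\star\subset O$, so together with the first step property~\ref*{item:sameRootCenterMaxStar} holds; in case~\ref*{item:diffRootCenterStar} it gives $N_G(w) = I_\star = (I_\star\inter O_\star)\union\ens{v}$ with $I_\star\inter O_\star\subset O_\star\subset O$ and $v\in O\comp$, so $v$ is the unique neighbour of $w$ outside $O$, which is property~\ref*{item:diffRootCenterMaxStar}.

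For the converse, given $v,w$ satisfying property~\ref*{item:sameRootCenterMaxStar} or~\ref*{item:diffRootCenterMaxStar}, set $W = \ens{w}\union N_G(w)$ and take $G_{v\star}$ to be the star geometry on $W$ centred at $w$ in the sense of \autoref{def:starGeom} (that is, $G[W]$ with all edges not incident to $w$ deleted, so $E(G_{v\star}) = \ens{wu : u\in N_G(w)}$), with $I_{v\star} = W\setminus\ens{w}$ and $O_{v\star} = W\setminus\ens{v}$. A direct check of the two defining conditions shows that $(G_{v\star},I_{v\star},O_{v\star})$ is a star geometry of the shape~\ref*{item:sameRootCenterStar} when $v = w$ and of the shape~\ref*{item:diffRootCenterStar} when $v\ne w$, with root $v$ and centre $w$. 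It remains to produce a residual geometry: mirroring the proof of \autoref{lemma:modifiedStarDecomp}, take $G' = G\setminus w$, input subsystem $I$, and output subsystem $O' = O$ if $v = w$ and $O' = O\symdiff\ens{v,w}$ if $v\ne w$. Using $N_G(w)\subset O$ in case~\ref*{item:sameRootCenterMaxStar}, respectively $N_G(w)\setminus O = \ens{v}$ in case~\ref*{item:diffRootCenterMaxStar}, one checks that $\cG_{v\star}$ and $(G',I,O')$ are composable (their shared vertex set is $N_G(w)$, which equals $I_{v\star}\inter O'$), and then that $\cG_{v\star}\circ(G',I,O') = (G,I,O)$ by verifying the four clauses of~\eqref{eqn:composibilityConditions}: the vertex- and edge-identities are immediate because $E(G_{v\star})$ consists exactly of the $G$-edges at $w$ while $E(G\setminus w)$ consists of the remaining $G$-edges, and both $\bar I = I$ and $\bar O = O$ come down to $N_G(w)\subset O\union\ens{v}$ with $v\in O\comp$. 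Hence $\cG_{v\star}$ is a maximal star geometry for $(G,I,O)$.

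No step here is conceptually hard; the work is entirely bookkeeping, and I expect the two things to watch for are the following. First, one must keep the two shapes of star geometry separate and track correctly, under~\eqref{eqn:composibilityConditions}, which vertices fall into $I$, $O$, or their complements --- the crucial enabling fact in each direction being that composability forces the centre $w$ out of the residual graph, after which the rest reduces to routine set manipulation (and a check that $(G',I,O')$ is a bona fide geometry). Second, one must read ``$G_{v\star} = G[W]$'' as the star geometry of \autoref{def:starGeom}, i.e.\ with the edges not incident to $w$ deleted: this matters, since any edge joining two vertices of $N_G(w)$ would otherwise appear in both $E(G_{v\star})$ and $E(G\setminus w)$ and cancel in the symmetric difference, destroying the identity $E(G_{v\star})\symdiff E(G\setminus w) = E(G)$ on which the composition in the converse depends.
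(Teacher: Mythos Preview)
Your proof is correct and follows essentially the same approach as the paper's: unpack the composability condition and the composition formulas~\eqref{eqn:composibilityConditions}, split on the two shapes of star geometry, and for the converse take $G' = G\setminus w$ and check the composition explicitly.

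Your treatment is in fact more careful than the paper's on two points. First, for the converse in case $v\ne w$ the paper sets $O' = O\setminus\ens{w}$, but with that choice composability fails: $v\in V(G')\inter V(G_\star)$ yet $v\notin O'$ since $v\in O\comp$. Your choice $O' = O\symdiff\ens{v,w}$ (borrowed from the proof of \autoref{lemma:modifiedStarDecomp}) repairs this. Second, you rightly flag that ``$G_{v\star} = G[W]$'' must be read with the non-central edges deleted, as in \autoref{def:starGeom}; taken literally as the induced subgraph, any edge between two neighbours of $w$ would lie in both $E(G_{v\star})$ and $E(G\setminus w)$ and cancel in the symmetric difference, so that $E(G_{v\star})\symdiff E(G') \ne E(G)$. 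The paper glosses over both issues; your version is the one that actually goes through.
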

\begin{proof}
	We first show that any maximal star geometry has either property~\ref*{item:sameRootCenterMaxStar} or property~\ref*{item:diffRootCenterMaxStar}.
	By the definition of a maximal star geometry, we have $O_\star \subset O$; as $v \notin O_\star$, we have $v \in O\comp$.
	If $v = w$, by definition $v$ is adjacent to every element of $I_\star = O_\star$, and every neighbor of $v$ in $G$ is an element of $I_\star$; thus $v$ is adjacent only to elements of $O$.
	Also, in the case that $v = w$, we have $v \notin I_\star$ by definition, in which case $v \in I\comp$ as well.
	Otherwise, we have $w \in O_\star \subset O$ and $v \in I_\star \setminus O\star$, so that $v \in O\comp$.
	Also, if $v \ne w$ by definition, all edges incident to $w$ in $G$ are of the form $uw$, where either $u = v$ or $u \in O_\star \subset O$; then $v$ is the unique element of $O\comp$ to which $w$ is adjacent in $G$.

	Conversely, suppose that $v$ and $w$ are vertices satisfying either property~\ref*{item:sameRootCenterMaxStar} or property~\ref*{item:diffRootCenterMaxStar}, and let $(G_{v\star}, I_{v\star}, O_{v\star})$ be the geometry described.
	Let $G' = G \setminus w$ and $O' = O \setminus \ens{w}$.
	In either case, we may show that $(G,I,O) = (G_{v\star}, I_{v\star}, O_{v\star}) \circ (G',I,O')$, and that $(G_{v\star}, I_{v\star}, O_{v\star})$ is also a star geometry.
	The Lemma then holds.
\end{proof}

By definition, any geometry with a star decomposition has a maximal star geometry; we may then attempt to obtain a star decomposition by recursively constructing maximal star geometries for successive residual geometries.
In the next section, we show how this leads to an efficient algorithm for obtaining a star decomposition (and a modified flow) for an arbitrary geometry $(G,I,O)$ and subset $M \subset O\comp$, when one exists.

\subsubsection{Efficiently decomposing geometries}
\label{sec:efficientDecompose}

In order to be able to test the consistency of measurement dependencies of a \oneway\ procedure $\fP$ with those arising in the DKP or (simplified) RBB construction, we may attempt to determine whether the tuple $(G,I,O,M)$ associated to $\fP$ has a modified flow.
As a special case, we may efficiently determine when a geometry $(G,I,O)$ has a flow, and construct one if one exists.
The most efficient known algorithm to do this is that of Mhalla and Perdrix~\cite{MP08}.
We show how their algorithm may be adapted to produce modified flows when they are present, using an analysis in terms of maximal star geometries.

\paragraph{Maximally delayed flows.}

We may summarize the algorithm of~\cite{MP08} as follows.
To find a flow $(f,\preceq)$ or a geometry $(G,I,O)$, we consider the length of the longest chains of each qubit from the output set, and construct ``layer sets'' consisting of vertices of similar depth:
\begin{definition}[{\cite{MP08}, Definitions~4~\&~5}]
	\label{def:delayLayers}
	For a geometry $(G,I,O)$ and a flow $(f,\le)$, let $V^<_0(G)$ be the set of vertices of $V(G)$ which are maximal in $\le$, and let $V^<_{j+1}(G)$ be the maximal set of vertices of the set
	\begin{gather}
				W^<_j
			\;\;=\;\;
				V(G) \;\setminus\; \paren{\bigcup_{\ell = 0}^j \; V^<_j}
	\end{gather}
	in the order $\preceq$, for any $j > 0$.
	A flow $(f,\preceq)$ is \emph{more delayed} than another flow $(f',\le)$ if we have $\card{W^\prec_j} \le \card{W^<_j}$ for all $j \in \N$\,, where this inequality is strict for at least one such $j$\,; and $(f,\preceq)$ is \emph{maximally delayed} if no flow for $(G,I,O)$ is more delayed.
\end{definition}
A maximally delayed flow is also a flow of minimum depth, \ie\ that the order $\preceq$ of such a flow has the same depth as the natural pre-order for the given flow-function $f$.
For a maximally delayed flow $(f,\preceq)$, let $\ell: V(G) \to \N$ be the function mapping each vertex $v \in V(G)$ to the integer $j$ such that $v \in V^\prec_j$.
Then, the partial order $\le$ characterized by $v \le w \iff [v = w \text{\;or\;} \ell(v) > \ell(w)]$ also yields a minimum-depth flow by construction.

Mhalla and Perdrix show for a maximally delayed flow $(f,\preceq)$ for $(G,I,O)$ that $V^\prec_0 = O$, and $V^\prec_1$ is the set of elements of $W^\prec_0$ such that, for some $w \in O$, $v$ is the only neighbor of $w$ which is contained in $W^\prec_0$.
By induction, augmenting the set $O$ to include successive ``layers'' $V^\prec_j$, one may show that
\begin{gather}
		V^\prec_{j+1}
	\;\;=\;\;
		\ens{v \in W^\prec_j \;\Big|\; \exists w \in V(G) \setminus W^\prec_j\,:\, \ngbh[G]{w} \inter W^\prec_j = \ens{v}}	\;,
\end{gather}
where $\ngbh[G]{w}$ is the set of vertices adjacent to $w$ in $G$.
By a depth-first search from $O$, we may find in a finite number of iterations of $j$ those qubits $w \in V(G) \setminus W^\prec_j$ which have such a neighbor $v$, and accumulate them into new layers $V^\prec_{j+1}$.

\paragraph{Relation to maximal star geometries.}
\label{sec:modFlowAlg}

This decomposition of the geometry $(G,I,O)$ into layers $V^\prec_j$ corresponds strongly to the star decomposition of a geometry $(G,I,O)$.
The maximal elements of $V(G)$ in the natural pre-order are exactly the elements of $O$ (as $z \preceq f(z)$ for any $z \in O\comp$): thus, the root $v$ of a maximal star geometry in $(G,I,O)$ is one such that all neighbors $z \sim f(v)$ are elements of $O$ or equal to $v$ itself.
The set $V^\prec_1$ as described above is the set of precisely those vertices which are the root of a maximal geometry.
Thus, the algorithm of~\cite{MP08} may be described as finding flows by performing a star decomposition of $(G,I,O)$ when it has a flow.

We may easily extend the maximal delay principle of~\cite{MP08} for flows to one which identifies a maximal star geometry for $(G,I,O)$ in the more general case where this geometry (together with a set $M$) has a modified flow.
For such a triple $(G,I,O,M)$ with a modified flow $(f,\preceq)$, we may define the sets $V^\prec_j$ and $W^\prec_j$ as in \autoref{def:delayLayers}.
Then for each $j \in \N$, $V^\prec_{j+1}$ consists of those vertices $v \in W^\prec_j$ such that, for any $w \in V(G)$ such that $w = f(v)$ or $w \sim f(v)$, we have $w = v$ or $w \in V(G) \setminus W^\prec_j$.
For $v \notin M$\,, this implies that (as in the analysis of~\cite{MP08}) the only neighbor of $f(v)$ which lies in $W^\prec_j$ is $v$ itself; for $v \in M$, it may also be possible that $v$ has no neighbors in $W^\prec_j$.
The flow-finding algorithm of~\cite{MP08} can then be easily extended to test for the latter condition, yielding a procedure to produce a star decomposition of $(G,I,O)$ when one exists, as we describe below.

\paragraph{Bounding the search for roots of maximal star geometries.}

The flow-finding algorithm of Mhalla and Perdrix (\cite{MP08}, Algorithm~1) operates on the basis of maintaining a set of ``correcting vertices'' $w$ which, for each new layer $V^\prec_{j+1}$, may potentially be the successor of some vertex $v \in V^\prec_{j+1}$.
These correcting vertices are precisely the centers of maximal geometries, in which case the corresponding $v$ is a root.
We thus take a similar approach to obtain a star decomposition by ``removing'' layers of maximal star geometries, identified with their centers.

For modified flows, we must accommodate vertices $v \in M$ which may be simultaneously the root and the center of a geometry.
To ensure that we do not fruitlessly examine vertices of $M$ at each layer, we maintain a digraph whose arcs point in either direction along each edge of $G$, and from which we delete any arc ending at a vertex which has already been indicated as the root of a star geometry.
Then, if a potential center $w$ has a single neighbor $v$ which has not been matched to a star geometry, $v$ is the root and $w$ the center of a maximal star geometry; and if a vertex $w$ has not yet been identified as the center of a star geometry only neighbors vertices which have already been identified as centers, $w$ is both the center and the root of a maximal star geometry.
As a result, we may easily determine whether a potential center $w$ has a single neighbor $v$ which has not been matched to a star geometry (in which case $v$ is the root and $w$ the center of a maximal star geometry), or whether a vertex which has not yet been identified as the center of a star geometry only neighbors vertices which have already been identified as centers (in which case $w$ is both the center and the root of a maximal star geometry).

\Algorithm{alg:removeStarGeom} defines a subroutine $\RemoveStarGeom$ to perform the necessary arc deletions for the digraph described above whenever we define the layer of a vertex, as well as several other operations useful to an iterative procedure for discovering the maximal star geometries of a star decomposition.

\begin{algorithm}[t]
	\SUBROUTINE \RemoveStarGeom(v,w) :
	
	\Data{%
		$M$, a subset of specially designated vertices
	\\
		$W$, the set of vertices not yet marked as roots of star geometries
	\\
		$\emptyLayer$, a flag indicating if the present layer is empty
	\\
		$D$, a digraph in which vertices $v \in V(G) \setminus W$ have in-degree 0
	\\
		$f: V(G) \partialto V(G)$ a mapping from ``roots'' to ``centers''
	\\
		$S'$, a set of possible centers of star geometries for the next layer
	\\
		$v \in V(G)$, a vertex to mark as the root of a star geometry
	\\
		$w \in V(G)$, a vertex to mark as the center associated to $v$
	}

	\Effect {%
		Assigns $f(v) \gets w$, adds $v$ to the set of potential centers for the next layer and to the ordered list of qubits with identified star geometries, removes all of the inbound-arcs to $v$ in the digraph $D$, and checks the neighbors of $v$ in $D$ for vertices in $M$ with out-degree 0 to mark as a center for the next layer.
	}

	\BEGIN {
		Set $\emptyLayer \gets \false$\;
		Remove $v$ from the set $W$, and set $f(v) \gets w$\;
		Insert $v$ to the beginning of $L$\;
		\FOREVERY neighbor $z$ with an outbound arc to $v$ in $D$ \DO
			Remove the outbound arc $z \arc v$ from $D$\;
			\oneline
			\IF $z \in M \inter W$ \ANDALSO $z$ has no more outbound arcs \THEN add $z$ into $S'$\;	\nllabel{line:mediatorInsert}
		\ENDFOR
	}
	\END
 	\caption[A subroutine to attribute a vertex to the current layer, for use in an iterative procedure for finding the roots of maximal sub-geometries with a modified flow.]{\label{alg:removeStarGeom}%
 		A subroutine to attribute a vertex to the current layer, for use in an iterative procedure for finding the roots of maximal sub-geometries with a modified flow.
 		In particular, this subroutine maintains the arcs of a digraph $D$ (a ``sub-digraph'' of the graph $G$ of a geometry, interpreted as a symmetric digraph) from which we remove any arc inbound to a vertex with a well-defined layer.
 	 }
\end{algorithm}

\begin{algorithm}[p]

	\PROCEDURE \FindModStarDecomp(G,I,O,M):

	\Input{%
		$G$, a graph with subsets $I, O, M$, where $M \subset V(G) \setminus O$ consists entirely of vertices with default measurement angles $\pion 2$.
	}

	\Output{%
		Either $\nil$, if $(G,I,O,M)$ does not have a modified flow; or an ordered pair $(f,L)$ consisting of a partial function $f: V(G) \partialto* V(G)$ mapping roots of star geometries to their respective centers, and a sequence $L$ of the roots representing a composition order for star geometries.
	}

	\BEGIN {
		\LET $f: V(G) \partialto V(G)$ be a mapping from ``roots'' to ``centers''\;
		\LET $L$ be a sequence of vertices (initially empty)\;
		\LET $W \gets \vide$ be a set of potential roots of star geometries\;
		\LET $S \gets \vide$ be a set of potential centers of maximal star geometries\;
		\LET $D$ be the completely disconnected digraph on $V(G)$\;
		\algline
		\FORALL $v \in V(G)$ \DO																						\nllabel{line:eflowInitLoop}
		{
			\oneline
			\FOREVERY $w \in \ngbh[\scriptstyle G]{v} \setminus O$ \DO add the arc $v \arc w$ to $D$ \ENDFOR\;
			\IF $v \in O$ \THEN { 
				Set $f(v) \gets \nil$\;
				\oneline
				\IF $v \notin I$ \THEN add $v$ into $S$\; }
 			\ELSE {
				Add $v$ into $W$\;
			}
		}
		\ENDFOR																												\nllabel{line:eflowInitEnd}
		\medskip

		\LET $\emptyLayer$ be a flag indicating failure to find any new roots\;
		\REPEAT																											\nllabel{line:layerLoop}
			Set $\emptyLayer \gets \true$, initially\;
			\LET $S' \gets \vide$ be the set of potential centers for the next iteration\;

			\FOREACH $w \in S$ \DO {
				\IF $w \in W$ \THEN {																			\nllabel{line:checkYZmeas}
					Call $\RemoveStarGeom(w,w)$\;
				} \ELSEIF $D$ has only one outbound arc $w \arc v$ \THEN {							\nllabel{line:checkSuccessor}
					Call $\RemoveStarGeom(v,w)$\;
					\oneline
					\IF $v \notin I$ \THEN { add $v$ into $S'$\; }										\nllabel{line:newSuccessor}
				} \ELSE {
					Add $w$ back into $S'$\;																	\nllabel{line:propSuccessor}
				}
			} \ENDFOR
			\smallskip
			Set $S \gets S'$ for the next iteration\;															\nllabel{line:successorCopy}
		\UNTIL { $\emptyLayer$ }																					\nllabel{line:layerLoopEnd}

		\medskip
		
		\oneline
		\IF $W = \vide$
			\THEN {\RETURN $(f,L)$; }
			\ELSE { \RETURN \nil }\;
	} \END

	\caption[An iterative algorithm to discover a star decomposition.]{\label{alg:findModStarDecomp}%
		An iterative algorithm to discover a star decomposition inspired by the analysis of~\cite{MP08}, using the subroutine $\RemoveStarGeom$ of \Algorithm{alg:removeStarGeom}.}
\end{algorithm}

\vspace{1em}
\paragraph{Accumulating roots of star geometries.}
Using $\RemoveStarGeom$ as a subroutine, we may easily define a procedure similar to Algorithm~1 of \cite{MP08} which either
\begin{inlinum}
\item
	determines that a star decomposition of the geometry does not exist, implying the non-existence of a modified flow, or

\item
	constructs a modified flow, represented as a tuple $(f,L)$ consisting of a partial function $f: V(G) \partialto V(G)$ mapping the root of each star geometry to the corresponding center, and a list $L$ of the roots corresponding to a composition order for the star geometries.\footnote{%
		The list $L$ constructed by \Algorithm{alg:findModStarDecomp} is a linear order, \ie\ a partial order of \emph{maximum} depth.
		To find one of \emph{minimum} depth as in Algorithm~1 of \cite{MP08}, it suffices to maintain a layer counter and to record the layer for which each vertex $v \in O\comp$ is the root of a maximal star geometry.
		However, as we are ultimately interested in producing a (linearly ordered) circuit expression, it is not necessary to obtain a partial order of smaller depth.}
\end{inlinum}
\Algorithm{alg:findModStarDecomp} presents such a procedure, which we describe below.

We begin by creating the digraph $D$ which governs the search for roots: we copy all of the neighborhood relations of $G$, omitting those arcs ending in $O$, as these cannot be roots of star geometries.
Similarly, as no $w \in O$ is a root, it must lie outside the domain of $f$, so we set $f(w) \gets \nil$; but if $w \in O$ is not an element of $I$, it will be the center of some star geometry, so we insert it to a set $S$ of candidates for centers of \emph{maximal} star geometries.
Any vertex $w \notin O$ is the root of some star geometry in a star decomposition of $(G,I,O)$, provided such a decomposition exists; we then insert such $w$ into a set $W$ of candidates for roots of maximal star geometries.

We then search for layers of new roots, corresponding to the layer-sets in the analysis of~\cite{MP08}.
For each layer, we iterate through the set of potential centers for maximal star geometries, which initially includes no elements of $W$.
We maintain the invariant that $D$ contains all (directed) adjacency relations of $G$, except that it contains no arcs ending at vertices $w \notin W$.
We also maintain the invariant that $v \in W$ becomes an element of $S$ only if it is an element of $M$ and has out-degree zero in $D$, in which case it is both the root and the center of a maximal star geometry.
We indicate the removal of vertices (as a result of decomposing the geometry into star geometries and an as-yet decomposed ``residue'') by absence from both the sets $S$ and $W$; thus $S$ represents at each stage the output subsystem of the residual geometry.

In any iteration, for the center $w$ of any maximal star geometry, either $w$ has a unique neighbor $v \notin W$ in $G$ (in which case $w \arc v$ is the only outbound arc from $w$ in $D$), or $w \in M$ and has no neighbors in $W$ (in which case it has out-degree 0 in $D$).
\begin{itemize}
\item 
	In the latter case, unless $w$ is isolated in $G$, there was some final neighbor $u$ which was removed from $W$ in an earlier iteration; immediately after the removal of $u$ from $W$, we may then identify that $w$ is the root (and center) of a maximal star geometry.
	In the call to $\RemoveStarGeom(u,u')$, we thus examine each of the neighbors $w \in M \inter W$ (restricting our search to qubits in $M$ which have not yet been assigned as the root to a previously removed star geometry) which have an outbound arc $w \arc u$, to see if that arc is the last outbound arc from $w$.
	If this is the case for some $w$, we add $w$ into the set of potential centers $S'$ for the next iteration.

	In \Algorithm{alg:findModStarDecomp}, this is the only condition under which an element of $W$ may be inserted into $S'$; the only other condition under which $S'$ accumulates elements is for a vertex $v$ which has already been removed from $W$.
	Thus, in the subsequent iteration, any element of $S \inter W$ is necessarily the root (and center) of a maximal star geometry, which we may then decompose from the parent geometry.
	This decomposition is represented by the removal of $w$ from $W$, and the omission of $w$ from $S'$.

\item
	In the case where $w$ has a unique outbound arc $w \arc v$ in $D$, the vertex $v$ is necessarily the root of a maximal star geometry for which $w$ is the center.
	We may then decompose this star geometry from the parent geometry, which we represent by the removal of $v$ from $W$ (and the insertion of $v$ into $S'$), and the omission of $w$ from $S'$.
	In the next iteration, as an element of the output subsystem of the residual geometry, $v$ is then a potential center of a maximal star geometry.
\end{itemize}
For any $w \in S$ which is not the center of a maximal geometry as above, we carry $w$ forward for the next layer by including $w$ in $S'$.
We assign the list of new candidate successors to $S$ before the end of the iteration; and we repeat until we cannot find any more roots for some layer.

If we cannot assign any vertices to new layers, this is either because we have exhausted the possible roots (in which case $W = \vide$), or because there are no vertices which are reachable from the existing layers which have viable successors.
In the latter case, there is no modified flow; and in the former, we return the tuple $(f,L)$.

\paragraph{Run-time analysis of Algorithm~\ref*{alg:findModStarDecomp}.}
\label{sec:findModStarDecompAnalysis}

The run-time of \Algorithm{alg:findModStarDecomp} is similar to that of Algorithm~1 of \cite{MP08}, which we may show as follows.
In the following, we let $n = \card{V(G)}$, $k = \card{O}$, and $m = \card{E(G)}$.
We assume that the graph $G$ and digraph $D$ are represented with adjacency lists for each vertex, in the latter case using lists of neighbors from inbound and outbound arcs separately.
The partial function $f$ is represented with an array structure taking the value $\nil$ (indicating a vertex not in the domain) or vertices in $V(G)$. 
We may represent the sets $M$ and $W$ as arrays of pointers to nodes of a linked-list structure, enabling constant-time membership checking and element insertion, as well as linear-time traversal of the elements of the set.
The sets $S$ and $S'$ we represent simply as linked lists, which also provides us with constant-time initialization to the empty set.

In the subroutine $\RemoveStarGeom(v,w)$, the run-time is dominated by the loop over the neighbors $z$ of $v$\,, which requires time at most $O(\deg_G(v))$\,.
As this subroutine is called at most once for each vertex $v$, the total run-time of $\RemoveStarGeom(v,w)$ over the execution of $\FindModStarDecomp$ is $O\big(\sum\limits_v \deg_G(v) \big) = O(m)$.

Initially, $S$ has $k$ elements, none of which are elements of $W$\,; and in each iteration, as every element $w \in S \setminus W$ gives rise to the removal of some vertex $v$ from $W$ and the potential insertion of $v$ into $S'$, there always remain at most $k$ elements of $S \setminus W$ at each iteration of the loop on lines~\ref*{line:layerLoop}\thru\ref*{line:layerLoopEnd}.
The comparisons on lines~\ref*{line:checkYZmeas} and~\ref*{line:checkSuccessor}  can both be performed in constant time, as can the set-inclusions on lines~\ref*{line:newSuccessor} and~\ref*{line:propSuccessor}.
The assignment on line~\ref*{line:successorCopy} can be performed simply by re-attributing the element list of $S'$ to $S$.
There are at most $n$ iterations of the loop on lines~\ref*{line:layerLoop}\thru\ref*{line:layerLoopEnd}, as it only repeats as long as at least one vertex has been assigned to a new layer.
Then, setting aside the work performed for vertices $w \in S \inter W$ in the loop on lines~\ref*{line:layerLoop}\thru\ref*{line:layerLoopEnd} and the work performed by $\RemoveStarGeom$, the work performed by that loop is $O(kn)$.

As elements of $W$ are only included into $S'$ when they are guaranteed to be attributed to the next layer, the total work performed by that loop (again aside from work performed by $\RemoveStarGeom$) for elements of $M$ is then $O(n)$.
The run-time of $\FindFlow(G,I,O,M)$ is then $O(kn + m)$, dominated by the work performed by $\RemoveStarGeom(v,w)$ for each $w \in V(G)$, and by the iteration of the loop on lines~\ref*{line:layerLoop}\thru\ref*{line:layerLoopEnd} for at most $n$ layers.

\subsubsection{Extremal analysis of geometries with modified flows}
\label{sec:extremal}

The run-time of $O(kn + m)$ for \Algorithm{alg:findModStarDecomp} is precisely the complexity of Algorithm~1 of \cite{MP08}.
However, Mhalla and Perdrix also use an extremal result presented in~\cite{BP08} which bounds the number of edges $m$ in a geometry with a flow by $kn - \binom{k+1}{2}$.
This allows any geometry with edges in excess of this to be rejected at the outset, so that the run-time for geometries satisfying this bound may be simplified to $O(kn)$.
By observing the correspondence drawn in \autoref{sec:dependenciesIndexSuccFns} between successor functions and the ordering described in \eqref{eqn:flowPremonition} for indices of stable-index expressions, we may extend this extremal result as follows:
\begin{lemma}
	\label{lemma:extremal}
	Let $(G,I,O)$ be a geometry such that $(G,I,O,M)$ has a modified flow, for some vertex set $M \subset V(G)$.
	If $\card{V(G)} = n$, $\card{E(G)} = m$, and $\card{O} = k$, then $m \le nk - \binom{k+1}{2}$.
	Furthermore, there exists a geometry with a modified flow which saturates this bound.
\end{lemma}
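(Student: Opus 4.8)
The plan is to deduce the bound from the extremal result of~\cite{BP08} for ordinary flows by peeling off the ``genuinely modified'' part of the flow, and then to exhibit an explicit extremal family (which can even be taken to have an ordinary flow). Fix a modified flow $(f,\preceq)$ for $(G,I,O,M)$ and set $M_0 = \ens{v \in O\comp : f(v) = v}$; note $M_0 \subseteq M$, since $G$ has no loops and~\eqref{mflow:a} forces $f(v)\sim v$ (hence $f(v)\ne v$) for $v\notin M$. The idea is to show that deleting $M_0$ leaves a geometry with an ordinary flow, and that each deleted vertex has degree at most $k$.

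First I would record some structure. (i) $M_0$ is independent in $G$: if $v,w\in M_0$ with $v\sim w$, then $w\sim v=f(v)$ and $v\sim w=f(w)$, so~\eqref{mflow:d} gives $v\preceq w$ and $w\preceq v$, whence $v=w$, contradicting $v\sim w$. (ii) The functional digraph of $f$ restricted to $V(G)\setminus M_0$ (each $v\in O\comp\setminus M_0$ sending an arc to $f(v)\ne v$) is a disjoint union of simple directed paths: in/out-degrees are at most $1$ by injectivity, and a directed cycle would, via~\eqref{mflow:c} and antisymmetry of $\preceq$, collapse to a point, forcing a self-loop — impossible off $M_0$. Each path ends at a vertex of $O$ (the only out-degree-$0$ vertices), so there are exactly $\card O = k$ of them. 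Moreover the $f$-orbit of any $w\in V(G)\setminus M_0$ avoids $M_0$ entirely (if $f^s(w)=m\in M_0$ with $s$ minimal, then $f(f^{s-1}(w))=m=f(m)$ forces $f^{s-1}(w)=m$, contradicting minimality), so iterating $f$ from $w$ reaches a well-defined output vertex $e(w)\in O$. Now let $G_0 = G - M_0$ and $f_0 = f|_{O\comp\setminus M_0}$; since $f(z)\in M_0$ with $z\notin M_0$ would force $z=f(z)\in M_0$, the map $f_0$ sends $O\comp\setminus M_0$ injectively into $I\comp\setminus M_0$, and checking~\eqref{mflow:a}--\eqref{mflow:d} shows $(f_0,\preceq|_{V(G_0)})$ is a flow for $(G_0,I,O)$ in the sense of \autoref{def:flow}. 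By~\cite{BP08}, $\card{E(G_0)} \le (n-\card{M_0})k - \binom{k+1}{2}$. Because $M_0$ is independent, every edge of $G$ meeting $M_0$ has exactly one endpoint there, so $\card{E(G)} = \card{E(G_0)} + \sum_{v\in M_0}\deg_G(v)$; hence it suffices to prove $\deg_G(v)\le k$ for each $v\in M_0$, since that yields $\card{E(G)}\le (n-\card{M_0})k - \binom{k+1}{2} + \card{M_0}\,k = nk - \binom{k+1}{2}$.

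The degree bound is the crux. Let $v\in M_0$ with distinct neighbours $w_1,\dots,w_d$, all lying in $V(G)\setminus M_0$ by (i). I claim $w\mapsto e(w)$ is injective on $\ens{w_1,\dots,w_d}$, which gives $d\le\card O=k$. Suppose $e(w_i)=e(w_j)=o$ with $i\ne j$, say $o=f^s(w_i)=f^t(w_j)$ and $s\le t$; applying injectivity of $f$ $s$ times gives $w_i=f^{t-s}(w_j)$. If $t=s$ then $w_i=w_j$, impossible. If $t>s$, write $w_i = f(q)$ with $q=f^{t-s-1}(w_j)$, so $q\succeq w_j$. From $w_j\sim v=f(v)$ and~\eqref{mflow:d} we get $v\preceq w_j\preceq q$, and from $v\sim w_i=f(q)$ and~\eqref{mflow:d} we get $q\preceq v$; hence $q=v$. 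But then $f(q)=f(v)=v=q$, so $q\in M_0$, contradicting that the $f$-orbit of $w_j\notin M_0$ avoids $M_0$; and in the degenerate case $t-s-1=0$ we get $q=w_j=v$, contradicting $w_j\sim v$ (no loops). Thus $e$ is injective on the neighbourhood of $v$ and $\deg_G(v)\le k$, completing the upper bound.

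For saturation, note $n\ge k$ automatically since $O\subseteq V(G)$. For $k\ge 1$ take $G = P_n^{\,k}$, the $k$-th power of the path $x_0 - x_1 - \cdots - x_{n-1}$ (so $x_i\sim x_j \iff 0<\card{i-j}\le k$), with $I=\ens{x_0,\dots,x_{k-1}}$, $O=\ens{x_{n-k},\dots,x_{n-1}}$, $M=\vide$, $f(x_i)=x_{i+k}$, and the natural total order $x_i\preceq x_j \iff i\le j$; one checks directly that $f$ is an injection $O\comp\to I\comp$ satisfying~\eqref{mflow:a}--\eqref{mflow:d}, so $(f,\preceq)$ is a flow (hence a modified flow), and $\card{E(G)} = \sum_{i=0}^{n-1}\min(i,k) = nk - \binom{k+1}{2}$ (the case $k=0$ being trivial, with $0 = 0\cdot n - 0$ edges). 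The main obstacle is the degree estimate $\deg_G(v)\le k$ for $v\in M_0$; once the injectivity of $e$ on neighbourhoods is in place, the remainder is bookkeeping with the flow axioms and the edge count, together with a direct invocation of the known ordinary-flow bound.
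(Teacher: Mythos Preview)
Your proof is correct, but takes a different route from the paper. The paper argues directly via the star decomposition it has just developed: each of the $n-k$ star geometries contributes at most $k$ edges (its center is adjacent only to elements of the current output set together with at most one root), and the residual geometry $(G[\tilde I],I,\tilde I)$ contributes at most $\binom{k}{2}$, giving $(n-k)k + \binom{k}{2} = nk - \binom{k+1}{2}$; saturation is simply cited from~\cite{BP08}.

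Your approach instead peels off the fixed-point set $M_0=\{v:f(v)=v\}$, observes that what remains carries an ordinary flow, invokes the~\cite{BP08} bound as a black box on the reduced geometry, and then separately bounds $\deg_G(v)\le k$ for $v\in M_0$ via an injective map from neighbours to $O$ along $f$-orbits. This is a clean reduction-to-known-case argument; the key new content is the injectivity of $w\mapsto e(w)$ on neighbourhoods of fixed points, which you establish correctly using antisymmetry of $\preceq$. The paper's approach has the advantage of being self-contained (it does not re-invoke~\cite{BP08} for the bound, only for saturation) and of fitting seamlessly with the star-decomposition machinery it is simultaneously developing; your approach has the advantage of isolating exactly what is ``new'' about modified flows versus ordinary flows, and your explicit saturating example $P_n^{\,k}$ with $f(x_i)=x_{i+k}$ is more informative than a bare citation.
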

\begin{proof}
	Consider a decomposition $(G,I,O) = (G_{v\star}, I_{v\star}, O_{v\star}) \,\circ\, (G', I, O')$ into a maximal geometry and a residual geometry.
	If $v$ is both the root and the center of the geometry, then it is adjacent to at most $k$ vertices, as it is adjacent only to elements of $O' = O$.
	Otherwise, let $w = f(v)$ be the center of $(G_{v\star}, I_{v\star}, O_{v\star})$; then $w$ is again adjacent only at most $k$ vertices, as it is adjacent to $v$, and otherwise only to elements of $O \setminus \ens{w}$.
	In either case, $(G_{v\star}, I_{v\star}, O_{v\star})$ contributes at most $k$ edges to $(G,I,O)$.
	In either case, we have $\card{O'} = \card{O} = k$.

	By induction, each star geometry in a decomposition of $(G,I,O)$ contributes at most $k$ edges, and has an output subsystem of size $k$.
	Then the output subsystem $\tilde I = V(G) \setminus \img(f)$ of the initial geometry $(G[\tilde I], I, \tilde I)$ in the star decomposition also contains $k$ vertices; this geometry then has at most $\binom{k}{2} = \frac{1}{2}k(k-1)$ edges.
	Furthermore, as $(G[\tilde I], I, \tilde I)$ does not contain the center of any star geometry, and as each star geometry has a center distinct from the others, there are at most $n - k$ star geometries in the decomposition, each with a distinct center.
	The number of edges in $(G,I,O)$ is then bounded by
	\begin{align}
		 	(n-k)k + \tfrac{1}{2}k(k-1)
		\;\;=\;\;
			nk - \tfrac{1}{2}k^2 - \tfrac{1}{2}k
		\;\;=\;\;
			nk - \tbinom{k+1}{2}	\;,
	\end{align}
	as required; this bound is saturated by the extremal construction of \cite{BP08}.
\end{proof}
As a result, we may simplify the run-time of \Algorithm{alg:findModStarDecomp} by $O(kn)$ in this case as well, again by a pre-processing stage which returns $\nil$ for geometries with more than $kn - \binom{k+1}{n}$ edges.

\subsection{Candidate circuit constructions}
\label{sec:candidateConstr}

Following the approach in the proof of \autoref{lemma:extend-DK06}, we may consider how to construct ``candidate'' circuits $C$ for \oneway\ procedures $\fP$, by decomposing of the underlying geometry into star geometries and mapping each star geometry to a corresponding unitary circuit.
This will give rise to a circuit $C$ over the gate set $\cB_\DKP$ or $\cB_\RBB$, such that $\Phi(C)$ is another \oneway\ procedure with the same underlying geometry.
If we can prove that the circuit $C$ performs the same operation as $\fP$, then $C$ describes the semantics of $\fP$ in the unitary circuit model.
We may use this as the final step towards defining a semantic map $\cS$ for the DKP and (simplified) RBB constructions.

\subsubsection{Star circuits corresponding to star geometries}

For a star geometry, we may consider a corresponding star circuit, possibly parameterized (in the case of a star geometry with root $v$ and center $w = f(v) \ne v$) by an angle related to the measurement angle $\theta_v$ of $v$:
\begin{gather}
	\label{eqn:starCircuitFormula-a}
	 	C_{v\star}
	\;=\;
		\paren{\prod_{\substack{u \sim w \\ u \ne v}} \cZ\pseu[u,w]} J(-\theta_v) \pseu[:w/v]	\;.
\end{gather}
For star geometries whose center and root are the same (in the case that $f(v) = v$), corresponding to the mediating qubits of a $\Zzz_d$ operation for some $d \ge 2$ as in \eqref{eqn:defZzz}, the corresponding star circuit is just the operator $\Zzz_d\pseu[u_1, \ldots, u_d]$ acting on the input/output subsystems: this corresponds to the decomposition of $\Zzz_d$ as
\begin{gather}
	\label{eqn:mediatorStarCircuit}
	 	\Zzz_d\pseu[w_1,\ldots,w_d]
	\;\;\propto\;\;
		P \pseu[:/a] \paren{\prod_{j = 1}^d \cZ\pseu[a,w_j] } \ket{+}\pseu[:a/]	,
\end{gather}
generalizing \eqref{eqn:mediatorStarCircuitPremonition}.
We may also correspond the geometry $(G[\tilde I], I, \tilde I)$ to a circuit consisting solely of $\cZ$ gates, as described with \eqref{eqn:inputStarPremonition}:
\begin{gather}
		C_{\tilde I}
	\;\;=\;\;
		\paren{\prod_{\substack{vw \in E(G[\,\tilde I\,])}} \cZ \pseu[v,w] } \paren{\prod_{v \in \tilde I \setminus I} \ket{+} \pseu[:v/] }	\,;
\end{gather}
as described in \autoref{sec:dependenciesIndexSuccFns}, such a circuit arises in star decompositions of circuits as well.
We may then construct a candidate circuit $C$ as in the preceding section by re-composing the circuits corresponding to the geometries in the star decomposition of $(G,I,O)$.
Such re-composed circuits are illustrated in the bottom panels of \autoref{fig:starDecomp} and \autoref{fig:starModDecomp}.
For instance, in \autoref{fig:starDecomp}, we may recover the geometry in the upper left-hand panel by composing
\begin{align}
	\begin{split}
		(G,I,O)
	\;=\;
		(G_{c\star}\,, I_{c\star}\,, O_{c\star})
		\circ
		(G_{d\star}\,, I_{d\star}\,,& O_{d\star})
		\circ
		(G_{a\star}\,, I_{a\star}\,, O_{a\star}) \circ
	\\&
		(G_{b\star}\,, I_{b\star}\,, O_{b\star})
		\circ
		\big(G\big[\:\!\tilde{I}\:\!\big], I, \tilde{I}\big)	\,;
	\end{split}
\end{align}
and similarly, the larger circuit $C$ in \autoref{fig:starDecomp} is obtained by composing
\begin{align}
		C
	\;=\;
		C_{c\star} \circ C_{d\star} \circ C_{a\star} \circ C_{b\star} \circ C_{\tilde I}	\;,
\end{align}
where the circuits $C_{u\star}$ are those corresponding to each star geometry, and $C_{\tilde I}$ corresponds to the geometry  $\big(G\big[\,\tilde{I}\,\big], I, \tilde{I}\big)$.

\subsubsection{The gate model for candidate circuit constructions}

Neither of the circuits described in \autoref{fig:starDecomp} or \autoref{fig:starModDecomp} are generated over the gate set $\ens{H,T,\cZ}$; and the latter is not defined over either $\cB_\DKP$ or $\cB_\RBB$, but instead over the more general gate-set $\bar\cB$ described in the proof of \autoref{lemma:extend-DK06}.
(As we suggested just after the proof of that \autoref*{lemma:extend-DK06}, this gate-set corresponds naturally to the structure of tuples $(G,I,O,M)$ with modified flows; we may refine this statement and describe this set of gates as that which naturally corresponds to star geometries.
In order to use star decompositions to describe a semantic map corresponding to representations of circuits over $\ens{H,T,\cZ}$, we must first translate the circuits from the gate-set $\bar\cB$.

There is one apparent problem: the circuits arising from such a decomposition are more general than those in the domain of $\Phi$.
This may conceivably complicate the task of comparing the circuits arising from a semantic map $\cS$ based on star decompositions, and the circuits which lie in the domain of either $\cR_\DKP$ or $\cR_\RBB$.
We may show that this more general sort of construction will not arise in practise for \oneway\ procedures resulting from either the DKP construction or the RBB construction when $\card I = \card O$.
In this case, a geometry has at most one star decomposition up to commuting terms, by the uniqueness of the modified flow $(f,\preceq)$ shown in \autoref{thm:modFlowUniqueness} for this case.
In particular:
\begin{itemize}
\item 
	for $(G,I,O)$ which underly procedures $\fP$ produced by the DKP construction, there will be no qubits $v \in O\comp$ such that $f(v) = v$, because $(f,\preceq)$ will be a flow;

\item
	for $(G,I,O)$ which underly procedures $\fP$ produced by the RBB construction, the only edges $v w$ in $(G,I,O)$ for which $v \ne f(w)$ and $w \ne f(v)$ are those such that either $v = f(v)$ or $w = f(w)$: \ie\ where one of the qubits is a mediator qubit for a $\Zz$ operation between two logical qubits.
\end{itemize}
Thus, while the candidate circuits for arbitrary geometries with modified flows may be more general, the candidate circuits for \oneway\ procedures $\fP$ arising from a circuit $C$ over $\cB_\DKP$ or $\cB_\RBB$ will be restricted to the same gate-set as $C$ itself when $\card I = \card O$.

For a semantic map defined relative to a particular construction, we may also ensure that we obtain candidate circuits over the appropriate gate-set by imposing restrictions on the type of star geometries which we permit in the decomposition of a particular geometry.
However, as we considering only the special case $\card I = \card O$, this will not be necessary for our analysis.

\subsubsection{Relationships between \oneway\ procedures and candidate circuits}
\label{sec:candidateCircuitCorresp}

Properly speaking, a candidate circuit is a construction from a \oneway\ procedure $\fP$ rather than from a star decomposition of a geometry $(G,I,O)$, as the angular parameters of the $J(\theta)$ gates in a candidate circuit are determined by the measurement angles in $\fP$.
In order to use candidate circuits as a means of defining a semantic map $\cS$, we must consider the relationship between \oneway\ procedures and their candidate circuits.

By \autoref{thm:verifyDependencies}, we may characterize the dependencies of operations in procedures $\fP$ obtained from either the DKP or RBB constructions, relative to a candidate for the extended successor function for a corresponding circuit.
If the geometry $(G,I,O)$ underlying $\fP$ (together with the set $M$ of mediating qubits) has a modified flow $(f,\preceq)$, this defines a candidate circuit $C$, in which the gates $J(\alpha_u)$ arising from star-circuits $C_{u\star}$ (for a qubit $u \ne f(u)$ at the root of a star geometry) have their angles fixed by $\alpha_u = -\theta_u$, where $\theta_u$ is the default measurement angle of $u$ in $\fP$.

By the observations made above, the circuit $C$ will be defined over the gate set $\cB_\DKP$ or $\cB_\RBB$, depending on whether $\fP$ arises from the DKP or RBB construction: we may then consider the \oneway\ procedure $\tilde \fP = \bar\Phi(C)$, which will coincide with $\Phi(C)$.
The procedure $\tilde\fP$ will have the same geometry as $\fP$, and the same default measurement angles: if it also has the same correction and measurement dependencies, this implies that $\fP$ performs the same transformation as the circuit $C$.

For procedures $\fP$ arising from the DKP or RBB constructions in the case that $\card I = \card O$, there is only one modified flow for $(G,I,O)$, which must then be the extended successor function for the circuit over the $\cB_\DKP$ or $\cB_\RBB$ gate sets produced in the course of constructing $\fP$.
It follows then that $C$ is also precisely the circuit arising in the construction of $\fP$, in which case the normal form of $\Phi(C)$ will be congruent to $\fP$.
Thus, to obtain a semantic map for the DKP construction when $\card I = \card O$, it suffices to translate $C$ into the gate-set $\ens{H,T,\cZ}$\,.
For the RBB construction, to obtain a circuit with complexity no greater than the original circuit from which $\fP$ was produced, we must also simplify $C$ by removing products of the form $J(0) J(0)$ and $J(\pion 2) J(\pion 2) J(\pion 2)$ introduced by the construction of \autoref{apx:stableIndexConstructionConstrained}.

In the case that $\card I \ne \card O$, even conditioned on $\fP$ arising from the DKP or RBB constructions, it may be difficult to obtain the modified flow function $f$ corresponding to the circuit arising in the construction of $\fP$, and therefore to produce a suitable candidate circuit.
Without an algorithm that can construct such candidate circuits in the case that $\card I = \card O$, for instance by finding the modified flow $(f,\preceq)$ for which $\fP$ satisfies \autoref{thm:verifyDependencies}, we must therefore restrict ourselves to the DKP and RBB constructions in the case where no fresh qubits are introduced, \ie\ which perform unitary bijections.

\subsubsection{Efficient construction of candidate circuits over $\protect\ens{H,T,\protect\cZ}$}
\label{sec:buildCircuit}

For the sake of completeness, we present a procedure \BuildCircuit\ (extending the remarks made in Section III B of\cite{DK06}) for constructing a stable-index representation for a candidate circuit for $\fP$, represented in the gate set $\ens{H,T,\cZ}$ (rather than the sets $\cB_\DKP$ or $\cB_\RBB$).
The procedure \BuildCircuit\ takes as input a representation of a star decomposition of a geometry $(G,I,O)$: we may represent this in terms of the graph $G$ of the geometry, the successor function $f$ from the modified flow which defines the star decomposition, and a sequence $L$ of the roots of the component star geometries.
This procedure is presented in \Algorithm{alg:buildCircuit}.

\begin{algorithm}[p]

	\PROCEDURE \BuildCircuit(G,f,L,t):

	\Input{%
		$G$, the a graph for the geometry $\cG$ of a \oneway\ procedure
	\\
		$f: V(G) \partialto V(G)$, a modified flow function for $\cG$
	\\
		$L$, a list of roots of star geometries in a decomposition of $\cG$
	\\
		$t: V(G) \partialto \Z$ an array denoting multiples of $\pion 4$
	}

	\Output{%
		A candidate circuit generated over $\ens{H,T,\cZ}$, corresponding to the star decomposition described by $G$, $f$, and $L$, with powers of $T$ gates given by the parameters stored in $t(u)$.}

	\BEGIN {

		\LET $C$ a stable-index representation of a circuit, initially empty\;
		\algline
		\FOREACH $u \in L$ \DO {
			\algline
			\IF $u = f(u)$ \THEN {
				\LET $N \gets \ngbh[\scriptstyle G]{u}$\;
				\FOREACH $w \in N$ \DO {
					\oneline
					\FOREACH $w' \in N$ \DO { add $\cZ\pseu[w,w']$ to $C$\; } \ENDFOR
					Add $T\pseu[w]T\pseu[w]$ to $C$\;
					Remove $w$ from $N$\;
				}
				\ENDFOR
			}
			\ELSE {
				\LET $v \gets f(u)$\;
				\LET $N \gets \ngbh[\scriptstyle G]{u} \setminus \ens{u, f(v)}$\;
				\oneline
				\FOREACH $w \in N$ \DO add $\cZ\pseu[v,w]$ to $C$\; \ENDFOR
				Add $H\pseu[:v/u]$ to $C$\;
				\short
				\IF $t(u) < 0$ \THEN {
					\oneline
					\FOR $j = 1$ \TO $-t(u)$ \DO add $T\herm\pseu[u]$ to $C$ \ENDFOR
				} \ELSE {
					\oneline
					\FOR $j = 1$ \TO $t(u)$ \DO add $T\pseu[u]$ to $C$ \ENDFOR
				}				
			}
		}
		\ENDFOR

		\medskip

		\LET $\tilde I \gets V(G) \setminus \img(f)$\;
		\FOREACH $u \in \tilde I$ \DO {
			\smallskip
			\FOREACH $w \in \ngbh[\scriptstyle G]{u}$ \DO {
				\oneline \IF $w \in \tilde I$ \THEN { add $\cZ\pseu[u,w]$ to $C$\; }\relax
			}
			\ENDFOR
			\oneline
			\IF $u \notin I$ \THEN { add $\ket{+}\pseu[:u/]$ to $C$\; }\relax
			Remove $u$ from $\tilde I$\;
			\smallskip
		}
		\ENDFOR

		\medskip
		
		\RETURN $C$\;
	} \END

	\caption[A procedure to construct a candidate circuit corresponding to a star decomposition of a geometry.]{\label{alg:buildCircuit}%
		A procedure to construct a candidate circuit corresponding to a star decomposition of a geometry, realized over the gate-set $\ens{H,T,\cZ}$.}

\end{algorithm}

For each root $u$ in the list $L$, \BuildCircuit\ determines whether it is the root of a geometry whose root and center are both $u$ (corresponding to a $\Zzz_d$ gate acting on the neighbors of $u$ in $G$), or where the center $v = f(u)$ differs from $u$ (corresponding to a star circuit as in \eqref{eqn:starCircuitFormula-a}.
It then appends to a circuit expression $C$ the appropriate sequence of operations:
\begin{itemize}
\item
	By expanding $Z = \idop - 2\ket{1}\bra{1}$ in the tensor product $Z \ox \cdots \ox Z$, and considering the expansion of $\Zzz_d = \e^{-i\pi Z\sox{d}\!/4}$ in terms of multi-qubit operations $\e^{i ( \ket{1}\bra{1}\sox{j} ) \theta}$ for $0 \le j \le d$, we may show that
	\begin{gather}
			\Zzz_d\pseu[w_1, \ldots, w_d]
		\;\propto\;
			\left[\; \prod_{j = 1}^n	T^2\pseu[w_j] \left]\left[ \prod_{\substack{j,k = 1\\j < k}}^n \cZ\pseu[w_j, w_k] \right]\right.\right.	\;.
	\end{gather}
	We may construct such a gate for $w_j$ ranging over all neighbors of a qubit $u = f(u)$ by obtaining the set $N$ of these neighbors,
	and for each $w \in N$, adding $T$ operations on $w$ to $C$, as well as controlled-$Z$ operations between $w$ and the other elements of $N$.
	Removing $w$ from $N$ after doing so will then prevent controlled-$Z$ operations from being added twice for each pair of neighbors.

\item
	For each qubit $u$ such that $u \ne f(u)$, we may obtain the set of $N$ neighbors of $v = f(u)$ belonging to the output subsystem $O_{u\star}$ of the star geometry $(G_{u\star}, I_{u\star}, O_{u\star})$.
	$N$ will then consist of the neighbors of $v$ excluding $u$ and (if it is defined) $f(v) = f(f(u))$.
	We may then add the star circuit $C_{u\star}$ to $C$ by adding $\cZ$ operations between each $w \in N$ and $v$, followed by the components of the gate $J(\theta)$ for the appropriate angle $\theta$.
	For angles $\theta$ constrained to integer multiples of $\pion 4$, this will consist of $H\pseu[:v/u] T^{t(u)}\pseu[u]$ for some integer array $t$ given as input: if $t(u)$ is negative, we may add products of $T\herm\pseu[u]$ to $C$.
\end{itemize}
After having added the star circuits for each star geometry, we then add the circuit corresponding to the geometry $(G[\tilde I], I, \tilde I)$ for $\tilde I = V(G) \setminus \img(f)$.
After constructing the set $\tilde I$, we iterate through the qubits $u \in \tilde I$, adding to $C$ the $\cZ$ operations acting on $u$ and (if necessary) representing the fact that $u \notin I$ by adding a preparation of a fresh qubit $\ket{+}$ in order to advance $u$ as a fresh index in the circuit $C$.\footnote{%
	While the semantic map we will consider does not involve the introduction of fresh qubits, such fresh qubits do not complicate the process of constructing candidate circuits, and so we include such preparations in \Algorithm{alg:buildCircuit}.}
Subsequently removing $u$ from $\tilde I$ prevents duplicate $\cZ$ operations from being added to $C$.

\paragraph{Run-time analysis of Algorithm~\ref*{alg:buildCircuit}.}
We may bound the run-time of $\BuildCircuit$ as follows.
Let $n = \card{V(G)}$, $m = \card{E(G)}$, and $k = \card{O}$.
We assume a list-like structure for the stable-index expression $C$ and for the sets $\ngbh[\scriptstyle G]{u}$, and an array-structure for $f$, where $f(v) = \nil$ denotes that $v \notin \dom(f)$.
\begin{romanum}
\item
	For each qubit $u = f(u)$, the number of gates to add corresponding to the expression for $\Zzz_{\deg_G(u)}$ above is $O(\deg_G(u)^2)$.
	As the output subsystem of each star geometry in a star decomposition has the same size input and output subsystem, we have $\deg_G(u) \le \card{O} = k$ for each such $u$; then the total work number of gates arising from these qubits is $O(k^2 n)$.

\item
	For each qubit $u \ne f(u)$, the number of gates to add corresponding to the circuit $C_{u\star}$ is $O(\deg_G(f(u))$; summing over all such $u$, we have a total of $O(m)$ gates arising from these qubits.

\item
	For each qubit $u \in \tilde I = V(G) \setminus \img(f)$, we have $\deg_G(u)$ gates to add, plus possibly an initial state $\ket{+}\pseu[:u/]$. Summed over all $u \in \tilde I$, we again have at most $O(m)$ operations arising in this case.
\end{romanum}
As each operation requires time $O(1)$ to add, the run-time of \Algorithm{alg:buildCircuit} is then $O(k^2 n + m)$.

\subsection{Complete algorithm for the semantic map when $\card I = \card O$}
\label{sec:semanticAlg}

We may now describe a single, complete procedure \Semantic\ (presented in \Algorithm{alg:semantic}) to perform a semantic map $\cS$ with respect to the representations $\cR_\DKP$ and $\cR_\RBB$ represented by the constructions of \autoref{sec:constructions}.
\begin{algorithm}[h]

	\PROCEDURE \Semantic(\fP):

	\Input{%
		$\fP$, a \oneway\ procedure in normal form.
	}

	\Output{%
		Either $\nil$, if $\fP$ is not (congruent to) the normal form of a procedure as in \autoref{lemma:extend-DK06}, or a stable-index representation of a circuit $C$ generated over the gate-set $\ens{H,T,\cZ,\Zzz_d}_{d \ge 2}$ which performs the same transformation as $\fP$.}

	\BEGIN {
		\LET $(G,I,O)$ be the geometry underlying $\fP$\;
		\LET $n \gets \card{V(G)}$, $m \gets \card{E(G)}$, $k \gets \card{O}$\;
		\LET $M$ be a set of possible mediator qubits, initially empty\;
		\LET $t: O\comp \to \R$ be an array of multiples of $\pion 4$\;
		\FOREACH operation $\Meas{u}{\theta;\beta}$ in $\fP$ \DO
			Set $t(u) \gets\, -4\theta/\pi$\;
			\oneline
			\IF $\theta = \pion 2$ \THEN add $u$ into $M$\;
		\ENDFOR

		\smallskip
		\oneline
		\IF $m > nk - \binom{k+1}{2}$ \THEN \RETURN \nil\;
		\LET $\starDecomp \gets \FindModStarDecomp(G,I,O,M)$\;
		\short
		\IF $\starDecomp = \nil$ \THEN { \RETURN \nil } \ELSE { \textbf{let} $(f, L) \gets \starDecomp$ }
		\oneline
		\IF $\TestDependencies(\fP,f) = \false$ \THEN { \RETURN \nil }\; 
		\smallskip

		\LET $C \gets \BuildCircuit(G,f,L,t)$\;
		
		\RETURN $\RemoveIdops(C)$\;
	} \END

	\caption[A complete algorithm for a semantic map with respect to either the DKP or simplified RBB constructions.]{\label{alg:semantic}%
		A complete algorithm for a semantic map with respect to either the DKP or simplified RBB constructions, using the procedures defined in \Algorithm{alg:testDependencies}, \Algorithm{alg:buildCircuit}, and \Algorithm{alg:findModStarDecomp}, as well as a procedure \RemoveIdops\ which performs trivial simplifications of products of single-qubit gates.}

\end{algorithm}%
We may describe the operations performed by \Semantic\ in three different phases, which we may describe as follows.

\paragraph{Obtain the underlying structure of $\fP$.}
We first obtain the geometry $(G,I,O)$ underlying $\fP$, as well as the set $M$ of qubits measured with default angles $\pion 2$, in order to obtain a star decomposition.
We also obtain the cardinalities $n = \card{V(G)}$, $m = \card{E(G)}$, and $k = \card{O}$.
For each qubit $u$ measured, we record the integers $t(u) \gets -4\theta/\pi$ which corresponds to the power of $T$ for which $HT^{t(u)} = J(-\theta_u)$; if the measurement at $u$ corresponds to the execution of a $J(\theta)$ gate in the candidate circuit, we may use $t(u)$ to determine the expansion of this gate in the gate-set $\ens{H,T,\cZ}$.

Rather than perform these operations literally as presented in \Algorithm{alg:semantic}, we may obtain this data with a single pass of the operations of $\fP$, constructing $G$, $I\comp$, and $O\comp$ by inserting qubits and edges as each qubit is acted on by preparation maps, entangling operations, and measurements; we may obtain $n$, $m$, and $k$ similarly.
Using arrays of pointers to linked list nodes to represent $I\comp$ and $O\comp$, we may then construct a similar representation of both $I$ and $O$ with a single traversal of $V(G)$.
As we do so, we also assign values to the array $t$ and insert qubits into the set $M$.
As $\fP$ has at most $O(n)$ preparation, measurement, and correction operations, and $O(m)$ entangling operations, it has size $O(m + n)$: this is then the time required for this phase of the procedure.

\paragraph{Examine the geometry and the measurement dependencies of $\fP$.}
As we noted in \autoref{sec:extremal}, a \oneway\ procedure cannot have a modified flow if it has more than $nk - \binom{k+1}{2}$ edges, in which case it lies outside the range of both the DKP and RBB constructions by \autoref{lemma:extend-DK06}; we then return $\nil$.
We then attempt to obtain a star decomposition for $(G,I,O)$ with the set $M$ of potential mediating qubits: if this fails, we also return $\nil$.

By \autoref{thm:verifyDependencies}, we may characterize the dependencies of operations in procedures obtained from either the DKP or RBB constructions, relative to a candidate for the successor function for a corresponding circuit.
For the star decomposition described by the ordered pair $(f,L)$ returned in the previous step, there is a corresponding candidate circuit $C$ for which $f$ is an extended index successor function, as described in \autoref{sec:candidateConstr}.
In the case that $\card I = \card O$, there is no other candidate successor function for $(G,I,O,M)$, by \autoref{thm:modFlowUniqueness}.
Then the candidate circuit $C$ described by $(f,L)$ is the only one which describes a unitary transformation which may be performed by $\fP$.

The \oneway\ procedure $\Phi(C)$ has underlying geometry $(G,I,O)$ by construction; if $\fP$ has the same operations (with the same dependencies) as the normalization of $\Phi(C)$, they then perform the same transformation, and thus $\fP$ is a representation of $C$.
Again by construction, the normalization of $\Phi(C)$ satisfies the conditions of \autoref{thm:verifyDependencies} with respect to $f$.
Then if $\TestDependencies(\fP,f)$ returns $\true$, we may provide the semantics for $\fP$ in terms of unitary circuits by constructing $C$; otherwise, we have no such guarantee, and so we return $\nil$.

We may compare $m$ to $nk - \binom{k+1}{2}$ in time $\polylog(n)$; and the respective run-times of \FindModStarDecomp\ and \TestDependencies\ are $O(kn) \subset O(n^2)$ and $O(kn^2) \subset O(n^3)$, by the analyses of \autoref{sec:efficientDecompose} and \autoref{sec:verifyDependencies} respectively.

\paragraph{Constructing a circuit for $\fP$.}

Provided that the procedure does not return $\nil$ in the previous phase, the procedure $\fP$ performs a unitary transformation, and in particular is a representation of the ``candidate'' circuit $C$ obtained from the star decomposition of $(G,I,O)$.
We therefore construct the circuit $C$, using the star decomposition described by $(f,L)$.
We produce a stable-index representation for $C$, representing it in the gate-set $\ens{H,T,\cZ}$ rather than the set $\bar\cB$ used in \autoref{lemma:extend-DK06}: this takes time $O(k^2 n + m) = O(k^2 n)$, as $m \le kn - \binom{k+1}{2}$ by hypothesis.

In the case of procedures $\fP$ obtained by applying the RBB construction to some original circuit $\tilde C$, we must perform some final simplifications.
We call a final subroutine $\RemoveIdops(C)$ in order to remove superfluous powers of $T$ which may arise from the conversion of operations $\Zz = (T^2 \ox T^2) \cZ$, and which also removes products of the form $H^2 = J(0) J(0)$ and $HT^2 HT^2 HT^2 = J(\pion 2) J(\pion 2) J(\pion 2)$ on individual qubits such as those introduced in the stable-index construction presented in \autoref{apx:stableIndexConstructionConstrained}.
As the dependencies of $\fP$ satisfies the constaints of \autoref{thm:verifyDependencies} relative to the successor function $f$, this circuit provides the semantics for $\fP$ as a unitary circuit, by the analysis of \autopageref{sec:candidateCircuitCorresp}.

The list $L$ produced by $\FindModStarDecomp$ contains roots of star-geometries in a non-increasing order (in terms of the natural pre-order $\preceq$ for the successor function $f$ it produces), and the operations of $C$ are in a similar order from the construction of $\BuildCircuit(G,f,L,t)$.
We may therefore perform this simplification of $C$ in a single pass from right to left, using buffers for each logical qubit of the circuit to maintain accumulated sequences of single-qubit operations performed between pairs of two-qubit operations.
We may then cancel products of single-qubit unitaries that occur at the end of the buffer, as appropriate.
The amount of time required to do so is then a constant factor time the size of the circuit, which by the analysis of \autopageref{sec:buildCircuit} is also $O(k^2 n + m) = O(k^2 n)$.

\paragraph{Remarks on circuit congruence and total run-time.}

In order to consider the correctness of \Algorithm{alg:semantic}, we may consider without loss of generality circuits $\tilde C$ represented in the gate-sets $\cB_\DKP$ or $\cB_\RBB$ which do not contain redundant operations $J(0)J(0)$ or $J(\pion 2)J(\pion 2)J(\pion 2)$, and which do not introduce fresh qubits.
We may proceed as follows.

If we apply the DKP construction to such a circuit $\tilde C$ over the gate-set $\cB_\DKP$, we obtain a procedure $\fP$ for which the qubits of $O\comp$ are labelled by deprecated indices in $\tilde C$, and $I\comp$ by advanced indices; where the entanglement graph is the same as the interaction graph of $\tilde C$; and where the default measurement angles are the additive inverses of the angles of the $J(\theta)$ gates.
Furthermore, the geometry $(G,I,O)$ has a modified flow $(f,\preceq)$, in which $f$ is the index successor function of $\tilde C$.
Correspondingly, the candidate circuit $C$ for $\fP$ has an interaction graph given by the entanglement graph of $\fP$, and a successor function which agrees with $f$.
The single-qubit gates corresponding to the edges $u f(u)$ in the interaction graph correspond to operations $J(-\theta_u)$, where $\theta_u$ is the default measurement angle of $u$; and all other edges correspond to $\cZ$ operations.
We may then define a bijective mapping between the terms of $\tilde C$ and the terms of $C$, via the structures induced in and recreated from $\fP$.

For the RBB construction, the analysis is similar: the only changes are the fact that the edges of the interaction graph of $\tilde C$ which correspond to operations $\Zz$ must be subdivided by a single vertex to yield the entanglement graph of $\fP$, and paths of the form $J(0)J(0)$ or $J(\pion 2)J(\pion 2)J(\pion 2)$ may be inserted to achieve the topological constraint of being an induced subgraph of a two-dimensional grid.
In constructing the candidate circuit $C$, the subdivided edges corresponding to $\Zz$ operations are re-merged, and the superfluous products of single-qubit unitaries are once more removed.
In this case, we may then also define a bijective mapping between the terms of $\tilde C$ and the terms of $C$, via the structures induced in and recreated from $\fP$.

\vspace{\baselineskip}
Thus, $\Semantic$ defines a semantic map for both the DKP and (simplified) RBB constructions, restricted to unitary bijections.
The run-time of $\Semantic$ is dominated by the time required to compare the dependencies of $\fP$ against the criteria of \autoref{thm:verifyDependencies}, which is $O(nm) \subset O(kn^2)$.

\section{Extensions and limitations}

The analysis of \oneway\ procedures presented in \autoref{sec:semanticMaps}, in terms of modified flows and star decompositions, may be extended in a natural way to more general formulations of the \oneway\ model.
At the same time, this approach has limitations in its ability to describe even some of the earliest known constructions for \oneway\ computation.
In this section, we will describe in brief some of these extensions and limitations.

\subsection{Extending beyond measurement in the \XY\ plane}

The model of measurement-based computation which we have focused on in this article (following the treatments in \cite{RB01} and \cite{RBB03}) is one in which all single-qubit measurements are performed in the \XY\ plane, \ie\ with respect to uniform superpositions of the eigenstates of the Pauli $Z$ operator.
A natural extension of this is to allow measurements in any basis in a ``Pauli plane'', \ie\ for any uniform superposition of the eigenstates of $X$ and $Y$ as well.
We may define the notation $\ket{\pm\sfx\sfy_\theta} = \ket{\pm_\theta}$ (denoting the fact that these states lie in the \XY\ plane), and then define
\begin{subequations}
\begin{align}
		\label{eqn:defYZtheta}
			\ket{\pm \sfy \sfz_\theta}
		\;=&\;\,
			\sfrac{1}{2} \Big( \ket{0} + \ket{1} \Big) \;\pm\; \sfrac{\e^{-i\theta}}{2} \Big( \ket{0} - \ket{1} \Big)	\;,
	\\	
		\label{eqn:defXZtheta}
		\ket{\pm\sfx\sfz_\theta}
	\;=&\;\,
			\sfrac{1}{2} \Big( \ket{0} + i \ket{1} \Big) \;\pm\; \sfrac{\e^{i\theta}}{2} \Big( \ket{0} - i \ket{1} \Big)	\;,
\end{align}
\end{subequations}
which denote bases lying in the $\YZ$ and $\XZ$ planes respectively.
Then, we define the measurement operators
\begin{subequations}
\begin{align}	
		\Meas[\XY]{a}{\theta}(\rho)
	\;=&\;\,
		\bra{\splus\sfx\sfy_\theta} \rho\, \ket{\splus\sfx\sfy_\theta}_{\!a} \ox \ket{0}\bra{0}_{\s[a]}
	\notag\\&\qquad\qquad\qquad+\;\;
		\bra{\sminus\sfx\sfy_\theta} \rho\, \ket{\sminus\sfx\sfy_\theta}_{\!a} \ox \ket{1}\bra{1}_{\s[a]}	\,,
\\[1ex]
		\Meas[\YZ]{a}{\theta}(\rho)
	\;=&\;\,
		\bra{\splus\sfy\sfz_\theta} \rho\, \ket{\splus\sfy\sfz_\theta}_{\!a} \ox \ket{0}\bra{0}_{\s[a]}
	\notag\\&\qquad\qquad\qquad+\;\;
		\bra{\sminus\sfy\sfz_\theta} \rho\, \ket{\sminus\sfy\sfz_\theta}_{\!a} \ox \ket{1}\bra{1}_{\s[a]}	\,,
\\[1ex]
		\Meas[\XZ]{a}{\theta}(\rho)
	\;=&\;\,
		\bra{\splus\sfx\sfz_\theta} \rho\, \ket{\splus\sfx\sfz_\theta}_{\!a} \ox \ket{0}\bra{0}_{\s[a]}
	\notag\\&\qquad\qquad\qquad+\;\;
		\bra{\sminus\sfx\sfz_\theta} \rho\, \ket{\sminus\sfx\sfz_\theta}_{\!a} \ox \ket{1}\bra{1}_{\s[a]}	\,.
\end{align}
\end{subequations}
We may then consider extensions of the analysis of \autoref{sec:semanticMaps}, in measurement-based models which include operations of the form above.

\subsubsection{Stabilizers, certificates of unitarity, and local graph structure}
\label{sec:eflow}

Recall from \autoref{lemma:extend-DK06} that we may construct procedures performing unitary transformations in the \oneway\ model by performing the following operation after each measurement $\Meas[\XY]{u}{\theta}$:
\begin{itemize}
\item
	If $\theta \ne \pion 2$, we perform an $X$ operation on some specially designated qubit $v \ne u$ (which we denote $f(u)$ for each $u$), and a $Z$ operation on each of the neighbors of $v$ except for $u$ itself, if the state of $u$ is projected onto $\ket{\sminus \sfx\sfy_\theta}$.

\item
	If $\theta = \pion 2$, either we perform the operation described above, or we perform a $Z$ operation on each neighbor of $u$, if the state of $u$ is projected onto  $\ket{\sminus \sfx\sfy_\theta}$.
\end{itemize}
In either case, the operation which is performed after each measurement may be described by an operator
\begin{gather}
		B_u
	\;\;=\;\;
		X_{f(u)} \paren{ \prod_{w \sim f(u)} Z_w }	\;,
\end{gather}
up to an operation acting on $u$ itself (which we describe as being destroyed by the measurement and which thus cannot be carried out).
This unitary operation is the Pauli group on $V(G)$, and in particular stabilizes the state $\ket{\Psi_{G,I}}$ defined by
\begin{gather}
 		\ket{\Psi_{G,I}}\bra{\Psi_{G,I}}
	\;=\;
		\Ent{G} \New{I\comp} \big( \ket{\psi}\bra{\psi} \big)
\end{gather}
for any $\ket{\psi} \in \cH\sox{I}$.
In particular, as the preparation maps $\New{v}$ adjoin qubits prepared in the $+1$ eigenstate of the $X$ operator, and the $\Ent{vw}$ maps perform two-qubit $\cZ$ operations (which lie in the Clifford group) on pairs of qubits, the state of the system just prior to any measurements is a stabilizer code~\cite{GotPhD} $\sS_{G,I}$ generated by the operators
\begin{gather}
		K_v
	\;=\;
		X_v \prod_{w \sim v} Z_w
	\;=\;
		\Ent{G}(X_v)	\;.
\end{gather}

The modified flow conditions may then be considered as a certificate that there exists an order in which the qubits of $O\comp$ may be measured so that one may use the stabilizer formalism~\cite{GotPhD} to simulate the post-selection of the state $\ket{\splus\sfx\sfy_\theta}$ for each measurement (by applying suitable corrections should the measurement yield $\ket{\sminus\sfx\sfy_\theta}$ instead).

We may observe that the analysis of measurements by a single-qubit observable $\sO_v$ in the stabilizer formalism requires only that $\sO_v$ either commutes or anticommutes with every generator of the subgroup of the Pauli group which stabilizes the state of the system.
We may observe that $\XY$-plane measurements such as those above can be described by measurements with respect to observables
\begin{gather}
	\label{eqn:observableXY}
		\sO^{\XY,\theta}
	\;\;=\;\;
		\ket{\splus\sfx\sfy_\theta}\bra{\splus\sfx\sfy_\theta}
		\;\;-\;\;
		\ket{\sminus\sfx\sfy_\theta}\bra{\sminus\sfx\sfy_\theta}
	\;\;=\;\;
		\cos(\theta) X + \sin(\theta) Y	\;;
\end{gather}
such operators always anticommute with the Pauli $Z$ operator, and also anticommute with $X$ when $\theta = \pion 2$.
Then, a modified flow function $f: O\comp \to I\comp$ for a geometry $(G,I,O,M)$ describes a mapping from $u$ (with an associated observable $\sO^{\XY,\theta_u}_u$) to some qubit $v = f(u)$ such that $K_v$ anticommutes with $\sO^{\XY,\theta_u}_u$\,.
The constraints on the partial order $\preceq$ ensure that, for any linear measurement order extending $\preceq$, the generators which stabilize the state just prior to the measurement of $u$ consist of $K_{f(u)}$, and operators which all commute with $u$.
Then, the evolution of the state-space system under these measurements may be efficiently described via the stabilizer formalism as a transformation of stabilizer codes (although how individual states within these codes transform may be difficult to simulate).
As the modified flow conditions ensure that each measurement observable has a corresponding generator with which it anticommutes, the overall transformation will then be unitary.

\paragraph{A straightforward extension of modified flows.}

The fact that qubits $u \in M$ may have either $u \sim f(u)$ or $u = f(u)$ stems from the fact that $\sO^{\XY,\theta_u} = Y$ in this case: that is, the measurement basis for such qubits lies in the intersection of the \XY\ and \YZ\ planes.
The condition $u \sim f(u)$ in this case is common to all other measurement angles, as this condition suffices for treatment via the stabilizer formalism for observables which anticommute with $Z$.
We may then extend the notion of modified flows as follows to also include other measurement observables which, like $Y$, anticommute with $X$:
\begin{definition}
	\label{def:eflow}
	For a tuple $(G,I,O,M_{\XY},M_{\YZ})$ consisting of a geometry $(G,I,O)$ and sets $M_{\XY}, M_{\YZ} \subset O\comp$ such that $O\comp = M_{\XY} \union M_{\YZ}$, an \emph{extended flow}\footnote{%
		The definition of extended flows here is a refinement of the definition presented in~\cite{BeaudPhD}: the definition there can be recovered by requiring $M_{\XY}$ and $M_{\YZ}$ to be disjoint, in which case the set $T$ in Definition~3-17 of \cite{BeaudPhD} corresponds to $M_{\YZ}$\,.}
	is an ordered pair $(f, \preceq)$ consisting of a function $f: O\comp \to I\comp$, and a partial order $\preceq$ on $V(G)$, such that the conditions
	\begin{subequations}
	\begin{align}
			\label{eflow:a}		f(v) \,&\sim v	\;,											&&\mspace{-50mu}\text{for $v \in M_{\XY} \setminus M_{\YZ}$}\,;
		\\
			\label{eflow:b}		f(v) \,&= v	\;,												&&\mspace{-50mu}\text{for $v \in M_{\YZ} \setminus M_{\XY}$}\,;
		\\
			\label{eflow:c}		f(v) \sim v	\text{~~}&\text{or~~} f(v) = v	\;, 	&&\mspace{-50mu}\text{for $v \in M_{\YZ} \inter M_{\XY}$}\,;
		\\
			\label{eflow:d}		v \preceq&\, f(v)\;; 										&&\mspace{-50mu}\text{and}
		\\
			\label{eflow:e}		w \sim f(v)	\;&\implies\; v \preceq w
	\end{align}
	\end{subequations}
	hold for all $v \in O\comp$ and $w \in V(G)$.
\end{definition}
In the above definition, the sets $M_{\XY}$ and $M_{\YZ}$ correspond to qubits whose default measurement basis lies in the \XY-plane or the \YZ-plane, respectively; the intersection are those qubits whose measurement basis is the eigenbasis of the $Y$ operator, which is common to both.
Modified flows then correspond to the special case where $M_{\YZ} \subset M_{\XY}$.

As with modified flows, an extended flow $(f,\preceq)$ for a tuple $(G,I,O,M_\XY, M_\YZ)$ certifies that any procedure on the geometry $(G,I,O)$ which performs the appropriate types of measurement will perform a unitary transformation, provided also that it performs the appropriate corrections (or measurement adaptations) in response to the measurements.
To this effect, we may prove an extension of \autoref{lemma:extend-DK06}, as follows:
\begin{lemma}
	\label{lemma:extend-again-DK06}
	Suppose $(f,\preceq)$ is a modified flow for $(G,I,O,M_\XY,M_\YZ)$.
	Let $v \sim w$ denote the adjacency relation of $G$, and let $\lambda: V(G) \to \ens{\XY,\YZ}$ indicate the plane of measurement for each qubit, with $\lambda(u) = \XY$ or $\lambda(u) = \YZ$ for $u \in M_\XY \inter M_\YZ$, and $\lambda(u) = P$ for $u \in M_P$ otherwise. 
	Then for any linear order $\le$ extending $\preceq$, the measurement procedure
	\begin{gather}
	 	\sqparen{\ordprod[\le]_{u \in O\comp}
			\paren{\prod_{\substack{w \sim f(u) \\ w \ne u}} \Zcorr{w}{\s[u]}}
			\paren{\prod_{\substack{v = f(u) \\ v \ne u}} \Xcorr{v}{\s[u]}}
			\Meas[\lambda(u)]{u}{\theta_u}
		} \Ent{G} \New{I\comp}
	\end{gather}
	performs a unitary transformation.
\end{lemma}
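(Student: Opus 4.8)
The plan is to adapt, with essentially no new ideas, the circuit-construction argument used to prove \autoref{lemma:extend-DK06}, after enlarging the underlying gate-set so that qubits measured in the \YZ\ plane correspond to multi-qubit diagonal rotations. For $d \ge 2$ and $-\pi < \theta \le \pi$, write $\Zzz_d(\theta) = \e^{-i\theta Z\sox{d}/2}$, so $\Zzz_d(\pion 2) = \Zzz_d$ as in \eqref{eqn:defZzz}, and generalize the procedure $\fZzz^d$ of \eqref{eqn:ZzzProcedure} to
\begin{gather}
		\fZzz^{d,\theta}_{v_1,\ldots,v_d}
	\;=\;
		\paren{\prod_{j = 1}^d \Zcorr{v_j}{\s[a]}}\, \Meas[\YZ]{a}{\theta}\, \paren{\prod_{j = 1}^d \Ent{av_j}}\, \New{a}	\;.
\end{gather}
The first task is to show, by a computation of the same flavour as in \autoref{apx:constructGatesApx-2qubit}, that $\fZzz^{d,\theta}_{v_1,\ldots,v_d}(\rho_S) = \big[\Zzz_d(\theta)\big]_{v_1,\ldots,v_d}\, \rho_S\, \big[\Zzz_d(\theta)\big]\herm_{v_1,\ldots,v_d}$ for any system $S$ containing the $v_j$ but not $a$. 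Using $\ket{\pm\sfy\sfz_\theta} = H\ket{\pm_{-\theta}}$ and $\bra{x}_a \cZ_{av} = \bra{x}_a Z_v^x$, one finds $\bra{\splus\sfy\sfz_\theta}_a \big(\prod_j \cZ_{av_j}\big)\ket{+}_a \,\propto\, \idop - i\tan(\tfrac{\theta}{2})\, Z\sox{d} \,\propto\, \Zzz_d(\theta)$ and $\bra{\sminus\sfy\sfz_\theta}_a \big(\prod_j \cZ_{av_j}\big)\ket{+}_a = \big(\prod_j Z_{v_j}\big)\,\bra{\splus\sfy\sfz_\theta}_a \big(\prod_j \cZ_{av_j}\big)\ket{+}_a$; so the $Z$-corrections on \emph{all} $d$ neighbours (rather than an $X$-correction on a single successor, as for $\fJ^\theta$) are exactly what is needed to compensate a $\ket{\sminus\sfy\sfz_\theta}$ outcome. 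This is also why condition \eqref{eflow:b} asks for $f(v) = v$: a \YZ-plane observable anticommutes with $X$ but not (unless $\theta = \pion 2$) with $Z$, so it must be matched by $K_v$ itself rather than by a neighbouring generator $K_{f(v)}$.

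Given an extended flow $(f,\preceq)$ for $(G,I,O,M_\XY,M_\YZ)$, the next step is to decompose $(G,I,O)$ into star geometries. The conditions \eqref{eflow:d} and \eqref{eflow:e} are exactly the conditions defining the natural pre-order of $f$, and \eqref{eflow:a}--\eqref{eflow:c} assert only that $f(v)$ is equal or adjacent to $v$, with $f(v) = v$ permitted precisely for the qubits in $M_\YZ$ (up to $Y$-basis qubits, which lie in both planes). These are the only hypotheses used in the proofs of \autoref{lemma:modifiedStarDecomp} and \autoref{lemma:flowStarDecompCharn}, so those lemmas transfer verbatim, and $(G,I,O)$ admits a star decomposition relative to $f$ into star geometries rooted at each $v \in O\comp$ followed by the reduced geometry $\big(G[\tilde I], I, \tilde I\big)$ with $\tilde I = V(G) \setminus \img(f)$.

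I would then build a candidate circuit $\cC$ over the gate-set $\hat\cB = \ens{\big.J(\theta),\cZ,\Zzz_d(\theta)}_{\theta \in (-\pi,\pi],\, d \ge 2}$ by attaching a sub-circuit to each star geometry, mirroring \eqref{eqn:starCircuitFormula-a} and \eqref{eqn:mediatorStarCircuit}: for a root $v$ with $f(v) \ne v$ (necessarily measured in the \XY\ plane), the star circuit $\big(\prod_{u \sim f(v),\, u \ne v} \cZ\pseu[u,f(v)]\big)\, J(-\theta_v)\pseu[:f(v)/v]$; for a root $v$ with $f(v) = v$ (measured in the \YZ\ plane or the $Y$ basis), the diagonal gate $\Zzz_d(\theta_v)\pseu[w_1,\ldots,w_d]$ on its neighbours $w_1,\ldots,w_d$, via $\Zzz_d(\theta)\pseu[w_1,\ldots,w_d] \propto P\pseu[:/a]\big(\prod_j \cZ\pseu[a,w_j]\big)\ket{+}\pseu[:a/]$; and for $\big(G[\tilde I],I,\tilde I\big)$, the $\cZ$-circuit of $G[\tilde I]$. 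Extending $\bar\Phi$ to a homomorphism $\hat\Phi$ with $\hat\Phi(\Zzz_d(\theta)\pseu[w_1,\ldots,w_d]) = \fZzz^{d,\theta}_{w_1,\ldots,w_d}$ and unchanged on the other generators, the argument of \autoref{lemma:extend-DK06} shows that $\hat\Phi(\cC)$ is a well-formed \oneway\ procedure with geometry $(G,I,O)$; commuting all preparation and entangling maps to the right turns it into a procedure congruent to the one in the statement, the only discrepancy being the corrections $\Zcorr{f^2(u)}{\s[u]}$ induced by commuting $\Ent{f(u),f^2(u)}$ past $\Xcorr{f(u)}{\s[u]}$. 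Since every gate of $\cC$ is unitary, the procedure in the statement performs the same (unitary) transformation as $\cC$.

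The main obstacle I anticipate is the verification of the new primitive $\fZzz^{d,\theta}$ --- that its post-selected branch realizes $\e^{-i\theta Z\sox{d}/2}$ and that the stated $Z$-corrections undo a $\ket{\sminus\sfy\sfz_\theta}$ outcome --- but this is a direct, if slightly fiddly, calculation entirely parallel to the $\Zz$ and $\Zzz_d$ cases already handled in \autoref{apx:constructGatesApx-2qubit} and summarized in \eqref{eqn:ZzzCompute}. The only other point requiring care, which I expect to be routine, is confirming that \autoref{lemma:modifiedStarDecomp} and \autoref{lemma:flowStarDecompCharn} --- stated there for modified flows --- apply to extended flows; since their proofs use only $f$, its natural pre-order, and the equal-or-adjacent relation between $v$ and $f(v)$, all of which persist under \autoref{def:eflow}, no change is needed.
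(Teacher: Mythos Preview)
Your proposal is correct and follows precisely the alternative proof the paper itself sketches immediately after the lemma: introduce the generalized operator $\Zzz_d^\theta = \e^{-i\theta Z\sox{d}/2}$ and the procedure $\fZzz^{d,\theta}$ with a \YZ-plane measurement on the mediating qubit, verify (as in \autoref{apx:constructGatesApx-2qubit}) that it realizes $\Zzz_d^\theta$, and then rerun the circuit-construction argument of \autoref{lemma:extend-DK06} with star geometries rooted at $v = f(v)$ now corresponding to $\Zzz_d^{\theta_v}$ gates. The paper also mentions, as its first-line proof, the purely stabilizer-formalism argument (each measurement observable anticommutes with $K_{f(u)}$ and commutes with the remaining generators, so the corrections steer to the $\ket{\splus}$ branch), but your circuit-construction route is exactly the one the paper develops in detail.
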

Again, as we described above for modified flows, this may be proven for extended flows by simulating the evolution of the state-space via the stabilizer formalism, steering the state after each measurement to the one which would arise upon selection of the $\ket{\splus\sfx\sfy_\theta}$ or $\ket{\splus\sfy\sfz_\theta}$ result as appropriate.

\paragraph{Generalization of $\Zzz_d$ operations using extended flows.}

An alternative proof would be to extend the analysis of the proof of \autoref{lemma:extend-DK06} by considering the operation
\begin{gather}
	\label{eqn:defZzzTheta}
	 	\Zzz^\theta_d
	\;=\;
		\exp\big(\!-i\theta Z\sox{d} / 2\big)
	\;=\;
		\e^{-i \theta\smash{[\,\underbrace{\scriptstyle Z \ox \cdots \ox Z}_{\text{\small $d$ times}}]\,}/2}	\;,
	\\[-1em]\notag
\end{gather}
generalizing the operation $\Zzz_d$ defined in \eqref{eqn:defZzzTheta}.
We may define a \oneway\ procedure
\begin{gather}
	\label{eqn:ZzzThetaProcedure}
		\fZzz^{d,\theta}_{v_1, \ldots, v_d}
	\;=\;
		\paren{\prod_{j = 1}^d \Zcorr{v_j}{\s[a]}} \Meas[\YZ]{a}{\theta} \paren{\prod_{j = 1}^d \Ent{av_j}} \New{a}	\;:
\end{gather}
the geometry for this procedure is illustrated in \autoref{fig:zzMany}.
\begin{figure}[t]
	\begin{center}
		\includegraphics{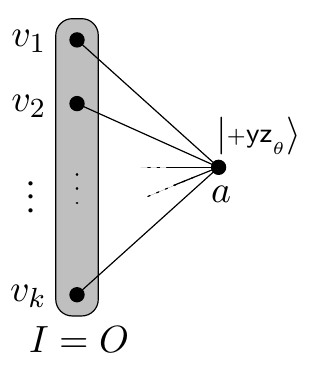}
	\end{center}
	\caption[Illustration of the geometry for the procedure $\fZzz_d^\theta$.]{\label{fig:zzMany}%
		Illustration of the geometry for the procedure $\fZzz_d^\theta$ defined in \eqref{eqn:ZzzThetaProcedure}, together with the ``default state'' $\ket{\splus\sfy\sfz_\theta}$ to be measured for the qubit $a$ in that procedure.
		(In the case that the measurement of $a$ yields $\ket{\sminus\sfy\sfz_\theta}$ instead, each element of $I = O$ is to be acted on with a $Z$ operation.)}
\end{figure}
Following the analysis of Section~3.8 of \cite{BB06}, we then have
\begin{gather}
	\label{eqn:ZzzThetaCompute}
		\fZzz^{d,\theta}_{v_1, \ldots, v_d}(\rho_S)
	\;=\;
		\big[ \Zzz^\theta_d \big]_{v_1, \ldots, v_d} \; \rho_S \; \big[ \Zzz^\theta_d \big]\herm_{v_1, \ldots, v_d}	\;.
\end{gather}
The procedure of \eqref{eqn:ZzzThetaProcedure} is a straightforward generalization of the procedure $\fZzz^d = \fZzz^{d,\pion 2}$\;.
In the same way that we map qubits $u$ such that $u = f(u)$ to a circuit $C_u = \Zzz_d$ acting on the neighbors of $u$ for a modified flow as a part of a circuit construction, we may map qubits such that $u = f(u)$ to a circuit $\tilde C_u = \Zzz^{\theta_u}_d$, where $\ket{\spm\sfy\sfz_{\theta_u}}$ is the default basis for the measurement of $u$.

\paragraph{Candidate circuits and star decompositions for extended flows.}

Just as geometries with modified flows correspond to circuits over the gate-set $\ens{J(\theta), \cZ, \Zzz_d}_{\theta \in \frac{\pi}{4}\Z,\, d\ge2}$ via \autoref{lemma:extend-DK06}, we may define a map from circuits over $\ens{\smash{J(\theta), \cZ, \Zzz^\theta_d}}_{\theta \in \frac{\pi}{4}\Z,\, d\ge2}$ to the \oneway\ model.
For an extended flow, we may apply the same definition for star geometries as for modified flows: the only extension required in the correspondence between star decompositions and extended flows is that qubits $u \in M_\YZ \setminus M_\XY$ are necessarily both the root and the center of a star geometry with respect to an extended flow.
It is then possible to show that star decompositions with respect to extended flows may be obtained with only minor modifications to the procedure \FindModStarDecomp\ defined in \Algorithm{alg:findModStarDecomp}, and its subroutine \RemoveStarGeom\ defined in \Algorithm{alg:removeStarGeom}.
In the former case, the set $M$ on \hyperref[line:mediatorInsert]{line~\ref*{line:mediatorInsert}} plays the role of $M_\YZ \subset M_\XY$ of qubits which may be both the root and center of a star geometry: it suffices them to replace $M$ with $M_\YZ$.
In \FindModStarDecomp, we must refine the condition on \hyperref[line:checkSuccessor]{line~\ref*{line:checkSuccessor}}, as there may not be qubits $w \in M_\YZ \setminus M_\XY$ which cannot be the center of a star geometry without also being the root: we must then revise the condition to require that
\begin{gather}
	\textit{$D$ has only one outbound arc $w \arc v$ \ANDALSO $v \in M_\XY$},
\end{gather}
the second condition of which was guaranteed for the special case of modified flows.
Applying the same analysis as for \Algorithm{alg:findModStarDecomp}, this modified algorithm then finds a star decomposition for $(G,I,O)$, with respect to the constraints described by $M_\XY$ and $M_\YZ$ on the possible relationship between the roots and centers of each star geometry.
In particular, as the proof of the extremal result of \autoref{lemma:extremal} carries forward for extended flows, the run-time of the modified algorithm is also $O(kn)$.

\subsubsection{Generalized certificates of unitarity based on stabilizers}
\label{sec:pauliFlows}

In the preceding section we observed one sense in which modified flows may be generalized, by observing that they essentially provide sufficient information to certify via the stabilizer formalism that a \oneway\ procedure performs a unitary transformation.
This is done by attributing some generator $K_v$ to each qubit $u \in O\comp$ to be measured, where the measurement on $u$ may be described by some observable $\sO_u$ which anticommutes with $K_v$, but commutes with $K_w$ for all $w \preceq u$.

Similar, but more general, criteria that extended flows were defined in~\cite{BKMP07} in \emph{generalized flows} and \emph{Pauli flows}, which extend further to arbitrary generators of the initial stabilizer group $\gen{K_w}_{w \in I\comp}$ (and thus to non-trivial products of the operators $K_w$) when $\XY$, $\YZ$, and $\XZ$-plane measurements are all allowed.
We may paraphrase the definition of Pauli flows as follows:
\begin{definition}
	\label{def:pauliFlow}
	Let $(G,I,O)$ be a geometry, and let $M_\XY, M_\YZ, M_\XZ \subset O\comp$ such that $O\comp = M_\XY \union M_\YZ \union M_\XZ$.
	Let $\cP$ denote the power-set function, and $\odd_G(S)$ the set of vertices in $G$ which are adjacent to an odd number of elements of a set $S \subset V(G)$.
	Then a \emph{Pauli flow} for $(G,I,O,M_\XY, M_\YZ, M_\XZ)$ is a tuple $(g,\preceq)$ consisting of a function $g: O\comp \to \cP(I\comp)$ and a partial order $\preceq$ which satisfy the following conditions:\footnote{%
		We may recover the definition of Pauli flows found in \cite{BKMP07} by defining a function $\lambda$ from $O\comp$ to $\ens{\pX, \pY, \pZ, \XY, \YZ, \XZ}$, such that
		\begin{itemize}
		\item 
			$\lambda(v) = \pX$ if $v \in M_\XY \inter M_\XZ$,
		\item
			$\lambda(v) = \pY$ if $v \in M_\XY \inter M_\YZ$,
		\item
			$\lambda(v) = \pZ$ if $v \in M_\XZ \inter M_\XZ$, and
		\item
			$\lambda(v) = P$ if $v \in M_{\!P}$ otherwise; 
		\end{itemize}
		the fact that $O\comp = M_\XY \union M_\YZ \union M_\XZ$ then implies that the conditions of \autoref{def:pauliFlow} and Definition~5 of \cite{BKMP07} are equivalent.}
	\begin{subequations}
	\begin{align}
			w \in g(v) \union \odd(g(v)) \;\implies\;& v \preceq w ,
		\\
			v \in \odd(g(v)) \setminus g(v)	\;\implies\;&	v \in M_\XY	\;,
		\\
			v \in g(v) \setminus \odd(g(v))	\;\implies\;& v \in M_\YZ	\;,
		\\
			v \in g(v) \inter \odd(g(v))	\;\implies\;& v \in M_\XZ	\;.
	\end{align}
	\end{subequations}
\end{definition}

\vspace{1ex}\noindent
From this definition, we recover extended flows if we require that the sets $g(u)$ be singleton sets for all $u$.
(The definition of generalized flows may be recovered by requiring that $M_\XY$, $M_\YZ$, and $M_\XZ$ be disjoint.)

Based on these conditions, we may obtain a further extension of \autoref{lemma:extend-DK06}, which we paraphrase from Theorem~4 of \cite{BKMP07} as follows:
\begin{lemma}
	\label{lemma:extend-DK06-PauliFlows}%
	Suppose $(g,\preceq)$ is a Pauli flow for $(G,I,O,M_\XY,M_\YZ,M_\XZ)$.
	Let $v \sim w$ denote the adjacency relation of $G$, and $\lambda: V(G) \to \ens{\XY,\YZ,\XZ}$ indicate the plane of measurement for each qubit so that for each qubit, $\lambda(u) = P$ only if $u \in M_P$.
	Then for any family $\ens{\theta_u}_{u \in O\comp}$ of measurement angles satisfying
	\begin{subequations}
	\begin{align}
		\begin{cases}[r@{}c@{}l]
			u \in \XZ &\inter& \YZ , ~~\text{or}									\\	
			u \in \XY \inter \XZ	~~&\text{and}&~~	\lambda(u)	=	\XY		\\
		\end{cases}
		\;\implies&\;\;	\theta_u \in \ens{0,\pi}	\;,
	\\[1ex]
		\begin{cases}[r@{}c@{}l]
			u \in \XY &\inter& \YZ , ~~\text{or}								\\	
			u \in \XY \inter \XZ	~~&\text{and}&~~	\lambda(u)	=	\XZ	\\
		\end{cases}
		\;\implies&\;\;	\theta_u \in \ens{\mpion 2, \pion 2}	\;,
	\end{align}
	\end{subequations}
	and for any linear order $\le$ extending $\preceq$, the measurement procedure
	\begin{gather}
	 	\sqparen{\ordprod[\le]_{u \in O\comp}
			\paren{\prod_{\substack{w \in \odd(g(u)) \\ w \ne u}} \Zcorr{w}{\s[u]}}
			\paren{\prod_{\substack{v \in g(u) \\ v \ne u}} \Xcorr{v}{\s[u]}}
			\Meas[\lambda(u)]{u}{\theta_u}
		} \Ent{G} \New{I\comp}
	\end{gather}
	performs a unitary transformation.
\end{lemma}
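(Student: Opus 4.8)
The statement is Theorem~4 of~\cite{BKMP07}, and the plan is to reprove it by the same stabilizer-formalism argument that underlies \autoref{lemma:extend-DK06} and \autoref{lemma:extend-again-DK06}, with the single corrective generators of those proofs replaced by products of generators indexed by the sets $g(u)$. Recall that after the preparation and entangling stages $\Ent{G} \New{I\comp}$ the state of the system lies in the stabilizer code $\sS_{G,I} = \gen{K_v}_{v \in I\comp}$, with $K_v = \Ent{G}(X_v) = X_v \prod_{w \sim v} Z_w$, and with the input state supplied on $I$ recorded in the logical operators $\Ent{G}(X_v), \Ent{G}(Z_v)$ for $v \in I$. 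For each $u \in O\comp$ define the Pauli operator
\begin{gather}
	B_u \;=\; \prod_{v \in g(u)} K_v \;=\; \paren{\prod_{v \in g(u)} X_v}\paren{\prod_{w \in \odd_G(g(u))} Z_w}\;;
\end{gather}
since $g(u) \subseteq I\comp$, this is an element of $\sS_{G,I}$, and the corrections $\Xcorr{v}{\s[u]}$ (for $v \in g(u)$, $v \ne u$) and $\Zcorr{w}{\s[u]}$ (for $w \in \odd_G(g(u))$, $w \ne u$) appearing in the displayed procedure are exactly $B_u$ with its action on the destroyed qubit $u$ omitted.

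First I would isolate the local algebraic fact that drives everything: the measurement observable on $u$, namely $\sO^{\XY,\theta_u}_u = \cos\theta_u\, X_u + \sin\theta_u\, Y_u$ and its analogues in the $\YZ$ and $\XZ$ planes, anticommutes with $B_u$. The restriction of $B_u$ to the qubit $u$ itself is $X_u$, $Z_u$, $Y_u$ (up to phase), or $\idop$ according to whether $u$ lies in $g(u) \setminus \odd_G(g(u))$, $\odd_G(g(u)) \setminus g(u)$, $g(u) \inter \odd_G(g(u))$, or neither; the three Pauli-flow membership conditions pin down the admissible measurement plane in the first three cases ($\YZ$, $\XY$, $\XZ$ respectively, up to a Pauli eigenbasis in a two-plane intersection), and the angle hypotheses of the statement are consumed precisely in those intersections, ruling out the one Pauli basis that would fail to anticommute with the relevant single-qubit Pauli factor. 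Carrying out this short case analysis — three membership cases against three planes, plus the Pauli-basis subtleties at intersection qubits, plus a separate check that a $u$ on which $B_u$ acts trivially is measured deterministically — establishes $\sO^{\lambda(u),\theta_u}_u$ anticommutes with $B_u$ in every case.

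Next I would run the inductive stabilizer simulation along a linear order $\le$ extending $\preceq$. The order condition $w \in g(u) \union \odd_G(g(u)) \implies u \preceq w$ guarantees that the qubit-support of $B_u$, apart from $u$, consists only of qubits not yet measured when $u$ is reached, so $B_u$ is still a stabilizer element of the current code at that point. The inductive invariant to maintain is that, just before the $j$-th measurement, the code is stabilized by the already-fixed measurement observables (carrying the realized outcome signs) together with suitably cleaned-up copies of the $K_v$ for $v$ not yet measured. Given the invariant, measuring $\sO^{\lambda(u),\theta_u}_u$ yields a uniformly random outcome (it anticommutes with $B_u$), and post-composing with $B_u$ restricted off $u$ — this is the classically controlled correction $\Xcorr{v}{\s[u]}$, $\Zcorr{w}{\s[u]}$ in the procedure — steers the post-measurement state onto the one that the ``$+$'' outcome would have produced; hence measure-and-correct on $u$ acts as a fixed unitary on the code space, and deleting $u$ leaves a stabilizer code of the same shape on the remaining qubits, re-establishing the invariant. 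After all of $O\comp$ has been processed the surviving system is supported on $O$, and the whole procedure factors as: encode the input on $I$ into $\sS_{G,I}$, then apply the composite of the per-qubit unitaries. A composite of unitary embeddings is a unitary embedding, so the \oneway\ procedure performs a unitary transformation.

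The main obstacle is the bookkeeping in the third paragraph: unlike the modified- and extended-flow cases, where the correction for $u$ is a single generator $K_{f(u)}$ whose persistence in the evolving stabilizer group is essentially immediate, here $B_u$ is a product of possibly many generators, and one must argue that this product — rather than each factor separately — survives as a stabilizer element with the correct qubit-support at the moment $u$ is measured, and that it commutes with every previously fixed measurement observable (which follows because all those observables are supported on qubits $\prec u$, disjoint from the support of $B_u$). Making this invariant precise, and checking that the cleaning-up of the $K_v$ on already-measured qubits spoils neither the anticommutation with $\sO^{\lambda(u),\theta_u}_u$ nor the commutation with earlier observables, is the one place where genuine care is required; the local case analysis of the second paragraph, while somewhat tedious, is routine.
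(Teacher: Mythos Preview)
Your proposal is correct and follows essentially the same route the paper indicates: the paper does not give a detailed proof of this lemma but simply remarks that it ``may again be obtained by considering how each measurement may be certified to transform the state-space unitarily conditioned on a preferred measurement result, where the transformation of the state-space and the corrections to apply are described by the stabilizer formalism,'' deferring to Theorem~4 of~\cite{BKMP07}. Your proposal is a faithful and considerably more detailed expansion of exactly that sketch, correctly identifying $B_u = \prod_{v \in g(u)} K_v$ as the relevant stabilizer element and the order condition as what keeps its support on not-yet-measured qubits.

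One small caveat: your treatment of the fourth case, where $u \notin g(u) \cup \odd_G(g(u))$ and $B_u$ acts trivially on $u$, is not quite right --- trivial action of $B_u$ on $u$ does not by itself force the measurement outcome to be deterministic (that would need the observable itself to lie in the stabilizer group). In the original Definition~5 of~\cite{BKMP07} this case is in fact excluded by the flow conditions (stated there with the reverse implications, from $\lambda(u)$ to membership in $g(u)$ and $\odd_G(g(u))$), so it never arises; the paper's paraphrase in \autoref{def:pauliFlow} is slightly loose on this point. Your argument goes through cleanly once you note that the Pauli-flow axioms guarantee $u$ falls into one of the three nontrivial cases.
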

This represents a further extension of the possible range of \oneway\ procedures which may be identified as performing unitary transformations; the proof of the above Lemma may again be obtained by considering how each measurement may be certified to transform the state-space unitarily conditioned on a preferred measurement result, where the transformation of the state-space and the corrections to apply are described by the stabilizer formalism.

In order to use this to provide semantics for such \oneway\ procedures in terms of unitary circuits, we may consider how the concept of star geometries generalize from extended flows to Pauli flows.
The natural approach, rather than attempt to retain the star-graph structure, would be to consider geometries with a single root $u$, multiple ``centers'' $v \in g(u)$, and whose output subsystem consists of the qubits $[g(u) \union \odd(g(u))] \setminus \ens{u}$.
However, it is not as clear what the circuit corresponding to such a geometry would be, nor what results may be shown for the decomposition of a geometry $(G,I,O)$ into such geometries, as these lack the simple local structure of star geometries.
How a Pauli flow structure may be exploited to produce more general semantic maps for \oneway\ procedures in unitary circuit models is an open question.

\subsection{A simple \oneway\ construction lacking a star decomposition}
\label{sec:extendedConstructions}

\autoref{lemma:extend-DK06-PauliFlows} above presents a \oneway\ procedure which performs a unitary embedding, but for which the techniques of \autoref{sec:semanticMaps} do not apply.
Another example is the procedure for performing reversals of a linear array of logical qubits presented in Section~IV.A of \cite{RBB03}.
\begin{figure}[t]
	\begin{center}
		\includegraphics[width=\textwidth]{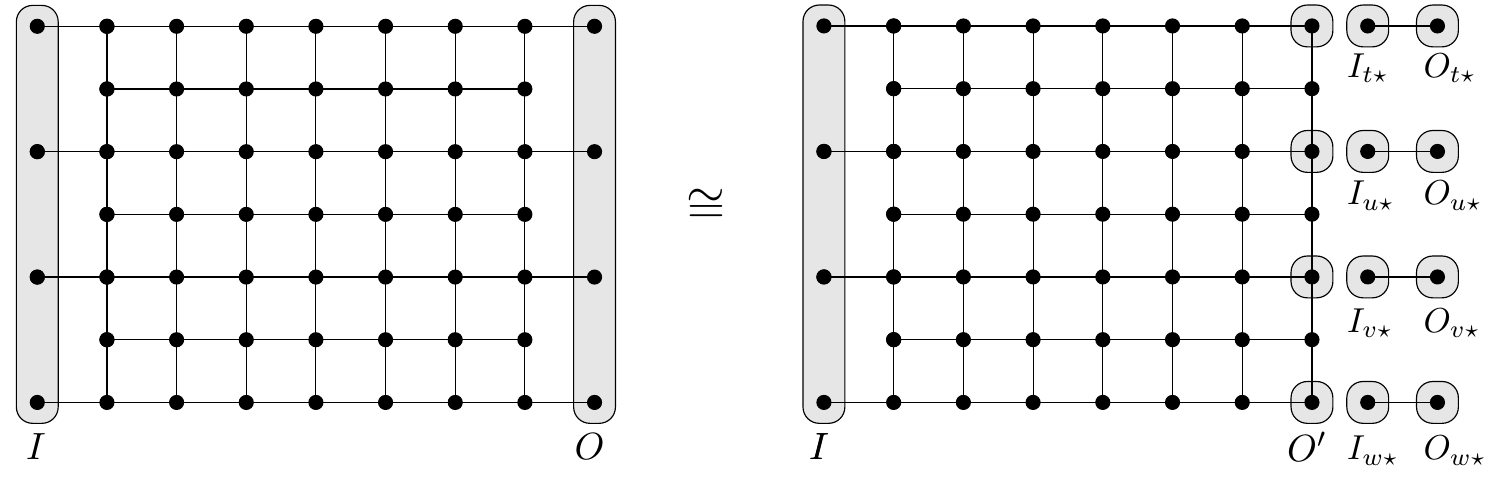}
	\end{center}
	\caption[Illustration of the geometry for a qubit-reversal procedure.]{\label{fig:swapRBB}%
		Illustration of the geometry for a qubit-reversal procedure of Section~IV.B of \cite{RBB03}, and a partial decomposition into star geometries.
		Each qubit in $V(G) \setminus O$ is to be measured in the \XY\ plane with default angle $0$: when this yields the measurement result $\s[u] = 0$ for each $u \in V(G) \setminus O$, the result is to perform a reversal of the vertical ordering of the logical qubits: input states $\ket{\psi}_{a,b,c,d}$ of the input subsystem (with $a$--$d$ ordered from top to bottom) are mapped to $\ket{\psi}_{w,v,u,t}$ of the output subsystem (with $t$--$w$ ordered from top to bottom).
		Notice that the residual geometry in the decomposition on the right has no maximal star geometries, as every element of $O'$ is adjacent to more than one element of $V(G) \setminus O'$, and every element of $V(G) \setminus O'$ is adjacent to at least one other element of $V(G) \setminus O'$.
		Thus, this geometry has no star decomposition, and therefore no modified flow.}
\end{figure}
\autoref{fig:swapRBB} illustrates the geometry underlying this \oneway\ procedure: the entire procedure may be obtained from the specification that each qubit in $O\comp$ is to be measured in the $\XY$ plane with the default angle $0$, and noting that as this represents selection of the $+1$ eigenstate of the observables $X_u$ for $u \in O\comp$, from which suitable correction operations may be obtained by the stabilizer formalism.
However, as \autoref{fig:swapRBB} also illustrates, this procedure has no star decomposition, and so cannot be mapped to its meaning as a reversal of a linear array of qubits using the techniques of \autoref{sec:semanticMaps}.

Despite this, we may gain some information about the qubit-reversal pattern using modified flows by an appropriate extension of the input and output systems, as exhibited in~\cite{CLM2005}.
It is easy to see that if we include the left-most element of each row into $I$, and the right-most into $O$, the resulting altered geometry has a flow with a successor function which maps each vertex to the one immediately to the right.
Chains in the natural pre-order $\preceq$ have monotonically increasing column numbers: and therefore $\preceq$ is antisymmetric.
Given that the default measurement angle for each qubit is zero, we may obtain a candidate circuit for this procedure, as illustrated in \autoref{fig:swapRBBredux}.
\begin{figure}[pt]
	\vspace{0.5em}
	\begin{center}
		~\hfill\includegraphics{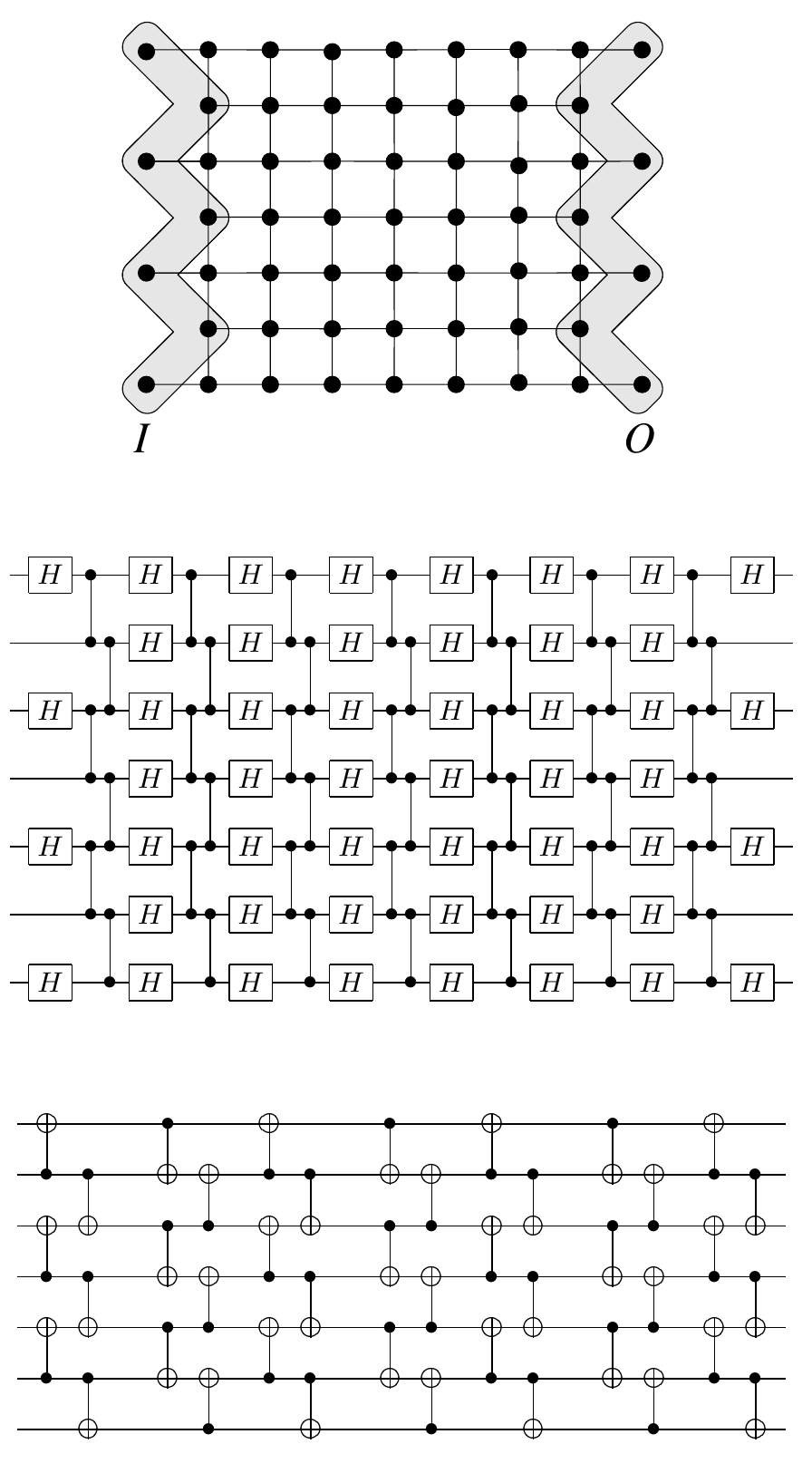}\hfill~
	\end{center}
	\caption[Illustration of the qubit-reversal procedure with augmented input/output subsystems, and corresponding candidate circuits.]{\label{fig:swapRBBredux}%
		Illustration of the qubit-reversal procedure of Section~IV.B of \cite{RBB03}, with augmented input/output subsystems, and corresponding candidate circuits (represented in the middle in the gate-set $\ens{H,T,\cZ}$, and in the bottom with \textsc{cnot} gates).}
	\vspace{0.5em}
\end{figure}
The final circuit illustrated there can easily be described as a reversible classical computation: for seven input bits $(v_0, \ldots, v_6)$, the computation performed by that circuit is
\begin{align}
	\lefteqn{(v_0,\; v_1,\; v_2,\; v_3,\; v_4,\; v_5,\; v_6)}\qquad\qquad&\notag\\
	\mapsto&\quad
	(v_6,\; v_4 \oplus v_5 \oplus v_6,\; v_4,\; v_2 \oplus v_3 \oplus v_4,\; v_2,\; v_0 \oplus v_1 \oplus v_2,\; v_0)	\;.
%
\end{align}
If we apply this circuit instead to qubits, with the odd-indexed qubits initialized to the $\ket{\splus}$ state (\ie\ a uniform superposition of the bit-values described above), those qubits remain in the $\ket{\splus}$ state, and the even-indexed qubits are reversed, as required.

We may then understand the qubit-reversal pattern in~\cite{RBB03} as corresponding to a circuit which prepares auxiliary qubits in the $\ket{\splus}$ state, and subsequently removes them by an $X$ observable measurement.
The fact that we are able to obtain the circuit in this case is possible because the highly structured geometry makes it easier to guess which qubit measurements correspond to terminal measurements in a unitary circuit.
Without exploiting this structure, however, there are no known approaches to \emph{algorithmically} provide semantics for such a measurement pattern in terms of unitary circuits.

\section{Review and open problems}
\label{sec:flowsReview}

In this article, we have presented an algorithm for obtaining the ``meaning'' of \oneway\ procedures $\fP$ in terms of unitary circuit models, using a combinatorial analysis of the structures underlying procedure $\fP$ arising from standard constructions of \oneway\ procedures.
In doing so, we have identified a correspondence between elementary combinatorial structures (\emph{star geometries}) and elementary circuits (\emph{star circuits}) which, given a \oneway\ procedure whose geometry which may be decomposed into these elementary structures, allows the construction of a corresponding circuit.
Together with a characterization of the dependencies of operations on classical measurement outcomes, this enables us to efficiently construct a circuit $C$ for a \oneway\ procedure $\fP$, and to determine whether $C$ and $\fP$ perform the same operation, when $\fP$ has input and output subsystems of the same size.

We have used the gate model $\ens{H,T,\cZ}$ as a reference model of circuits, using the related gate-sets $\cB_\DKP = \ens{J(\theta), \cZ}_{\theta \in \frac{\pi}{4}\Z}$ and $\cB_\RBB = \ens{J(\theta), \Zz}_{\theta \in \frac{\pi}{4}\Z}$ to describe constructions of \oneway\ procedures.
However, essentially all of the results of this article generalize to $\ens{H,R(\theta),\cZ}_{\theta \in \R}$, where
\begin{gather}
	R(\theta) \;=\; \e^{-i \theta Z / 2} \;=\;
	\begin{bmatrix}
		\e^{-i\theta/2}	&	\;\;0\;\;	\\	\;\;0\;\;	&	\e^{i\theta/2}
	\end{bmatrix}\,:
\end{gather}
we then have $J(\theta) = HR(\theta)$, and the gate sets $\ens{H,R(\theta),\cZ}_{\theta \in \R}$\,, $\ens{J(\theta), \cZ}_{\theta \in \R}$\,, and $\ens{J(\theta), \Zz}_{\theta \in \R}$ are all parsimonious in the sense of \autoref{def:parsimonious}.

There are three natural open problems: how this analysis may be extended to the case of unitary embeddings in general, how this may be extended to admit an analysis for quantum circuits with ancillas (qubits which are prepared, operated on, and subsequently removed without creating mixtures of states in the other qubits), and how we may extend this analysis to include stabilizer certificates such as Pauli flows.

\paragraph{General unitary embeddings.}
The problem presented by the case of a unitary embedding with $\card I < \card O$ is that a star decomposition of the geometry may not be unique.
Consequently, there may be many modified flow functions against which the dependencies of a \oneway\ procedure must in principle be tested.
We may relieve the situation if we can use the dependencies themselves to narrow the search space for a single modified flow; this seems a likely approach to extending the analysis of this article to a semantic map for the DKP and (simplified) RBB constructions, without further constraints.

\paragraph{Quantum circuits with ancillas.}
The qubit-reversal procedure described in \autoref{sec:extendedConstructions} presents a situation where the introduction and removal of ancillas can be used to obtain a unitary circuit which corresponds to a \oneway\ procedure.
Combinatorial tools for the ``detection'' of instances in which a qubit may represent the state of an ancilla qubit would yield more general tools for presenting the semantics of \oneway\ procedures in terms of unitary circuits, possibly permitting a semantic map to be defined for a construction employing all of the sub-procedures described in Section~IV of \cite{RBB03}, and for \oneway\ procedures in general arising from circuits involving partial trace operations on qubits which are disentangled from the rest of the system.

\paragraph{Extension to more general stabilizer certificates.}
As noted in \autoref{sec:eflow} and \autoref{sec:pauliFlows}, modified flows essentially represent a means via the stabilizer formalism to certify that, for a given geometry $(G,I,O)$, there is an efficiently describable \oneway\ procedure having that geometry $(G,I,O)$ and which performs a unitary transformation.
We can then consider how to compare an arbitrary  procedure with geometry $(G,I,O)$ to this canonical geometry.
In \autoref{sec:eflow}, we showed how such certificates may be extended to more general gate-sets: how we might exploit the definition of Pauli flows in \autoref{def:pauliFlow} to obtain more general semantic maps is an open problem.

\paragraph{Acknowledgements.}

This work is based substantially on results and remarks made in \cite{BeaudPhD}, which was written while the author was affiliated with the Institute for Quantum Computing at the University of Waterloo, and funded in part by MITACS and DTO-ARO.
The extension of the results to the simplified RBB construction was written during the author's current affiliation with the Unitersit\"at Potsdam, and was funded in part by EPSRC.
Finally, the author would to thank Ashwin Nayak, Andrew Childs, Michele Mosca, Debbie Leung, John Watrous, and Dan Browne for remarks on the presentation of \cite{BeaudPhD}.


\bibliographystyle{unsrt}
\bibliography{circConversion}

\appendix
\section{Stable-index tensor notation}
\label{apx:stableIndex}

\subsection{Sum-over-paths interpretation of stable-index tensors}
\label{apx:stableIndexPathIntegral}

The informal semantics of stable-index tensor notation is to provide a simplified description of circuits in terms of computational paths, as follows.

Stable-index notation for circuits is partially inspired by the \emph{path-labelling} of circuits by Dawson \etal~\cite{DHHMNO04}, which is explicitly concerned with descriptions of circuits in terms of computational paths: the tensor indices of the stable-index expressions correspond to labels of ``wire segments'' of~\cite{DHHMNO04}, which are separated by Hadamard gates (or more generally, other operations which do not preserve the standard basis).
For a given operator, a stable index $s$ then represents a qubit for which the standard basis states $\ket{0}$ and $\ket{1}$ remain unchanged (\ie\ the density operators $\ket{s}\bra{s}$ are unaffected) by the operation for $s \in \ens{0,1}$.
Over a single ``wire-segment'', the reduced state of such a qubit will then remain unchanged, even if one measures this state in the standard basis at the beginning of the wire-segment.

As with tensors in Einstein notation, we may describe a ``path'' as a mapping of each index (whether bound or free) to some fixed value; the amplitude associated with a path is the product of the corresponding coefficient of each factor involved in the expression.
The convention of summing over bound indices (whose possible values may be thought to represent intermediate configurations which were not observed, and therefore not distinguished from one another) then represents a summation over indistinguishable paths.
The simplification introduced by stable-index expressions corresponds to explicit recognition when operations leave some of the the ``dynamical variables'' associated to a path (represented by tensor indices) unaffected.

\subsection{Constructing stable-index representations}
\label{apx:stableIndexConstruction}

We may automatically generate a stable-index representation $S$ for a unitary circuit from a representation as a composition of operators $C$ on specified qubits in an efficient manner.
We consider two distinct methods: one which is appropriate for a network without any topological constraints on the operators, and one which is motivated by the constraints described in \autoref{sec:topologicalConstraints}.

\subsubsection{Construction for topologically unconstrained circuits}
\label{apx:stableIndexConstructionUnconstrained}

Let $L$ be the set of qubit labels in $C$.
We may then transliterate the terms in $C$, defining a set of tensor indices incrementally as we do so, as follows:
\begin{enumerate}
\item
	For each qubit $v \in C$, we define the tensor index $v_0$\,.
	We designate $v_0$ as \emph{the current tensor index} for $v$.
	(Throughout the transliteration of the circuit, the index designated as ``current'' for a qubit will be updated.)
	We also start with an expression $S$ consisting of the null sequence of operators: this may be regarded as representing a stable-index representation for the the empty product (the identity operator).

\item
	We translate each operator $U_{a,b,\ldots}$ in $C$ taken in sequence, 
	as follows.
	For each qubit $v \in \ens{a,b,\ldots}$ on which $U$ acts (again, in sequence), let $v_j$ be the current index for $v$, and perform the following:
	\begin{romanum}
	\item
		If $U_{a,b,\ldots}$ introduces $v$ as a ``fresh'' qubit (or a new qubit which is possibly entangled with the others), then $v_0$ is still the current index for $v$.
		We then use $v_0$ as the advanced index corresponding to $v$, without a corresponding deprecated index.

	\item
		\label{item:constructionNonStableIndex}%
		If $U_{a,b,\ldots}$ not preserve each of the standard basis states on $v$ (that is, if $U_{a,b,\ldots} \ket{s}\bra{s}_v U_{a,b,\ldots}\herm \ne \ket{s}\bra{s}_v$ for some $s \in \ens{0,1}$), then we define a new tensor index $v_{j+1}$ for $v$.
		We then use $v_j$ as the deprecated index corresponding to $v$, and $v_{j+1}$ as the corresponding advanced index.
		We update the current index for $v$ to $v_{j+1}$.

	\item
			If $U_{a,b,\ldots}$ preserves each of the standard basis states on $v$ (that is, for each $s \in \ens{0,1}$ we have $U_{a,b,\ldots} \ket{s}\bra{s}_v U_{a,b,\ldots}\herm = \ket{s}\bra{s}_v$), then we use $v_j$ as a stable index corresponding to $v$, without defining any new indices.\footnote{%
			In the definition of stable-index notation in \autoref{sec:stableIndexNotation}, all stable indices must occur as a contiguous block at the beginning of the linear ordering of the qubits operated on.
			Generalizing this definition, we may easily alternate between sequences of stable indices and sequences of advanced and deprecated indices.
			If we choose not to generalize stable-index notation in this way, we may only represent a qubit $v$ with a stable index if all preceding qubits on which $U$ operates are also represented by stable indices; otherwise, we must apply case \ref*{item:constructionNonStableIndex} above regardless of how $U$ acts on $v$.}

	\item
		If $U_{a,b,\ldots}$ removes a qubit $v$ (\eg\ describing an operation conditioned on a a particular measurement outcome), then we use $v_j$ as a deprecated index corresponding to $v$, without a corresponding advanced index.
	\end{romanum}
	This produces sequences of stable indices $\vec s$ and sequences of matched advanced and deprecated indices $\vec a$ and $\vec d$ (possibly with placeholders); we then translate $U_{a,b,\ldots}$ as $U\pseu[ \vec s : \vec a / \vec d ]$, and append it to $S$.

	\item
		Having translated each of the terms in $C$, the sequence $S$ is a stable-index tensor expression corresponding to $C$.
\end{enumerate}
Note that in particular, every new index which is advanced by a gate $U\pseu[\vec s:\vec a/\vec d]$ must be distinct from any other index being advanced, as well as from all of the indices which have occurred to that point.
This ensures that the resulting product is acyclic if the original network $C$ was acyclic.

Note that the index successor function $f$ (\autoref{def:indexSuccessorFn}) can be easily described from this construction: $f$ has as a domain all those indices $v_j$ such that $v_{j+1}$ is well-defined, and $v_{j+1} = f(v_j)$ for all such indices.

\subsubsection{Construction suitable for use in the simplified RBB construction}
\label{apx:stableIndexConstructionConstrained}

Motivated by the constraints described in \autoref{sec:topologicalConstraints} on \oneway\ procedures whose entanglement graph is an induced subgraph of the grid, we also present a construction for stable-index expressions for circuits with a linear-nearest neighbor topology with the gate set $\cB_\RBB = \ens{J(\theta), \Zz}_{\theta \in \frac{\pi}{4}\Z}$, in which the two-qubit gates
\begin{romanum}
\item
	act ``stably'' on both their indices,

\item
	can only act on indices of equal ``depth'' from the input --- that is, can only act on pairs of indices $v_j$ and $w_j$, where $v_j = f^j(v_0)$ and $w_j = f^j(w_0)$ for some indices $v_0, w_0 \notin \dom(f)$, where $f$ is the successor function of the stable-index expression; and

\item
	can only act on a pair of indices $v_j, w_j$ if there is no two-qubit gate acting on $v_{j-1}$ and $w_{j-1}$.
\end{romanum}
Given a circuit $C$ using the gates of $\cB_\RBB$, we may produce such a stable-index representation (and at the same time explicitly construct its interaction graph, applying \autoref{def:interactionHypergraph} in the special case of interactions of at most two qubits) as follows, adapting the algorithm of \autoref{apx:stableIndexConstructionUnconstrained} above.

\begin{enumerate}
\item 
	For each qubit $v$ acted on by $C$, we set the current index to $v_0$.
	We also initialize the graph $G$ to the totally disconnected graph whose vertices are the indices $v_0$\,.
	We start with an expression $S$ consisting of the null sequence of operators: this may be regarded as representing a stable-index representation for the the empty product (the identity operator).

\item
	We translate each operator $J(\theta)_a$ or $\Zz_{ab}$ in $C$ taken in sequence, as follows.
	For each qubit $v$ on which the gate acts, let $v_j$ be the current index for $v$.
	For a gate $J(\theta)$, we append $J(\theta)\pseu[:v_{j+1}/v_j]$ to $S$, add the index $v_{j+1}$ and the edge $v_j v_{j+1}$ to $G$, and set the current index to $v_{j+1}$.
	Otherwise, for a gate $\Zz_{vw}$, we perform the following operations:
	\begin{romanum}
	\item
		We ensure that the constraints on successive applications of two-qubit gates is satisfied as follows.
		If the current indices for $v$ and $w$ are $v_j$ and $w_j$ for a common value of $j$, and there is an edge of the form $v_{j-1} w_{j-1}$ in $G$, we append the expression
		\begin{align}
				J(0)\pseu[:v_{j+2}/v_{j+1}] J(0)\pseu[:v_{j+1}/v_j]
				J(0)\pseu[:w_{k+2}/w_{k+1}] J(0)\pseu[:w_{k+1}/w_k] 			 
		\end{align}
		to $S$, and add the edges $v_j v_{j+1}$\,, $v_{j+1}v_{j+2}$\,, $w_jw_{j+1}$\,, and $w_{j+1}w_{j+1}$ (along with the indicated indices) to $G$.
		We then set the current indices of $v$ and $w$ to $v_{j+2}$ and $w_{j+2}$ respectively.

	\item
		We then recursively perform the following, letting $v_j$ and $w_k$ stand for the current indices of $v$ and $w$ at each iteration:
		\begin{itemize}
		\item
			If $\abs{k - j} \ge 3$ or $\abs{k - j} = 2$, we append either
			\begin{align}
				J(0)\pseu[:v_{j+2}/v_{j+1}] J(0)\pseu[:v_{j+1}/v_j]
			&&
				\text{or}
			&&
				J(0)\pseu[:w_{k+2}/w_{k+1}] J(0)\pseu[:w_{k+1}/w_k] 			 
			\end{align}
			to $S$, according to whether $k > j$ or $j > k$ respectively, adding the appropriate indices and edges to $G$.
			We set the current index either of $v$ or $w$ to $v_{j+2}$ or $w_{k+2}$ accordingly, and repeat for the new current indices.

		\item
			If $\abs{k - j} = 3$ or $\abs{k - j} = 1$, we append either
			\begin{align}
			\begin{split}
				&J(\pion2)\pseu[:v_{j+3}/v_{j+2}] J(\pion2)\pseu[:v_{j+2}/v_{j+1}] J(\pion2)\pseu[:v_{j+1}/v_j]
			\quad
				\text{or}
			\\[1ex]
				&J(\pion2)\pseu[:w_{k+3}/w_{k+2}] J(\pion2)\pseu[:w_{k+2}/w_{k+1}] J(\pion2)\pseu[:w_{k+1}/w_k]			 
			\end{split}
			\end{align}
			to $S$, according to whether $k > j$ or $j > k$ respectively, adding the appropriate indices and edges to $G$.
			We set the current index either of $v$ or $w$ to $v_{j+3}$ or $w_{k+3}$ accordingly, and repeat for the new current indices.

		\item
			If $j = k$, we append $\Zz[v_j , w_j]$ to $S$, add the edge $v_j w_j$ to $G$, and stop.
		\end{itemize}
	\end{romanum}

	\item
		Having translated each of the terms in $C$, the sequence $S$ is a stable-index tensor expression corresponding to $C$, and $G$ its interaction graph.
\end{enumerate}
In the above construction, note that the final index $v_\ell$ for each qubit $v$ may not have the same value of $\ell$; and that such stable-index expressions may both begin and end with operations $\Zz\pseu[v_j, w_j]$.
These degrees of freedom may complicate the procedure of composing two such circuits.
These issues may both be mitigated by appending appropriate products of $J(\theta)$ operators which evaluate to the identity, of differing lengths for each qubit; this can be done to both uniformize the value $\ell$ of the final indices $v_\ell$, and also ensure that any $\Zz$ operation is a distance of at lest $2$ from the end of the circuit.
However, for the purposes of this article, such operations will not be necessary, as we do not consider the composition of two such expressions.

\subsection{Congruence of circuits and isomorphic stable-index expressions}
\label{apx:stableIndexCongruence}

In this section, we prove \autoref{thm:stableIndexCongruence} (page~\pageref*{thm:stableIndexCongruence}), and expand on the role of parsimonious sets of gates in this result.

\begin{lemma}
	\label{lemma:stableIndexIsomorphicCongruent}
	Let $C_1$ and $C_2$ be two sequences of unitary operators on a common set of qubits, composed from a common set of unitaries.
	If the stable[index tensor representations of $C_1$ and $C_2$ are isomorphic, then $C_1$ and $C_2$ are congruent up to re-ordering of commuting operators.	
\end{lemma}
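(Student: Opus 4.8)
The plan is to turn the combinatorial isomorphism $(\lambda,\tau)$ into a relabelling of qubits, after which $C_1$ and $C_2$ become two linear extensions of a single partial order on a common multiset of ``decorated terms'', and then to appeal to the elementary fact that two linear extensions of a finite poset are connected by transpositions of adjacent incomparable elements.

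First I would check that $\lambda$ intertwines the index successor functions of $S_1$ and $S_2$ (\autoref{def:indexSuccessorFn}). Since $\tau$ carries each term $U\pseu[\vec s:\vec a/\vec d]$ of $S_1$ to $U\pseu[\lambda(\vec s):\lambda(\vec a)/\lambda(\vec d)]$, with the same operator $U$, the same arrangement of the index slots, and placeholders sent to placeholders, the within-term pairing of the $i$-th deprecated slot with the $i$-th advanced slot is preserved; hence $f_2\circ\lambda=\lambda\circ f_1$ on $\dom(f_1)$, and $\lambda$ sends placeholders of $S_1$ to placeholders of $S_2$. Therefore $\lambda$ maps the classes of index labels under the relation generated by $v\sim f(v)$ --- the ``qubit classes'' --- bijectively onto those of $S_2$, and this induces a bijection $\pi$ between the qubit sets of $C_1$ and $C_2$; moreover, because the indices occurring in a term $U$ lie in the classes of the qubits that $U$ acts on, $\tau(U)$ acts on the $\pi$-images of those qubits. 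Relabelling the qubits of $C_1$ by $\pi$, we may assume $C_1$ and $C_2$ act on the same qubit set and that $\lambda$ now fixes each qubit class setwise; and since the index labels of a stable-index representation are only bookkeeping, we may rename the indices of $S_1$ so that $\lambda$ is the identity. After these reductions $S_1$ and $S_2$ consist of the same multiset $\mathcal T$ of decorated terms, and $C_1$, $C_2$ are two orderings of $\mathcal T$.

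Next I would observe that, read as operator products, $C_1$ and $C_2$ are both linear extensions of one partial order $\preceq$ on $\mathcal T$: for each index $v$ the composition convention of \autoref{sec:stableIndexNotation} forces the term advancing $v$ to precede every term using $v$ stably, which in turn precede the term deprecating $v$; since each index is advanced and deprecated at most once, the transitive closure of these relations is acyclic, every valid operator product with the given decorations respects it, and by the construction of \autoref{apx:stableIndexConstructionUnconstrained} the given sequence order is such a valid order. The key local point is then: if $A,B\in\mathcal T$ are $\preceq$-incomparable, then $A$ and $B$ commute as operators. Indeed, if $A$ and $B$ act on a common qubit, then chasing the relations above along the index chain of that qubit shows they must use one and the same index of it; and that index cannot be advanced or deprecated by either $A$ or $B$ (that would make them comparable), so it is stable in both. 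Hence each of $A,B$ is diagonal on the qubits they share while acting on disjoint registers otherwise; writing $A=\sum_{\vec s}\ket{\vec s}\bra{\vec s}_P\ox A_{\vec s}$ and $B=\sum_{\vec s}\ket{\vec s}\bra{\vec s}_P\ox B_{\vec s}$ over the shared qubit set $P$ gives $AB=BA$.

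Finally I would apply the standard fact that any two linear extensions of a finite poset are connected by a sequence of transpositions of adjacent incomparable elements; by the local point, each such transposition swaps two commuting operators, so $C_1$ and $C_2$ are congruent. I expect the main obstacle to lie in the first step --- verifying carefully that $\lambda$ respects the qubit-class structure in the presence of placeholders and when an index is used repeatedly as a stable index --- rather than in the order-theoretic argument, which is routine. (Parsimony of the gate set plays no role here; it enters only in the converse implication of \autoref{thm:stableIndexCongruence}.)
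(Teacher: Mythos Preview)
Your proposal is correct and follows essentially the same approach as the paper: both reduce via the index relabelling to the situation where $S_1$ and $S_2$ are the same multiset of terms in different orders, both show that two terms overlapping only in stable indices commute, and both conclude by connecting the two orderings through transpositions of such terms. Your version is a bit more explicit than the paper's, in that you isolate the partial order $\preceq$ and invoke the standard fact about linear extensions, whereas the paper argues directly that any transposition of adjacent terms which keeps the expression well-formed must swap commuting gates and then inducts on the permutation; but the substance of the key local step is identical.
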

\begin{proof}
	Consider the coarsest equivalence relation $\cong$ on circuits which consist of permutations of the gates of $C_1$, such that $C \cong C'$ if $C$ differs from $C'$ by a transposition of commuting gates.
	By definition, the equivalence classes of $\cong$ are sets of circuits which are congruent up to re-ordering of commuting operators.
	Consider stable-index tensor expressions $S$ and $S'$ arising from two unitary circuits $C$ and $C'$, where $S$ acts on the same index set as $S'$, and $S$ differs from $S'$ only by a transposition of two terms $U_1\pseu[\vec s_1: \vec a_1/\vec d_1]$ and $U_2\pseu[\vec s_2:\vec a_2/\vec d_2]$, where without loss of generality the former precedes the latter in $C$, and vice-versa in $C'$.
	Then, the corresponding gates $U_1$ and $U_2$ in $C$ occur with $U_1$ preceding $U_2$, and occur in $C'$ in the opposite order.
	Consider the index sets for these two operations in the stable-index tensor expression:
	\begin{itemize}
	\item
		If the sequences of indices $(\vec s_1; \vec a_1; \vec d_1)$ and $(\vec s_2; \vec a_2; \vec d_2)$ do not overlap, the operations $U_1$	and $U_2$ act on disjoint sets of qubits, and so they commute.
	
	\item
		Suppose the sequences of indices $(\vec s_1; \vec a_1; \vec d_1)$ and $(\vec s_2; \vec a_2; \vec d_2)$ overlap, and let $v$ be an index occurring in both sequences.
		In the construction of $S$, because $v$ occurs in $U_2\pseu[\vec s_2:\vec a_2/\vec d_2]$, it cannot be a deprecated index in $U_1\pseu[\vec s_1:\vec a_1/\vec d_1]$\,; and because it occurs in $U_1\pseu[\vec s_1:\vec a_1/\vec d_1]$, it cannot be an advanced index in $U_2\pseu[\vec s_2:\vec a_2/\vec d_2]$.
		By the construction of $S'$, we similarly have that $v$ cannot be a deprecated index in $U_2\pseu[\vec s_2:\vec a_2/\vec d_2]$ or an advanced index in $U_1\pseu[\vec s_1:\vec a_1/\vec d_1]$.
		Then, $v$ is a stable index of both terms.
	\end{itemize}
	Therefore, the only qubits $v$ on which $U_1$ and $U_2$ both act are those whose standard basis states are preserved by both $U_1$ and $U_2$.
	Then, $U_1$ and $U_2$ commute, in which case $C \cong C'$.

	Let $S_1$ and $S_2$ be the stable-index tensor representations of $C_1$ and $C_2$, where the isomorphism is given by an index relabelling map $\lambda$ and mapping of the terms $\tau$.
	Let $S'_2$ be the representation obtained by applying $\lambda$ to all of the indices in $S_2$\,: then $S_1$ and $S'_2$ consist of a product of the same terms in different orders.
	Without loss of generality, we will then suppose that $S_2$ and $S_1$ differ only by a reordering of terms.
	Because the terms of $S_1$ differ from those of $S_2$ by a permutation, by induction on the decomposition of this permutation into transpositions, we therefore have $C_1 \cong C_2$.
\end{proof}
Whatever set of gates is used in a stable-index tensor expressions, rearranging the terms in the expression not only preserves the meaning (by summation over indices which are both advanced and deprecated) as a unitary operator, but also preserves the relevant structural information about the circuit --- preserving the necessary order information in the way described following \autoref{def:indexSuccessorFn}, while discarding such information for each pair of commuting operators.
As a consequence, we may regard permuting terms of such an expression as a way of exploring the elements of the congruence class of a circuit.

However, in the case where two gates commute without preserving the standard basis states of their common operands, we may obtain congruent circuits which do not have isomorphic tensor index expansions:
\begin{example}
	Let $U \in \U(2)$ be a non-diagonal operator, and define the operation $\ctrl U$ as in \eqref{eqn:controlFunctor}.
	Consider the two circuits $\ctrl U_{c,b} \; \ctrl U_{a,b} \; H_a$ and $\ctrl U_{a,b} \; \ctrl U_{c,b} \; H_a$ on three qubits $a,b,c$ (where the subscripts here denote the linear order of the qubits as operands of $\ctrl U$ and $H$): these have the respective stable-index expressions
	\begin{align}
			\ctrl U\pseu[c:b_3/b_2] \ctrl U\pseu[a_1:b_2/b_1] H\pseu[:a_1/a_0]
		&&
			\text{and}
		&&
			\ctrl U\pseu[a_1:b_3/b_2] \ctrl U\pseu[c:b_2/b_1] H\pseu[:a_1/a_0]	\;.
	\end{align}
	These two expressions are non-isomorphic: in the first expression, the leftmost factor contains the indices $c$ (which is neither advanced nor deprecated in the entire expression) and $b_3$ (which is not deprecated in the entire expression); the second expression does not contain any factor with these properties.
\end{example}
In the example above, the inequivalence of the two stable-index expressions is due to the fact that the two operations $\ctrl U_{a,b}$ and $\ctrl U_{b,c}$ preserve the eigenbasis of $U$ on $b$, rather than the basis $\ket{0}, \ket{1}$.
The restrictions imposed on parsimonious sets of unitary operations prevent precisely this problem from arising, which allows us to prove the following:
\begin{lemma}
	\label{lemma:stableIndexCongruentIsomorphic}
	Let $C_1$ and $C_2$ be two unitary circuits composed from a common parsimonious set of unitaries.
	If $C_1$ is congruent to $C_2$ up to re-ordering of commuting operators, then the stable-index representations of $C_1$ and $C_2$ will be isomorphic.
\end{lemma}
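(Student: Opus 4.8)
The plan is to run the proof of \autoref{lemma:stableIndexIsomorphicCongruent} in reverse, reducing everything to a single elementary move. By the definition of congruence (see the footnote preceding \autoref{sec:stableIndexNotation}), $C_2$ is obtained from $C_1$ by a finite sequence of moves, each of which is either a transposition of two adjacent commuting gates, or a relabelling of the qubits by a bijection. Isomorphisms of stable-index expressions compose (the composite of the two index-label bijections with the composite of the two term-bijections is again an isomorphism), so it suffices to show that each individual move induces an isomorphism between the stable-index representation of the circuit before the move and the one after it; chaining these isomorphisms then witnesses that the representations of $C_1$ and $C_2$ are isomorphic.

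The relabelling case is immediate: running the construction of \autoref{apx:stableIndexConstructionUnconstrained} on the circuit with qubits relabelled by a bijection $\pi$ produces exactly the same sequence of terms as on the original circuit, except that every index $v_j$ is replaced by $\pi(v)_j$. The map $\lambda(v_j) = \pi(v)_j$ on index labels, together with the identity map on terms, is then an isomorphism.

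The substantive case is a transposition of two adjacent gates $U_1, U_2$, where $U_1$ immediately precedes $U_2$ in $C$ and the two occur in the opposite order in the transposed circuit $C'$. First I would dispose of the trivial subcase $U_1 = U_2$ (where the transposition does not change the sequence of gates at all), and then split on whether $U_1$ and $U_2$ act on a common qubit. If they act on disjoint sets of qubits, processing $U_1$ updates current indices only at $U_1$'s non-stable operands and processing $U_2$ only at $U_2$'s, and these operand sets are disjoint; if they act on a common qubit and are distinct, then since they commute, \autoref{def:parsimonious} forces \emph{both} of them to be diagonal, so neither advances nor deprecates any index at all. In either situation, the ``current index'' bookkeeping after processing $\ens{U_1,U_2}$ is independent of the order in which the two gates are processed, and the term produced for each $U_i$ — whose stable, advanced, and deprecated indices are read off from the current indices at the moment $U_i$ is processed — is the same in $C$ and in $C'$; the construction then proceeds identically on the remaining gates. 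Hence the stable-index representation of $C'$ is obtained from that of $C$ by swapping the two terms corresponding to $U_1$ and $U_2$ and nothing else, and the identity on index labels together with that transposition of terms is an isomorphism. Inducting on the number of transpositions (and prepending/appending the relabelling isomorphism) closes the proof.

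The one place where the parsimonious hypothesis is genuinely needed — and thus the main point to be careful about — is exactly the shared-qubit subcase. Without it, two commuting gates may share a qubit while both preserving some non-standard basis on that qubit, as in the $\ctrl U$ example following \autoref{lemma:stableIndexIsomorphicCongruent}; then one of the two gates advances and deprecates the shared index, and swapping the two gates changes which term that index is read from, so the term-swap is no longer an isomorphism. Parsimoniousness rules this out precisely by guaranteeing that any two distinct commuting gates sharing a qubit are diagonal and therefore act stably on every qubit, which is what makes the order-independence of the current-index bookkeeping go through.
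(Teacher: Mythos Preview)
Your proof is correct and follows essentially the same approach as the paper: reduce to a single transposition of adjacent commuting gates and show that the two stable-index expressions differ only by swapping the corresponding terms, then chain the isomorphisms. Your case split (identical operations as trivial / disjoint qubits / shared qubit forces both diagonal via parsimoniousness) is a mild reorganisation of the paper's three cases (disjoint qubits / both diagonal / same single-qubit operation), and your explicit handling of the qubit-relabelling move is a small addition that the paper leaves implicit.
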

\begin{proof}
	Let $K_0 \cong K_1 \cong \cdots \cong K_\ell$ be a sequence of circuits which differ only by a transposition of two commuting gates, where $C_1 = K_0$ and $C_2 = K_\ell$.
	We may show by induction on $\ell$ that the stable-index tensor expressions for $C_1$ and $C_2$ are isomorphic; it suffices in this case to prove the case $\ell = 1$.

	Let $S_0$ and $S_1$ be the stable-index representations of $K_0$ and $K_1$.
	Up to a relabeling, we may assume that the indices of $S_0$ and $S_1$ are all of the form $v_0, v_1, \ldots$ for different qubits $v$ acted on by $K_0$ and $K_1$: we set $v_0$ to be the first index corresponding to a given qubit in either $S_0$ or $S_1$, and $v_{j+1}$ to be an advanced index corresponding to each deprecated index $v_j$ for $j \in \N$.
	Let $U_1$ and $U_2$ be the pair of gates whose order differs between the two circuits; and consider the truncation $S'_0$ of $S_0$ just prior to the terms corresponding to $U_1$ and $U_2$, and similarly for $S'_1$.
	Because $K_0$ and $K_1$ differ only by a transposition of commuting gates $U_1$ and $U_2$, $S'_0 = S'_1$.
	Let $S''_0$ and $S''_1$ be the truncations of $S_0$ and $S_1$ just after the terms corresponding to $U_1$ and $U_2$\,:
	\begin{itemize}
	\item
		If $U_1$ and $U_2$ act on disjoint sets of qubits, the terms corresponding to $U_1$ and $U_2$ in $S''_0$ may differ from those in $S''_1$ only by the indices which are advanced in each; and as the indexing scheme we have fixed is the same for each, and the indices which will be advanced for $U_1$ and $U_2$ are independent of each other in both expressions, $S''_0$ differs from $S''_1$ only by a transposition of those two terms.

	\item
		If $U_1$ and $U_2$ both have the standard basis as their eigenbases, their corresponding terms in $S_0$ and $S_1$ only have stable indices, and therefore do not advance any new indices.
		Then the term in $S_0$ and in $S_1$ corresponding to $U_1$ are the same, and similarly for $U_2$\,; the difference between $S''_0$ and $S''_1$ is a transposition of those two terms, and so they are isomorphic.

	\item
		If $U_1$ and $U_2$ perform the same single-qubit operation, there will be some common tensor index $u_j$ which is current for that qubit prior to the application of $U_1$ and $U_2$ in both $S'_0$ and $S'_1$. 
		Then, the terms corresponding to $U_1$ and $U_2$ in $S_0$ will be $U_2\pseu[:u_{j+2}/u_{j+1}]U_1\pseu[:u_{j+1}/u_j]$, and the corresponding terms in $S_1$ will be $U_1\pseu[:u_{j+2}/u_{j+1}]U_2\pseu[:u_{j+1}/u_j]$ for some indices $v''$ and $w''$.
		Given that $U_1$ and $U_2$ perform the same operation, we then have $S''_0 = S''_1$.
	\end{itemize}
	Because the operations of $K_0$ and $K_1$ are identical after $U_1$ and $U_2$, the differences between $S_0$ and $S_1$ after the terms corresponding to $U_1$ and $U_2$ can only arise from the indices advanced by each term; because we have fixed the indexing scheme, the only difference between $S_0$ and $S_1$ is then the (possibly trivial) transposition of those two terms.
	Thus, $S_0$ and $S_1$ are isomorphic.
\end{proof}
\autoref{thm:stableIndexCongruence} then follows from \autoref{lemma:stableIndexIsomorphicCongruent} and \ref*{lemma:stableIndexCongruentIsomorphic}.
Using this, we may identify congruent circuits simply by relabelling their indices in a uniform way (\eg\ as in the construction for stable-index expressions in \autoref{apx:stableIndexConstruction}), and determining whether there is a perfect matching between the two expressions defined by equality of terms.

\section{Details for constructions of \oneway\ procedures}
\label{apx:constructions}

\subsection{Construction of the gate-sets $\cB_\DKP$ and $\cB_\RBB$}
\label{apx:constructGatesApx}

In this section, we provide details on the construction of the \oneway\ procedures $\fJ^\theta$ and $\fZz$, as well as a generalization of the latter operation to a many-qubit diagonal operation.

\subsubsection{Two-qubit operations}
\label{apx:constructGatesApx-2qubit}

As described in \eqref{eqn:ZzPattFormula}, we may implement $\Zz$ in the \oneway\ model with the procedure
\begin{align}
	\label{eqn:ZzPattFormula-a}
		\fZz_{v,w}
	\;=\;
		\Zcorr{v}{\s[a]} \Zcorr{w}{\s[a]} \Meas{a}{\pion2} \Ent{av} \Ent{aw} \New{a}	\;.
\end{align}
For the analysis of \autoref{sec:semanticMaps}, it will be convenient to generalize this to a procedure $\fZzz^d_{v_1, \ldots, v_d}$ acting symmetrically on many qubits:
\begin{gather}
	\label{eqn:ZzzProcedure-a}
		\fZzz^d_{v_1, \ldots, v_d}
	\;=\;
		\paren{\prod_{j = 1}^d \Zcorr{v_j}{\s[a]}} \Meas{a}{\pion 2} \paren{\prod_{j = 1}^d \Ent{av_j}} \New{a}	\;.
\end{gather}
We may compute the effect of this procedures as follows.
For standard basis vectors $\ket{x_j}, \ket{x'_j}$ for each $1 \le j \le d$, ignoring the measurement result $\s[a]$ after the correction operations, we may obtain
\begin{subequations}
\begin{align}
	\label{eqn:ZzzCompute'}
	\mspace{20mu}&\mspace{-20mu}
		\fZzz^d_{v_1,\ldots,v_d}\Bigg( \bigotimes_{j = 1}^d \ket{x_j}\bra{x'_j}_{v_j} \Bigg)
	\notag\\[1ex]=&\;
		\paren{\prod_{j = 1}^d \Zcorr{v_j}{\s[a]}} \Meas{a}{\pion 2}
		\Bigg( \Big[ Z^{x_1+\cdots+x_d} \ket{+}\bra{+} Z^{x'_1+\cdots+x'_d} \Big]_a \ox \Bigg( \bigotimes_{j = 1}^d \ket{x_j}\bra{x'_j}_{v_j} \Bigg) \Bigg)
	\notag\\[1ex]
	=&\;\;\;
		\sigma^+ \Bigg( \bigotimes_{j = 1}^d \ket{x_j}\bra{x'_j}_{v_j} \Bigg)
 	\;\,+\;\;
		\sigma^- \Bigg( \bigotimes_{j = 1}^d \Big[ Z \ket{x_j}\bra{x'_j} Z\Big]_{v_j} \Bigg)	\,,
\end{align}
where we define the non-negative scalars $\sigma^\pm$ by
\begin{align}
 		\sigma^+
	\;=&\;\,
		\bra{\smash{+_{\pi\!/\!2}}} Z^{x_1+\cdots+x_d} \ket{+}\bra{+} Z^{x'_1+\cdots+x'_d} \ket{\smash{+_{\pi\!/2\!}}}	\;,
	\\
 		\sigma^-
	\;=&\;\,
		\bra{\smash{-_{\pi\!/\!2}}} Z^{x_1+\cdots+x_d} \ket{+}\bra{+} Z^{x'_1+\cdots+x'_d} \ket{-_{\smash{\pi\!/\!2}}}	\;.
\end{align}
\end{subequations}
We may consider the terms of~\eqref{eqn:ZzzCompute'} as outer products of standard-basis vectors: we may verify that 
\begin{subequations}
\label{eqn:projectPlusZzz-a}%
\begin{align}
		\Big[ \bra{\smash{+_{\pi\!/\!2}}} Z^{x_1+\cdots+x_d} \ket{+} \Big] &\ox \ket{x_1}_{v_1} \ox \cdots \ox \ket{x_d}_{v_d} 
	\notag\\[1ex]=&\;\;
		\sfrac{1}{2} \big[ 1 - i(-1)^{x_1 + \cdots + x_d} \big] \ket{x_1 \cdots x_d}_{v_1 \cdots v_d}
	\notag\\[1ex]=&\;\;
		\sfrac{1}{\sqrt 2} \, \e^{i\pi\big[2(x_1+\cdots+x_d) - 1\big]\!/4} \ket{x_1 \cdots x_d}_{v_1 \cdots v_d}
	\notag\\=&\;\;
		\sfrac{1}{\sqrt 2} \Big[ \e^{i \pi [Z \ox \cdots \ox Z]/4} \ket{x_1 \cdots x_d} \Big]_{v_1 \cdots v_d} \;,
	\intertext{}
		\Big[ \bra{\smash{-_{\pi\!/\!2}}} Z^{x_1+\cdots+x_d} \ket{+} \Big] &\ox \Big[ Z_v \ket{x_1}_{v_1} \Big] \ox \cdots \ox \Big[ Z_w \ket{x_d}_{v_d} \Big]
	\notag\\[1ex]=&\;\;
		\sfrac{1}{2} \big[ (-1)^{x_1 + \cdots + x_d} + i \big] \ket{x_1 \cdots x_d}_{v_1 \cdots v_d}
	\notag\\[1ex]=&\;\;
		\sfrac{i}{2} \big[ 1 - i(-1)^{x_1 + \cdots + x_d} \big] \ket{x_1 \cdots x_d}_{v_1 \cdots v_d}
	\notag\\=&\;\;
		\sfrac{i}{\sqrt 2} \Big[ \e^{i \pi [Z \ox \cdots \ox Z]/4} \ket{x_1 \cdots x_d} \Big]_{v_1 \cdots v_d} \;.
\end{align}
\end{subequations}
By taking outer products, and extending~\eqref{eqn:ZzzCompute'} linearly over tensor products of the rank-1 projectors $\ket{x_1}\bra{x'_1} \ox \cdots \ox \ket{x_d}\bra{x'_d}$ with linear operators on an external system $S'$, we then obtain
\begin{align}
	\label{eqn:ZzzCompute-a}
		\fZzz^d_{v_1,\cdots,v_d}\big( \rho_S \big)
	\;=&\,\;
		\e^{-i\pi [ Z_{v_1} \ox \cdots \ox Z_{v_d} ] / 4}
		\; \rho_S \;\; \e^{i\pi [ Z_{v_1} \ox \cdots \ox Z_{v_d} ] / 4}	\;,
\end{align}
for arbitrary operators $\rho_S$,  acting on a system $S$ which includes the qubits $v_j$ but not the qubit $a$.
In the special case $d = 2$, we then recover
\begin{align}
	\label{eqn:ZzCompute-a}
		\fZz_{v,w}\big( \rho_S \big)
	\;=&\,\;
		\e^{-i\pi [Z_v \ox Z_w] / 4}
		\; \rho_S \;\; \e^{i\pi [Z_v \ox Z_w] / 4} 
	\;=\;
		\Zz_{vw} \; \rho_S \; \Zz_{vw} \;,
\end{align}
as claimed in \eqref{eqn:ZzCompute}.

\subsubsection{Single qubit operations}
 
We may describe \oneway\ procedures for the operators $J(\theta)$ with a similar analysis as for the $\Zz$ operator above.
For an arbitrary angle $-\pi < \theta \le \pi$ and qubits $v$ and $w$, define the CPTP map $\fJ^\theta_{w/v}$ by
\begin{gather}
	\label{eqn:JpattFormula-a}
		\fJ^\theta_{w/v}(\rho_S)
	\;=\;
		\Xcorr{w}{\s[v]} \Meas{v}{-\theta} \Ent{vw} \New{w}	\;,
\end{gather}
for $\rho$ a state supported on a set of qubits $S$ which includes $v$ but not $w$.
We may characterize the effect of $\fJ^\theta_{w/v}$ as follows.
For $\ket{\psi} \in \cH$ arbitrary, we have
\begin{align}
		\fJ^\theta_{w/v}\big(\ket{\psi}\bra{\psi}_v\big)
	\;=&\;\,
		\Xcorr{w}{\s[v]} \Meas{v}{-\theta} \Big[ \cZ_{vw} \Big( \ket{\psi}\bra{\psi}_v \ox \ket+\bra+_w \Big) \cZ_{vw} \Big]	\;.
\end{align}
Ignoring the measurement result $\s[v]$ after the correction operation, we may characterize the compound operation $\Xcorr{w}{\s[v]} \Meas{v}{\varphi}$ on a joint pure state $\ket{\Psi}_{v,w}$ by
\begin{align}
		\Xcorr{w}{\s[v]} \Meas{v}{\varphi} \Big(\ket{\Psi}\bra{\Psi}_{v,w}\Big)
	 \;=&\;\;
		\Big( \bra{+_\varphi}_v \ket{\Psi}_{v,w} \Big)\Big( \bra{\Psi}_{v,w}\ket{+_\varphi}_v \Big)
	\notag\\&\quad
		+\;
		\Big( X_w \bra{-_\varphi}_v \ket{\Psi}_{v,w} \Big)\Big( \bra{\Psi}_{v,w} \ket{-_\varphi}_v X_w \Big) \,.
\end{align}
In the case $\ket{\Psi}_{v,w} = \cZ_{v,w} \ket{\psi}_v \ket{+}_w$, the terms in this equation may again be expressed as outer products of state vectors:
\begin{subequations}
\begin{align}
		\bra{+_\varphi}_v \Big[ \cZ_{v,w} \ket{\psi}_v \ket{+}_w \Big]
	\;\;=&\;\;
		\Big[ \bra{+_\varphi}_v \cZ_{v,w} \ket{+_\varphi}_w \Big] \ket{\psi}_v
	\notag\\=&\;\;
		J(-\varphi)_{w/v} \ket{\psi}_v	\;,
	\\[2ex]
		X_w \bra{-_\varphi}_v \Big[ \cZ_{v,w} \ket{\psi}_v \ket{+}_w \Big]
	\;\;=&\;\;
		\Big[ X_w \bra{-_\varphi}_v \cZ_{v,w} \ket{+_\varphi}_w \Big] \ket{\psi}_v
	\notag\\=&\;\;
		\Big[ \bra{+_\varphi}_v \cZ_{v,w} \ket{+_\varphi}_w \Big] \ket{\psi}_v
	\notag\\=&\;\;
		J(-\varphi)_{w/v} \ket{\psi}_v	\;,
\end{align}
where by $J(-\varphi)_{w/v}$ we denote the linear operator $\cH[v] \to* \cH[w]$ which maps state-vectors $\ket{\psi}_v \,\mapsto\, \big[\, J(-\varphi) \ket{\psi} \,\big]_w$\,.
\end{subequations}%
By linearity, we may extend this to arbitrary entangled states $\ket{\psi}_v \otimes \ket{\phi}_{S'}$ with a system $S'$ which does not include $v$.
Then, for a system $S$ which includes $v$ and excludes $w$, we have
\begin{align}
		\fJ^\theta_{w/v}\Big(\ket{\Psi}\bra{\Psi}_S\Big)
	\;=&\;\,
		\Xcorr{w}{\s[v]} \Meas{v}{-\theta} \Big[ \cZ_{vw} \Big( \ket{\Psi}\bra{\Psi}_S \ox \ket+\bra+_w \Big) \cZ_{vw} \Big]
	\notag\\\;=&\;\,
		J(\theta)\big._{w/v} \ket{\Psi}\bra{\Psi}_S J(\theta)\herm_{v/w}	\;.
\end{align}

\subsection{Efficiency of the DKP and simplified RBB constructions}
\label{apx:constructionEfficiency}

Both the DKP construction and the (simplified) RBB construction, as described in \autoref{sec:constructionAlgorithms}, can be performed efficiently in the size of the input circuit.
We may illustrate this for both procedures, as follows.
In both cases, let $k$ be the number of qubits which $C$ acts upon, $N$ be the number of one-qubit gates, and $M$ be the number of two-qubit gates: we may assume that $k \le N + M$ if we suppose that $C$ acts non-trivially on each qubit.

\subsubsection{Obtaining a circuit in normal form}
	We may easily construct a normal form in each case in a single pass of the expression of the circuit $C$.
	For each qubit $v$ acted on by $C$, we may maintain a buffer $B_v$ of single-qubit operations which are performed.
	These are initially empty for each qubit; we accumulate $T$ and $T\herm$ operations (cancelling them as appropriate) of each qubit as we encounter them; and upon encountering a Hadamard operation on a qubit $v$, we insert the contents of the buffer immediately before the Hadamard, resetting the buffer to zero after the Hadamard operation has been traversed.
	In both cases, we may also cancel $\cZ$ operations by storing in the buffers $B_v$ indications of which qubits $w$ the qubit $v$ has interacted with via a $\cZ$: initially this is set to zero, and we toggle the presence of each qubit in the set with each $\cZ$ operation between the two qubits.

	When we empty a buffer for a qubit $v$, we also produce the corresponding $\cZ$ gates between $v$ and other qubits $w$.
	When we clear a buffer $B_v$ prior to a Hadamard, we may count the number of $\cZ$ gates between $v$ and each qubit $w$, and produce a $\cZ$ operation when the total is odd.
	For each $\cZ$ gate between $v$ and some other qubit $w$, we also remove a $\cZ$ gate in the buffer $B_w$ between $w$ and $v$ to avoid double-counting the $\cZ$ gates.
	For the RBB construction, we may elaborate this further by converting $\cZ$ operations to $\Zz$ operations by performing the substitution $\cZ \mapsto \Zz(T\herm \ox T\herm)^2$ when we encounter such an operation, accumulating the resulting $T\herm$ gates in the buffers $B_v$ and $B_w$ when we do so.

	The total amount of work performed by this routine per operation in $C$ (counting both insertions and removals of terms corresponding to these operations, to or from the buffers) is $O(1)$.
	Using list-structures to implement the buffers $B_v$ and adjacency-list representations for the $\cZ$ operations, we may initialize the buffers in time $O(k)$.
	The amount of work required by this phase is then $O(k + N + M)$.

\subsubsection{Conversion to stable-index representation}
	This phase may be performed simultaneously with the previous phase: with each gate produced, we may produce a stable-index representation of the gate, and we produce $J(\theta)$ gates rather than $H$, $T$, and $T\herm$ gates when we clear the buffers.
	In the case of the RBB construction, an additional amount of work may be required for the insertion of $J(\theta)$ gates to achieve uniform wire lengths: the amount of additional work per $\Zz$ gate is bounded to within a constant by the number of single-qubit operations performed on either wire, which is at most $N$.
	The amount of work required to do this is then $O(k + N + M)$ for the DKP construction, and $O(k + NM)$ for the RBB construction.

\subsubsection{Translation of individual operations to the \oneway\ model}
	This phase can also be performed simultaneously with the previous phases: rather than produce a stable-index expression $C''$, we may produce the \oneway\ procedure $\Phi\big(U\pseu[\cdots]\big)$ for each term $U\pseu[\cdots]$ produced for the stable-index expansion.
	As each such \oneway\ procedure is of constant length, this requires no additional overhead asymptotically.

\subsubsection{Obtaining a \oneway\ procedure in normal form}
	\label{apx:obtainingOneWayNormalForm}
	Let $\fP_0$ be the resulting \oneway\ procedure from the previous phase.
	Let $n$ be the number of qubits acted on by $\fP_0$, and and let $m$ be the number of operations $\fJ^\theta$, $\Ent{}$, or $\fZz$ in $\fP_0$.
	(We may easily show that $n = k + N$ for the DKP construction and $n = k + N + M \in O(N + M)$ for the RBB construction; we have $m = N + M$ for the DKP construction and $m \le N + NM \in O(NM)$ for the RBB construction, essentially by the arguments given for the run-time for the production of the stable-index expression $C''$.)
	As each operation $\fJ^\theta$, $\Ent{}$, and $\fZz$ consists of a constant number of elementary operations and contains only one or two entangling operations, the procedure $\fP_0$ also has $O(m)$ elementary operations and $O(m)$ entangling operations.

	\paragraph{Standardizing $\fP_0$.}
	Let $G$ be the entanglement graph associated to $\fP_0$.
	In the process of standardizing $\fP_0$, for each qubit $v$ which is part of a $\fJ_{v/u}$ operation for some $u$, we must commute a single operation $\Xcorr{v}{\s[u]}$ past entangling operations $\Ent{vw}$ for all $w$ adjacent to $v$ in the graph $G$.
	Rather than explicitly commuting operations, we may standardize $\fP_0$ by preparing a new \oneway\ procedure $\fP_1$ as follows:
	\begin{enumerate}
	\item
		We initialize $\fP_1$ by assigning to it the procedure consisting of all preparation maps $\New{v}$ in $\fP_0$, followed by all entangling maps $\Ent{vw}$.
		(Note that the sets of prepared qubits and the graph $G$ of entangling operators can be obtained with no additional asymptotic cost, during the phase of translation to the \oneway\ model in which $\fP_0$ is produced.)

	\item
		Note that every deprecated index $v_j$ in the stable-index expansion of $C''$ has a corresponding advanced index $v_{j+1}$.
		As the procedures $\fJ^\theta_{v/u}$ are the only operations in the DKP construction which allocate or discard qubits, each qubit $u$ which is discarded in $\fP_0$ has a corresponding qubit $v$ which is allocated, and this correspondence arises precisely from the presence of an operation $\fJ^\theta_{v/u}$ in $\fP_0$.
		If $f$ is the index successor function of $C''$, we then have $v = f(u)$ for such qubits $u$ and $v$.
		For the RBB construction, we similarly have a gate $\fJ^\theta_{v/u}$ for qubits with $f(u) = v$; but we also have qubits $a$ which are allocated and discarded in the operations $\fZz$, which lie outside of the domain of $f$.
		We may then extend $f$ to a function $\tilde f$ such that $\tilde f(a) = a$ for such mediating qubits, and $\tilde f(v_j) = f(v_j) = v_{j+1}$ for $v_j \in \dom(f)$.
		(Such a function $\tilde f$ corresponds to an extended successor function as defined in \autoref{def:extendedSuccFn}.)
		For the DKP construction, we will use $\tilde f = f$.
		Note that we may again construct $\tilde f$ at no additional cost during the phase in which $\fP_0$ is produced from the stable-index expression $C''$.
	
		We may use the function $\tilde f$ to append the measurements of $\fP_0$ to the left-hand side of $\fP_1$ in order, with the correct sign- and bit-dependencies included.
		We may make a single pass of $\fP_0$, and in doing so maintain a list of correction operations for each of the qubits in $\fP_0$, as follows.
		Initially, no qubit has any corrections.
		For any measurement $\Meas{u}{\ast}$ encountered, we perform the following:
		\begin{romanum}
		\item
			If $u$ does not have any accumulated corrections, we append $\Meas{u}{\theta}$ to the left-hand side of $\fP_1$ without any dependencies;

		\item
			If $u$ has accumulated corrections $\Xcorr{u}{\beta}$ and $\Zcorr{u}{\gamma}$, we append $\Shift{u}{\gamma} \Meas{u}{\theta;\beta}$ to the left-hand side of $\fP_1$.
			(If $\gamma = 0$, we may omit the shift operation.)

		\item 
			We add an $\Xcorr{v}{\s[u]}$ correction to $v = \tilde f(u)$, provided that $v \ne u$, corresponding to the $X$ correction arising from an operation $\fJ^\theta_{v/u}$.
			(Note that the condition $u \ne \tilde f(u)$ is satisfied for all $u$ in the DKP construction; it is satisfied if and only if $u \in \dom(f)$ in the RBB construction.)

		\item
			We add $\Zcorr{w}{\s[u]}$ corrections for every $w$ adjacent to $v = \tilde f(u)$ in the graph $G$, provided $w \ne u$.
			(In the DKP construction, and for qubits $u \in \dom(f)$ in the RBB construction, these corrections arise from commuting the $\Xcorr{v}{\s[u]}$ operation past the entangling maps $\Ent{vw}$ for $w \ne u$; for qubits $a \notin \dom(f)$ in the latter construction, this corresponds to the $\Zcorr{w}{\s[a]}$ corrections in the $\fZz$ procedure, in which case the condition $w \ne u$ will be satisfied for all $w$.)
		\end{romanum}
		We may simplify the expressions for any accumulated corrections using the relations $\Xcorr{v}{\beta} \Xcorr{v}{\s[u]} = \Xcorr{v}{\beta + \s[u]}$ and $\Zcorr{w}{\beta} \Zcorr{w}{\s[u]} = \Zcorr{w}{\beta + \s[u]}$.

		The amount of work performed for each measurement $\Meas{u}{\ast}$ in $\fP_0$ is at most $O\big(\deg[\tilde f(u)]\big)$, where $\deg(v)$ is the number of neighbors of $v$ in $G$.
		The total number of operations performed in this stage is then $O\big(\smash{\sum\limits_v \deg(v)}\big) =$\linebreak $O\big(\card{E(G)}\big) = O(m)$.
		
	\item
		After we have read all of the measurements, we append the corrections for each unmeasured qubit (computed in the previous step) to to the end of $\fP_1$.
	\end{enumerate}
	The resulting procedure $\fP_1$ is then the same that would result from standardizing $\fP_0$, and can be obtained in time $O(m)$. 

	\vspace{1em}
	\paragraph{Normalizing $\fP_1$.}
	Obtaining a normal form $\bar\fP$ from $\fP_1$ can be done by checking for Pauli simplifications for each measurement, and then performing signal shifting by accumulating shift operators to the left (without commuting shift operators past each other).
	We may check for and apply Pauli simplifications as a part of the standardization process above, with no additional overhead asymptotically.
	We may also perform signal shifting simultaneously with the standardization process, by immediately commuting the shift operation $\Shift{u}{\gamma}$ past any corrections $\Xcorr{v}{\s[u]}$ and $\Zcorr{w}{\s[u]}$ which might have been introduced by the measurement on a given qubit $u$: if $v \sim w$ denotes the adjacency relation in $G$, we have
	\begin{align}
			\mspace{50mu}&\mspace{-50mu}
	 		\paren{\prod_{\substack{w \sim \tilde f(v) \\ w \ne u}} \Zcorr{w}{\s[u]}}
			\paren{\prod_{\substack{v = \tilde f(v) \\ v \ne u}} \Xcorr{v}{\s[u]}} 
			\Shift{u}{\gamma}
		\notag\\[0.5ex]=&\;\;
			\Shift{u}{\gamma}
	 		\paren{\prod_{\substack{w \sim \tilde f(v) \\ w \ne u}} \Zcorr{w}{\s[u] + \gamma}}
			\paren{\prod_{\substack{v = \tilde f(v) \\ v \ne u}} \Xcorr{v}{\s[u] + \gamma}}	\;. 
	\end{align}
	After performing this commutation, there are no classically controlled operations to be inserted to the left of the shift operator which will depend on $\s[u]$; we may then commute it past all additional operations and remove it from the expression.
	Then, rather than inserting shift operations and the corrections described, we may accumulate the corrections described on the right-hand side after the measurement on $u$.
	The amount of work required for each measurement in this alternative procedure is then dominated by the work required to update the corrections, which is $O(n \deg[\tilde f(u)])$; summed over all vertices $u$, obtaining a normal form may then be done in time $O(nm)$.

\subsubsection{Total run-time complexity}

Comparing the complexity of the phases above (and recalling the relationship between $n$ and $m$ with $N$, $M$, and $k$ at the beginning of \autoref{apx:obtainingOneWayNormalForm}, the run-time of both the DKP and RBB constructions are then dominated by the phase of obtaining a \oneway\ procedure in normal form.
This requires time $O(nm)$ for both constructions, which evaluates to $O((k + N)M)$ for the DKP construction and $O((N+M)NM)$ for the RBB construction.
This bound on the run-time coincides with that of the semantic map $\cS$ described in \autoref{sec:semanticAlg}.

\end{document}